\tikzset{
	main node/.style={circle,draw,minimum size=1cm,inner sep=0pt},
}
\def\pgfmath@calc@minof#1#2{min(#1,#2)}
\xpatchcmd{\paragraph}{3.25ex \@plus1ex \@minus.2ex}{3pt plus 1pt minus 1pt}{\typeout{success!}}{\typeout{failure!}}
\newcommand{\detailtexcount}{%
  \immediate\write18{texcount -merge -sum -incbib -dir \jobname.tex > \jobname.wcdetail }%
  \verbatiminput{\jobname.wcdetail}%
}
\newcommand{\quickwordcount}{%
  \immediate\write18{texcount -1 -sum -merge \jobname.tex > \jobname-words.sum }%
  \input{\jobname-words.sum}%
}
\newcommand{\quickcharcount}{%
  \immediate\write18{texcount -1 -sum -merge -char \jobname.tex > \jobname-chars.sum }%
  \input{\jobname-chars.sum} characters (not including spaces)%
}
\newcolumntype{Y}{>{\centering\arraybackslash}X}
\newcolumntype{b}{X}
\newcolumntype{s}{>{\hsize=.5\hsize}X}
\newtcolorbox{note}[1][]{%
	enhanced jigsaw, 
	borderline west={2pt}{0pt}{red}, 
	sharp corners, 
	boxrule=0pt, 
	fonttitle={\normalsize\bfseries},
	coltitle={black},  
	title={Note:\ },  
	attach title to upper, 
	#1
}
\newcommand\RedeclareMathOperator{%
	\@ifstar{\def\rmo@s{m}\rmo@redeclare}{\def\rmo@s{o}\rmo@redeclare}%
}
\newcommand\rmo@redeclare[2]{%
	\begingroup \escapechar\m@ne\xdef\@gtempa{{\string#1}}\endgroup
	\expandafter\@ifundefined\@gtempa
	{\@latex@error{\noexpand#1undefined}\@ehc}%
	\relax
	\expandafter\rmo@declmathop\rmo@s{#1}{#2}}
\newcommand\rmo@declmathop[3]{%
	\DeclareRobustCommand{#2}{\qopname\newmcodes@#1{#3}}%
}
\def\eqref#1{equation~\ref{#1}}
\def\1{\bm{1}}
\NewDocumentCommand{\grad}{e{_^}}{%
	\mathop{}\!
	\nabla
	\IfValueT{#1}{_{\!#1}}
	\IfValueT{#2}{^{#2}}
}
\newcommand{\minus}{\scalebox{0.75}[1.0]{$-$}}
\newcommand{\GD}{\texttt{GD}\xspace}
\newcommand{\SGD}{\texttt{SGD}\xspace}
\newcommand{\NGD}{\texttt{NGD}\xspace}
\newcommand{\SQP}{\texttt{SQP}\xspace}
\newcommand{\GN}{\texttt{GN}\xspace}
\newcommand{\LM}{\texttt{LM}\xspace}
\newcommand{\method}{\texttt{method}\xspace}
\newcommand{\MPEC}{\texttt{MPEC}\xspace}
\newcommand{\NLLS}{\texttt{NLLS}\xspace}
\renewcommand{\OLS}{\texttt{OLS}\xspace}
\newcommand{\RSS}{\texttt{RSS}\xspace}
\newcommand{\MLE}{\texttt{MLE}\xspace}
\newcommand{\DUE}{\texttt{DUE}\xspace}
\newcommand{\SUE}{\texttt{SUE}\xspace}
\newcommand{\MSA}{\texttt{MSA}\xspace}
\newcommand{\DSD}{\texttt{DSD}\xspace}
\newcommand{\SNL}{\texttt{SNL}\xspace}
\newcommand{\ODE}{\texttt{ODE}\xspace}
\newcommand{\CLT}{\texttt{CLT}\xspace}
\newcommand{\ODLUE}{\texttt{ODLUE}\xspace}
\newcommand{\LUE}{\texttt{LUE}\xspace}
\newcommand{\UE}{\texttt{UE}\xspace}
\newcommand{\logit}{\texttt{logit}\xspace}
\newcommand{\MNL}{\texttt{MNL}\xspace}
\newcommand{\NGLS}{\texttt{NGLS}\xspace}
\newcommand{\BFGS}{\texttt{BFGS}\xspace}
\newcommand{\DGP}{\texttt{DGP}\xspace}
\newcommand{\NRMSE}{\texttt{NRMSE}\xspace}
\def\ru{{\textnormal{u}}}
\def\rx{{\textnormal{x}}}
\def\rvu{{\mathbf{i}}}
\def\rvu{{\mathbf{u}}}
\def\rvx{{\mathbf{x}}}
\def\vzero{{\bm{0}}}
\def\vone{{\bm{1}}}
\def\vmu{{\bm{\mu}}}
\def\vtheta{{\bm{\theta}}}
\def\vf{{\bm{f}}}
\def\vg{{\bm{g}}}
\def\vq{{\bm{q}}}
\def\vt{{\bm{t}}}
\def\vu{{\bm{u}}}
\def\vv{{\bm{v}}}
\def\vx{{\bm{x}}}
\def\vy{{\bm{y}}}
\def\vz{{\bm{z}}}
\def\vpf{{\bm{p}}}
\def\mA{{\bm{A}}}
\def\mB{{\bm{B}}}
\def\mI{{\bm{I}}}
\def\mQ{{\bm{Q}}}
\def\mX{{\bm{X}}}
\def\mZ{{\bm{Z}}}
\def\mIq{{\bm{\Delta}_q}}
\def\mIx{{\bm{\Delta}_x}}
\def\mIqT{{\bm{\Delta}^\top_q}}
\def\mIxT{{\bm{\Delta}^\top_x}}
\DeclareMathAlphabet{\mathsfit}{\encodingdefault}{\sfdefault}{m}{sl}
\SetMathAlphabet{\mathsfit}{bold}{\encodingdefault}{\sfdefault}{bx}{n}
\def\sR{{\mathbb{R}}}
\renewcommand{\E}{\mathbb{E}}
\newcommand{\Var}{\mathrm{Var}}
\DeclareMathOperator*{\argmin}{arg\,min}
\DeclareMathOperator{\sign}{sign}
\newtheoremstyle{boldremark}
{\dimexpr\topsep/2\relax} 
{\dimexpr\topsep/2\relax} 
{}          
{}          
{\bfseries} 
{.}         
{.5em}      
{}          
\newtheorem{definition}{Definition}
\newtheorem{prop}{Proposition}
\theoremstyle{plain}
\newtheorem{assumption}{Assumption}
\theoremstyle{boldremark}
\newtheorem{remark}{Remark}
\DeclareMathAlphabet{\mathcal}{OMS}{cmsy}{m}{n}
\renewcommand{\algorithmicrequire}{\textbf{Input:}}
\newcounter{algsubstate}
\newenvironment{algsubstates}
{\setcounter{algsubstate}{0}%
	\renewcommand{\State}{%
		\refstepcounter{algsubstate}%
		\Statex {\footnotesize\alph{algsubstate}:}\space}}
{}
\DeclareMathOperator{\st}{s.t.}
\begin{document}
	
	\begin{frontmatter}

\title{Statistical inference of travelers' route choice preferences\\ with system-level data}

\author[add1]{Pablo Guarda}

\author[add1,add2]{Sean Qian}
\ead{seanqian@cmu.edu}
\address[add1]{Department of Civil and Environmental Engineering, Carnegie Mellon University}
\address[add2]{Heinz College, Carnegie Mellon University}
		\begin{abstract}

Traditional network models encapsulate travel behavior among all origin-destination pairs based on a simplified and generic travelers' utility function. Typically, the utility function consists of travel time solely and its coefficients are equated to estimates obtained from discrete choice models and stated preference data. While this modeling strategy is reasonable, the inherent sampling bias in individual-level experimental data may be further amplified over network flow aggregation, leading to inaccurate flow estimates. In addition, individual-level data must be collected from surveys or travel diaries, which may be labor intensive, costly and limited to a small time period. To address these limitations, this study extends classical bi-level formulations to estimate travelers’ utility functions with multiple attributes using system-level data. This data tends to be less subject to sampling bias than individual-level data. It is cheaper to collect and it has became increasingly diverse and available. To leverage system-level data, we formulate a methodology grounded on non-linear least squares to statistically infer travelers' utility function in the network context using traffic counts, traffic speeds, the number of traffic incidents and sociodemographic information obtained from the US Census, among other attributes. The analysis of the mathematical properties of the optimization problem and of its pseudo-convexity motivate the use of normalized gradient descent, an algorithm developed in the machine learning community that is suitable for pseudo-convex programs. More importantly, we develop a hypothesis test framework to examine statistical properties of coefficients attached to utility terms and to perform attributes selection. Experiments on synthetic data show that the coefficients of the travelers’ utility function can be consistently recovered and that hypothesis tests are a reliable statistic to identify which attributes are determinants of travelers’ route choices. Besides, a series of Monte-Carlo experiments showed that statistical inference is robust to noise in the Origin-Destination matrix and in the traffic count measurements, and to various levels of sensor coverage. The methodology is also deployed at a large scale using real-world multi-source data in Fresno, CA collected before and during the COVID-19 outbreak.

\end{abstract}

\begin{keyword} 
network models, stochastic user equilibrium, route choices, travel behavior, utility function, multinomial logit model, hypothesis testing, pseudo-convexity, normalized gradient descent, traffic count data
\end{keyword}
		
	\end{frontmatter}
	
	\section{Introduction}

System-level data, as opposed to individual-level data, measures characteristics of aggregated flow in transportation networks, and it is a valuable source of information for studying travel demand. An example is the use of traffic count data for estimating origin-destination (O-D) matrices \citep{Fisk1989, Yang1992,Cascetta2001a, Ma2018a, Krishnakumari2020a}. Interestingly, few attention has been given to use of system-level data to estimate travelers' utility functions in route choice models. These models are key to depict travel demand in transport planning applications and their parameters are usually estimated with data collected from individual-level surveys and experiments. Modeling route choices at the system level using  individual-level data, however, may be inaccurate and expensive due to the inherent sampling bias and high collection cost of experimental data. The increase in availability and diversity of system-level data offers opportunities to overcome those limitations and to understand the impact of a broader set of factors that may influence travelers' route choice decisions. Some of the new sources of system-level data include crashes, weather and pavement conditions, land use characteristics, socio-demographics information from Census, travel time reliability and other relevant attributes. Clearly those factors may determine how a traveler makes his/her route choice, but surveying those for each individual traveler would be infeasible. Thus, this paper proposes a statistical method to better understand why and how a traveler make route choices using a diverse set of system-level data only.

Researchers have already leveraged system level data within traditional network models to estimate travelers' route choice preferences  \citep{Robillard1974, Fisk1977, Daganzo1977,Anas1990, Liu1996, Yang2001,Lo2003,Garcia-Rodenas2009,Russo2011a,Wang2016}. However, their primary focus has been on the estimation of a utility function dependent on travel time only. Furthermore, most existing algorithms has been deployed to network of relatively small size, which raises questions on their potential to scale up to large transportation networks and to contribute to real world applications. A main difficulty to solve this estimation problem comes from the fact that network models are not data-driven. Discrete choice models, the gold standard to analyze individual-level data in travel behavior studies, may be considered a promising alternative to solve this problem given their data-driven nature. However, these models are not able to capture the endogeneity of travel times arising from the interdependence between travelers' decisions and from the effect of traffic congestion in transportation networks.

With the goal of estimating the coefficients of travelers' utility functions with multiple attributes and using system-level data, this paper enhances traditional network models with optimization algorithms developed in the machine learning literature. Mathematical properties of the optimization problem and its pseudo-convexity motivates the use of normalized gradient descent, a first order method that is suitable for pseudo-convex optimization. Our solution algorithm is tailored to perform a robust estimation of the coefficients of the travelers' utility function and to also reproduce the traffic conditions observed at the system level. To ease the identification of the determinants of travelers' route choices with system level data, we formulate hypothesis tests on the utility function coefficients.

The paper is structured as follows. We first survey the existing literature and identify the main contributions of our work. Then, we present an example on a toy network to illustrate the optimization problem arising from the estimation of the travelers' utility function coefficients using system level data. The following three sections describe the formulation of our methodology on a general transportation network, analyze the mathematical properties of the optimization problem and propose a solution algorithm to solve the problem. Subsequently, we describe the framework to perform statistical inference on the utility function coefficients and we present the results of numerical experiments conducted in networks of small and medium size. Next, we show the estimation results obtained in a large scale network and using real world system level data. Finally, we describe our main conclusions and suggests avenues for further research. The mathematical notation used for the remainder of the paper is included in \ref{appendix:ssec:notation}.
	\section{Literature review and research gaps}

The computation of network equilibrium requires to specify a travelers' route choice model. The choice of route choice model defines the class of network equilibria \citep{Prashker2004}. The simplest and most traditional class of equilibria is known as deterministic user equilibrium (\DUE). At \DUE no traveler has incentive to change to alternative routes and travelers' are assumed to pick, in a deterministic manner, the route that maximize their utility. The multinomial \logit model (\MNL) remains as a gold standard in travel behavior research to depict travelers' decision making \citep{McFadden1973}. A well-known application of the \MNL model in network modeling is for the computation of Stochastic User Equilibrium with \logit assignment (\SUE-\logit). In contrast to other probabilistic models of travelers' choices, the \logit model exhibits a better compromise between behavioral realism and mathematical tractability. Similar to the \MNL model, the \SUE-\logit enjoys of this mathematical convenience in the network modeling context. For instance, under mild conditions, \SUE-\logit has solution uniqueness in link and path flow space whereas \DUE only has solution uniqueness in link flow space. Besides, the existence of a closed form expression for the choice probabilities in the multinomial \logit model can be leveraged for the computation of \SUE-\logit, e.g the Dial \logit algorithm \citep{Dial1971,Bell1995} or to solve the \ODE problem \citep{Ma2018}. 

A stream of research that brought our attention concerns the estimation of the travelers' utility function from traffic count data and which induces a traffic assignment consistent with \SUE-\logit \citep{Robillard1974, Fisk1977, Daganzo1977,Anas1990, Liu1996, Yang2001,Lo2003,Garcia-Rodenas2009,Russo2011a,Wang2016}. The solution of this problem and of network equilibrium in general requires to know a priori the coefficients of the travelers' utility function of the route choice model. Hence, a standard practice is to set the values of these coefficients equal to estimates obtained in previous travel behavior studies. Nevertheless, there are multiple advantages to estimate these coefficients. First, it avoids to search for external estimates which may be cumbersome and it may provide inaccurate coefficients. Second, it can improve the generalization performance on the estimated O-D matrix when, for instance, the coefficients fitted from existing count data become close to the population coefficients. The following sections describe the relevant literature that has studied the aforementioned problem.

\subsection{Overview of the Logit Utility Estimation (LUE) problem}


Seminal work in the literature focused on the problem of estimating the coefficient $\hat{\theta} \in \sR$ of a utility function dependent solely on travel time and where both the O-D matrix and travel costs among links/paths are assumed exogenous. To our knowledge, the \LUE problem was first studied by \cite{Robillard1974}, who estimated the coefficient $\hat{\theta}$ using link and path flows obtained from a traffic assignment consistent with the \cite{Dial1971} method. The estimator $\hat{\theta}$ was assumed to be Gaussian distributed and it was obtained via maximum likelihood estimation (\MLE). A main limitation of this work was the requirement of knowing the traffic counts at every link/path in the network. The problem of estimating $\hat{\theta}$ from a subset of traffic count in the network was then addressed by \citet{Fisk1977}. Another major limitation of this work was assuming full knowledge of the average cost of traversing paths between O-D pairs, which is plausible in settings where costs are equated to travel times but arguably unrealistic in real world scenarios where travelers make route choices based on multiple attributes such as monetary cost and travel time reliability.

\cite{Daganzo1977} focused on a more general setting where link flows observations were consistent with \SUE but not necessarily with a \logit assignment. In contrast to \citet{Robillard1974}, the distribution of the estimator $\hat{\theta}$ was derived by imposing distributional properties on the link flows and by relying on the statistical properties of the \MLE estimator. This work is the first that conducts hypothesis testing on the coefficients of the travelers' utility function. However, similar to previous research, it assumes a utility function dependent on travel time only and exogenous link costs, which limits the application to uncongested networks or to settings where link costs at equilibrium are known. Under the assumption that link flow data is consistent with \SUE-\logit, \citet{Anas1990} extended prior work by allowing for endogenous travel costs. The authors studied the impact of estimating $\hat{\theta}$ without accounting for the endogeneity of the travel costs and observed that estimates of $\hat{\theta}$ become biased and that statistical inference is less consistent. Besides, the estimation of $\hat{\theta}$ was shown to perform better with least squares than maximum likelihood.

\subsection{Overview of the Origin-Destination and Logit Utility Estimation (ODLUE) problem}

In the following decades, the research interest shifted to addressing the joint estimation of the O-D matrix and $\hat{\theta}$ from traffic count data following \SUE-\logit, which we coined as O-D and Logit Utility estimation (\ODLUE) problem\footnote{We denominated this problem as O-D Logit Utility estimation (\ODLUE) because the O-D matrix is estimated in addition top the travelers' utility function coefficients as it is done in \LUE}. Interestingly, the extension of the \LUE problem to settings where the utility function was dependent on multiple attributes besides travel time did not gain the same attention. By this time, the \ODE literature had made significant progress to estimate O-D matrices with traffic count data consistent with \SUE-\logit and these advances could be directly leveraged to solve the \LUE problem. A rich set of solution methods had been developed in the \ODE literature, which include sensitivity analysis \citep{Patriksson2004} and the alternating optimization of the upper and lower levels of the classic \ODE bilevel formulation \citep{Ma2017}.

To our best knowledge, \citet{Liu1996} conducted the first study that studied the \ODLUE problem and that presented an application using system level data collected from a real transportation network. Traffic count and travel time data was collected from a small network at the Purdue University campus and estimates of $\hat{\theta}$ were obtained for different time periods of the day. The authors proposed a two stage calibration method that leveraged the closed form of the \logit route choice probabilities to solve for $\hat{\theta}$ using the secant method. The O-D matrix was estimated via a least square minimization. Key limitations of this work were to not account for the congestion effects in the transportation network and to assume that traffic counts and travel time data were available for the full set of links. This assumption could be mild in small networks where traffic counts can be collected at every link but implausible in larger networks where this data is typically available for a small proportion of the links.

\citet{Yang2001} overcame some of the aforementioned limitations by solving a bilevel optimization problem that resembled the traditional mathematical program with equilibrium constraints (\texttt{\MPEC}) that had been used for solving the \ODE problem. A main contribution of this work was to estimate both $\hat{\theta}$ and the O-D matrix while also accounting for the impact of traffic congestion. A Sequential Quadratic Programming (\SQP) algorithm was used to find the parameters of interest and to minimize the gap between observed and predicted traffic counts. The optimization problem also included constraints to ensure that the \SUE-\logit equilibrium conditions were satisfied over parameters' updates. The authors also derived for the first time the analytical expression of the gradients of the outer level objective respect to the parameters associated to the O-D matrix and to the travelers' utility function. \citet{Lo2003} proposed a similar framework but where a \MLE instead of a non linear least square (\NLLS) problem was solved at the upper level of the bilevel formulation. The authors performed \ODLUE on both synthetic and real world data collected from the Tuen Mun Corridor network in Hong Kong. Later, \cite{Wang2016} implemented a similar approach using real world data gathered from a small network in Seattle, WA. They also performed experiments on synthetic data to study the robustness of the parameter estimates to noise in the ground truth O-D matrix and the parameter $\hat{\theta}$ used to generate synthetic traffic count data.

\subsection{Extensions of the O-D Logit Utility Estimation (ODLUE) problem}

A number of studies have estimated additional parameters on top of the O-D matrix and of the travelers' utility function coefficients.  \citet{Russo2011a} and \cite{Caggiani2011} estimated the shape parameters of the link performance function using traffic counts and travel time measurements consistent with \SUE-\logit.  To leverage the travel time data, the outer level objective of their \NLLS problem incorporated the squared difference between predicted and observed travel times. Results on synthetic data suggested that all parameters could be consistently recovered. However, few attention was paid to the identifiability of the parameters and on overfitting due to the additional degrees of freedom introduced in the model. In fact, \ODE is known to be an underdetermined problem which is subject to identifiability issues even when a full set of traffic counts is available in a network \citep{Yang2018b}. Naturally, this issue is worsen if additional parameters are introduced in the model. Besides the problem of identifiability (also known as observability), a high model complexity may cause overfitting and thus, a poor generalization of the model on unseen data even if a good fit is observed on training data. \citet{Russo2011a} also argued that in an uncongested network, the optimization problem is convex to claim solution uniqueness. However, no attention was paid to the non-convexity of the feasible set in the optimization problem, i.e. which makes the problem non-convex, and also to the fact that convexity is not a sufficient condition for solution uniqueness.

Other line of research extended the application of the \ODLUE beyond the traffic assignment stage of the the classic four step model \citep{Ortuzar2011}. Under the assumption of an exogenous demand matrix, \citet{Cascetta1997} simultaneously estimated the parameters of a category index trips generation model, the parameters associated to the distribution step and the mode specific constants, and the travel time and cost coefficients of the utility function of a multinomial logit model of mode choice. The authors tested their method with real data collected from two Italian cities and found that the magnitude and sign of the estimated parameters were consistent with their expectations. However, the treatment of congestion in the road network in their work to capture the endogeneity of travel times is missing. In similar research, \cite{Garcia-Rodenas2009} estimated the utility function parameters of a route choice model and the mode choice parameters using a nested logit model via Nonlinear Generalized Least Squares (\NGLS). A bilevel optimization problem was formulated to account for traffic congestion and it was solved via an alternating optimization of the upper and inner level problems. Similar to most prior work, no statistical tests were conducted to check if those parameters were statistically significant or to assess if the increase in goodness of fit compensated the additional degrees of freedom introduced in the full model.

More recently, \cite{Wu2018a} used a computational graph to estimate the parameters of a model that incorporated the steps of trip generation, spatial distribution and path flow-based traffic assignment of the 4-step model. Data from household travel surveys, mobile phones and on traffic counts were leveraged to estimate these parameters and the model was deployed on a network with more than 2500 nodes and 5000 links. This research illustrated the potential of computational graphs to increase the scalability of models that have been developed in the \ODLUE literature and to also integrate heterogeneous data sources that inform about transport decisions. Nevertheless, a main drawback of this work is to not account for traffic congestion and to focus on a travelers' utility function dependent on travel time only.

\subsection{Contributions of this research}

This paper enhances existing formulations of the \LUE problem with the goal to statistically infer the coefficients of travelers' utility functions with multiple attributes using system-level data. A bilevel optimization program is formulated to both minimize the gap between estimated and observed traffic counts, and to account for the endogenous effect of traffic congestion on travelers' route choices. More importantly, a framework of hypothesis tests is proposed to examine statistical properties of the coefficient estimates. Under the pseudo-convexity of the optimization problem, we implement a variant of gradient descent suitable for pseudo-convex optimization. First order optimization methods seem promising to speed up computation and thus, to make model training scalable to large transportation networks. 

Below are the four main contributions of this paper to existing literature: 

\begin{enumerate}
    \item It presents a methodology to statistically infer the coefficients of travelers' utility functions consisting of multiple attributes and using system-level data. It does not only extend the \LUE problem, but it also enables learning realistic travel behavior from diverse datasets at the system level. 
    \item It conducts for the first time a mathematical analysis of the non-convexity of the \LUE problem respect to the utility function coefficients estimated from traffic count data. 
    \item For the solution of the \LUE problem, it shows that the integration of first order and second order optimization methods outperform existing approaches in the literature.
    \item It presents a statistical framework to perform hypothesis testing and attributes selection on the coefficients of multi-attribute utility functions
    \item  It implements the methodology in a large-scale transportation network and with real system-level data. 
\end{enumerate}

All our analyses are replicable, open-sourced and ready to be used for the transportation community (Section \ref{sec:model-implementation-data}).

\subsection{Illustrative example}
\label{sec:illustrative-example}

Consider a network with two parallel links and thus, with two alternative paths only (Figure \ref{fig:illustrative-example}). The costs functions associated to each link/path are $t_1(x_1) = t^{0}_{1}(1+\alpha(\frac{x_1}{\gamma_1})^\beta)$ and $t_2(x_2) = t^{0}_{2}(1+\alpha(\frac{x_2}{\gamma_2})^\beta)$, where $\gamma_1,\gamma_2 \in \sR$ are the link capacities and $t^0_1=t^0_2$ are the links' free flow travel times. The parameters of the link performance functions are $\alpha = 0.15$ and $\beta = 4$. Assume that the travelers' utility function is dependent of the travel time $t$ and the monetary cost $c$ of traversing each link of a path only. In addition, suppose that the coefficients $\theta_t, \theta_c \in \sR$ are linearly weighting each attribute of the utility function and that they are common among travelers. Thus, the deterministic component of the utility attained to each link/path can be expressed as $v_1 (x_1) = \theta_t t_1(x_1) + \theta_c c_1$ and $v_2 (x_2) = \theta_t t_2(x_2) + \theta_c c_2$. 

	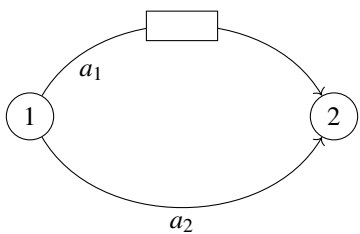
\begin{figure}[H]
		\centering
			
			\begin{tikzpicture}[x=2cm,y=2cm]
				
				\node at (0,1) [circle, draw] (N1) {$1$} ;
				\node at (2,1) [circle, draw] (N2) {$2$} ;

				\path[draw, ->] (N1) edge[bend left = 60] node[below left = 0.4cm and 0.9cm]{$a_1$}  (N2) ;
				\path[draw, ->] (N1) edge[bend right= 60] node[below]{$a_2$} (N2);
				
				
				\node at (1,1.6) [rectangle, draw, fill = white] (Toll) {$\phantom{aaaa}$};

			\end{tikzpicture}    
			
			\caption{Illustrative example}
			\label{fig:illustrative-example}
			
		\end{figure}

		Consider $\bar{q} \in \sR_+$ individuals traveling between origin destination 1-2 and making route choices consistent with a \logit model. $\bar{q}$ is small enough such that links' travel times are approximately equal to their free flow travel times, i.e. $t_1(x_1) \approx t_1^0, \ t_2(x_2) \approx t_2^0$. Suppose link $a_1$ has a toll fee equal to 1 USD, i.e $c_1 = 1, c_2 = 0$ and a sensor recording traffic counts. Then, path choice probabilities can be modeled with a sigmoid function and link flows can be obtained as follows: 
		\begin{equation}
			\label{eq:link-flow-solution-illustrative-example}
			x_1 = \frac{\exp(\theta_t t_1+\theta_c c_1)}{\exp{(\theta_t t_1+\theta_c c_1)}+\exp{(\theta_t t_2+\theta_c c_2)}} \ \bar{q} =  \sigma(\theta_c) \bar{q},\quad x_2 = \bar{q} -x_1 = (1-\sigma(\theta_c))\bar{q} = \sigma(-\theta_c)\bar{q}
		\end{equation}
		
		Suppose the goal is to estimate the coefficients $\theta_t$ and $\theta_c$ of the travelers' utility function with the traffic count measurement $\bar{x}_1$. From the link flow solutions (Eq. \ref{eq:link-flow-solution-illustrative-example}), we note first that $\theta_t$ is not identifiable because travel times are the same in the two alternative paths. In contrast, because the monetary costs are different between paths, $\theta_c$ is identifiable and equal to the solution $\theta^{\star}_c$ of the following nonlinear least square minimization problem:
		\begin{align}
			\label{eq:objective-function-illustrative-example}
			\theta^{\star}_c
			= \arg \min_{\theta_c } \ (x_1(\theta_c) - \bar{x}_1)^2 
			= \arg \min_{\theta_t } \left(\sigma(\theta_c) - \frac{\bar{x}_1}{\bar{q}}\right)^2
			\rightarrow 
			\theta^{\star}_c = \ln(\bar{q}/\bar{x}_1-1)
		\end{align}
		
		Consider two scenarios for the solution of this problem (Eq. \ref{eq:objective-function-illustrative-example}). In the first scenario $\bar{x}_1 = 0.5 \bar{q}$ meaning that $\bar{x}_1$ is deterministic and that $\theta_c^\star = 0$. In the second scenario, $\bar{x}_1 = x_1 + \epsilon$ , $\epsilon \sim \mathcal{N}(0,1)$, i.e. $\bar{x}_1$ is not deterministic and $\bar{x}_1 \sim \mathcal{N}(0.5\bar{q},1)$. Note that a single realization of $\bar{x}_1$ in the second scenario could lead us to conclude with the same probability that the traveler's utility and the monetary cost are positively or negatively associated. However, in expectation and similar to the first scenario, this association does not exist. 
		
		This example illustrates important features of the problem studied in this paper. First, it shows that the non-linear least squares formulation of the problem is suitable to estimate the coefficients of the travelers' utility function with traffic count data. Second, the identification issue associated to $\theta_t$ illustrates the special considerations that must be made to specify the travelers' utility function and to prevent a naive estimation of coefficients. Third, the estimation of $\theta_c$ in the non-deterministic scenario illustrates the relevance of statistical inference tools to assess if a feature is a determinant of the travelers' route choices in the presence of randomness of the data generating process.
		
		For larger networks, the analytical derivation of the optimal solution for travelers' utility function parameters is intractable and thus methods that can iteratively approach to the optimal solution are needed. Besides, due to the class of non-linearity of the link flows respect to the utility function coefficients, the objective function cannot be guaranteed to be convex, which imposes additional difficulties to search for an optimal solution. Another challenge is related to the impact of traffic congestion on travelers' choices and which is not addressed in the illustrative example. 
		
		\section{Estimation of travelers' utility function coefficients with system level data}
		\label{sec:learning-parameters}
		
		This section discusses in detail our methodology to estimate the travelers' utility function coefficients using system level data. We first introduce our main assumptions and then we describe the components of the bilevel optimization program formulated to estimate the utility function coefficients. 
  
		\subsection{Assumptions}
		\label{ssec:assumptions}
		
		\medskip

		\begin{assumption}[SUE-Logit]
			\label{assumption:sue} 
			Network traffic flow follows stochastic user equilibrium with logit assignment  
		\end{assumption}

		With the motivation of bridging transportation network analysis and the study of travel behavior with discrete choice models, we focused on the problem of estimating the travelers' utility function from link flow measurements consistent with stochastic user equilibrium under \logit assignment (\SUE-\logit). On one hand, \SUE-\logit is one of the many alternative representations of user equilibrium used in the networking modeling community, hence it may be considered a strong assumption. On the other hand, \SUE-\logit is in line with the state of the practice in travel behavior research where the \logit model remains as one of the gold standards to depict individual route choices. The good compromise between behavioral realism and the mathematical tractability of \logit is a key property for the estimation in discrete choice models and it will be also leveraged in the formulation and solution of our problem. 
		
		\begin{assumption}[Exogenous and deterministic O-D]
			\label{assumption:od}
			The origin-destination (O-D) demand matrix is deterministic and exogenous  \end{assumption}
		
		O-D demand estimation (\ODE) is a complex problem that has been largely studied in the network modeling literature. As an underdetermined problem, there is no guarantee on solution uniqueness for the estimated O-D matrix and this makes difficult to verify the consistency of the solution with a ground truth O-D matrix. The \ODLUE literature has extended \ODE to estimate the travelers' utility function coefficients on top of the O-D matrix. The \ODLUE problem is harder to solve due to the additional degrees of freedom introduced by an endogenous O-D matrix and therefore, as \ODE, it also suffers of solution non-uniqueness. 
		
		The assumption of an exogenous and deterministic O-D matrix allows us to focus on the estimation of the utility function coefficients, a first step to understand the mathematical properties of our problem. It can always be easily extended with any \ODE method to solve for \ODLUE. In the context of the existing literature, our methodology can be seen as a case where a reasonably accurate reference O-D matrix is available and assumed as the ground truth O-D matrix. As we will show later, we can still obtain satisfactory results if the exogenous O-D demand is noisy but not too far from the ground truth O-D.  Note also that our methodology can be extended to an iterative process where \ODE and \LUE alternate if the goal is to simultaneously estimate both the travelers' utility function coefficients and the O-D matrix. 
		
		\begin{assumption}[Linearity and homogeneity of utility function]\label{assumption:utility} The utility function is a linear weight of attributes and coefficients and the coefficients are common/homogeneous for all individuals traveling in the transportation network. \end{assumption}
		
		The choice of a linear-in-parameters utility function and with homogeneous coefficients is a standard assumption for the estimation of discrete choice models in travel behavior studies. The preference of linear over non-linear specifications of the utility function is motivated by parsimony arguments and on the convenience of dealing with a concave likelihood function in the estimation of multinomial logit models (\MNL) models. In our case, this assumption significantly facilitates the proofs of theoretical guarantees in our optimization problem. Note that this assumption could be relaxed by allowing a variation of the utility function coefficients at the O-D pair level or by letting the coefficients to be drawn from some arbitrary probability density function. Both modeling strategies can be seen as analogies to the systematic taste variation used in \MNL models and to the mixing distributions used in Mixed Logit models \citep{Train2002}.
		
		\begin{assumption}[Monotonic increase of and exogeneity of parameters of link performance functions]\label{assumption:link-performance-functions} The link performance functions are monotonically increasing respect to their link flows and their parameters are exogenous \end{assumption}
		
		The exogeneity of the parameters of the link performance is a standard assumption in transportation network analysis. Following the state of the practice in prior literature, we adopt the Bureau of Public Roads (BPR) link travel time function. For each link, the travel time monotonically increases respect to the link flow. For the sake of simplicity, we also assume that there is no interaction between links and hence, the travel time of a link is dependent on the traffic flow in that link only. A direct consequence of this assumption is the uniqueness of the link flow solution of \SUE-\logit. Some work in the literature has proposed methods to estimate the parameters of the link performance function, namely, relaxing the exogeneity assumption. Although it is feasible to estimate these parameters on top of the travelers' utility function coefficients within our modeling framework, the introduction of these additional degrees of freedom may cause identifiability issues or it could make the statistical inference of the utility function coefficients less reliable.

		\begin{assumption}[Constrained path set]
			\label{assumption:path-set}
			The set of paths between each O-D pair has a fixed size and it is dynamically updated over iterations with column generation methods
		\end{assumption}
		Most prior work on the \LUE problem assumes that the consideration set in an O-D pair is equal to the set all reasonable paths in that pair \citep{Yang2001,Wang2016,Lo2003,Liu1996}. Under this assumption, each path choice probability can be decomposed into a weight of exponentials of their link utilities. On one hand, this property avoids the enumeration of all paths in the network and it can significantly speed up the computation of \SUE-\logit that it is required to estimate the travelers' utility function coefficients. On the other hand, it implicitly assumes that travelers' consider all paths to travel between an O-D pair and hence, it may induce overlapping among paths. Path overlapping is known to affect the computation of path choice probabilities in multinomial logit models (\MNL) and it has been identified as a drawback of algorithms that rely on the set of all reasonable paths \citep{Yang2001,Wang2016,Lo2003,Liu1996}. 
		
		Navigation apps are widely used nowadays and they recommend a constrained set of options to travel between an O-D pair. This makes us believe that in modern real world applications the assumption of a constrained path set between O-D is more behaviorally plausible than considering all reasonable paths. Besides, a constrained path set is also expected to induce less path overlapping, which is detrimental \logit route choice models. A challenge of working with a constrained path set is the impossibility of performing a link level decomposition of the path choice probabilities and hence, the potential increase of computational cost in the network loading stage of \SUE-\logit. In addition, while path sets can be reasonable approximated with existing information, e.g. Google Maps recommendation, they may be inaccurate or not available for all O-D pairs. To improve our prior about the true composition of the path set, the path sets are dynamically updated via column generation methods \citep{Damberg1996,Garcia-Rodenas2009}. The size of the consideration sets is treated as a hyperparameter and it is constrained to small values to reduce computational burden.
  
    To capture the correlation between paths in the consideration set due to shared link segments, path utilities are corrected with the same principle used in the path size logit (PSL) model \citep{Ben-Akiva1999a}. Path utilities incorporate the logarithm of a factor that increase with the amount of overlapping among paths within the same consideration set. A path with no overlapping links needs no utility adjustment since the PSL factor has a size of one. Thus, the PSL correction reduces the chances of a violation of the Independence of Irrelevant Alternatives (IIA) assumption of the \MNL model. Previous literature suggests to generate paths that do not overlap with existing paths during the column generation phase \citep{Damberg1996}. However, this approach requires to solve a combinatorial problem on all possible subset of paths and it implicitly assumes that travelers' consideration set have paths with few or no overlap.

		\subsection{Stochastic user equilibrium with logit assignment (\SUE-\logit)}
		\label{ssec:stochastic-user-equilibrium}
		
		In a congested network, travel times are endogenous and dependent of the traffic flows. At equilibria, both traffic flows and travel times are expected to reach an stationary point where travelers have no incentives to switch to alternative paths. Depending on the underlying behavioral representation used to model travelers' route choices, different types of network equilibrium are induced. In Deterministic User Equilibrium (\DUE), travelers' choices are deterministic, meaning that the modeler fully knows the specification of the traveler' utility function. In contrast, in stochastic user equilibrium (\SUE), travelers' choices are assumed to be probabilistic and modelers' may be ignorant of a set of unobservable components of the traveler' utility function.  In the network context, at \SUE, no user believes he can improve his travel cost by unilaterally changing routes \citep{Daganzo1977b}. Under \SUE with \logit assignment (\SUE-\logit), travelers' are assumed to make choices consistent with a multinomial \logit model (\MNL).
		
		\subsubsection{Multinomial Logit (\MNL) route choice model}
		\label{ssec:multinomial-logit-route-choice-model}

			The \logit model remains as the gold standard to model route choices due to its good compromise between behavioral realism and mathematical tractability (McFadden, 1973). Consider a traveler $l \in L$ choosing a path $i$ within her consideration set $J_{l}$ and based on the latent utility $U_{jl}$ of each alternative path $j \in J_{l}$. Suppose that the modeler knows the observable component $V_{jl}$ of the travelers' latent utility but ignores a component $e_{jl}$ which can be safely assumed to be stochastic. Therefore, if any attribute relevant to route choice decision is unobserved, travelers' choices could look probabilistic but not deterministic from the modelers' perspective. In particular, if $e_{jl} \overset{\text{i.i.d}}{\sim} \ EV(0,\mu)$ the probability $p_i$ that a traveler $l$ chooses a path $i \in J_{l}$ will have the following closed form: 
			
			\begin{align}
				\label{eq:mnl-choice-probabilities}
				\displaystyle p_i = \dfrac{\displaystyle \exp(\mu V_i)}{\displaystyle \sum_{j \in A_{l}} \exp(\mu V_j) }
			\end{align}
			
			where $\mu > 0$ is a scale parameter of the Extreme Value (EV) Type 1 distribution and which is set to 1 for convenience and identification purposes. $\mu$ is inversely proportional to the variance of the random component, and thus, it is expected to be lower as smaller is the unobserved component of the latent utility. At a extreme case where the latent utility function is fully observed, $\mu \to \infty$ and the choice probabilities reduce to an indicator function taking the value 1 if the utility of an alternative is the highest within a choice set and 0 otherwise, i.e. the \DUE case.

			\subsubsection{SUE-logit with multi-attribute utility function}
			\label{ssec:sue-logit-multi-attribute-utility-function}

		The original formulation of the \SUE-\logit problem entails that route choices are made based on travel cost/time only \citep{Fisk1980}. Utility, however, may depend on additional attributes such as travel time reliability, waiting time or monetary costs (e.g. toll fees). To make a more explicit bridge between the travelers' utility function specified in route choice models and to extend the analysis for a multi-attribute case, we reformulated the \SUE-\logit problem as follows:
		
		\begin{maxi}
			{\vx, \vf}{  \sum_{a \in A}\int_{0}^{x_a} v_a(u,\vtheta) du - \left\langle \vf, \ln \vf\right\rangle }{}{}
			\addConstraint{\mIq\vf}{= \vq}{}
			\addConstraint{\mIx\vf}{= \vx}{}
			\addConstraint{\vx,\vf}{\geq \vzero}{}
			\label{eq:multi-attribute-sue-logit}
		\end{maxi}
	
		\bigskip
		where $\vx \in \sR_{\geq 0}^{|A|}, \vf \in \sR_{\geq 0}^{|H|}, \vq \in \sR_{+}^{|W|}, \mIq \in \sR_{\{0,1\}}^{|W| \times |H|}, \ \mIx \in \sR_{\{0,1\}}^{|A| \times |H|}$. Besides, $v_a (u, \vtheta) = \theta_t t_a(u) + \sum_{k \in K_{\mZ}} \theta_k \cdot Z_{ak}$ is the utility associated to link $a$ and at a traffic flow level $u$, $Z_{ak}$ is the value of the exogenous attribute $k \in K_{\mZ}$ at link $a$, $K_{\mZ}$ is the set of exogenous attributes at each link, $\vtheta_Z \in \sR^{|K_Z|}$ and $\vtheta = \begin{bmatrix} \theta_t & \vtheta_Z\end{bmatrix} \in \sR^{|K|}$ are the vector of coefficients associated to the exogenous attributes and to all attributes, respectively. Exogenous attributes may include the number of traffic lights, streets intersections or the level of income of the area where a link is located. From the specification of $v_a(\cdot)$, it follows that travel time ($t$) is the only endogenous attribute in the utility function and it is linearly weighting the preference coefficient $\theta_t$. The first order optimality condition of Problem \ref{eq:multi-attribute-sue-logit} gives the following path flow solution:
		
		\begin{equation}
			\label{eq:sue-logit-path-flows-solution}
			f^\star_{h} = q_{w} \frac{\displaystyle \exp\left( \sum_{a \in A} {v_a^{\star}} \delta_{ah}\right)}{\displaystyle\sum_{j \in H_{w}}^{} \exp\left(\sum_{a \in A} {v_a^{\star}} \delta_{aj}\right)} 
		\end{equation}
		
		where $\delta_{ah} = \mathbb{I}(\textmd{link } a \in \textmd{ path } h)$ and $v^\star_{a}$ is the link utility of link $a \in A$ at \SUE-\logit. As expected, the path flow vector $\vf^\star_{h}$ at \SUE-\logit  follows a \logit distribution. From a micro-level travel behavior standpoint, $q_{w}$ individuals traveling in the O-D pair $w \in W$ are making route choices according to a \logit model (Eq. \ref{eq:mnl-choice-probabilities}), and hence the path flow $\vf^\star_{h}$ is the aggregate of these individual decisions (Eq. \ref{eq:sue-logit-path-flows-solution}). A detailed derivation of the extension of the \citep{Fisk1980} formulation from a single to a multi-attribute case is included in \ref{appendix:ssec:derivations-sue-logit}.

		\begin{remark}
			In line with the rules of parameterization used to derive the \MNL,  the specification of the utility function defined to compute \SUE-\logit should account for the fact that the utility function coefficients are scaled by a factor $\mu \in \sR_{+}$ proportional to the variance of the unobservable component of the utility function i.e. $\vtheta = \mu \tilde{\vtheta}$ where $\tilde{\vtheta}$ is the unscaled vector of \logit coefficients and which is not identifiable. A direct consequence in the single attribute case is that only the sign but not the magnitude of the coefficients should assumed to be known by the modeler. This is particularly relevant in cases where estimates from external travel behavior studies are used to set the utility function coefficients of the route choice model defined for the computation of \SUE-\logit. Therefore, accounting for the scale factor of the \logit parameters here is critical for a correct economic and behavioral interpretation of the coefficients of multi-attribute utility functions but that is oftentimes overlooked in the \LUE and \ODLUE literature. 
		\end{remark}
		
		\subsubsection{Stochastic network loading}
		\label{sssec:stochastic-network-loading}
		
		Suppose the vector of the travelers' preferences $\vtheta \in \sR^{|D|}$, the values of the matrix of exogenous attributes $\mZ$ and the travel times at \SUE-\logit are known and that the goal is to find the resulting path/link flows in the transportation network. This is precisely the problem that stochastic network loading (\SNL) aims to solve. Assumptions about the composition of the consideration set can significantly speed up the computation of \SNL. A well-known example is the Dial algorithm which, under the assumption of path sets containing all reasonable paths, allows to decompose path probabilities into link level weights that can be then used to recursively obtain the resulting links flows in the transportation network. Alternatively, and in line with Assumption \ref{assumption:path-set}, we assume path sets with a constrained size and that are dynamically updated via a column generation method. The pseudo-code of our \SNL method is shown in Algorithm \ref{alg:stochastic-network-loading}, \ref{appendix:ssec:snl}.

		\subsubsection{Solution methods for \SUE-\logit}
		
		If travel times were insensitive to variation of link flows, namely, exogenous, a single computation of \SNL would suffice to obtain the link and path flows at \SUE-\logit. In practice, travel times are endogenous variables and thus, the computation of \SUE-\logit requires to perform \SNL multiple times. 
		
		The method of successive average (\MSA) is one of the prominent algorithms used to compute \SUE-\logit in the \ODLUE literature. At the initial iteration, \SNL is computed to generate a feasible link flow solution. For each of the following iterations, the convex combination of the \SNL link flow solutions at the current and the previous \MSA iteration is computed to obtain a new feasible solution. \MSA defines a step size $\lambda_i = 1/(1+i)$ to set the weights $\lambda_i$ and $1-\lambda_i$ that are used to compute the convex combination of solutions at iteration $i$. Because the feasible set of the \SUE-\logit is convex, the convex combination necessarily gives a feasible point. The process is repeated until some convergence criterion has been achieved, such as the difference between the current and previous feasible link flow solution.
		
		Frank-Wolfe (F-W) is a general purpose method for constrained convex optimization and that can outperform \MSA by allowing to make a smarter choice of the step size parameter $\lambda$. A key thing to notice is that the convex combination of two feasible link flows solutions obtained via \SNL provides a feasible descent direction for the objective function of traffic equilibria problem (Eq.  \ref{eq:multi-attribute-sue-logit}). Thus, the value of $\lambda$ used for the convex combination of solutions is chosen to best improve the \SUE-\logit objective. This solution strategy for \SUE-\logit  was originally implemented by \citet{MingyuanChen1991} with a utility function dependent on travel time only and it is also integrated in the disaggregate simplicial decomposition (\DSD) algorithm developed by \cite{Damberg1996}. An interesting feature of \DSD is the introduction of a column generation phase that dynamically update the consideration sets among O-D pairs. Consideration sets can be augmented with new paths according to different criterion such as the choice probabilities or the level of dissimilarity of the set of candidate paths respect to the existing paths among consideration sets.

		\subsection{Nonlinear least squares}
		
		Non-linear least squares (\NLLS) is a standard estimation method used in the \ODE and \ODLUE  literature. In contrast to maximum likelihood estimation \citep{Lo2003}, there is no need to make distributional assumptions about the data generating process \citep{Cascetta1997} of the traffic counts and there is a variety of specialized algorithms to minimize the \NLLS optimization objective \citep{Bazaraa2006}. Note that when the response function is linear, \NLLS reduces to ordinary least squares (\OLS) and thus, the \NLLS solution could be proved to be unique under mild conditions. Under a non-linear response function, there may exist multiple local minima and saddle points that could make the optimization to be sensitive to the starting points for optimization and to converge toward a local but not the global minima. Therefore, the performance of the optimization algorithms heavily depends on the class of non-linearity of the response function respect to the parameters of interest. 
		
		\subsubsection{Problem formulation}
		\label{sssec:nlls-problem-formulation}
		
		Consider the regression equation:
		
		\begin{equation}
			\label{eq:NLLS-regression-equation}
			\vy = m(\vbeta,\mX) + \vu
		\end{equation}
		\smallskip
		
		where $m(\cdot)$ is the response function, $\mX \in \sR^{|N| \times |K|}$ is the matrix of values of a set of exogenous attributes, $\vy \in \sR^{|N|}$ is the vector of values of the dependent variables,  $\vbeta \in \sR^{|K|}$ is the true parameter vector and $\rvu  \in \sR^{|N|}$ is a vector of random perturbations of some arbitrary distribution. 
		
		The \NLLS problem consists in finding the \NLLS estimator $\hat{\vbeta}_{NLLS}$ that minimizes the residuals, namely, the deviation between the vector of observed measurements $\bar{\vy}$ and the predictions of the response function:
		
		\begin{equation}
			\label{eq:NLLS-argmin-problem}
			\hat{\vbeta}^{\star} 
			= \arg \min_{\hat{\vbeta}} \|\bar{\vy} - m(\hat{\vbeta},\mX)\|_2^2 
		\end{equation}
		
		\medskip
		The application of the \NLLS formulation to our  problem is direct. Let's be $\hat{\vtheta} \in \sR^{|D|}$ the vector of estimated coefficients in the travelers' utility function,  $\vt \in \sR^{|A^o|}$ the vector of link travel times, $\mZ \in \sR^{|A| \times |K_Z|}$ the matrix of exogenous link attributes and $\vx(\hat{\vtheta}, \mZ, \vt)$ the vector with the link flow functions of any link $a \in A^o$. In particular, when all attributes of the travelers' utility function are exogenous and known, such that $\vt = \bar{\vt}$, $\vx(\hat{\vtheta}, \mZ, \bar{\vt})$ becomes a vector valued function that resembles the response function $m(\vbeta,\mX)$ in \NLLS and which has the following closed form:
		
		\begin{align}
			\label{eq:NLLS-link-flow-equation}
			\vx(\hat{\vtheta}, \mZ, \bar{\vt})  
			= \mIx \vf
			&= \mIx \left((\mIqT \vq) \circ  \vpf(\hat{\vtheta}, Z, \bar{\vt})  \right)
		\end{align}

		where $\vpf(\vtheta, \mZ, \bar{\vt})$ is the vector of path choice probabilities:

		\begin{align}
			\label{eq:NLLS-path-choice-probabilities-equation}
			\vpf(\hat{\vtheta}, \mZ, \bar{\vt}) &= \exp\left(\displaystyle \mIxT \vv_{x}(\hat{\vtheta}, \mZ, \bar{\vt})     \right) \oslash \left(\displaystyle   \mIqT \mIq \exp(\mIxT \vv_{x}(\hat{\vtheta}, \mZ, \bar{\vt})  )\right)
		\end{align}
		
		Note that $\oslash$ is an operator for element wise division and $\vv_{x}(\vtheta, \mZ, \bar{\vt})$ is the vector of link utilities:
		
		\begin{equation}
			\label{eq:NLLS-link-utilities}
			\vv_x (\hat{\vtheta}, \mZ, \bar{\vt})
			= \bar{\vt} \hat{\theta}_t  +   \mZ \hat{\vtheta}_Z
			= 
			\begin{bmatrix}
				\bar{\vt} & \mZ
			\end{bmatrix}
			\begin{bmatrix}
				\hat{\theta}_t
				\\
				\hat{\vtheta}_Z
			\end{bmatrix}
			=
			\begin{bmatrix}
				\bar{\vt} & \mZ
			\end{bmatrix}
			\hat{\vtheta}
		\end{equation}
		
		\medskip

		Finally, if $\bar{\vx} \in \sR^{|A^{o}|}$ is the vector of observed traffic counts $\bar{x}$, the \NLLS estimator $\hat{\vtheta}^{\star}$ of $\vtheta \in \sR^{|D|}$ can be obtained as follows:
		
		\begin{equation}
			\label{eq:NLLS-argmin-learning-problem}
			\hat{\vtheta}^{\star} = \arg \min_{\hat{\vtheta}} \|\vx(\hat{\vtheta}, \mZ, \bar{\vt}) -\bar{\vx}\|_{2} 
		\end{equation}	
	
	\subsubsection{Identifiability of utility function coefficients}
	\label{sssec:identifiability-illustrative-example}
	
	The use of the \MNL model in our problem imposes rules of identifiability that are similar to those used in discrete choice models. For the computation of the \logit choice probabilities, differences in utility among the alternatives is what matters \citep{Walker2002a}. Therefore, if an attribute of the paths' utility function is the same within each path set, then the coefficient weighting that attribute will be, by construction, not identifiable. From a behavioral point of view, travelers would perceive no difference in utilities when facing a path choice decision and thus, their choices will be no informative about the strength of their preference for that attribute. From an optimization standpoint, there will be no gradient respect to the attribute's coefficient because the choice probabilities are invariant to changes in the value of that coefficient. 
	
	A second rule of identifiability refers to the inclusion of alternative specific constants, which may lead to an over-specification of the utility function. The practical implication in our problem is that the specific constant in one of the paths connecting each O-D pair should be fixed, e.g. setting a constant to zero. A third rule of identifiability is related to the minimum amount of traffic counts measurements required to estimate a certain number of coefficients in the utility function, which can be referred as empirical identification \citep{Walker2002a}. Similar to the rank condition in \OLS, the number of observations should higher or equal than the number of parameters but, in practice, a larger amount of data may be needed due to the existence of collinearity between observations.

	To show the relevance of the identification rules, let's consider the network in Figure \ref{fig:illustrative-example} and suppose that there are link flow measurements available for the two links. Assume the network is uncongested, $t_1(x_1) \approx t_1^0, \ t_2(x_2) \approx t_2^0$, and that the observed link counts perfectly matched the true link flow solution at equilibria, i.e. $(\bar{x}_1,\bar{x}_2)  = (x^{\star}_1,x^{\star}_2), \ \bar{x}_1 + \bar{x}_2 =\bar{q} $.
	From Eq. \ref{eq:link-flow-solution-illustrative-example}, Section \ref{sec:illustrative-example}, we note that the link flow solution of the inner level problem can be written in closed form as: 
	
	\begin{equation}
		\label{eq:sue-solution-small-uncongested-network}
		x^{\star}_1 = \frac{\bar{q}}{1+\exp{(\theta_t(t^0_1-t^0_2) + \theta_c(c_1-c_2))}} \ , \  x^{\star}_2 = \frac{\bar{q}}{1+\exp{(\theta_t(t^0_2-t^0_1) + \theta_c(c_2-c_1))}}
	\end{equation}
	
	which satisfies both the conservation constraints between travel demand and path flows and the non-negativity constraints of link flows (Eq. \ref{eq:multi-attribute-sue-logit}, Section \ref{sec:illustrative-example}) for $\forall \theta_t \in \sR, \forall\theta_c \in \sR$. From the two link count measurements, we can derive a system of two equations which are dependent on the two unknowns coefficients of the utility function:
	\begin{align}
		\label{eq:inverse-sue-solution-uncongested-small-network}
		\theta_t(t_1(\bar{x}_1)-t_2(\bar{x}_2)) + \theta_c(c_1-c_2) &= \ln\left(\frac{\bar{q}}{\bar{x}_1}-1\right) = \ln\left(\frac{\bar{x}_1 + \bar{x}_2}{\bar{x}_1}-1\right) = (\ln\left(\bar{x}_2\right)-\ln\left(\bar{x}_1\right)) \nonumber \\
		\theta_t(t_2(\bar{x}_1)-t_1(\bar{x}_2)) + \theta_c(c_2-c_1) &= \ln\left(\frac{\bar{q}}{\bar{x}_2}-1\right) = \ln\left(\frac{\bar{x}_1 + \bar{x}_2}{\bar{x}_2}-1\right) = -(\ln\left(\bar{x}_1\right)-\ln\left(\bar{x}_2\right))
	\end{align}
	
	Note that in line with the first identification rule, Eq. \ref{eq:inverse-sue-solution-uncongested-small-network} can be solved for $\theta_c$ or $\theta_t$ if the difference of travel time or cost are zero, respectively (see Section \ref{sec:illustrative-example}). Besides, from the third rule of empirical identification, the two traffic count measurements would suffice to identify the two parameters. However, the two equations in  Eq. \ref{eq:inverse-sue-solution-uncongested-small-network} are identical, hence linearly dependent and thus, one equation is not providing additional information to identify another coefficient of the utility function.

	\begin{remark}

		The objective function of the optimization problem may be globally minimized in a point that is not attainable or where the gradient does not vanish. Suppose that the traffic counts are consistent with \UE, such that  $(\bar{x}_1,\bar{x}_2) = (0,\bar{q})$ or $(\bar{x}_1,\bar{x}_2) = (\bar{q},0)$. With utility function dependent on travel time only, these link flow measurements are consistent with the limiting cases where $\theta_t \to -\infty$ or $\theta_t \to \infty$, respectively. While the identification rules are satisfied, the data generating process does not follow \SUE-\logit. To our knowledge, testing for the \SUE-\logit assumption remains an open question in the literature. If this assumption does not hold in practice, the optimization algorithm chosen to fit the travel time coefficient is expected to improve the objective function over iterations even when the global optima is not attainable.

	\end{remark}
	
	\begin{remark}
		
		Suppose that the monetary cost and the free flow travel times are equal in the two links, i.e. $c_1 = c_2, \ t^{0}_1 = t^{0}_2$. Under \UE, by definition, the link travel times will be the same for any level level of demand $\bar{q}$. Note that the later will be true even if the free flow travel time of the links are different, provided that the demand level surpasses a certain threshold such that both links are utilized. Because the travelers are uniformly distributed among paths and the path utilities are the same, \SUE-\logit will reproduce the same equilibria. In line with the first rule of identifiability, the travel time coefficient $\theta_t$ is not identifiable but regardless, any value of $\theta_t$ will perfectly reproduce the observed link flows. An optimization library will not necessarily warn about these identification problems and it may return an arbitrary estimate for $\theta_t$. A basic sanity check is to analyze if the objective function improves over iterations. Note, however, that this identifiability issue will not affect the goodness of fit but only statistical inference.

	\end{remark}

	\subsubsection{Solution methods}
	
	Second order optimization methods remain as the gold standard to estimate the travelers' utility function from traffic counts in the \LUE and \ODLUE literature. One of the earliest examples in the \ODLUE literature is found in \cite{Liu1996}, who use the secant method, a Quasi-Newton method for unidimensional optimization, to estimate the coefficients of a utility function dependent on travel time only. More recent examples are \citet{Yang2001} and \citet{Wang2016} who solved the outer level problem of the \ODLUE via Sequential Quadratic Programming (SQP), a variant of the Newton method for constrained optimization. Applications of second order methods are also found in travel behavior research where Broyden–Fletcher–Goldfarb–Shannon (\BFGS) remains as a preferred optimizer to find the maximum likelihood estimates in discrete choice models \citep{Train2002a}.
	
	Second order optimization methods can achieve a superquadratic or superlinear convergence in convex problems. They are appealing when datasets are of moderate size and thus, when the computational cost for the calculation or approximation of the Hessian matrix is reasonable. However, their convergence guarantees heavily relies on how well the curvature of the objective function informs about the direction of its steepest descent/ascent. Besides, in non-convex problems, their performance strongly depends on how close the initial points for optimization are to a local optima and on whether these points are located within a locally convex region of the objective function. Furthermore, the frequent sign changes of the curvature in non-convex functions may cause that second order optimization methods does not converge to a local minima or that they get stuck at saddle points or flat regions of the optimization landscape. 
	
	First order methods have become a gold standard for non-convex optimization due in part to the huge computational gains that are attained when avoiding Hessian matrix computation. The machine learning community has made continued efforts to develop multiple variants of these methods that are able to accelerate the optimization and without compromising computational costs and that can perform reasonably well in highly non-convex optimization landscapes \citep{Staib2019a}. Some optimizers that remains in state-of-the-art applications include stochastic gradient descent (\SGD) \citep{Robbins1951a} and the Adagrad \citep{Duchi2011} and Adam \citep{Kingma2015} optimizers. 

	In the context of quasiconvex optimization problems, \cite{HazanLevy2015} proposed a modified version of gradient descent (\GD) in which the gradient direction is normalized by its norm.  The existence of flat regions in quasiconvex problems cause that the gradients become small despite that they may be pointing to the right direction of improvement. In this setting, the normalization of the gradient is helpful to keep improving the objective function in flat regions as well as to avoid gradient explosion in sharp regions of the feasible space. The application of normalized gradient descent (\NGD) in unidimensional unconstrained optimization problems generates parameter updates according to the sign of the first derivative of the objective function. 
	
	Interestingly, the \ODLUE and \LUE literature has not explored the use of first order optimization methods to estimate the utility function coefficients using system level data. Further sections of this paper will show some theoretical and empirical results to support the choice of \NGD over the standard second order optimization methods in previous work. 
	
	\subsection{Bilevel optimization}
	\label{ssec:bilevel-optimization}
	
	\subsubsection{Problem formulation}
	
	The sole application of the \NLLS method to estimate the travelers' utility function coefficients from network level data requires that travel times are known and exogenous. In uncongested networks, this assumption may be enforced by setting the travel times to be equal to the free flow travel times. In congested networks, this assumption is not plausible and hence, it is necessary to solve both a \NLLS and a \SUE-\logit problems of a bilevel formulation in an iterative fashion. Below is the standard formulation of the problem solved in the \LUE literature and that we extend to account for travelers' utility functions with multiple coefficients:
	
	\begin{equation}
		\label{eq:bilevel-formulation-learning-problem}
		\begin{aligned}
			\min \limits_{\hat{\vtheta}} \  & \ell(\hat{\vtheta}) = \| \vx(\hat{\vtheta}) - \bar{\vx}\|_2^2 \\
			\st & \ \vx(\hat{\vtheta}) \in \arg\max_{\vx} g(\hat{\vtheta}, \vx,\vf) \\
			&
			\begin{aligned}[t]
				\max_{\vx,\vf} \ & g(\hat{\vtheta}, \vx,\vf) 
				= \displaystyle \   -\left\langle \vf, \ln \vf\right\rangle +  \sum_{a \in A} x_a \sum_{k \in K_{\mZ}} \theta_k \cdot Z_{ak} +   \sum_{a \in A}\int_{0}^{x_a}  \theta_t t_a(u) \ du \\
				\st & \ \mIq\vf = \vq \\
				& \ \mIx\vf  = \vx\\
				& \ \vx,\vf \geq \vzero  
			\end{aligned}\\
		\end{aligned}
	\end{equation}
	
	where $\ell(\hat{\vtheta})$ and $g(\hat{\vtheta}, \mZ, \vx,\vf)$ are the objective functions at the upper and lower level and $\hat{\vtheta} \in \mathbb{R}^{|D|}$ is the vector of estimated utility function coefficients. The idea behind expressing the objective functions in terms of $\hat{\vtheta}$ can be understood as follows. At the inner level, any value of $\hat{\vtheta}$ will lead to a different assignment of path/link flows in the network, hence, the objective function $g(\hat{\vtheta}, \vx,\vf)$ depends on $\hat{\vtheta}$. Note also that, thanks to the \SUE-\logit property (Eq. \ref{eq:sue-logit-path-flows-solution}), $\ell(\hat{\vtheta})$ can be approximated as an analytical function of $\hat{\vtheta}$.

	\subsubsection{Solution methods}
	
	A brute-force strategy to solve the bilevel formulation in Problem \ref{eq:bilevel-formulation-learning-problem} would consist of performing a grid search on all feasible values of $\hat{\vtheta}$. Therefore, for each value of $\hat{\vtheta}$, \SUE-\logit would be first computed at the inner level and then the objective function at the upper level would be evaluated using the closed form expression derived in Eq. \ref{eq:NLLS-link-flow-equation}, Section \ref{ssec:sue-logit-multi-attribute-utility-function}. Lastly, a candidate solution would be the value of $\theta$ that minimizes $\ell(\hat{\vtheta})$. Note however that this solution strategy becomes intractable in a higher dimensional space resulting from a multi-attribute travelers' utility function or from the joint estimation of the utility function coefficients and an O-D matrix. 
	
	A standard heuristic employed in the \ODE and \ODLUE literature to solve bilevel formulations is an alternating optimization of the inner and outer level problem. Over iterations, this heuristic is expected to converge toward a stationary point that, at the outer level, minimizes the gap between observed and predicting flows and, at the inner level, induces a link/path flow solution that satisfies traffic equilibria. The solution strategy in our problem boils down to performing gradient-based updates of the utility function coefficients at the outer level and to compute \SUE-\logit at the inner level using the coefficients $\hat{\vtheta}$ previously updated at the outer level. The resulting travel times from the \SUE-\logit computation are then feed to the outer level and the process is repeated until some criterion is convergence is met. There are many variants of solution methods in the literature but most can be summarized with the following steps: 
	
	
	\begin{enumerate}
		\item Initialization of estimated parameters (e.g. utility function coefficients or O-D matrix) 
		\item Inner level problem: solve \SUE-\logit and update travel times in the network
		\item Outer level problem: solve \NLLS with new travel times and update estimated parameters
		\item Repeat 2 and 3 until fulfilling some convergence criteria
	\end{enumerate} 
	
	In our application and similar to the \LUE literature, the utility function coefficients in $\hat{\vtheta}$ are the only free parameters.

	\begin{remark}
		A key advantage of \SUE-\logit in network modeling applications is the existence of a closed form and unique solution at path and link flow spaces. This property can significantly ease the solution of the typical bilevel optimization structure that arises in \ODLUE problems. During the alternating optimization, the closed form solution is valid but only for the set of travel times obtained in the inner level problem. Thus, it is convenient to perform small updates of $\hat{\vtheta}$ at the outer level problem such that the inner level solution remains valid.  
	
	\end{remark}

	\section{Mathematical properties of the optimization problem}
	\label{sec:mathematical-properties-learning-problem}
	
	This section studies the mathematical properties of our optimization problem. Note that our problem falls into the \LUE class and since \LUE is a subclass of \ODLUE, the analysis of mathematical properties of our problem is also relevant for the \ODLUE problem.

	\subsection{Exogenous case}
	\label{ssec:exogenous-utility-attributes}
	
	
	To ease the analysis of the mathematical properties, it is convenient to start assuming that the utility function is dependent on exogenous attributes only. This assumption significantly facilitates the solution of the bi-level formulation because the inner level problem  at each iteration consists of a single pass of \SNL. This assumption is plausible in an uncongested network where travel times are close to the free flow travel times or when there is access to travel time measurements for every link in the network. Alternatively, this assumption may be enforced by either setting the travel time coefficient of the utility function or the parameter $\alpha$ of the cost performance to zero but this is arguably less realistic.
		
		\subsubsection{Non-convexity} 
		\label{sssec:non-convexity}

		The non-convexity of the \ODLUE and \LUE problems respect to the travelers' utility function coefficients has been discussed in prior literature but via counterexamples only \citep{Yang2001}. To our knowledge, no formal proof of the non-convexity has been given for neither \LUE and \ODLUE. A key observation to prove the non-convexity of these problems is that their objective function $f$ is upper-bounded respect to the utility function coefficients. In a one-dimensional case where $\theta \in \mathbb{R}$, the upper-bounds of $f$ corresponds to the solutions of \UE when $\theta \to \infty^+$ and  $\theta \to \infty^-$. To extend this intuition to a multi-dimensional case where $\vtheta \in \mathbb{R}^{|D|}$, it is useful to introduce a general definition of convexity \citep{Boyd2004}.

		\begin{definition}[convexity]
			\label{def:convexity-definition}
			A function $f$ is convex iff $\forall y_1, y_2 \in S, \lambda \in [0,1]$:
			$$
			f(\lambda \vy_1 + (1-\lambda)\vy_2) \leq \lambda f(\vy_1) + (1-\lambda)f(\vy_2)
			$$
		\end{definition}

		\begin{prop}[bounds of objective function]
			\label{prop:upper-lower-bounds-uncongested-network}
			The objective function of the \LUE problem is lower-bounded and upper-bounded 
		\end{prop}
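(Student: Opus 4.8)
The plan is to treat the two bounds separately, since they differ sharply in difficulty. The lower bound is immediate: $\ell(\hat{\vtheta}) = \|\vx(\hat{\vtheta}) - \bar{\vx}\|_2^2$ is a squared Euclidean norm, so $\ell(\hat{\vtheta}) \geq 0$ for every $\hat{\vtheta} \in \sR^{|D|}$, regardless of the structure of $\vx(\hat{\vtheta})$. The function is therefore lower-bounded by zero.

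For the upper bound, the key point is that the predicted link flows $\vx(\hat{\vtheta})$ remain confined to a bounded region no matter how extreme the coefficients $\hat{\vtheta}$ become. First I would invoke the \SUE-\logit path flow solution (Eq. \ref{eq:sue-logit-path-flows-solution}) to write each path flow as $f^\star_h = q_w\, p_h$, where $p_h \in (0,1)$ is a \logit choice probability and $q_w$ is the fixed, exogenous O-D demand (Assumption \ref{assumption:od}). Because the \logit probabilities over the consideration set $H_w$ of each O-D pair sum to one, we have $\sum_{h \in H_w} f^\star_h = q_w$, and in particular $0 \leq f^\star_h \leq q_w$. Next I would pass this through the link-path incidence: since $x_a = \sum_h \delta_{ah} f^\star_h$ with $\delta_{ah} \in \{0,1\}$ and all path flows non-negative, it follows that $0 \leq x_a \leq \sum_h f^\star_h = \sum_{w \in W} q_w = \|\vq\|_1$ for every link $a \in A^o$. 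Hence $\vx(\hat{\vtheta})$ lies in the box $[0,\|\vq\|_1]^{|A^o|}$, uniformly in $\hat{\vtheta}$.

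With the link flows confined to a bounded set, the upper bound follows from the triangle inequality: $\|\vx(\hat{\vtheta}) - \bar{\vx}\|_2 \leq \|\vx(\hat{\vtheta})\|_2 + \|\bar{\vx}\|_2 \leq \sqrt{|A^o|}\,\|\vq\|_1 + \|\bar{\vx}\|_2$, and squaring yields a finite upper bound on $\ell$ that is independent of $\hat{\vtheta}$. Since $\bar{\vx}$ and $\vq$ are both fixed data, this constant is explicit.

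I do not expect a genuine obstacle, but the one step worth stating carefully is the \emph{uniformity} in $\hat{\vtheta}$: the bound must hold for all coefficient vectors, including the limiting regimes $\hat{\vtheta} \to \pm\infty$ that correspond to \UE-like assignments. This is precisely what the probability-simplex structure of the \logit solution guarantees, as the choice probabilities always sum to one however extreme the utilities grow, so no compactness or continuity argument on $\hat{\vtheta}$ is required. If one works instead with the full bilevel problem rather than the exogenous case, I would additionally note that the inner \SUE-\logit solution has the same probability-simplex structure, so the identical box $[0,\|\vq\|_1]^{|A^o|}$ contains $\vx(\hat{\vtheta})$ and the argument is unchanged.
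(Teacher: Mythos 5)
Your proposal is correct and takes essentially the same route as the paper: both proofs hinge on the single key observation that every predicted link flow is trapped in $[0,Q]$, where $Q=\sum_{w\in W}q_w$ is the total demand (your $\|\vq\|_1$), so the squared-error objective cannot grow without bound in $\hat{\vtheta}$. If anything, your finishing steps are cleaner than the paper's: the trivial lower bound of zero and the triangle-inequality upper bound avoid the paper's element-wise expansion, which mishandles the cross terms (it writes $x_i^2(\vtheta)-2\bar{x}_i+\bar{x}_i^2$ where $(x_i(\vtheta)-\bar{x}_i)^2$ requires $-2x_i(\vtheta)\bar{x}_i$), even though the intended conclusion — boundedness — is the same.
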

		
		\begin{proof}
		
			The objective function $\ell: \mathbb{R}^{|D|} \to \mathbb{R}$ of the \LUE problem can be expressed as:
			$$
			\ell(\vtheta) = \|\vx(\theta) -\bar{x}\|_2^2
			= \sum_{i \in N} \left(x_i(\vtheta)- \bar{x}_i\right)^2
			$$
			where $N$ is the set of observed traffic count measurements. Each traffic function $x_i, \forall i \in N$ can be bounded as: 
			\begin{align}
				\label{eq:bound-traffic-count-function}
				0 &\leq x_i(\theta) \leq \sum_{w \in W} q_w = Q
			\end{align}
			where $Q \in \sR_+$ is the total demand, namely, the sum of all cells in the O-D matrix $\mQ \in \sR^{V\times V}$. Then, lower and upper bounds of the objective function can be found as follows:
			\begin{align}
				\label{eq:bound-loss-function}
				0 &\leq x^2_i(\theta) \leq Q^2 \nonumber \\
				-2\bar{x}_i +\bar{x}_i^2 &\leq x^2_i(\theta) -2\bar{x}_i +\bar{x}_i^2\leq Q^2 -2\bar{x}_i+\bar{x}_i^2 \nonumber  \\
				\bar{x}_i(\bar{x}_i-2)&\leq (x_i(\theta) -\bar{x}_i)^2\leq (Q -\bar{x}_i)^2 \nonumber  \\
				\sum_{i \in N} \bar{x}_i(\bar{x}_i-2)  &\leq \sum_{i \in N} (x_i(\theta) -\bar{x}_i)^2 \leq \sum_{i \in N} (Q -\bar{x}_i)^2
				\nonumber \\
				(\bar{\vx}^\top-2\cdot\vone^\top)\bar{\vx}  &\leq \|\vx(\theta) -\bar{\vx}\|_2^2 \leq \|\vone^\top \vq -\bar{\vx}\|_2^2
			\end{align}
			
			which completes the proof
		\end{proof}
		
		Let's now prove the non-convexity of the \LUE problem with the following proposition:

		\begin{prop}[Non-convexity of \LUE problem]
			\label{prop:non-convexity}
			Suppose the coefficients of the travelers' utility function in the \LUE problem are identifiable. Then, the \LUE problem is not convex if its objective function is upper-bounded
		\end{prop}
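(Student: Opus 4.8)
The plan is to argue by contradiction, reducing the statement to the classical fact that a finite convex function on all of $\mathbb{R}^{|D|}$ that is bounded above must be constant. First I would observe that $\ell(\vtheta) = \|\vx(\vtheta) - \bar{\vx}\|_2^2$ is finite for every $\vtheta \in \mathbb{R}^{|D|}$, since the \logit flows $x_i(\vtheta)$ are well defined on the whole (unconstrained) domain, and that by Proposition \ref{prop:upper-lower-bounds-uncongested-network} it is bounded above by $\|\vone^\top \vq - \bar{\vx}\|_2^2$ uniformly in $\vtheta$. It therefore suffices to show that a convex $\ell$ with these two properties would have to be constant, and then to contradict constancy using the identifiability hypothesis.

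Next I would establish the constancy fact by restriction to lines. Assuming for contradiction that $\ell$ is convex, fix any $\vtheta_1, \vtheta_2 \in \mathbb{R}^{|D|}$ and consider the univariate restriction $g(s) = \ell(\vtheta_1 + s(\vtheta_2 - \vtheta_1))$, which is convex in $s \in \mathbb{R}$. If $g(0) < g(1)$, the non-decreasing-slope property of univariate convex functions yields $g(s) \geq g(1) + (s-1)\,(g(1)-g(0)) \to +\infty$ as $s \to +\infty$, contradicting the uniform upper bound. Hence $g(0) = g(1)$, i.e. $\ell(\vtheta_1) = \ell(\vtheta_2)$; since $\vtheta_1, \vtheta_2$ were arbitrary, $\ell$ would be constant on $\mathbb{R}^{|D|}$.

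Finally I would invoke identifiability to contradict constancy. Identifiability means the coefficients genuinely influence the observable flows, so $\vtheta \mapsto \vx(\vtheta)$ varies along each identifiable direction and $\ell$ cannot be flat. Concretely, following the single-attribute computation of Section \ref{sec:illustrative-example}, driving $\vtheta$ to $\pm\infty$ along an identifiable coordinate sends the flows toward distinct extreme (\UE-type) configurations whose objective values differ from the interior value at a minimizer, so $\ell$ takes at least two distinct values. This contradicts the constancy forced by convexity, and therefore $\ell$ is not convex.

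I expect the main obstacle to be this last step: rigorously converting \emph{``the coefficients are identifiable''} into \emph{``$\ell$ is non-constant.''} The subtlety is that $\vx(\vtheta)$ varying does not by itself force $\|\vx(\vtheta) - \bar{\vx}\|_2$ to vary, since the flow curve could a priori lie on a sphere centered at $\bar{\vx}$. I would close this gap by exploiting the monotone \logit dependence of the flows on utility differences (Eq. \ref{eq:link-flow-solution-illustrative-example}) together with the strict bounds $0 \le x_i(\vtheta) \le Q$ from Proposition \ref{prop:upper-lower-bounds-uncongested-network}, which pin down distinct limiting values of $\ell$ in opposite directions and thereby rule out the spurious constant-norm scenario.
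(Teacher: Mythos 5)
Your proof is correct and takes essentially the same route as the paper's: assume convexity, show that a convex function on all of $\mathbb{R}^{|D|}$ that is bounded above must be constant (your line-restriction bound $g(s) \geq g(1) + (s-1)\,(g(1)-g(0))$ is exactly the paper's inequality after the substitution $s = 1/\lambda$, with the evaluation point escaping to infinity along the ray through $\vtheta_1$ and $\vtheta_2$), and then contradict constancy using identifiability. The paper, like you, treats the step ``identifiability implies $\ell$ is non-constant'' as immediate, so the extra care you flag there (ruling out the flow curve lying on a sphere centered at $\bar{\vx}$) goes beyond what the paper itself establishes but does not alter the approach.
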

		
		\begin{proof}
			Let's prove this by contradiction. Assume that the \LUE objective function  $\ell(\vtheta) = \|\vx(\vtheta) -\bar{\vx}\|_2^2, \ \forall \vtheta \in \mathbb{R}^{|D|}$ is convex. Then, let's use Definition \ref{def:convexity-definition} of convexity at points $\vy_1 = \frac{\vtheta_1 - (1-\lambda)\vtheta_2}{\lambda}$ and $\vy_2 = \vtheta_2$, with $\vy_1, \vy_2,\vtheta_1,\vtheta_2 \in \mathbb{R}^{|D|}$ and $\lambda \in ]0,1]$:
			\begin{align*}
				\ell(\lambda \vy_1 + (1-\lambda)\vy_2) &\leq \lambda \ell(\vy_1) + (1-\lambda)\ell(\vy_2)\\
				\ell\left(\lambda \left(\frac{\vtheta_1 - (1-\lambda)\vtheta_2}{\lambda}\right) + (1-\lambda)\vtheta_2\right) &\leq \lambda \ell\left(\frac{\vtheta_1 - (1-\lambda)\vtheta_2}{\lambda}\right) + (1-\lambda)\ell(\vtheta_2)\\
				\ell(\vtheta_1)  &\leq \lambda \ell\left(\frac{\vtheta_1 - (1-\lambda)\vtheta_2}{\lambda}\right) + (1-\lambda)\ell(\vtheta_2) 
			\end{align*}
			
			which implies that: 
			
			\begin{equation}
				\label{eq:non-convexity-proof1}
				\ell\left(\frac{\vtheta_1 - (1-\lambda)\vtheta_2}{\lambda}\right) \geq \frac{\ell(\vtheta_1)-(1-\lambda)\ell(\vtheta_2)}{\lambda}= \frac{\ell(\vtheta_1)-\ell(\vtheta_2)}{\lambda} + \ell(\vtheta_2)
			\end{equation}
			
			If $\vtheta$ is identifiable, $f$ is not a constant function and $\exists \lambda \in [0,1]: \ell(\vtheta_1) > \ell(\vtheta_2)$. From Eq. \ref{eq:non-convexity-proof1}, when $\ell(\vtheta_1) > \ell(\vtheta_2)$, the RHS grows with no bound as $\lambda \to 0^{+}$. Hence, the function $f$ is not upper bounded, which generates a contradiction and it completes the proof.

		\end{proof}
		
		Convexity  is not a necessary but a sufficient condition for global optimality. Therefore, before ruling out the global optimality of our problem, it is convenient to study a generalization of convexity known as a pseudo-convexity that imposes weaker conditions on the optimization problem and that provides sufficient conditions for global optimality.

		\subsubsection{Pseudo-convexity}
		\label{ssec:pseudo-convexity-small-network}

		Pseudo-convex functions are a subclass of the quasi-convex functions. Hence, the properties of pseudo-convex functions are also valid for quasi-convex functions, while the converse is not true \citep{Mangasarian1965}. Pseudo-convexity is a generalization of convexity that, as quasi-convexity, extends the notion of unimodality to higher dimensional spaces. In contrast to quasi-convex functions but similar to convex functions, the first order necessary optimality condition of pseudo-convex functions suffices for global optimality \citep{Crouzeix1982}. This property will be key later to prove the existence of a global minimizer in our optimization problem. A formal definition of pseudo-convexity is the following \citep{Bazaraa2006}:
		
		\begin{definition}[Pseudo-convexity]
			\label{def:pseudo-convexity}
			$f$ is a \textit{pseudo-convex} function on the feasible set $S$ iff $\forall x_1,x_2 \in S$: 
			
			\begin{equation}
				f(x_2) < f(x_1) \implies \nabla f(x_1)^\top(x_2-x_1) < 0
				\label{eq:pseudo-convexity-definition-1a}
			\end{equation}
			
			or equivalently, 
			
			\begin{equation}
				\label{eq:pseudo-convexity-definition-1b}
				\nabla f(x_1)^\top(x_2-x_1) \geq 0 \implies  f(x_2) \geq f(x_1)  
			\end{equation}
			
		\end{definition}

		The definition of pseudo-convexity can be assessed via the augmented hessian of the objective function \citep{Bazaraa2006}. Formally, $f$ is pseudoconvex if there exists a scalar $v, \  0 \leq v < \infty$ such that its augmented Hessian $H(x) + v \nabla f(x) \nabla f(x)^T$ is positive semidefinite for all $x \in X$ \citep{Crouzeix1982}. This proposition can be also evaluated in terms of the signs of the principal minors of the bordered Hessian \citep{Mereau1974}. 
		
		For the remaining analyses of mathematical properties, it will be also helpful to introduce Definition \ref{def:coordinate-wise-pseudo-convexity} of coordinate-wise pseudo-convexity. A function $f$ is coordinate-wise pseudo-convex if, when all coefficients of the utility function except for one coefficient are fixed, it is pseudo-convex respect to the non-constant coefficient. Note that if the function is dependent on a single attribute, the definitions of coordinate-wise pseudo-convexity and pseudo-convexity are equivalent. Formally:

		\begin{definition}[Coordinate wise pseudo-convexity]
			\label{def:coordinate-wise-pseudo-convexity}
			
			$\ell: \sR^{|D|} \to \sR$ is a coordinate-wise \textit{pseudo-convex} function iff $\forall i \in D, \forall \vtheta^1,\vtheta^2 \in S_i$, such that $S_i = \{\vtheta^1,\vtheta^2 \in \mathbb{R}^{|D|}\ | \ \theta_j^1 = \theta_j^2, \forall j \neq i \}$:
			\begin{equation}
				\ell(\vtheta^2) < \ell(\vtheta^1) \implies \dfrac{\partial \ell(\vtheta^1)}{\partial \theta_i}(\theta_i^2-\theta_i^1) < 0
				\label{eq:coordinate-wise-pseudo-convexity-definition-1a}
			\end{equation}
			
			or equivalently, 
			
			\begin{equation}
				\label{eq:coordinate-wise-pseudo-convexity-definition-1b}
				\dfrac{\partial \ell(\vtheta^1)}{\partial \theta_i}(\theta_i^2-\theta_i^1) \geq 0 \implies  \ell(\vtheta^2) \geq \ell(\vtheta^1)  
			\end{equation}

		\end{definition}

		Proposition \ref{prop:pseudoconvexity-uncongested-network}, \ref{appendix:ssec:coordinate-wise-properties} shows that coordinate-wise monotonicity of the traffic count (response) functions implies coordinate-wise pseudo-convexity of the objective function of the \LUE problem under exogenous travel times. The definition of coordinate-wise monotonicity is the following:

		\begin{definition}[Coordinate-wise monotonicity of response functions]
			\label{assumption:monotonocity-traffic-count-functions}
			The traffic count (response) functions are monotonic respect to each coefficient of the utility function if, when all coefficients of the utility function except for one coefficient are kept constant, the response functions are monotonic respect to the non-constant coefficient.
		\end{definition}

		Propositions \ref{prop:monotonocity-softmax} and \ref{prop:monotonicity-softmax-sum}, \ref{appendix:ssec:monotonicity-traffic-count-functions} illustrate two cases where under mild assumption the coordinate-wise monotonicity of the traffic flow functions holds in a general transportation network. While it is easy to create counter examples where the coordinate-wise monotonicity of the traffic flow functions is violated, analysis on synthetic data suggests that the assumption holds in practice (see Section \ref{sssec:monotonicity-pseudoconvexity-small-networks}).

		\subsubsection{Existence and uniqueness of global minima}
		
		In absence of measurement error in the traffic counts, the proofs for existence and uniqueness of a global optimality are direct (Propositions \ref{prop:existence-uncongested-network-no-error} and \ref{prop:uniqueness-uncongested-network-no-error}, \ref{ssec:existence-local-minima-small-network}). While the absence of measurement error can be enforced in experiments with synthetic data, it is implausible in real world problems where traffic count data is expected to be noisy. Fortunately, the pseudo-convexity of the objective function in our problem can be leveraged to extend the proofs of existence and uniqueness in presence of measurement error. A key property of pseudo-convex functions that also holds in convex functions is that every local minima is also global minima \citep{Mangasarian1965}. This relaxes the positive (semi)definite assumption of the Hessian of the objective function used to prove (strict) global optimality in convex problems. Following the proof in \citet{Bazaraa2006} for multi-variable pseudo-convex functions, Proposition \ref{prop:global-optimality-pseudo-convexity-uncongested-network} shows that this property can be also applied to prove the existence of global minima in our problem. 
		
		\begin{prop}[Sufficient condition for the existence of a global minima]
			\label{prop:global-optimality-pseudo-convexity-uncongested-network}
			Assume the objective function $f$ of the \LUE problem under exogenous travel times is coordinate-wise pseudo-convex. Then, if the gradient of $f$ vanishes at $\vtheta^{\star} \in \mathbb{R}^{|D|}$, $\vtheta^{\star}$ is a global minima.
		\end{prop}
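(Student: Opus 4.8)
The plan is to exploit the defining feature of pseudo-convexity that first-order stationarity is \emph{sufficient} for global optimality, and to reduce the multivariate claim to the one-dimensional pseudo-convexity that the coordinate-wise hypothesis supplies. The cleanest route mirrors the standard argument for multivariable pseudo-convex functions and uses the equivalent form of the definition (Eq.~\ref{eq:pseudo-convexity-definition-1b}), namely that $\nabla f(\vtheta^1)^\top(\vtheta^2-\vtheta^1)\ge 0$ forces $f(\vtheta^2)\ge f(\vtheta^1)$. Setting $\vtheta^1=\vtheta^\star$ and letting $\vtheta^2=\vtheta$ range over $\sR^{|D|}$, the hypothesis $\nabla f(\vtheta^\star)=\vzero$ makes the left-hand inner product identically zero, hence nonnegative, so the implication yields $f(\vtheta)\ge f(\vtheta^\star)$ for every $\vtheta$; since $\vtheta$ is arbitrary, $\vtheta^\star$ is a global minimizer. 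The entire difficulty is therefore concentrated in legitimately reaching this directional inequality from the \emph{coordinate-wise} form of the hypothesis.

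First I would restate the hypothesis in the form that is actually available: coordinate-wise pseudo-convexity (Definition~\ref{def:coordinate-wise-pseudo-convexity}) only licenses the implication in Eq.~\ref{eq:coordinate-wise-pseudo-convexity-definition-1b} when $\vtheta^1$ and $\vtheta^2$ differ in a single coordinate. The concrete steps would then be: (i) fix a target point $\vtheta$ and, for each coordinate $i\in D$, consider the restriction $g_i(s)=f(\vtheta^\star+s\,\ve_i)$, which is a one-dimensional pseudo-convex function by hypothesis; (ii) since $g_i'(0)=\partial f(\vtheta^\star)/\partial\theta_i=0$ by the vanishing gradient, apply Eq.~\ref{eq:coordinate-wise-pseudo-convexity-definition-1b} with the single active coordinate $i$ to obtain $f(\vtheta^\star+s\,\ve_i)\ge f(\vtheta^\star)$ for all $s\in\sR$; and (iii) assemble these axiswise minimality statements into a statement over all of $\sR^{|D|}$.

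The hard part will be step (iii): passing from minimality along each coordinate axis to genuine global minimality. Coordinate-wise pseudo-convexity by itself does not exclude saddle behavior, so a naive one-coordinate-at-a-time telescoping $\vtheta^\star\to\vtheta$ fails, since the single-coordinate implication requires the partial derivative to vanish (or to carry the correct sign) at each intermediate point, whereas stationarity is known only at $\vtheta^\star$. I would close this gap by invoking the structural results established earlier for the \LUE response functions, namely the coordinate-wise monotonicity of the traffic-count functions (Propositions~\ref{prop:monotonocity-softmax} and~\ref{prop:monotonicity-softmax-sum}) that already underlies the coordinate-wise pseudo-convexity: this monotonicity pins down the sign of each partial derivative away from $\vtheta^\star$ and thereby upgrades the axiswise statements to the full directional inequality $\nabla f(\vtheta^\star)^\top(\vtheta-\vtheta^\star)\ge 0$ required to invoke Eq.~\ref{eq:pseudo-convexity-definition-1b}. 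If instead the full pseudo-convexity of $f$ in the sense of Definition~\ref{def:pseudo-convexity} is available, step (iii) is immediate and the obstacle vanishes; I would therefore make explicit which of the two hypotheses is being used, so that the logical dependence on the monotonicity propositions is transparent.
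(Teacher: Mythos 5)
Your steps (i)--(ii) are precisely the paper's own proof: the paper observes that $\nabla_{\vtheta}\ell(\vtheta^{\star})=\vzero$ gives $\partial\ell(\vtheta^{\star})/\partial\theta_i=0$ for every $i\in D$, invokes Definition~\ref{def:coordinate-wise-pseudo-convexity} coordinate by coordinate, and then concludes $\ell(\vtheta^2)\geq\ell(\vtheta^{\star})$ for \emph{all} $\vtheta^2\in\sR^{|D|}$. The difficulty you isolate in step (iii) is exactly the leap the paper makes without justification: the coordinate-wise definition only compares points differing in a single coordinate, so it yields minimality along the coordinate axes through $\vtheta^{\star}$, not over all of $\sR^{|D|}$. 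Your concern is substantive, not pedantic. The function $f(\theta_1,\theta_2)=\theta_1\theta_2$ is coordinate-wise pseudo-convex in the sense of Definition~\ref{def:coordinate-wise-pseudo-convexity} (every coordinate restriction is linear or constant), its gradient vanishes at the origin, yet the origin is a saddle since $f(1,-1)=-1<0=f(0,0)$. Hence coordinate-wise pseudo-convexity plus stationarity cannot by itself deliver the stated conclusion; any valid proof must use something beyond what the paper's proof (and your steps (i)--(ii)) invoke. Since this counterexample is not an \LUE objective, the proposition could still be rescued by the specific structure of the traffic flow functions, but that requires an argument neither you nor the paper supplies.

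Your two proposed repairs are of unequal value. Falling back on full pseudo-convexity (Definition~\ref{def:pseudo-convexity}) is correct and immediate: $\nabla f(\vtheta^{\star})^{\top}(\vtheta-\vtheta^{\star})=0\geq 0$ forces $f(\vtheta)\geq f(\vtheta^{\star})$, and this is evidently what the paper's proof implicitly assumes; making that hypothesis explicit is the cleanest fix. The monotonicity route, as you describe it, is not yet a proof, and its stated mechanism is confused: you say monotonicity "upgrades the axiswise statements to the full directional inequality $\nabla f(\vtheta^{\star})^{\top}(\vtheta-\vtheta^{\star})\geq 0$," but that inequality is trivially satisfied at any stationary point and needs no upgrading. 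What is missing is the \emph{implication} from that inequality to $f(\vtheta)\geq f(\vtheta^{\star})$ for points differing in several coordinates -- which is precisely pseudo-convexity along arbitrary directions, not along axes. The monotonicity results you cite (Propositions~\ref{prop:monotonocity-softmax} and~\ref{prop:monotonicity-softmax-sum}) are themselves only coordinate-wise statements, feeding into the coordinate-wise Proposition~\ref{prop:pseudoconvexity-uncongested-network}, so invoking them does not obviously bridge axiswise minimality to global minimality; closing that gap would require a genuinely new lemma about the joint behavior of the \LUE response functions, which remains an open hole in both your argument and the paper's.
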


		\begin{proof}
			By assumption, $\grad_{\vtheta} \ell (\vtheta^{\star}) = 0$, which implies that $\dfrac{\partial \ell(\vtheta^{\star})}{\partial \theta_i} = 0, \forall i \in D$. By applying Definition \ref{def:coordinate-wise-pseudo-convexity} of coordinate-wise pseudo-convexity to each coordinate $i \in D$, $\ell(\vtheta^2) \geq \ell(\vtheta^{\star}), \ \forall \vtheta^2 \in \mathbb{R}^{|D|}$, which completes the proof. 
		\end{proof}
		
		\begin{remark}
			
			Proposition \ref{prop:existence-local-optima-uncongested-network-with-error}, \ref{appendix:ssec:coordinate-wise-properties} gives a sufficient condition for the existing of a vanishing gradient (derivative) in the unidimensional case. While this seems restrictive, it addresses the standard setting studied in prior literature with a utility function dependent on travel time only. 

		\end{remark}
		
		Proposition \ref{prop:uniqueness-uncongested-network} proves the uniqueness of the global optima in our problem by leveraging the the strict quasi-convexity of differentiable pseudo-convex functions proven by \citet{Mangasarian1965}. A formal definition of strict quasiconvexity is the following \citep{Bazaraa2006}:
	
		\begin{definition}[Strict quasi-convexity]
			\label{def:strict-quasi-convexity}
			f is said to be strictly quasi-convex if for every $x_1, x_2 \in S, x_1 \neq x_2, \lambda \in [0,1]$:
			$$
			f(x_1) < f(x_2) \implies f(\lambda x_1 + (1-\lambda_1)x_2) < f(x_1)
			$$
			or equivalently:
			$$
			f(\lambda x_1 + (1-\lambda) x_2)< \max\left(f(x_1), f(x_2)\right)
			$$
		\end{definition}
		
		Analogously to Definition \ref{def:coordinate-wise-pseudo-convexity}, the definition of strict quasi-convexity can be also applied coordinate-wise:

		\begin{definition}[Coordinate-wise strict quasi-convexity]
			\label{def:coordinate-wise-strict-quasi-convexity}
			$\ell: \sR^{|D|} \to \sR$ is said to be coordinate-wise strictly quasi-convex iff $\forall i \in D, \forall \vtheta^1,\vtheta^2 \in S_i$, such that $S_i = \{\vtheta^1,\vtheta^2 \in \mathbb{R}^{|D|}\ | \ \theta_j^1 = \theta_j^2, \forall j \neq i \}$:
			$$
			\ell(\vtheta_1) < \ell(\vtheta_2) \implies \ell(\lambda \vtheta_1 + (1-\lambda_1)\vtheta_2) < \ell(\vtheta_1)
			$$
			or equivalently:
			$$
			\ell(\lambda \vtheta_1 + (1-\lambda) \vtheta_2)< \max\left(\ell(\vtheta_1), \ell(\vtheta_2)\right)
			$$
		\end{definition}

		\begin{prop}[Sufficient conditions for uniqueness of the global minima]
			\label{prop:uniqueness-uncongested-network}
			If the objective function of \LUE problem is (coordinate-wise) pseudo-convex and a global minima exist, then the global minima is unique.
		\end{prop}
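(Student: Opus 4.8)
The plan is to argue by contradiction, first converting pseudo-convexity into the stronger strict quasi-convexity and then exploiting the fact that two distinct global minimizers must share the same objective value. The first step would invoke the result of \citet{Mangasarian1965}, already cited in the preceding text, that a differentiable pseudo-convex function is strictly quasi-convex in the sense of Definition~\ref{def:strict-quasi-convexity}; in the coordinate-wise setting this upgrades to coordinate-wise strict quasi-convexity (Definition~\ref{def:coordinate-wise-strict-quasi-convexity}). This is the crucial move, since it translates the first-order characterization of pseudo-convexity into a statement purely about function values along segments, which is the form needed to compare two candidate minimizers.

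Next, suppose for contradiction that $\vtheta^1 \neq \vtheta^2$ are both global minimizers of $\ell$. Since each attains the global minimum value, $\ell(\vtheta^1) = \ell(\vtheta^2) = \ell^{\star}$. Taking any $\lambda \in (0,1)$ and forming $\vtheta_\lambda = \lambda \vtheta^1 + (1-\lambda)\vtheta^2$, I would apply the form $\ell(\vtheta_\lambda) < \max\left(\ell(\vtheta^1),\ell(\vtheta^2)\right)$ of strict quasi-convexity to obtain
$$
\ell(\vtheta_\lambda) < \max\left(\ell(\vtheta^1), \ell(\vtheta^2)\right) = \ell^{\star}.
$$
This asserts that the interior point $\vtheta_\lambda$ attains a strictly smaller objective than the purported global minimum, which is the desired contradiction; hence the global minimizer is unique.

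The main obstacle is precisely that the two minimizers are tied in objective value, so the semistrict premise $\ell(\vtheta^1) < \ell(\vtheta^2)$ of Definition~\ref{def:strict-quasi-convexity} never fires. The contradiction therefore hinges on the $\max$-form inequality holding even when the two endpoints have equal value, i.e. on ruling out a flat segment of minimizers joining $\vtheta^1$ and $\vtheta^2$. This is exactly where identifiability must enter: as already used in Proposition~\ref{prop:non-convexity}, an identifiable $\vtheta$ makes $\ell$ non-constant, which forbids the objective from being constant along such a segment and guarantees the strict inequality is genuine. I would make this dependence on identifiability explicit rather than leaving it buried inside Mangasarian's theorem.

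Finally, in the coordinate-wise case an extra step is unavoidable, because Definition~\ref{def:coordinate-wise-strict-quasi-convexity} only controls $\ell$ along segments parallel to a coordinate axis, whereas $\vtheta_\lambda$ generally moves in several coordinates simultaneously. The cleanest route I see is to combine coordinate-wise strict quasi-convexity with Proposition~\ref{prop:global-optimality-pseudo-convexity-uncongested-network}: at any global minimizer the gradient vanishes in every coordinate, and coordinate-wise strict quasi-convexity then pins down each coordinate uniquely given the remaining ones, which I would bootstrap into global uniqueness. I expect checking that this coordinate-by-coordinate argument closes without a gap, as opposed to the one-shot segment argument available under full pseudo-convexity, to be the delicate part of the write-up.
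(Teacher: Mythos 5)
Your proposal follows essentially the same route as the paper's own proof: argue by contradiction, upgrade pseudo-convexity to strict quasi-convexity via Property~2 of \citet{Mangasarian1965}, and apply the max-form inequality of Definition~\ref{def:strict-quasi-convexity} to two tied global minimizers. The one structural difference is that the paper first uses the multivariate mean value theorem recursively to show that the whole segment $[\vtheta^{\star}_1,\vtheta^{\star}_2]$ is flat; you omit this, which is harmless, since that step never enters the final contradiction (and it is in any case misapplied there: the mean value theorem only yields a vanishing \emph{directional} derivative along the segment, not a vanishing gradient, so Proposition~\ref{prop:global-optimality-pseudo-convexity-uncongested-network} cannot be invoked at the intermediate point).

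The obstacle you flag, however, is a genuine gap, and the paper's proof does not resolve it either. Mangasarian's result delivers only the semistrict form of strict quasi-convexity, i.e.\ the implication whose premise is $\ell(x_1) < \ell(x_2)$; the max-form at \emph{equal} endpoint values, which both you and the paper need, is strictly stronger, and the two forms listed as "equivalent" in Definition~\ref{def:strict-quasi-convexity} are not in fact equivalent. Without further hypotheses the statement is false for general differentiable pseudo-convex functions: $\ell(\theta)=\max(0,\theta)^2$ is differentiable and pseudo-convex, yet every $\theta \le 0$ is a global minimizer. This same example shows that your proposed patch does not close the gap: non-constancy of $\ell$ (identifiability in the sense invoked in Proposition~\ref{prop:non-convexity}) does not forbid a flat set of minimizers at the bottom. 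What would close it is additional structure on the response functions, e.g.\ the strict monotonicity assumed in Proposition~\ref{prop:uniqueness-uncongested-network-no-error} or injectivity as in Assumption~\ref{assumption:identifiability}, or else assuming Ponstein-type (max-form) strict quasi-convexity outright. Your second concern is also well founded: the paper applies Definition~\ref{def:coordinate-wise-strict-quasi-convexity} to two points that may differ in several coordinates, which that coordinate-wise definition does not license; your coordinate-by-coordinate bootstrap is the right instinct, but it is precisely where the real work lies, and as sketched it remains open.
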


		\begin{proof}
			
			
			Let's prove this by contradiction. Let's be $\ell: \mathbb{R}^{|D|} \to \mathbb{R}$ the objective function of the problem and $\vtheta \in \mathbb{R}^{|D|}$ the vector of coefficients of the utility function. Suppose that there are two global minima $\vtheta^{\star}_1,\vtheta^{\star}_2 \in \mathbb{R}^{|D|}: \ell(\vtheta^1) = \ell(\vtheta^{\star}_2) = \ell^{\star}$ and $\theta^{\star}_1 \neq \theta^{\star}_2$. By the multivariate mean value theorem (MMVT), for $\vtheta^{\star}_3 \in \mathbb{R}^{|D|}$ , $\exists \theta^{\star}_3 \in (1-\lambda)\vtheta^{\star}_1 + \lambda \vtheta^{\star}_2 : \grad \ell (\vtheta = \vtheta^{\star}_3) = 0, \ \lambda \in [0,1]$ and by Proposition \ref{prop:global-optimality-pseudo-convexity-uncongested-network}, $\ell(\theta^{\star}_3) = f^{\star}$. By using MMVT recursively, $\forall \vtheta^{\star}_i \in [\theta^{\star}_1,\theta^{\star}_2], \ \ell(\theta^{\star}_i) = \ell^{\star}, \ \theta^{\star}_i = (1-\lambda)\vtheta^{\star}_1 + \lambda \vtheta^{\star}_2, \ \lambda \in [0,1]$ and hence, $ [\vtheta^{\star}_1,\vtheta^{\star}_2]$ forms a convex set. By Property 2, \citet{Mangasarian1965}, if $\ell$ is pseudo-convex, $\ell$ is also strictly quasi-convex. Similarly, if $\ell$ is coordinate-wise pseudo-convex, $\ell$ is also coordinate-wise strictly quasi-convex. By applying Definition \ref{def:coordinate-wise-strict-quasi-convexity} of strict quasi-convexity on $\vtheta^1, \vtheta^2 \in \mathbb{R}^{|D|}$ gives $\ell^{\star} = \ell(\lambda \theta^1 + (1-\lambda) \theta^2)< \max\left(\ell(\theta^1), \ell(\theta^2)\right) = \ell^{\star}$, which leads to a contradiction and it completes the proof.  
		
		\end{proof}

		\subsection{Endogenous case}

		In a congested network, travel times are endogenous and this requires to solve a bilevel formulation of the \LUE problem (see Section \ref{ssec:bilevel-optimization}). Note that even if the inner and outer level problems of the bilevel formulation were convex, the analysis of global optimality is complex \citep{Dempe2016}. Because convex functions are a subclass of the pseudo-convex functions, the pseudo-convexity of the outer problem is expected to make the analysis even harder. Despite the above difficulties, this section proves two properties of the inner level and outer level problems of the \LUE bilevel formulation that can improve the convergence of an alternating optimization algorithm.

		\subsubsection{Inner level problem}
		
		The \SUE-\logit problem with a utility function dependent on travel time only is known to be strictly convex in path (and link) flow space under link performance functions that are monotonically increasing and dependent on the traffic flow on the links only. To our knowledge, Proposition \ref{prop:uniqueness-sue-logit-multi-attribute utility function}
		extends, for the first time, the aforementioned property to multi-attribute utility functions.

		\begin{prop}[Uniqueness of \SUE-\logit path flow solution with a multi-attribute utility function]
			\label{prop:uniqueness-sue-logit-multi-attribute utility function}
			Suppose that the travelers' utility function is linear-in-parameters and that the link cost functions are monotonically increasing and dependent on the traffic flow on the link only. Then, the \SUE-\logit problem has a unique solution in path flow space.
		\end{prop}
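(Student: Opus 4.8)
The plan is to show that Problem~\ref{eq:multi-attribute-sue-logit} maximizes a \emph{strictly concave} objective over a \emph{convex} feasible set, so that its maximizer in path flow space is unique. First I would record that the feasible set $\{(\vx,\vf) : \mIq\vf = \vq, \ \mIx\vf = \vx, \ \vx,\vf \geq \vzero\}$ is a bounded convex polytope, since it is cut out by linear equalities and nonnegativity constraints, and that $\vx$ is determined by $\vf$ through the linear map $\vx = \mIx\vf$. Hence it suffices to treat the objective as a function of $\vf$ alone and to establish strict concavity in $\vf$ on this set. Existence of a maximizer then follows from compactness of the feasible set and continuity of the objective (with the convention $0\ln 0 = 0$), and strict concavity will deliver uniqueness.

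Next I would decompose the objective, after substituting $x_a = \sum_{h} \delta_{ah} f_h$, into three pieces: the entropy term $-\langle \vf, \ln\vf\rangle$, the exogenous-attribute term $\sum_{a} x_a \sum_{k \in K_{\mZ}} \theta_k Z_{ak}$, and the travel-time term $\theta_t \sum_{a} \int_0^{x_a} t_a(u)\,du$. The core of the argument is to treat each piece separately. The entropy term is strictly concave in $\vf$, since its Hessian is the diagonal matrix $-\mathrm{diag}(1/f_h)$, which is negative definite for $f_h > 0$. The exogenous-attribute term is linear in $\vf$, because Assumption~\ref{assumption:utility} makes the utility linear-in-parameters and the $Z_{ak}$ are constants, so it contributes nothing to the Hessian. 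For the travel-time term, monotonic increase of each $t_a$ together with the separability assumption (each $t_a$ depends on $x_a$ only) makes $\sum_a \int_0^{x_a} t_a(u)\,du$ convex in $\vx$, with diagonal Hessian $\mathrm{diag}(t_a'(x_a)) \succeq 0$; under the behaviorally natural sign $\theta_t < 0$ the term $\theta_t \sum_a \int_0^{x_a} t_a\,du$ is therefore concave in $\vx$, and since $\vx = \mIx\vf$ is affine, it remains concave in $\vf$.

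Finally I would assemble the pieces: a strictly concave function (entropy) plus two concave functions (the affine exogenous term and the concave travel-time term) is strictly concave in $\vf$, and a strictly concave objective attains its maximum at a unique point of a convex set, which proves uniqueness in path flow space; the closed form (Eq.~\ref{eq:sue-logit-path-flows-solution}) exhibits this solution explicitly. The main obstacle to flag is that the travel-time term alone cannot supply strict concavity in path flow space, because the link--path incidence map $\mIx$ is not injective---distinct path flow vectors can yield identical link flows---so the travel-time integral is flat along the null space of $\mIx$. All of the strict concavity must therefore come from the entropy term, exactly as in the classical single-attribute case; the substantive content of the extension is simply the verification that the additional multi-attribute terms are linear in the flows and hence cannot destroy the strict concavity inherited from entropy. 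I would also state explicitly the role of the sign of $\theta_t$ and of the separability assumption, since a positive travel-time coefficient would render the travel-time integral convex and could break concavity.
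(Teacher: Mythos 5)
Your proof is correct and follows essentially the same route as the paper's: both decompose the objective into the entropy term (the sole source of strictness), a term linear in the path flows from the exogenous attributes, and the travel-time integral whose curvature has the right sign because $\theta_t < 0$ and the $t_a$ are increasing and separable, over the affine (convex) feasible set. The only cosmetic difference is that the paper flips signs and argues strict convexity of the equivalent minimization problem (citing Evans and Fisk for the entropy and travel-time pieces), whereas you argue strict concavity of the maximization directly via the Hessians; your added remarks on the non-injectivity of $\mIx$ and on existence via compactness are sound refinements, not departures.
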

		
		\begin{proof}
			
			Let's transform the objective function of Problem \ref{eq:multi-attribute-sue-logit}, Section \ref{ssec:sue-logit-multi-attribute-utility-function} into a minimization problem with the following argument:
			\begin{align*}
				g(\vx,\vf, \vtheta) 
				&= -\sum_{a \in A}\int_{0}^{x_a} v_a(u,\vtheta)du - \left\langle \vf, \ln \vf\right\rangle
				= -\sum_{a \in A}\int_{0}^{x_a} \left(\theta_t t_a(u) - \sum_{k \in K_{\mZ}} \theta_k \cdot Z_{ak} \right)du + \left\langle \vf, \ln \vf\right\rangle\\
				&= g_1(\vx, \vtheta) +g_2(\vf)
			\end{align*}
			
			From Theorem 1 in \citet{Evans1973}, it follows that $g_2(\vf) = -\left\langle \vf, \ln \vf\right\rangle$ is strictly convex on the feasible set $S$ defined by the constraint in Problem \ref{eq:multi-attribute-sue-logit}, Section \ref{ssec:sue-logit-multi-attribute-utility-function}. Note that the matrix $\mZ \in \mathbb{R}^{A \times k_{\mZ}}$ with link attributes values is exogenous and thus, $g_1(\vx, \vtheta)$ can be rewritten as:
			$$
			g_1(\vx, \vtheta) = -\theta_t \sum_{a \in A}\int_{0}^{x_a} t_a(u)  du -  \sum_{a \in A} x_a \left(\sum_{k \in K_{\mZ}} \theta_k \cdot Z_{ak}\right)
			$$
			To analyze the convexity of the first term $g_1(\vx, \vtheta)$, it suffices to analyze its second derivative respect to any path flow variable $f_h$:
			\begin{align*}
				\dfrac{\partial g_1(\vx,\vf, \vtheta)}{\partial  f_h} 
				&= -\theta_t \dfrac{\partial }{\partial  f_h} \left(\sum_{a \in A}\int_{0}^{x_a} t_a(u)  du\right)-  \sum_{a \in A} \dfrac{\partial x_a}{\partial  f_h} \left(\sum_{k \in K_{\mZ}} \theta_k \cdot Z_{ak}\right)
			\end{align*}
			
			Then, replacing the link flow variables by using the balance constraint for path and link flows in the second term of $g_1$:
			
			$$
			\dfrac{\partial^2 g_1}{\partial^2 f_h}
			= -\theta_t \dfrac{\partial^2 }{\partial^2  f_h} \left(\sum_{a \in A}\int_{0}^{x_a} t_a(u)  du\right)-  \sum_{a \in A} \cancelto{0}{\dfrac{\partial }{\partial  f_h}\left(\sum_{k \in K_{rs}} q_{rs} \delta_{ak}\right)}\left(\sum_{k \in K_{\mZ}} \theta_k \cdot Z_{ak}\right)
			= -\theta_t \dfrac{\partial^2 }{\partial^2  f_h} \left(\sum_{a \in A}\int_{0}^{x_a} t_a(u)  du\right)
			$$
			
			Given the class of link cost functions used in this proposition (Assumption \ref{assumption:link-performance-functions}, Section \ref{ssec:assumptions}) and the fact that $\theta_t < 0$, it follows that $g_1$ is convex on the path flows variables $f_h$ \citep{Fisk1980}. As a consequence, $g(\vx,\vf, \vtheta)$ is a sum of a convex and strictly convex function, and thus, it is strictly convex. Also, all constraints in Problem \ref{eq:multi-attribute-sue-logit}, Section \ref{ssec:sue-logit-multi-attribute-utility-function} are affine and thus they define a convex set. Thus, the solution of problem \ref{eq:multi-attribute-sue-logit} has a unique global minima, which completes the proof. 
			
		\end{proof}

		\subsubsection{Outer level problem}
		
		A pseudo-convex function can be convex and concave on its feasible space. Despite of the above, its gradient always point out to the global descent direction \citep{HazanLevy2015}. Proposition \ref{prop:gradients-direction-small-uncongested-network} shows that this property also holds in the outer level objective of the \LUE problem. 

		\begin{prop}
			\label{prop:gradients-direction-small-uncongested-network} Assume that the objective function of the outer level of the \LUE problem is coordinate-wise pseudo-convex and that a global optimal solution exist. Then, the negative gradient of the objective function points out to the global descent direction
		\end{prop}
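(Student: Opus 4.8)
The plan is to deduce the claim from the three ingredients already established for the exogenous (outer-level) objective: coordinate-wise pseudo-convexity, the existence of a global minimizer $\vtheta^\star$ (Proposition~\ref{prop:global-optimality-pseudo-convexity-uncongested-network}), and its uniqueness (Proposition~\ref{prop:uniqueness-uncongested-network}). First I would fix the interpretation of ``global descent direction'': at a point $\vtheta$ with $\nabla\ell(\vtheta)\neq\vzero$, the negative gradient should both decrease $\ell$ locally and be positively aligned with the ray toward the unique global minimizer, i.e. $\langle -\nabla\ell(\vtheta),\,\vtheta^\star-\vtheta\rangle \ge 0$, so that no spurious stationary point can divert a gradient-based scheme away from $\vtheta^\star$.

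The steps, in order, would be: (i) invoke Proposition~\ref{prop:uniqueness-uncongested-network} to obtain a unique global minimizer $\vtheta^\star$, so that $\ell(\vtheta^\star)<\ell(\vtheta)$ for every $\vtheta\neq\vtheta^\star$, and recall from the proof of Proposition~\ref{prop:global-optimality-pseudo-convexity-uncongested-network} that $\nabla\ell(\vtheta^\star)=\vzero$ while no other point is stationary; (ii) note the elementary fact that $-\nabla\ell(\vtheta)$ is a descent direction whenever $\nabla\ell(\vtheta)\neq\vzero$, since the directional derivative along it equals $-\|\nabla\ell(\vtheta)\|_2^2<0$; (iii) to upgrade ``descent'' to ``global descent,'' apply the contrapositive form of pseudo-convexity (Eq.~\ref{eq:pseudo-convexity-definition-1a}) with $x_1=\vtheta$ and $x_2=\vtheta^\star$: since $\ell(\vtheta^\star)<\ell(\vtheta)$, this yields $\nabla\ell(\vtheta)^\top(\vtheta^\star-\vtheta)<0$, so that $-\nabla\ell(\vtheta)$ makes an acute angle with $\vtheta^\star-\vtheta$; (iv) conclude that following $-\nabla\ell$ strictly reduces the objective and can come to rest only at $\vtheta^\star$, which is the content of the claim.

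The hard part will be step (iii) under the \emph{coordinate-wise} hypothesis that is actually assumed, rather than full pseudo-convexity. Definition~\ref{def:coordinate-wise-pseudo-convexity} only compares points that differ in a single coordinate, and a simple convex example shows that the per-coordinate sign of $\partial_i\ell(\vtheta)\,(\theta_i^\star-\theta_i)$ need not be non-positive for each $i$ in isolation, so the full inner-product inequality does not follow by naively summing the coordinate inequalities. I would bridge this gap in one of two ways. The cleaner route is to interpret ``global descent direction'' componentwise, matching the unidimensional \NGD update described earlier: for each coordinate $i$, coordinate-wise pseudo-convexity makes $\ell$ unimodal along the $i$-th axis with the remaining coefficients held fixed, so the sign of $\partial_i\ell(\vtheta)$ identifies the descent direction toward the axiswise minimizer, and the vanishing of $\nabla\ell$ only at $\vtheta^\star$ guarantees that such a process terminates at the global optimum rather than at a spurious one.

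The alternative, if the joint inner-product statement is desired, is to chain the coordinate-wise inequality along the path $\vtheta=\vtheta^{(0)},\dots,\vtheta^{(|D|)}=\vtheta^\star$ that switches one coordinate to its optimal value at a time, and to use the coordinate-wise strict quasi-convexity of pseudo-convex functions (Definition~\ref{def:coordinate-wise-strict-quasi-convexity}, already invoked in Proposition~\ref{prop:uniqueness-uncongested-network}) to control the intermediate function values $\ell(\vtheta^{(k)})$. The delicate point there is that each coordinate inequality involves the partial derivative evaluated at the intermediate point $\vtheta^{(k-1)}$ rather than at $\vtheta$ itself, so relating the chained inequalities back to the single gradient $\nabla\ell(\vtheta)$ is precisely where the argument must be handled with care, and it is the step I expect to absorb most of the technical effort.
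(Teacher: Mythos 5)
Your diagnosis of the difficulty in step (iii) is correct, and it is worth knowing that the paper's own proof consists precisely of the ``naive'' step you declined to take: it applies Definition~\ref{def:coordinate-wise-pseudo-convexity} to the pair $(\bar{\vtheta},\vtheta^{\star})$ simultaneously in every coordinate, concluding $\frac{\partial \ell(\bar{\vtheta})}{\partial \theta_i}(\theta^{\star}_i-\bar{\theta}_i)<0$ for all $i\in D$, even though $\bar{\vtheta}$ and $\vtheta^{\star}$ generally differ in more than one coordinate and so do not lie in any set $S_i$ to which the definition applies; it then reads off the signs of the partial derivatives (with some sign slips along the way) and declares that $-\nabla\ell(\bar{\vtheta})$ points coordinate-wise toward $\vtheta^{\star}$. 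Your remark that the per-coordinate inequality can fail even for convex functions is not merely a caution; it is a counterexample to the paper's intermediate claim. Take $\ell(x,y)=(x+y)^2+\epsilon(x^2+y^2)$ with small $\epsilon>0$: this is convex, hence coordinate-wise pseudo-convex, with unique global minimizer $(0,0)$; at $\bar{\vtheta}=(1,-2)$ one has $\partial_x\ell(\bar{\vtheta})=-2+2\epsilon<0$ while $\theta^{\star}_x-\bar{\theta}_x=-1$, so the product equals $2-2\epsilon>0$. Only the joint inequality $\nabla\ell(\bar{\vtheta})^{\top}(\vtheta^{\star}-\bar{\vtheta})<0$ survives in such examples, and that inequality requires full pseudo-convexity (Definition~\ref{def:pseudo-convexity}), not the coordinate-wise hypothesis actually assumed.

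That said, neither of your two routes proves the proposition as stated. Route (a) shows that the sign of $\partial_i\ell(\vtheta)$ points toward the minimizer of the one-dimensional restriction $\theta_i\mapsto\ell(\vtheta)$, but that axiswise minimizer need not have $i$-th coordinate $\theta^{\star}_i$, so what you obtain is a coordinate-descent-style guarantee rather than the stated claim about $-\nabla\ell$; moreover, the assertion that such a process ``can come to rest only at $\vtheta^{\star}$'' needs its own convergence argument beyond the observation that $\vtheta^{\star}$ is the unique stationary point. Route (b) leaves its key step open, and the counterexample above shows it cannot be closed in per-coordinate form. The two readings under which the proposition admits a clean proof are: (1) strengthen the hypothesis to full pseudo-convexity, in which case your step (iii) is the entire proof; or (2) restrict to $|D|=1$ (the setting of the \NGD convergence guarantee in Section~\ref{sssec:convergence-ngd-uncongested-network}), where coordinate-wise and full pseudo-convexity coincide and the sign argument is valid. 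Your proposal is more careful than the paper on exactly the point that matters, but to become a proof it must commit to one of those two readings; as written, the paper's argument resolves your ``hard part'' by fiat.
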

		
		\begin{proof}
		
			Let's be $\vtheta^{\star} \in \mathbb{R}^{|D|}$ the global optima of the \LUE problem and $\bar{\vtheta} \in \mathbb{R}^{|D|}$ a feasible point that is not a global optima. By the coordinate-wise pseudo-convexity of $\ell: \sR^{|D|} \to \sR$:

			\begin{equation}
				\label{eq:coordinate-wise-pseudoconvexity-congested-case}
				\ell(\vtheta^{\star}) < \ell(\bar{\vtheta}) \implies \dfrac{\partial \ell(\bar{\vtheta})}{\partial \theta_i}(\theta_i^{\star}-\theta_i) < 0, \quad \forall i \in D
			\end{equation}
			
			Since $\ell(\bar{\vtheta}) < \ell(\vtheta^{\star})$, $\dfrac{\partial \ell(\bar{\vtheta})}{\partial \theta_i}(\theta_i^{\star}-\bar{\theta}_i) < 0, \forall i \in D$. Then, if $\bar{\theta}_i < \theta_i^{\star}$, $\dfrac{\partial \ell(\bar{\vtheta})}{\partial \theta_i} >0$, and if $\bar{\theta}_i > \theta_i^{\star}$, $\dfrac{\partial \ell(\bar{\vtheta})}{\partial \theta_i} <0$. Therefore, if $\bar{\vtheta} - \vtheta^{\star} < 0, \  \minus\grad_{\theta} \ell(\bar{\vtheta}) < 0$ and if $\bar{\vtheta} - \vtheta^{\star}  > 0,\ \minus\grad_{\theta} \ell(\bar{\vtheta}) > 0$, which proves that the negative gradient is always pointing to the global descent direction.

		\end{proof}
		
		\begin{remark}
			In a congested network, each value of $\vtheta \in \mathbb{R}^{|D|}$ generates a different traffic assignment at the inner level of the bilevel formulation. Therefore, link travel times will change between iterations of an alternating optimization algorithm. As a consequence, the objective function of the outer level problem will be (coordinate-wise) pseudo-convex respect to the travelers' utility function coefficients only in a small neighborhood where travel times remain approximately constant. 
		\end{remark}

		\section{Solution algorithm and convergence guarantees}

		\subsection{Inner level optimization}
		
		The solution method for the inner level problem of the bilevel formulation is described in Algorithm~\ref{alg:inner-level-optimization}, \ref{appendix:ssec:implementation-inner-level-optimization}. The stochastic network loading (\SNL) is performed according to Algorithm~\ref{alg:stochastic-network-loading}, \ref{appendix:ssec:snl}. The column generation phase is adapted from \citet{Damberg1996} and it is performed once per each iteration of the alternating optimization algorithm (Step 1, Algorithm \ref{alg:inner-level-optimization}, \ref{appendix:ssec:implementation-inner-level-optimization}). Note that in the original implementation of \DSD, the column generation phase is performed at each iteration of \SUE-\logit but in the context of a bilevel optimization, this path set augmentation strategy would not select good paths if the current solution for $\vtheta$ is far from the global optima. The best convex combination of solutions at each iteration of \SUE-\logit (Step c, Algorithm \ref{alg:inner-level-optimization}, \ref{appendix:ssec:implementation-inner-level-optimization}) is searched over a uniform grid in $[0,1]$ and with an arbitrary granularity depending on the desired level of accuracy. To reduce computational cost, the size of the augmented path set is constrained to a fix quantity ($k_g$) and the path set augmentation is only performed in a proportion ($\rho_{W}$) of the O-D pairs at each iteration. The sample of O-D pairs is selected according to the level of demand. To capture the correlation between paths in the consideration set due to overlapped link segments, we correct path utilities using the path size logit (PSL) factor \citep{Ben-Akiva1999a}. The utility term associated to the PSL factor is weighted by a coefficient $\beta \in \sR$ that can be estimated \citep{Bierlaire2008}. For the scope of this paper, the coefficient $\beta$ is treated as an hyperparameter with default value equal to one. The utility term associated to the path size correction is updated before generating new paths and after selecting paths depending if the composition of the path set change when performing the steps 1 and 3 of the inner level optimization (Algorithm \ref{alg:inner-level-optimization}, \ref{appendix:ssec:implementation-inner-level-optimization}),

		\subsection{Outer level optimization}
		\label{ssec:outer-level-optimization-solution-algorithm}

		To study the convenience of using first or second order methods, the outer level objective function of the bilevel formulation is optimized with an hybrid algorithm (see Algorithm \ref{alg:outer-level-optimization}, \ref{appendix:ssec:implementation-outer-level-optimization}). In a \textit{non-refined stage}, the algorithm uses Normalized Gradient Descent (\NGD) (Algorithm \ref{alg:first-order-optimization},\ref{appendix:ssec:first-order-optimization-methods}). Subsequently, the best solution obtained from the \textit{non-refined stage} is used a starting point for the \textit{refined stage}, where second order optimization methods are used to obtain a more accurate solution (Algorithm \ref{alg:outer-level-optimization}, \ref{appendix:ssec:implementation-outer-level-optimization}). Two specialized second order methods were initially considered. The first is Gauss-Newton (\GN), one of the standard optimizer in the Econometrics literature on non-linear regression \citep{DelPino1989, Amemiya1983b,Gallant1975} and which can be seen as an unconstrained version of the sequential quadratic programming (\SQP) method. The second is the Levenberg–Marquardt (\LM) algorithm which, by interpolating between \GN and gradient descent, can increase the robustness of \GN in flat regions of the optimization landscape \citep{Marquardt1963}. Given that \GN is a particular case of the \LM method (Algorithm \ref{alg:second-order-optimization-methods}, \ref{appendix:ssec:second-order-optimization-methods}), only \LM is used in the \textit{refined stage}.

		\subsection{Bilevel optimization}
		\label{ssec:bilevel-level-optimization-solution-algorithm}
		
		Algorithm \ref{alg:bilevel-optimization}, Section \ref{ssec:bilevel-level-optimization-solution-algorithm} shows the pseudocode for the alternating optimization of the inner and outer level of our bilevel formulation (Problem \ref{eq:bilevel-formulation-learning-problem}, Section \ref{ssec:bilevel-optimization}). The algorithm returns the vector of utility function coefficients $\hat{\vtheta} \in \mathbb{R}^{|D|}$ that minimizes the discrepancy between predicted and observed link flows. To guarantee that the predicted link flows follow \SUE-\logit, the last iteration of the alternating algorithm only solves the inner level problem using the value of $\hat{\vtheta}$ obtained in the previous iteration of the outer level problem.
		
		\begin{algorithm}[!htbp]
  
			\captionsetup{font=normalsize}
			\caption{\texttt{BilevelOptimization}} 
			\label{alg:bilevel-optimization}

			\scalebox{1}{%
				\begin{minipage}{\linewidth}
					\begin{algorithmic} 
						
						\Require{Iterations $I$, initial vector of utility function coefficients $\vtheta_0$, inputs of \texttt{InnerLevelOptimization} $T, \mZ, \vt_f, \vq, \vlambda_g,\rho_W$, inputs of \texttt{OuterLevelOptimization} $T_1, T_2,\eta,\bar{\vx}, \vx, \vpf, \texttt{refined-method}, \eta_1, \eta_2, b, \delta$}
		
						\item[]
						
						\State Step 0: Initialization: 
						$$
						\vtheta:= \vtheta_0
						$$
						
						\State Step 1: Alternating optimization
						\item[]
						\For{$i = 1 \ldots I$}
						
						\State $\vx, \vpf, \bar{\vt} \gets$ \texttt{InnerLevelOptimization}($\vtheta,T, \mZ, \vt_f, \vq, \vlambda_g,\rho_W$)
						
						\item[]
						
						\If{$i < I$}
						\State $\vtheta \gets$ \texttt{OuterLevelOptimization}($\vx, \vpf,\bar{\vx},\vtheta, T_1, T_2, \texttt{refined-method}, \eta_1, \eta_2, b, \delta$)	
						\EndIf
						
						\EndFor
						
						\State
						
						\State 
						\Return {$\vtheta^{\star} = \argmin_{\{\vtheta_0,\ldots,\vtheta_{I-1} \}} \ell_i(\vtheta_i):= \|\bar{\vx}-\vx(\vtheta_i)\|^2$ }
					\end{algorithmic}
				\end{minipage}%
			}
		\end{algorithm}
		
		\renewcommand{\algorithmicrequire}{\textbf{Input:}}

		\subsection{Convergence guarantees}
		\label{ssec:convergence-guarantees-learning-algorithm}
		
		Our bilevel formulation can be casted as a mathematical program with equilibrium constraints (\MPEC). For this class of problems, the feasible region is typically non-convex \citep{Boyles2020} and this makes difficult to find theoretical guarantees even for convergence toward local optima \citep{Sabach2017,Dempe2016}. Thus, we found convenient to analyze a case where all attributes of the utility function are exogenous (Section \ref{ssec:exogenous-utility-attributes}) and where we could leverage the pseudo-convexity of the optimization problem (Section \ref{ssec:pseudo-convexity-small-network}).

		\subsubsection{Theoretical $\epsilon$-convergence under exogenous travel times}
		\label{sssec:convergence-ngd-uncongested-network}

		The analysis of mathematical properties in Section \ref{sec:mathematical-properties-learning-problem} provided sufficient conditions for the existence and uniqueness of a global optima. Now it remains to prove that the global optima is attainable when running normalized gradient descent (\NGD) on a finite number of steps. We will consider the \LUE problem with a utility function dependent of a single exogenous attribute and where only the no-refined stage of the outer level algorithm is conducted. While this seems restrictive, it addresses the standard setting studied in prior literature with a utility function dependent on travel time only.

		\citet{HazanLevy2015} proved the convergence of \NGD to a global minimum for functions that are both strictly quasi-convex and \textit{locally-Lipschitz}. These results are relevant for our problem, since pseudo-convex functions are also quasi-convex  \citep{Mangasarian1965}. \textit{Locally-Lipschitz} functions have the property of having its first derivative bounded by an arbitrary positive constant and they are required to be \textit{Lipschitz} in a small region around the optimum. Generally, the \textit{Lipschitz constant} is relevant to study the theoretical convergence speed of first optimization methods. Formally:
		
		\begin{definition}[Lipschitzness]
			\label{def:lipschitzness-uncongested-network}
			A unidimensional function $f: \mathbb{R} \to \mathbb{R}$ is Lipschitz iff $\ \forall x,y \in \mathbb{R}, \ G>0$:
			\begin{align}
				\label{eq:lipschitzness-uncongested-network}
				|f(x)-f(y)| &\leq G\|x-y\| 
			\end{align}

		\end{definition}

		\begin{prop}
			\label{prop:lipschitzness-uncongested-network}
			Suppose that the travelers' utility function in the \LUE problem depends on a single exogenous attribute. Then, the objective function is Lipschitz
		\end{prop}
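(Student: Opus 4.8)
The plan is to reduce global Lipschitzness of $\ell$ to a uniform bound on its first derivative. Since the utility function depends on a single exogenous attribute, the coefficient reduces to a scalar $\theta \in \mathbb{R}$ and $\ell : \mathbb{R} \to \mathbb{R}$. If I can show that $\ell$ is continuously differentiable with $|\ell'(\theta)| \leq G$ for all $\theta$ and some finite constant $G > 0$, then the mean value theorem immediately yields $|\ell(x) - \ell(y)| = |\ell'(\xi)|\,|x - y| \leq G\,|x - y|$ for some $\xi$ between $x$ and $y$, which is exactly Definition \ref{def:lipschitzness-uncongested-network}.

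First I would establish differentiability. Under exogenous travel times, the link flow function $x_i(\theta)$ is obtained from the closed-form logit expression (Eq. \ref{eq:NLLS-link-flow-equation}): it is a finite weighted sum of path choice probabilities $p_h(\theta)$, each of which is a softmax of path utilities that are linear, hence smooth, in $\theta$. A finite sum and composition of smooth functions is smooth, so $x_i(\theta)$, and therefore $\ell(\theta) = \sum_{i \in N}(x_i(\theta) - \bar{x}_i)^2$, is continuously differentiable. Differentiating gives $\ell'(\theta) = \sum_{i \in N} 2\,(x_i(\theta) - \bar{x}_i)\, x_i'(\theta)$, so it remains to bound the two factors in each summand.

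The first factor is controlled by Proposition \ref{prop:upper-lower-bounds-uncongested-network}: each link flow satisfies $0 \leq x_i(\theta) \leq Q$, so $|x_i(\theta) - \bar{x}_i| \leq \max\{Q, \bar{x}_i\}$ is uniformly bounded in $\theta$. The main obstacle, and essentially the only real computation, is bounding the sensitivity $x_i'(\theta)$. Here I would differentiate the logit probabilities explicitly: for a path $h$ in O-D pair $w$ with aggregated path attribute $\tilde{Z}_h = \sum_{a} Z_a \delta_{ah}$, the softmax derivative takes the familiar form $\partial p_h / \partial \theta = p_h\bigl(\tilde{Z}_h - \sum_{j \in H_w} p_j \tilde{Z}_j\bigr)$. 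Because the probabilities lie in $[0,1]$ and the exogenous attribute values — hence the finitely many path attributes $\tilde{Z}_h$ — are fixed and finite, this derivative is bounded in absolute value by the attribute range, uniformly in $\theta$; crucially, the saturation of the softmax as $\theta \to \pm\infty$ drives these derivatives toward zero rather than to infinity. Summing over the finitely many paths, weighting by the bounded demands $q_w$ with $\sum_w q_w = Q$, and aggregating into the link flow via the incidence $\delta_{ih}$ then yields a uniform bound $|x_i'(\theta)| \leq M$ for some finite $M$.

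Combining the two bounds, each summand of $\ell'(\theta)$ is bounded uniformly in $\theta$, and since $N$ is finite we obtain $|\ell'(\theta)| \leq G := 2\,|N|\,\max_i\{Q, \bar{x}_i\}\, M < \infty$. Applying the mean value theorem as above completes the argument. I expect the bookkeeping in the softmax-derivative bound — tracking the path-to-link incidence and the demand weights so that the constant $M$ is genuinely independent of $\theta$ — to be the only place where care is required; everything else follows from the boundedness of the flows and the smoothness of the logit map.
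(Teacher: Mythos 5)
Your proof is correct, and it takes a genuinely different route from the paper. You bound the derivative: smoothness of the logit map, $|x_i(\theta)-\bar{x}_i|\leq\max\{Q,\bar{x}_i\}$ from the flow bounds (Proposition \ref{prop:upper-lower-bounds-uncongested-network}), the softmax-derivative identity $\partial p_h/\partial\theta = p_h\bigl(\tilde{Z}_h-\sum_{j\in H_w}p_j\tilde{Z}_j\bigr)$ bounded uniformly in $\theta$ by the (fixed, finite) attribute range, and then the mean value theorem. The paper instead argues directly on increments: it writes $|\ell(\theta^1_t)-\ell(\theta^2_t)|\leq\|\vx(\theta^1_t)-\vx(\theta^2_t)\|_2^2\leq NQ^2$ via a ``reverse triangle inequality'' applied to squared norms, and then sets $G=NQ^2/\delta$ with $\delta=|\theta^1_t-\theta^2_t|$. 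Your approach buys two things the paper's argument does not deliver. First, the paper's inequality $\bigl|\,\|a\|_2^2-\|b\|_2^2\,\bigr|\leq\|a-b\|_2^2$ is false in general (take $a=(2,0)$, $b=(1,0)$): the reverse triangle inequality holds for norms, not their squares. Second, the paper's constant $G=NQ^2/\delta$ depends on the pair of points being compared and blows up as $\delta\to 0$, so as written it establishes only that increments of $\ell$ are bounded by $NQ^2$, not Lipschitzness with a uniform modulus. Your derivative bound produces a single constant $G$ valid for all pairs, which is exactly what Definition \ref{def:lipschitzness-uncongested-network} requires and what the downstream use (the locally-Lipschitz hypothesis in the \NGD convergence theorem of \citet{HazanLevy2015}) actually needs; the only price is the softmax bookkeeping you already identified, which goes through because the saturation of the logit probabilities keeps the path-flow sensitivities bounded as $\theta\to\pm\infty$.
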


		\begin{proof}

			Let's define the the objective function of the \LUE problem a $\ell: \sR^{|D|} \to \sR$. Starting from the LHS of Eq. \ref{eq:lipschitzness-uncongested-network} and using the reverse triangle inequality:

			\begin{align}
				\label{eq:proof-lipschitzness-uncongested-network-1}
				|\ell(\theta^1_t)-\ell(\theta^2_t)| \nonumber
				&= \Big|\|\vx(\theta^1_t) -\bar{x}\|_2^2 -\|\vx(\theta^2_t) -\bar{x}\|_2^2\Big|\\ \nonumber
				&\leq \|(\vx(\theta^1_t) -\bar{x}) - (\vx(\theta^2_t) -\bar{x})\|_2^2\\
				&= \|(\vx(\theta^1_t) - \vx(\theta^2_t)\|_2^2
			\end{align}
			We can now bound the range of the each traffic flow function $x_i, \forall i \in N$ and the norm of the difference between the vector of traffic functions in Eq. \ref{eq:proof-lipschitzness-uncongested-network-1} as follows:
			\begin{align}
				\label{eq:proof-lipschitzness-uncongested-network-2}
				0 \leq x_i(\theta_t) \leq Q
				\implies 
				-Q &\leq x_i(\theta^1_t)-x_i(\theta^2_t) \leq Q \nonumber \\
				0 &\leq \left(x_i(\theta^1_t)-x_i(\theta^2_t)\right)^2 \leq Q^2 \nonumber \\
				0 &\leq \|\vx(\theta^1_t)-\vx(\theta^2_t)\|_2^2 \leq N Q^2
			\end{align}
			where $Q \in \sR_+$ is the total demand, namely, the sum of all cells in the O-D matrix $\mQ \in \sR^{V\times V}$. Now define $|\theta^1_t-\theta^2_t| = \delta >0$ and set $G=NQ^2/\delta > 0$, and replace it into Eq. \ref{eq:proof-lipschitzness-uncongested-network-2}:
			\begin{align}
				0 &\leq \|\vx(\theta^1_t)-\vx(\theta^2_t)\|_2^2 \leq N Q^2 = G \delta  = G|\theta^1_t-\theta^2_t| 
			\end{align}
			
			which proves that the objective function is $G$-Lipschitz 
			
			
		\end{proof}

		Now armed with Proposition \ref{prop:lipschitzness-uncongested-network}, we can provide guarantees for the convergence of \NGD when $\theta \in \mathbb{R}$ and under exogenous travel times:
		
		\begin{prop}
			\label{prop:convergence-ngd-uncongested-network}
			Assume that the objective function of the \LUE problem is (coordinate-wise) pseudo-convex
			and that the travelers' utility function in the depends on a single exogenous attribute. If the optimization problem is solved with normalized gradient descent (\NGD), \NGD converges to an $\epsilon$-optimal solution in a finite number of iterations (poly(1/$\epsilon$))
		\end{prop}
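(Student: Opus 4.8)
The plan is to reduce the statement to the \NGD convergence theorem of \citet{HazanLevy2015}, which guarantees that normalized gradient descent reaches an $\epsilon$-optimal point in $O(1/\epsilon^2)$ iterations for objectives that are simultaneously strictly-quasi-convex and locally-Lipschitz. Because the target rate poly$(1/\epsilon)$ is exactly what that theorem delivers, the whole argument reduces to certifying that its two hypotheses are met by the single-attribute \LUE objective $\ell:\sR\to\sR$ under exogenous travel times, and then instantiating the rate.

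First I would dispatch the quasi-convexity hypothesis. In the single-attribute case, coordinate-wise pseudo-convexity coincides with pseudo-convexity, as noted immediately after Definition \ref{def:coordinate-wise-pseudo-convexity}, so $\ell$ is pseudo-convex by assumption. Invoking Property 2 of \citet{Mangasarian1965}, every differentiable pseudo-convex function is strictly quasi-convex, which supplies the first hypothesis directly. Next I would dispatch the Lipschitz hypothesis by appealing to Proposition \ref{prop:lipschitzness-uncongested-network}, which already establishes that $\ell$ is globally $G$-Lipschitz with $G = NQ^2/\delta$; global Lipschitzness trivially implies the local-Lipschitz condition on a neighborhood of the optimum required by \citet{HazanLevy2015}.

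With both hypotheses in place, I would instantiate the convergence theorem: running \NGD with step size tuned to $G$ and accuracy $\epsilon$ for $T = O(G^2/\epsilon^2)$ iterations returns an iterate $\theta_T$ satisfying $\ell(\theta_T) - \ell(\theta^\star) \le \epsilon$, where $\theta^\star$ is the unique global minimizer guaranteed by Propositions \ref{prop:global-optimality-pseudo-convexity-uncongested-network} and \ref{prop:uniqueness-uncongested-network}. Since $T$ is polynomial in $1/\epsilon$, this is precisely the claimed finite-iteration guarantee.

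The hard part will be matching the exact regularity notion used by \citet{HazanLevy2015}, whose theorem is stated for their strict-local-quasi-convexity condition; this bundles the direction of the gradient together with a quantitative lower bound $\kappa$ on the gradient magnitude outside an $\epsilon$-ball of the optimum. The directional part follows cleanly from the fact that the negative gradient points toward the global descent direction (Proposition \ref{prop:gradients-direction-small-uncongested-network}). The delicate step is certifying that the derivative does not vanish prematurely on the relevant region: here I would lean on Proposition \ref{prop:existence-local-optima-uncongested-network-with-error}, which characterizes where the derivative vanishes in the unidimensional case, to argue that away from $\theta^\star$ the derivative is bounded away from zero and thus furnishes a valid $\kappa$, thereby closing the gap between the paper's pseudo-convexity-plus-Lipschitzness hypotheses and the precise strict-local-quasi-convexity premise of the cited theorem.
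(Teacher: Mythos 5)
Your proposal is correct and takes essentially the same route as the paper's own (very terse) proof: in the single-attribute case pseudo-convexity gives strict quasi-convexity via \citet{Mangasarian1965}, Proposition~\ref{prop:lipschitzness-uncongested-network} gives (local) Lipschitzness, and Theorem~4.2 of \citet{HazanLevy2015} then yields $\epsilon$-optimality in poly$(1/\epsilon)$ iterations. The only substantive difference is your final worry about bridging to the strict-local-quasi-convexity (SLQC) premise; this is already handled by the lemma in \citet{HazanLevy2015} showing that strictly quasi-convex plus $G$-locally-Lipschitz functions are $(\epsilon, G, \theta^\star)$-SLQC, so no gradient lower bound from Proposition~\ref{prop:existence-local-optima-uncongested-network-with-error} is needed (and note that $\kappa$ in their definition parametrizes the radius $\epsilon/\kappa$ of the ball around the optimum, not a lower bound on the gradient magnitude).
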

		
		\begin{proof}
			
			By assumption, the objective function $\ell: \mathbb{R} \to \mathbb{R}$ is pseudo-convex and thus strictly quasi-convex. Under exogenous travel times, we can use Proposition \ref{prop:lipschitzness-uncongested-network} to conclude that $\ell$ is locally Lipschitz. Thus, all conditions from Theorem 4.2, \citet{HazanLevy2015} are satisfied and hence the proof is complete. 
		\end{proof}

		\section{Statistical inference on the parameters estimated from system level data}
		\label{sec:inference-parameters}
		
		
		This section derives closed form expressions of the statistical tests used to analyze the coefficients of the travelers' utility function estimated with the methodology presented in Section \ref{sec:learning-parameters}. \NLLS estimation is often preferred over \texttt{MLE} because the former relies on weaker distributional assumptions  \citep{Cameron2005a}.

		\subsection{The \NLLS estimator}
		\label{ssec:nlls-estimator}
		
		The nonlinear regression formulation of the outer level objective of our problem defines each traffic count measurement $\rx_i \in \sR, \forall i \in N$ in the true data generating process as a scalar dependent random variable with conditional mean: 
		
		\begin{equation}
			\label{eq:response-function-nlls-learning-problem}
			\E(\rx_i | \mZ, \bar{\vt}) = x_i(\mZ, \bar{\vt}, \vtheta)
		\end{equation}
		
		\medskip
		where $x_i$ is the traffic count (response) function (Eq. \ref{eq:NLLS-link-flow-equation}, Section \ref{sssec:nlls-problem-formulation}),  $\bar{\vt}$ is the set of link travel times and $\mZ$ is the matrix of exogenous attributes. Note that $\bar{\vt}$ is assumed exogenous in the \NLLS problem solved at the outer level but it is iteratively updated via the alternating optimization with the inner level problem (Section \ref{ssec:bilevel-level-optimization-solution-algorithm}). The regression equation of $\rx_i$ is defined as:
		
		\begin{equation}
			\label{eq:regression-equation-nlls-learning-problem}
			\rx_i = \E(\rx_i | \mZ, \bar{\vt}) + \ru_i
		\end{equation}
		
		\medskip
		where $\ru_i$ is the random error coming from the true data generating process. The differentiation of the objective function of the \NLLS problem (\ref{appendix:sssec:gradients}) leads to the following first order necessary optimality condition: 
		
		\begin{equation}
			\label{eq:first-order-optimality-condition-NLLS}
			\dfrac{\partial \ell(\vtheta)}{\partial \vtheta}
			= \frac{2}{n} \sum_{i \in N} \dfrac{\partial x_i(\vtheta)}{\partial \vtheta} (\rx_i-x_i(\vtheta))
			= 
			\vzero
			\implies 
			[D_{\vtheta} \ x(\vtheta)]^\top \ru = \vzero
		\end{equation}
		
		\medskip
		where $n$ is
		the number of observations and  $D_{\vtheta} \ x(\vtheta)  \in \mathbb{R}^{n \times d}$ is a Jacobian matrix corresponding to the stacked gradient vectors for each observation $n \in N$. Also,  for the ease of notation and for the remainder of the paper $\vx(\vtheta) = \vx(\mZ, \vt, \vtheta)$. Note that Eq. \ref{eq:first-order-optimality-condition-NLLS} defines a set of $d$ equations, namely, one equation for each each element of the vector of utility function coefficients $\vtheta \in \mathbb{R}^{|D|}$, and it restricts the residual to be orthogonal to $D_{\vtheta} \ x(\vtheta)$. A key difference of \NLLS respect to \OLS is that the matrix of exogenous attributes of the regression equation is replaced by the Jacobian matrix $D_{\vtheta} \ x(\vtheta)$ (\ref{ssec:equivalence-ols-nlls}). Under this realization, the statistical properties of the \NLLS problem can be derived analogously to the \OLS case.

		\subsection{Assumptions}
		\label{ssec:assumptions-statistical-inference}

		The proof of consistency of the \NLLS estimator relies on a series of assumptions that has been well-established in Econometrics literature \citep{Wu1981}. An accessible reference on these assumptions is found in \citet{Cameron2005a}. Below are the assumptions reformulated in the context of our problem:

		\begin{assumption}[Model specification]
			\label{assumption:correct-model-specification} 
			The response function is well-specified, i.e. $\forall \vtheta_0 \in \mathbb{R}^{|D|}, \ \rvx = \vx(\vtheta_0, \mZ, \bar{\vt}) + \vu$
		\end{assumption}
		
		\begin{assumption}[Orthogonality between errors and regressors]
			\label{assumption:no-correlation-regressors-errors} 
			In the data generating process, $\E[\vu|\mZ, \bar{\vt}] = \vzero$ and $\E[\vu \vu^{\top}|\mZ, \bar{\vt}] = \mOmega_0$, where $\mOmega$ is the covariance matrix of the errors terms
		\end{assumption}
		
		\begin{remark}
			Assumption \ref{assumption:no-correlation-regressors-errors} allows for heterocedastic errors in the \NLLS problem but it requires knowing or estimating a functional form of the covariance matrix $\mOmega_0$ \citep{Cameron2005a}.  To ease the analysis, we introduce an additional assumption:
		\end{remark}
		
		\begin{assumption}[Spherical errors]
			\label{assumption:spherical-link-flows-errors} 
			Errors are homocedastic and non-autocorrelated, that is, their covariance matrix is diagonal $\mOmega_0 = \sigma^2\mI$, for $\sigma \in \sR_{\geq 0}$
		\end{assumption}

		\begin{remark}
			\label{remark:spherical-link-flows-errors} 
			Assumption \ref{assumption:spherical-link-flows-errors} of spherical errors is satisfied if the errors are independent and identically distributed (\textit{iid}) and with a constant variance
		\end{remark}
		
		\begin{assumption}[Identifiability]
			\label{assumption:identifiability} 
			Each traffic flow function $x_i(\cdot)$ satisfies that: $\forall i \in N, \forall \vtheta_1, \vtheta_2 \in \mathbb{R}^{|D|}$, $x_i(\vtheta_1, \mZ, \bar{\vt})$ = $x_i(\vtheta_2, \mZ, \bar{\vt})$ iff $\vtheta_1 = \vtheta_2$
		\end{assumption}

		\begin{remark}
			Assumption \ref{assumption:identifiability} is directly satisfied when $\theta \in \sR$ and the traffic flow functions are (coordinate-wise) monotonic (see Definition \ref{assumption:monotonocity-traffic-count-functions}, Section  \ref{ssec:pseudo-convexity-small-network}).
		\end{remark}

		\begin{assumption}[Rank of the limit distribution of the Hessian matrix approximation]
			\label{assumption:limit-distribution-hessian-approximation}

			The matrix
			
			$$
			\mA_0 = \textmd{plim} \frac{1}{N}  \left[D_{\vtheta} \ x(\vtheta) \Big|_{\vtheta_{\vzero}}\right]^\top  \left[D_{\vtheta} \ x(\vtheta)\Big|_{\vtheta_{\vzero}}\right]
			$$
			
			exists and is finite and nonsingular $\forall \vtheta_0 \in \mathbb{R}^{|D|}$.
			
		\end{assumption}

		\begin{remark}
			
			Assumption \ref{assumption:limit-distribution-hessian-approximation} formalizes the third rule of identifiability discussed in Section \ref{sssec:identifiability-illustrative-example}. Proposition \ref{prop:full-rank-jabobian} shows that this assumption can be tested via the rank of the Jacobian of the traffic functions:
			
			\begin{prop}[Full rank Jacobian and non-singular Hessian matrix approximation]
				\label{prop:full-rank-jabobian}
				Suppose the Jacobian matrix of the traffic flow functions given by $D_{\vtheta} \phantom{'}\vx(\vtheta) \in \sR^{n \times d}$, with $n >d$, is full rank at $\vtheta_{\vzero}\in \mathbb{R}^{|D|}$. Then, the square matrix $ \left[D_{\vtheta = \vtheta_{\vzero}} \phantom{'} \vx(\vtheta)\right]^{\top} D_{\vtheta= \vtheta_{\vzero}} \phantom{'}
				\vx(\vtheta) \in \sR^{d \times d}$ is positive definite.
			\end{prop}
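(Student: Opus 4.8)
The plan is to recognize this as the standard linear-algebra fact that a Gram matrix built from a matrix of full column rank is positive definite, and to verify the two defining properties—symmetry and strict positivity of the associated quadratic form—directly. Write $\mJ := D_{\vtheta = \vtheta_{\vzero}}\,\vx(\vtheta) \in \sR^{n \times d}$ for the Jacobian evaluated at $\vtheta_{\vzero}$, and let $\mM := \mJ^\top \mJ \in \sR^{d \times d}$ be the matrix in question. Symmetry is immediate, since $\mM^\top = (\mJ^\top \mJ)^\top = \mJ^\top \mJ = \mM$, so it suffices to show that the quadratic form $\vz \mapsto \vz^\top \mM \vz$ is strictly positive on every nonzero $\vz$.

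First I would compute, for an arbitrary $\vz \in \sR^{d}$, the quadratic form $\vz^\top \mM \vz = \vz^\top \mJ^\top \mJ \vz = (\mJ \vz)^\top (\mJ \vz) = \|\mJ \vz\|_2^2 \geq 0$, which already establishes that $\mM$ is positive semidefinite. The remaining—and only substantive—step is to upgrade this to strict positivity for $\vz \neq \vzero$. Here I would invoke the full-rank hypothesis: since $n > d$ and $\mJ$ has rank $d$, its columns are linearly independent, equivalently its kernel is trivial. Consequently $\vz \neq \vzero$ forces $\mJ \vz \neq \vzero$, so $\|\mJ \vz\|_2^2 > 0$. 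Combining, $\vz^\top \mM \vz > 0$ for every $\vz \neq \vzero$, which is precisely the definition of positive definiteness of $\mM$.

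There is no genuine obstacle in this argument; the entire content is the translation of \emph{full column rank} into \emph{trivial kernel}, which is the single place where the hypothesis $n > d$ (guaranteeing rank $d$ is the maximal possible column rank) is used. The proposition then closes the loop with Assumption \ref{assumption:limit-distribution-hessian-approximation}: because $\mA_0$ is the probability limit of $\tfrac{1}{N}\mM$ evaluated at $\vtheta_{\vzero}$, positive definiteness of $\mM$—and hence its nonsingularity—is exactly the finite-sample condition underpinning the required nonsingularity of the limiting matrix, which in turn licenses the asymptotic distribution theory for the \NLLS estimator.
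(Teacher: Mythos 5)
Your proof is correct and follows essentially the same argument as the paper: both compute the quadratic form $\vu^\top \mJ^\top \mJ \vu = \|\mJ\vu\|_2^2$ and use the full column rank of the Jacobian (trivial kernel) to conclude strict positivity for nonzero vectors. Your explicit mention of symmetry and of where the hypothesis $n > d$ enters is a minor polish over the paper's version, but the substance is identical.
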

			
			\begin{proof}
				Consider $\forall \vu \in \sR^{m} \neq \vzero$:
				\begin{align*}
					\vu^T \left(2 \left[D_{\vtheta} \phantom{'}\vx(\vtheta)\right]^{\top} D_{\vtheta} \phantom{'}\vx(\hat{\vtheta})\right) \vu
					= 2  (D_{\vtheta= \vtheta_{\vzero}} \phantom{'}\vx(\vtheta) \vu)^\top \left(D_{\vtheta= \vtheta_{\vzero}} \phantom{'}\vx(\vtheta) \vu \right)
					= \|D_{\vtheta= \vtheta_{\vzero}} \vx(\vtheta) \vu\|_2^2
				\end{align*}
				
				By assumption, the Jacobian matrix $D_{\vtheta = \vtheta_{\vzero}} \ \vx(\vtheta)$ is full rank at $\vtheta_{\vzero}$ and thus $D_{\vtheta = \vtheta_{\vzero}} \ \vx(\vtheta) \vu = \vzero$ iff $\vu = \vzero$. Since $\vu \neq 0$, $\|D_{\vtheta= \vtheta_{\vzero}} \vx(\vtheta) \vu\|_2^2 > 0$ which proves that the matrix $\left[D_{\vtheta = \vtheta^{\star}} \phantom{'} \vx(\vtheta)\right]^{\top} D_{\vtheta= \vtheta_{\vzero}} \phantom{'}
				\vx(\vtheta)$ is positive definite and it completes the proof.
			\end{proof}
			
			
			Thus, if the Jacobian of the traffic flow functions $D_{\vtheta} \ x(\vtheta)$ has full rank, by Proposition \ref{prop:full-rank-jabobian}, the matrix specified in the argument of the probability limit defined in Assumption \ref{assumption:limit-distribution-hessian-approximation} is positive definite and hence non-singular. 
			
			%
		\end{remark}
		
			
			\begin{assumption}[Central limit theorem]
				\label{assumption:convergence-distribution-first-ordercondition} 
				$N^{-1/2} \sum_{i = 1}^N [D_{\vtheta} \ x_i(\vtheta)]^\top \ru_i \xrightarrow[]{d} \mathcal{N}[0, \mB_0]$, where 
				
				$$
				\mB_0 
				= \textmd{plim} \frac{1}{N} \left[D_{\vtheta} \ x(\vtheta) \Big|_{\vtheta_{\vzero}}\right]^\top  \mOmega_0 \  \left[D_{\vtheta} \ x(\vtheta)\Big|_{\vtheta_{\vzero}}\right]
				= \textmd{plim} \frac{\sigma^2}{N} \left[D_{\vtheta} \ x(\vtheta) \Big|_{\vtheta_{\vzero}}\right]^\top \left[D_{\vtheta} \ x(\vtheta)\Big|_{\vtheta_{\vzero}}\right]
				$$
				
				and $\mOmega_0 = \sigma^2 \mI$ by Assumption \ref{assumption:spherical-link-flows-errors} of homocedasticity.

			\end{assumption}

			\begin{remark}
				When the errors $\vu$ are Gaussian and independently distributed, the linear combination of errors given by $[D_{\vtheta} \ x(\vtheta)]^\top \ru$ is a Multivariate Gaussian. As a result, Assumption \ref{assumption:convergence-distribution-first-ordercondition} holds even in small samples. Alternatively, if each term $[D_{\vtheta} \ x_i(\vtheta)]^\top \ru_i$ follows an arbitrary distribution, the central limit theorem (\CLT) must be invoked to ensure the asymptotic normality of the terms in the sequence  $\{D_{\vtheta} \ x_i(\vtheta)\}_{i= 1 \in N}$.  Since $D_{\vtheta} \ x_i(\vtheta)$ is assumed exogenous for statistical inference, the distribution of each term $i$ of the previous sequence is determined by the distribution of $\ru_i$.
			\end{remark}

			\subsection{Statistical properties of the NLLS estimator}
			
			Suppose $\vtheta_0 \in \mathbb{R}^{|D|}$ satisfies the first order necessary optimality condition defined in Eq. \ref{eq:first-order-optimality-condition-NLLS}. Then, under the assumptions stated in Section \ref{ssec:assumptions-statistical-inference}, we can establish the consistency of the \NLLS estimator $\hat{\vtheta}$ at $\vtheta_0$ (see Proposition 5.6, \citet{Cameron2005a}) and that
			
			\begin{equation}
				\label{eq:convergence-distribution-nlls-estimator}
				\sqrt{N}(\hat{\vtheta}-\vtheta_{\vzero}) \xrightarrow[]{d} \mathcal{N}[0, \mA_{\vzero}^{-1}\mB_0\mA_{\vzero}^{-1}]
			\end{equation}
			
			where
			$$
			\mA_{\vzero} = \textmd{plim} \frac{1}{N}  \left[D_{\vtheta} \ x(\vtheta) \Big|_{\vtheta_{\vzero}}\right]^\top  \left[D_{\vtheta} \ x(\vtheta)\Big|_{\vtheta_{\vzero}}\right]
			, 
			\quad 
			\mB_0 
			= \textmd{plim} \frac{\sigma^2}{N} \left[D_{\vtheta} \ x(\vtheta) \Big|_{\vtheta_{\vzero}}\right]^\top \left[D_{\vtheta} \ x(\vtheta)\Big|_{\vtheta_{\vzero}}\right]
			= \sigma^2\mA_{\vzero}
			$$

			By the non-singularity of $\mA_{\vzero}$ (Assumption \ref{assumption:limit-distribution-hessian-approximation}, Section \ref{ssec:assumptions-statistical-inference}):
			
			$$
			\mA_{\vzero}^{-1}\mB_0\mA_{\vzero}^{-1}
			= \mA_{\vzero}^{-1}\sigma^2 \mA_{\vzero}\mA_{\vzero}^{-1}
			= \sigma^2 \mA_{\vzero}^{-1}
			$$
			
			By rearranging terms in Eq. \ref{eq:convergence-distribution-nlls-estimator}, the asymptotic distribution of $\hat{\vtheta}$ becomes:

			$$
			\hat{\vtheta} \xrightarrow[]{a} \mathcal{N}\left[\vtheta_{\vzero}, \Var(\hat{\vtheta})\right]
			$$
			
			where

					\begin{align}
						\label{eq:asymptotical-distribution-nlls-estimator}
						\Var(\hat{\vtheta}) 
						&= \sigma^2 \left[D_{\vtheta} \ x(\vtheta) \Big|_{\vtheta_{\vzero}}\right]^\top \left[D_{\vtheta} \ x(\vtheta)\Big|_{\vtheta_{\vzero}}\right]
					\end{align}
			
			is the asymptotic variance matrix of the \NLLS estimator. Note that the distribution of the \NLLS estimator resembles \OLS, except that the design matrix $\mX$ is replaced by $\tilde{\mX} = D_{\vtheta = \vtheta_{\vzero}} \ x(\vtheta)$ (\ref{ssec:equivalence-ols-nlls}). Also, similar to the \OLS case, the statistical behavior of the \NLLS estimator and the variance of the error terms in large samples can be expressed in terms of $\rvu$ and approximated as \citep{Gallant1975}:
					\begin{align}
						\
						\hat{\vtheta} 
						&= \vtheta + (\tilde{\mX}^T \tilde{\mX})^{-1}\tilde{\mX}^T \rvu \\
						s^2 &= (n-p)^{-1} \rvu^{\top}(\mI-\tilde{\mX}(\tilde{\mX}^{\top}\tilde{\mX})^{-1}\tilde{\mX}^{\top})\rvu
					\end{align}
					
			Note that the variance $\sigma^2 \in \mathbb{R}_{\geq 0}$ of the error terms defined in Eq. \ref{eq:asymptotical-distribution-nlls-estimator} is unknown in real applications but it can be consistently estimated as:

			\begin{equation}
				\label{eq:estimated-variance-nlls}
				{\hat{\sigma}}^2
				= \frac{\RSS(\hat{\vtheta})}{N-|K|}
				= \frac{\|\bar{\vx} - x(\hat{\vtheta})\|_2^2}{N-|K|}
			\end{equation}
		
			where the numerator represents the residual sum of squares (\RSS) of the model and $|K| \in \mathbb{Z}_{+}$ in the denominator represents the dimension of the $\NLLS$ estimator.

			\subsection{Hypothesis testing and confidence intervals}
			\label{ssec:hypothesis-testing}
			
			Based on the asymptotic normality of the \NLLS estimator $\hat{\vtheta}$ proved in the previous section, we can derive confidence intervals and hypothesis tests of $\hat{\vtheta}$. The  $1-\alpha$\% confidence interval of $\hat{\vtheta}$ can be found by using the percentile $\1-\alpha/2$ of a t-variate with $n-|K|$ degrees of freedom $t_{n-|K|, 1-\alpha/2}$ and the variance of the \NLLS estimator as shown in the following formula \citep{Gallant1975}: 
			
			\begin{equation}
				\texttt{CI}(\hat{\vtheta})_{1-\alpha/2} = \hat{\vtheta} \pm t_{n-|K|, 1-\alpha/2}\sqrt{\diag(\Var(\hat{\vtheta}|\mX))}
			\end{equation}
			
			The hypothesis $H_0: \theta_d = \theta_{H_0}$ can be contrasted at the $\alpha$\% confidence level by computing the statistics $\bar{T}_{d, H_0}$:
			
			
			\begin{equation}
				\bar{T}_{d, H_0}
				= \frac{\hat{\theta}_d - \theta_{H_0} }{\sqrt{\diag(\Var(\hat{\vtheta}))_d } }
			\end{equation}
			
			\medskip
			and $H_0$ is rejected when $|\bar{T}_{d, H_0}| > |t_{\1-\alpha/2}|$.

			\subsection{F-test}
			\label{ssec:f-test}
			
			The F-test is another useful statistic for model comparison in non-linear regression and that allows to contrast hypotheses on multiple model parameters \citep{Gallant1975a}. Let's define $\vtheta_{1}, \vtheta_{2} \in \mathbb{R}^{|D|}$ as the vector of coefficients in the restricted (or nested) and unrestricted models. Then, the F-statistic $\bar{F}_{1,2}$ is:
			
			\begin{equation}
				\label{eq:f-statistic-model-comparison}
				\bar{F}_{1,2} = \frac{\dfrac{\RSS(\vtheta_{1})-\RSS(\vtheta_{2})}{\|\vtheta_{2}\|_0-\|\vtheta_{1}\|_0}}{\dfrac{\RSS(\vtheta_{2})}{N-\|\vtheta_{2}\|_0}}
			\end{equation}
			
			\medskip
			where $\|\vtheta_{1}\|_0$ and $\|\vtheta_{2}\|_0$ are the number of non-zero entries in the vectors of parameters of the non-restricted and restricted models. The null hypothesis is that the models are statistically equivalent and it is rejected when $\bar{F}_{1,2} > F_{\|\vtheta_{2}\|_0-\|\vtheta_{12}\|_0, N-\|\vtheta_{2}\|_0,\alpha} \ $, where $F_{\|\vtheta_{2}\|_0-\|\vtheta_{12}\|_0, N-\|\vtheta_{2}\|_0, \alpha}$ is the value of a central $F$-distribution with $\|\vtheta_{2}\|_0-\|\vtheta_{1}\|_0$ numerator degrees of freedom and  $N-\|\vtheta_{2}\|_0^{\phantom{1^2}}$ denominator degrees of freedom and at the $\alpha$-th upper percentile.
		
		\begin{remark}
			An important consideration for hypothesis testing concerns the identifiability of the parameters under the null hypothesis \citep{Gallant1975}. Assume the travelers' utility function is defined as $U = \theta_t t^\alpha$ where $\theta_t,\alpha \in \sR$  are the free parameters and $t$ is the travel time in a given path. If one tests the hypothesis $\theta_t = 0$, $\alpha$ becomes not identifiable. The latter violates Assumption \ref{assumption:limit-distribution-hessian-approximation}, Section \ref{ssec:assumptions} on the rank condition and which is required to prove the consistency and asymptotic normality of the \NLLS estimator. Conversely, under a homogeneous linear-in-parameters utility function (Assumption \ref{assumption:utility}, Section \ref{ssec:assumptions}), the identification of the parameters on a restricted model can be guaranteed under mild conditions (Section \ref{sssec:identifiability-illustrative-example}).

		\end{remark}

	\section{Numerical experiments}
\label{sec:numerical-experiments}

To study the performance of the proposed algorithm and some of the mathematical properties of the optimization problem reviewed in Section \ref{sec:mathematical-properties-learning-problem}, we conduct experiments with synthetic data generated from four small networks. Subsequently, we study the statistical properties of the non-linear least squares estimators (\NLLS) of the utility function coefficients via a series of Monte Carlo experiments conducted in the Sioux Falls, SD network. For the experiments, we employ a validation framework where a hypothetical O-D matrix and a vector of the utility function coefficients are assumed as the ground truth and then used them to generate synthetic traffic count measurements consistent with \SUE-\logit. In line with the \LUE literature, the utility function coefficients are estimated under the assumption of an exogenous O-D matrix. All the experiments in this section are conducted on a MacBook Pro with Intel Core i5 CPU 2.7 GHz × 2, 1867 MHz 8 GB RAM, 256 GB SSD. The computation time to run all the experiments in this section was approximately 18 hours.

\subsection{Data generating process (DGP)}
\label{ssec:dgp}

The \DGP to conduct the experiments assumes the modeler has perfect information about the link performance functions, the network topology, the paths sets and the ground truth coefficients and attributes of the travelers' utility function. For each network and replicate of the experiment, we generate a set of synthetic traffic counts under an ideal scenario where link flows perfectly matches \SUE-\logit. To introduce randomness, we add \textit{i.i.d} errors to each set of traffic counts measurements. This \textit{i.i.d} sampling strategy (see Remark \ref{remark:spherical-link-flows-errors}) ensures that Assumptions 	\ref{assumption:no-correlation-regressors-errors} and \ref{assumption:spherical-link-flows-errors}, Section \ref{ssec:assumptions-statistical-inference} are satisfied and thus, the statistical inference for the \NLLS estimator remains consistent. For the sake of convenience, we choose Gaussian errors but other choices of probability distribution are also reasonable, including the Poisson or Negative Binomial distributions. The Gaussian errors are generated with a standard deviation equals to 10\% of the average value of the traffic counts. As a reference, the standard deviation of the error term used in the Monte Carlo experiments conducted by \citet{Gallant1975} was equal to the 3\% of the average value of the dependent variable reported in his synthetic dataset. 

\subsection{Small networks}
\label{ssec:small-networks}

All networks except for the toy network shown in Figure \ref{subfig:toy-network} have been analyzed in prior literature \citep{Wang2016,Yang2001,Lo2003}. For the toy network, the number of trips between origin-destination pairs is assumed equal to $q_{1,4} = 50, \ q_{2,4} = 100, \ q_{3,4} = 150$ and $q_{i,j} = 0,\  \forall (i,j) \neq \{(1,4), (2,4), (3,4)\}$. For the remaining networks, the O-D matrices are equal to those provided by \cite{Wang2016,Yang2001,Lo2003}. The travelers' consideration sets in the four networks include all acyclic paths connecting every O-D pair. The links' performance functions are BPR functions with parameters $\alpha = 0.15$ and $\beta = 4$ and the link capacities and free flow travel times are identical to those reported by \cite{Wang2016,Yang2001,Lo2003}. The utility function is dependent on travel time only, and with a ground truth coefficient $\theta^{\star}_t = -1 < 0$. Note that prior research defines $\theta = -\theta_t >0$ as a positive dispersion coefficient that weights the link/path costs.

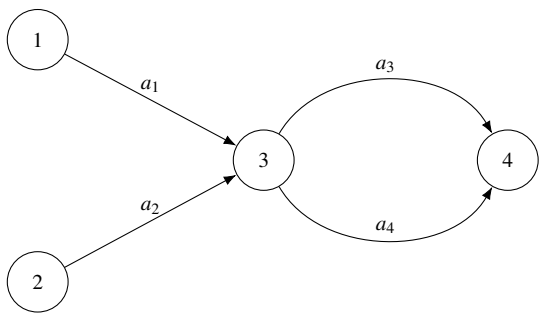
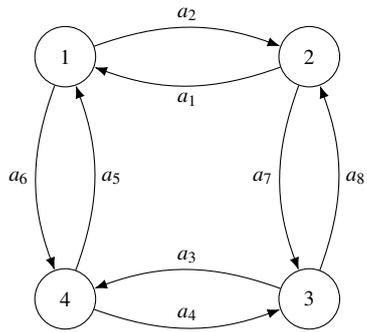
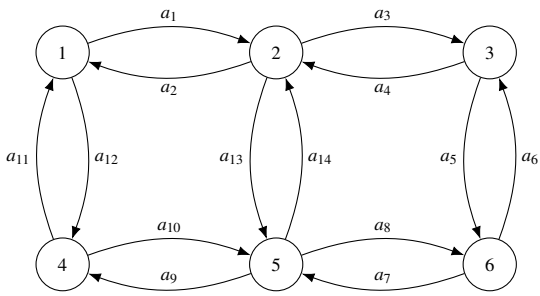
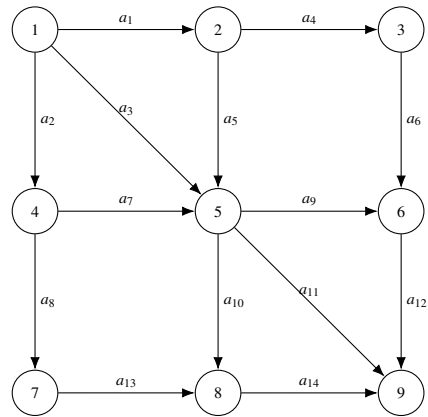
\begin{figure}[!htbp]
	\centering
	
	\begin{subfigure}[b]{0.45\columnwidth}
		
		\centering
		
		\begin{tikzpicture}[>={Latex[scale=1]}, scale = 0.8, transform shape]
			
			\tikzset{main node/.style={circle,draw,minimum size=1cm,inner sep=3pt,node distance=3cm}}
			
			\node[main node] (C1) [align = center] {$1$};
			\node[main node, below = of C1] (C2)  [align = center]  {$2$};
			\node[main node, above right = 1.3cm and 3cm of C2] (C3) [align = center]  {$3$};
			\node[main node, right = of C3] (C4) [align = center]  {$4$};
			
			\path[draw, ->] (C1) edge node[above]{$a_1$}  (C3) ;
			
			\path[draw, ->] (C2) edge node[above]{$a_2$}  (C3) ;
			
			\path[draw, ->] (C3) edge[bend left= 60] node[above]{$a_3$}  (C4) ;
			\path[draw, ->] (C3) edge[bend right= 60] node[above]{$a_4$}  (C4) ;

		\end{tikzpicture} 
		
		\caption{Toy network}
		\label{subfig:toy-network}
		
	\end{subfigure}%
	\begin{subfigure}[b]{0.45\columnwidth}
		
		\centering

		\begin{tikzpicture}[>={Latex[scale=1]}, scale = 0.8, transform shape]
			
			\tikzset{main node/.style={circle,draw,minimum size=1cm,inner sep=3pt,node distance=3cm}}
			
			\node[main node] (C1) [align = center] {$1$};
			\node[main node, right = of C1] (C2)  [align = center]  {$2$};
			\node[main node, below = of C1] (C4) [align = center]  {$4$};
			\node[main node, right = of C4] (C3) [align = center]  {$3$};
			
			\path[draw, ->] (C1) edge[bend left = 20] node[above]{$a_2$}  (C2) ;
			\path[draw, ->] (C2) edge[bend left = 20] node[below]{$a_1$}  (C1) ;
			
			\path[draw, ->] (C1) edge[bend right = 20] node[left]{$a_{6}$}  (C4) ;
			\path[draw, ->] (C4) edge[bend right = 20] node[right]{$a_{5}$}  (C1) ;
			
			\path[draw, ->] (C2) edge[bend right = 20] node[left]{$a_{7}$}  (C3) ;
			\path[draw, ->] (C3) edge[bend right = 20] node[right]{$a_{8}$}  (C2) ;
			
			\path[draw, ->] (C3) edge[bend right = 20] node[above]{$a_{3}$}  (C4) ;
			\path[draw, ->] (C4) edge[bend right = 20] node[above]{$a_{4}$}  (C3) ;

		\end{tikzpicture}

		\caption{\citet{Wang2016} network}
		\label{subfig:Wang-network}
	\end{subfigure}
	
	\vspace{0.5cm}
	
	\begin{subfigure}[b]{0.45\columnwidth}
		
		\centering

		\begin{tikzpicture}[>={Latex[scale=1]}, scale = 0.7, transform shape]
			
			\tikzset{main node/.style={circle,draw,minimum size=1cm,inner sep=3pt,node distance=3cm}}
			
			\node[main node] (C1) [align = center] {$1$};
			\node[main node, right = of C1] (C2)  [align = center]  {$2$};
			\node[main node, right = of C2] (C3) [align = center]  {$3$};
			\node[main node, below = of C1] (C4) [align = center]  {$4$};
			\node[main node, right = of C4] (C5) [align = center]  {$5$};
			\node[main node, right = of C5] (C6) [align = center]  {$6$};
			
			\path[draw, ->] (C1) edge[bend left = 20] node[above]{$a_1$}  (C2) ;
			\path[draw, ->] (C2) edge[bend left = 20] node[below]{$a_2$}  (C1) ;
			
			\path[draw, ->] (C1) edge[bend left = 20] node[right]{$a_{12}$}  (C4) ;
			\path[draw, ->] (C4) edge[bend left = 20] node[left]{$a_{11}$}  (C1) ;
			
			\path[draw, ->] (C2) edge[bend left = 20] node[above]{$a_3$}  (C3) ;
			\path[draw, ->] (C3) edge[bend left = 20] node[below]{$a_4$}  (C2) ;
			
			\path[draw, ->] (C2) edge[bend right = 20] node[left]{$a_{13}$}  (C5) ;
			\path[draw, ->] (C5) edge[bend right = 20] node[right]{$a_{14}$}  (C2) ;
			
			\path[draw, ->] (C3) edge[bend right = 20] node[left]{$a_5$}  (C6) ;
			\path[draw, ->] (C6) edge[bend right = 20] node[right]{$a_6$}  (C3) ;
			
			\path[draw, ->] (C5) edge[bend left = 20] node[above]{$a_{9}$}  (C4) ;
			\path[draw, ->] (C4) edge[bend left = 20] node[above]{$a_{10}$}  (C5) ;

			\path[draw, ->] (C5) edge[bend left = 20] node[above]{$a_8$}  (C6) ;
			\path[draw, ->] (C6) edge[bend left = 20] node[above]{$a_7$}  (C5) ;
			
		\end{tikzpicture}    
		
		\vspace{1cm}

		\caption{\citet{Lo2003} network}
		\label{subfig:Lo-Chan-network}
		
	\end{subfigure}%
	\begin{subfigure}[b]{0.45\columnwidth}
		\centering

		\begin{tikzpicture}[>={Latex[scale=1]}, scale = 0.6, transform shape]
			
			\tikzset{main node/.style={circle,draw,minimum size=1cm,inner sep=3pt,node distance=3cm}}
			
			\node[main node] (C1) [align = center] {$1$};
			\node[main node, right = of C1] (C2)  [align = center]  {$2$};
			\node[main node, right = of C2] (C3) [align = center]  {$3$};
			\node[main node, below = of C1] (C4) [align = center]  {$4$};
			\node[main node, right = of C4] (C5) [align = center]  {$5$};
			\node[main node, right = of C5] (C6) [align = center]  {$6$};
			\node[main node, below = of C4] (C7) [align = center]  {$7$};
			\node[main node, right = of C7] (C8) [align = center]  {$8$};
			\node[main node, right = of C8] (C9) [align = center]  {$9$};
			
			\path[draw, ->] (C1) edge node[above]{$a_1$}  (C2) ;
			\path[draw, ->] (C1) edge node[right]{$a_2$}  (C4) ;
			\path[draw, ->] (C1) edge node[above]{$a_3$}  (C5) ;
			\path[draw, ->] (C2) edge node[above]{$a_4$}  (C3) ;
			\path[draw, ->] (C2) edge node[right]{$a_5$}  (C5) ;
			\path[draw, ->] (C3) edge node[right]{$a_6$}  (C6) ;
			\path[draw, ->] (C4) edge node[above]{$a_7$}  (C5) ;
			\path[draw, ->] (C4) edge node[right]{$a_8$}  (C7) ;
			\path[draw, ->] (C5) edge node[above]{$a_9$}  (C6) ;
			\path[draw, ->] (C5) edge node[right]{$a_{10}$}  (C8) ;
			\path[draw, ->] (C5) edge node[above]{$a_{11}$}  (C9) ;
			\path[draw, ->] (C6) edge node[right]{$a_{12}$}  (C9) ;
			\path[draw, ->] (C7) edge node[above]{$a_{13}$}  (C8) ;
			\path[draw, ->] (C8) edge node[above]{$a_{14}$}  (C9) ;
			
		\end{tikzpicture}    
		
		\caption{\citet{Yang2001} network}
		\label{subfig:Yang-network}
		
	\end{subfigure}
	
	\caption{Topologies of small networks}
	\label{fig:topologies-small-networks}

\end{figure}

\subsubsection{Monotonocity of traffic flow functions}
\label{sssec:monotonicity-pseudoconvexity-small-networks}

Figure \ref{fig:monotonicity-yang-network} shows the output of the traffic flow functions for four arbitrary links in each network and for $\theta_t \in [-15,15]$. The first two values of the tuples shown in the legend of each subfigure correspond to the origin and destination nodes of a link. The last element of the tuple is an index that distinguish parallel links connecting the same O-D pair, e.g $a_3, a_4$ in Figure \ref{fig:monotonicity-yang-network}(a). Note that most traffic flow functions are monotonic and some exhibit convex and concave regions that resemble sigmoidal functions. In the toy network, the traffic flow functions of links $(2,3)$ and $(1,3)$ are constant because their link flows are invariant to the value of the travel time coefficient. Therefore, all individuals traveling from node 1 or 2 are forced to traverse those links and thus, the traffic count measurements are not contributing to the identifiability of additional coefficients of the travelers' utility function.

\begin{figure}[H]
	\centering
	
	\begin{subfigure}{0.45\columnwidth}
		
		\centering
		
		\includegraphics[width=\columnwidth]{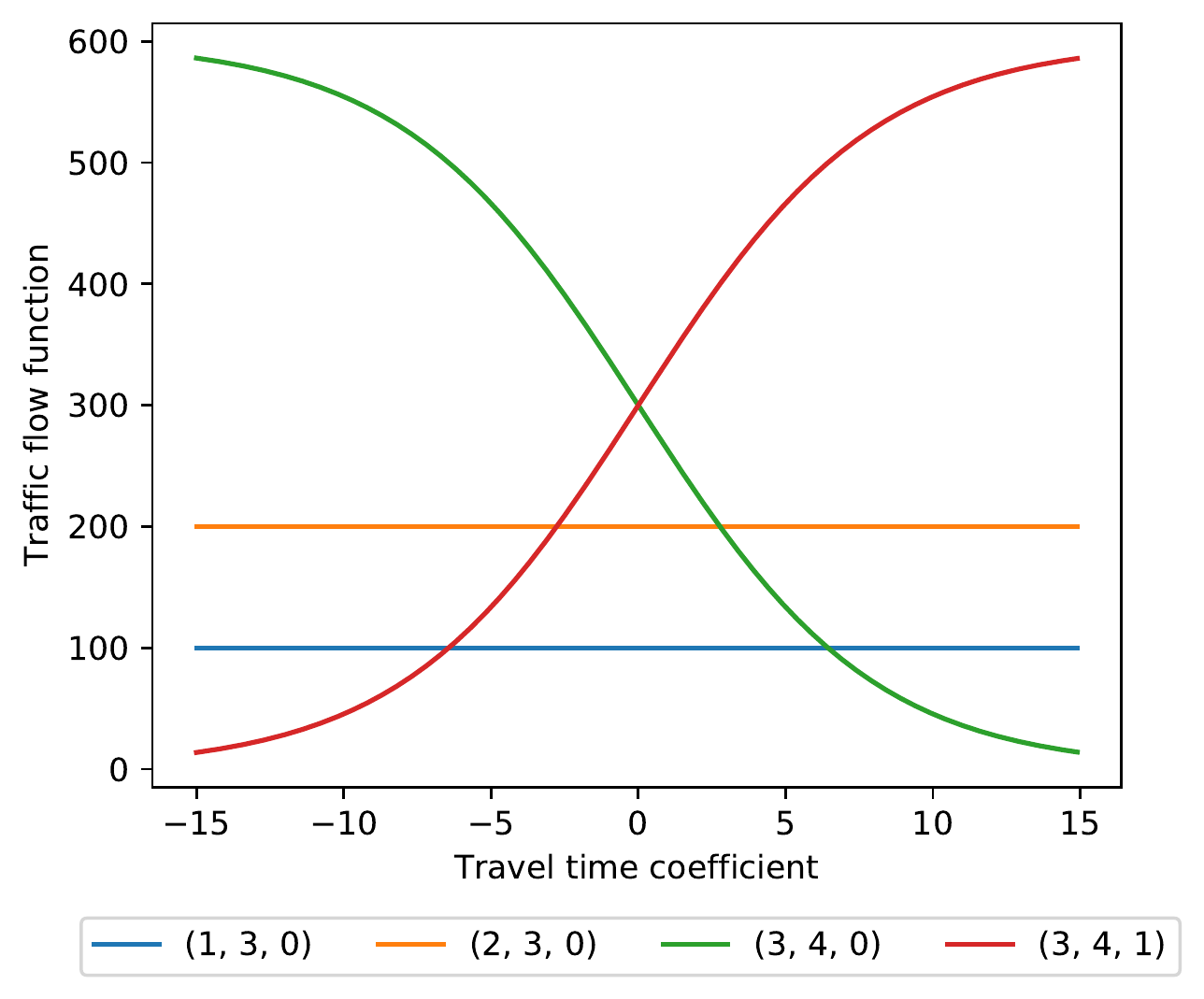}
		
		\caption{Toy network}
		\label{fig:monotonocity-toy-network}
		
	\end{subfigure}	
	\begin{subfigure}{0.45\columnwidth}
		
		\centering
		
		\includegraphics[width=\columnwidth]{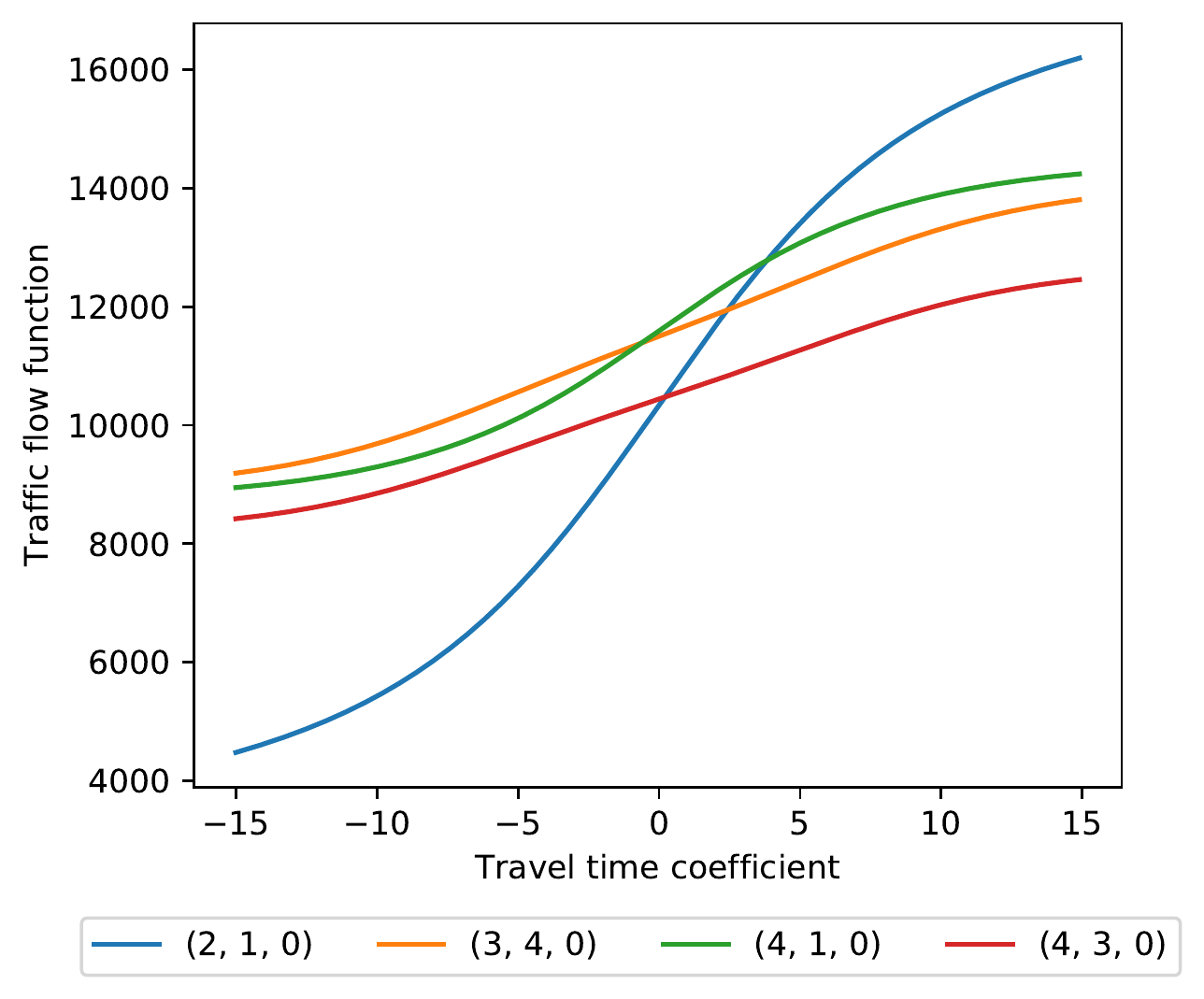}
		
		\caption{Wang network}
		\label{fig:monotonocity-wang-network}
		
	\end{subfigure}
	\begin{subfigure}[t]{0.45\columnwidth}
		
		\centering
		
		\includegraphics[width=\columnwidth]{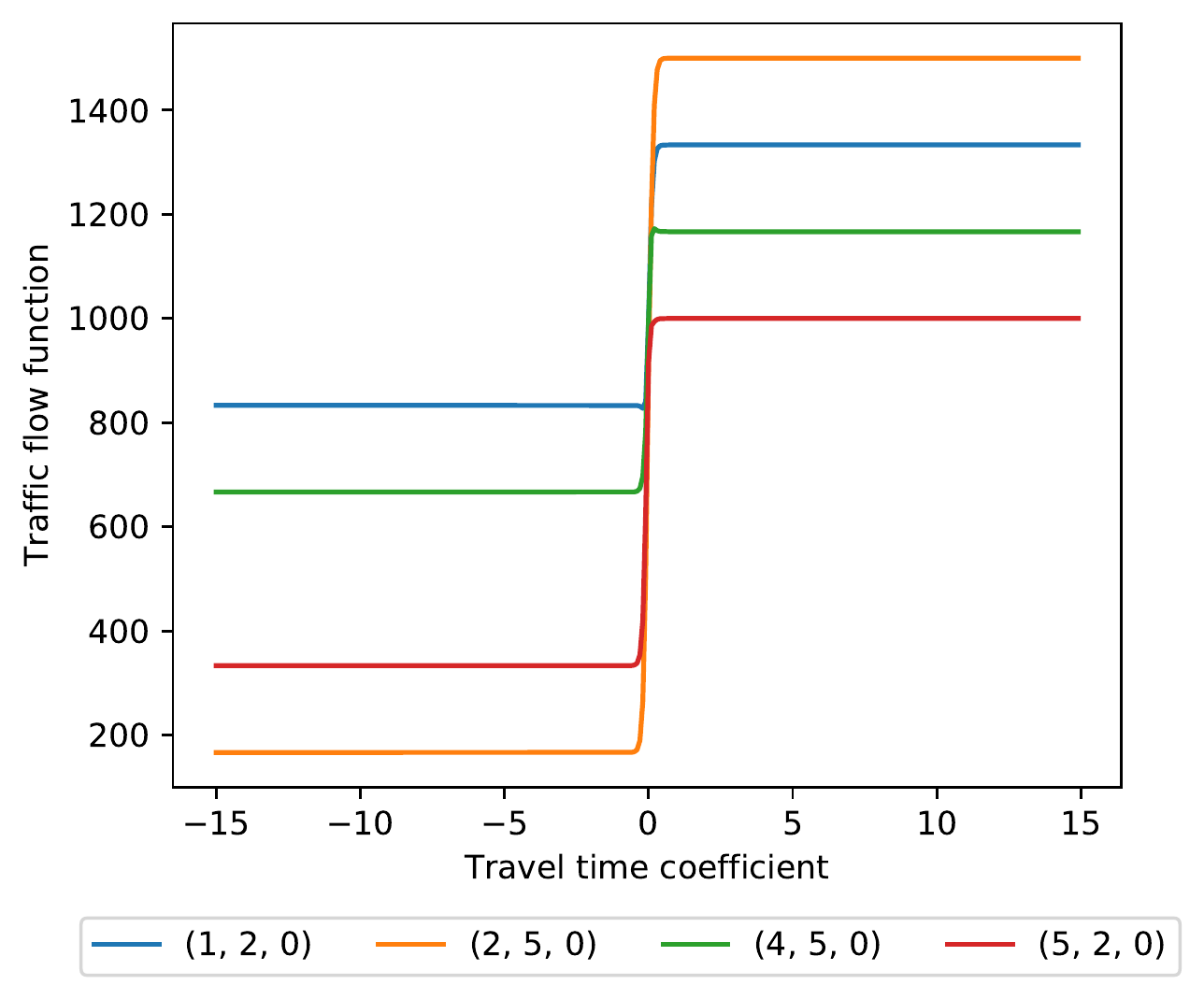}
		
		\caption{Lo \& Chan network}
		\label{fig:monotonocity-lochan-network}
		
	\end{subfigure}%
	\begin{subfigure}[t]{0.45\columnwidth}
		
		\centering
		
		\includegraphics[width=\columnwidth] {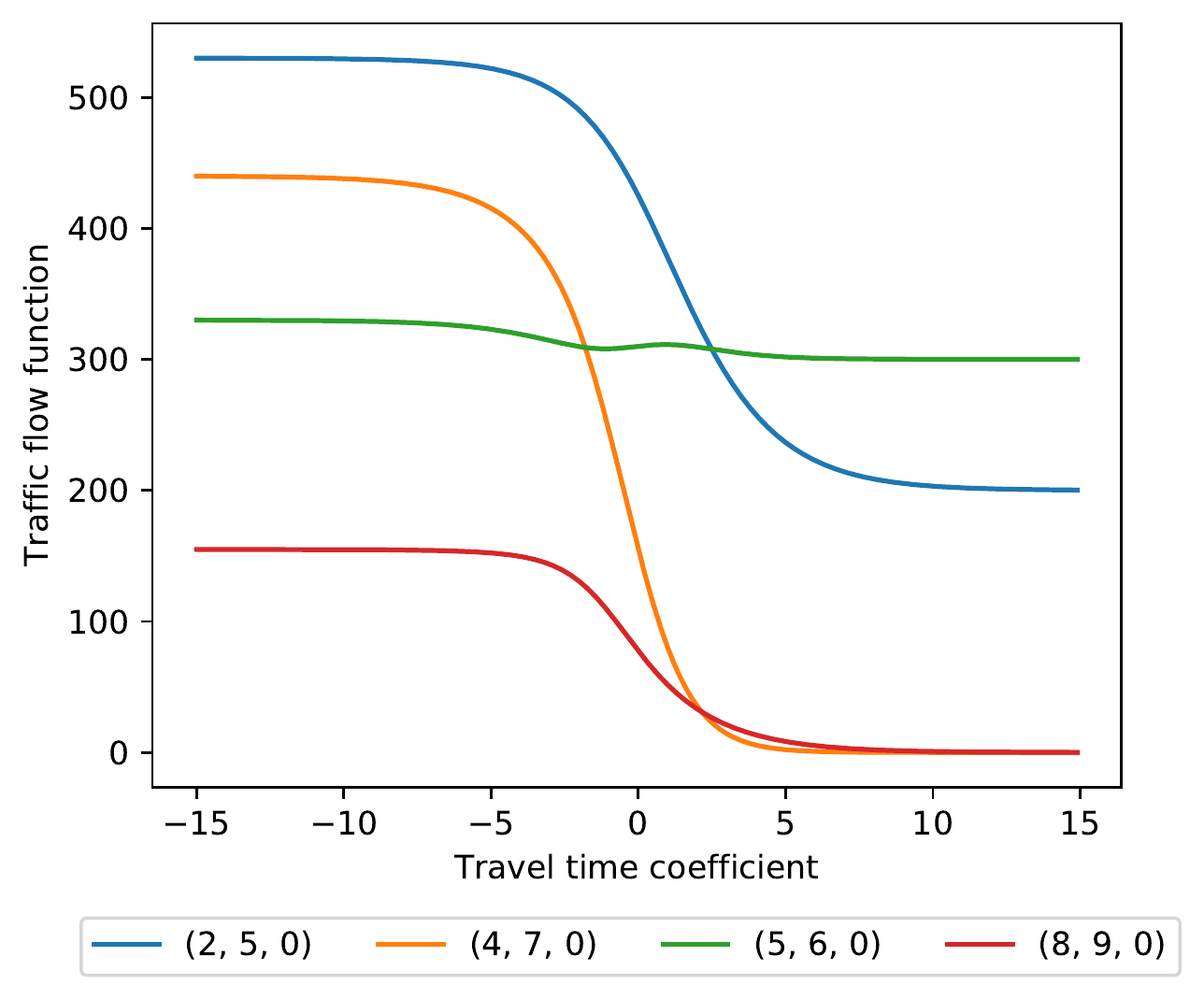}
		
		\caption{Yang network}
		\label{fig:monotonocity-yang-network}
		
	\end{subfigure}%
	
	\caption{Monotonicity of traffic flow functions in small networks}
	\label{fig:monotonicity-yang-network}
	
\end{figure}

\subsubsection{Pseudo-convexity of the objective function}
\label{appendix:sssec:pseudo-convexity-small-networks}

Figure \ref{fig:pseudo-convexity-small-networks} illustrates the coordinate-wise pseudo-convexity of the \LUE problem. The dashed vertical lines represents the true value of the travel time coefficient $\theta_t^{\star}$. Note that since noise was introduced in the traffic count measurements, the error is not necessarily minimized at $\theta_t^{\star}$. In the four networks, the objective function $\ell$ is monotonically decreasing or increasing for the ranges of values that are lower or higher than $\theta^{\star}$, respectively, meaning that the negative slope of the first derivatives are pointing toward the region where the global optima is located (Figure \ref{subfig:sign-first-derivative-pseudo-convexity-small-networks}). As expected, the changes in curvature of $\ell$ within the feasible domain suggest that $\ell$ is not convex but (coordinate-wise) pseudo-convex.

\begin{figure}[H]
	\centering

	\begin{subfigure}[t]{0.49\columnwidth}
		\centering
		\includegraphics[width=\columnwidth] {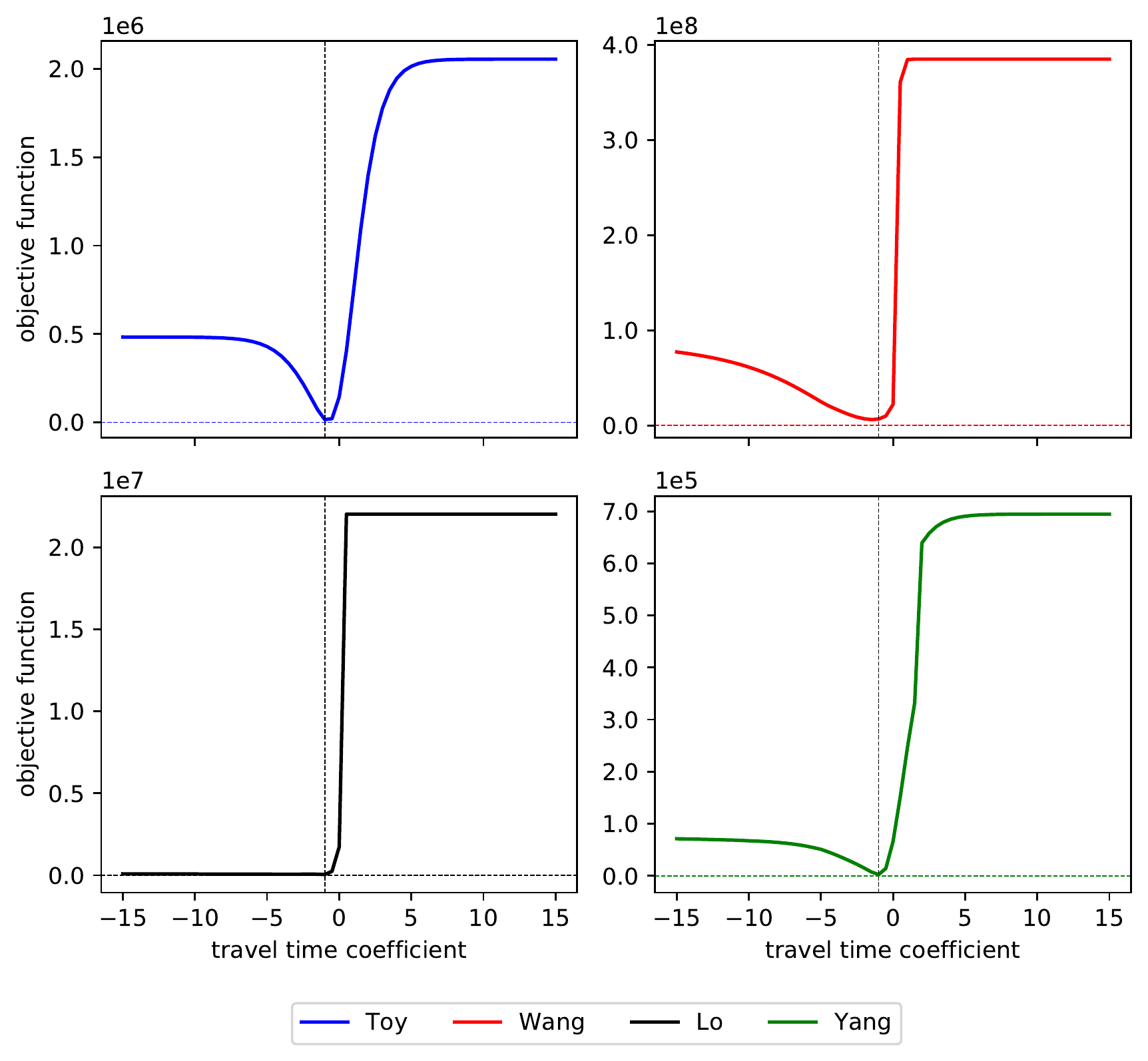}
		\caption{Objective function}
		\label{subfig:objective-function-pseudo-convexity-small-networks}
	\end{subfigure}
	\hspace{0.1cm}
	\begin{subfigure}[t]{0.49\columnwidth}
		\centering
		\includegraphics[width=0.98\columnwidth] {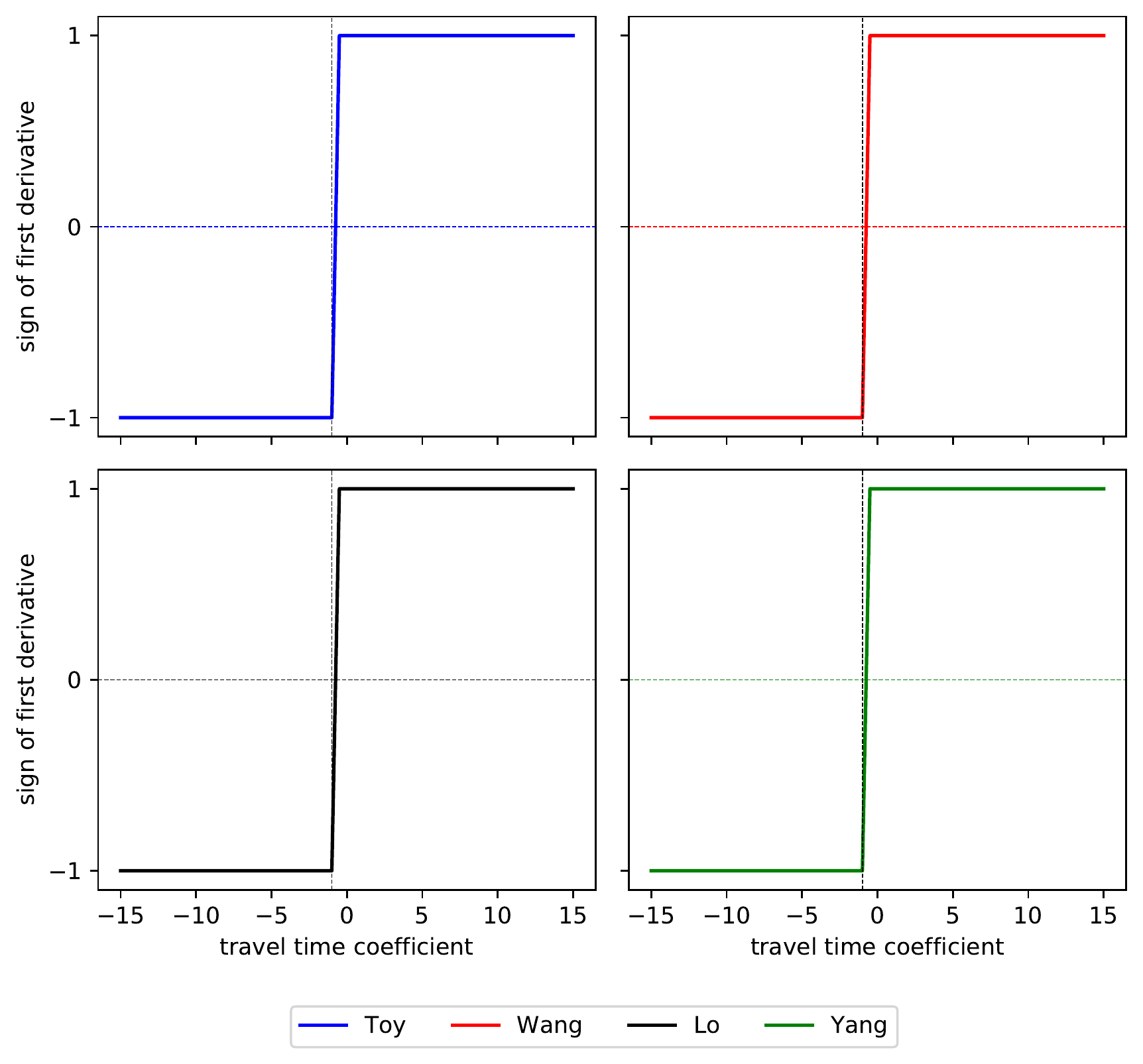}
		\caption{Sign of first derivative of objective function}
		\label{subfig:sign-first-derivative-pseudo-convexity-small-networks}
	\end{subfigure}
	
	\caption{Pseudo-convexity of objective functions in small networks}
	\label{fig:pseudo-convexity-small-networks}
	
\end{figure}

The proof of coordinate-wise pseudo-convexity of the \LUE problem presented in Section \ref{sec:mathematical-properties-learning-problem} requires both coordinate-wise monotonocity of the traffic flow functions and a utility function with exogenous attributes only. Our empirical results suggest that the exogeneity of travel time and the monotonocity of traffic flow function are not necessary but only sufficient conditions for the pseudo-convexity of the objective function.

\subsubsection{Convergence and consistency in parameters' recovery}

To study the advantages of using \NGD instead of a second order method in the non-refined stage of the optimization, we interchange the use of the methods between stages. Figure \ref{subfig:convergence-consistency-small-networks-gn-ngd} presents the convergence results obtained by applying \LM and \NGD in the no-refined and refined stages, respectively. Figure \ref{subfig:convergence-consistency-small-networks-ngd-gn} shows the converse case. To ensure that the global optima at $\theta_t = -1$ is attainable, no noise is introduced into the traffic count measurements. The learning rate for \NGD was set to $\eta =2$ which, for our unidimensional optimization problem on $\theta_t$, equates to solution updates with magnitude equal to 2. The starting point for optimization of the travel time coefficient is set to $\theta_t=-14$, which is the most challenging scenario of convergence analyzed by \citet{Yang2001}. 
\begin{figure}[H]
	\centering
	
	\begin{subfigure}[t]{0.49\columnwidth}
		\centering
		\includegraphics[width=\columnwidth] 
		{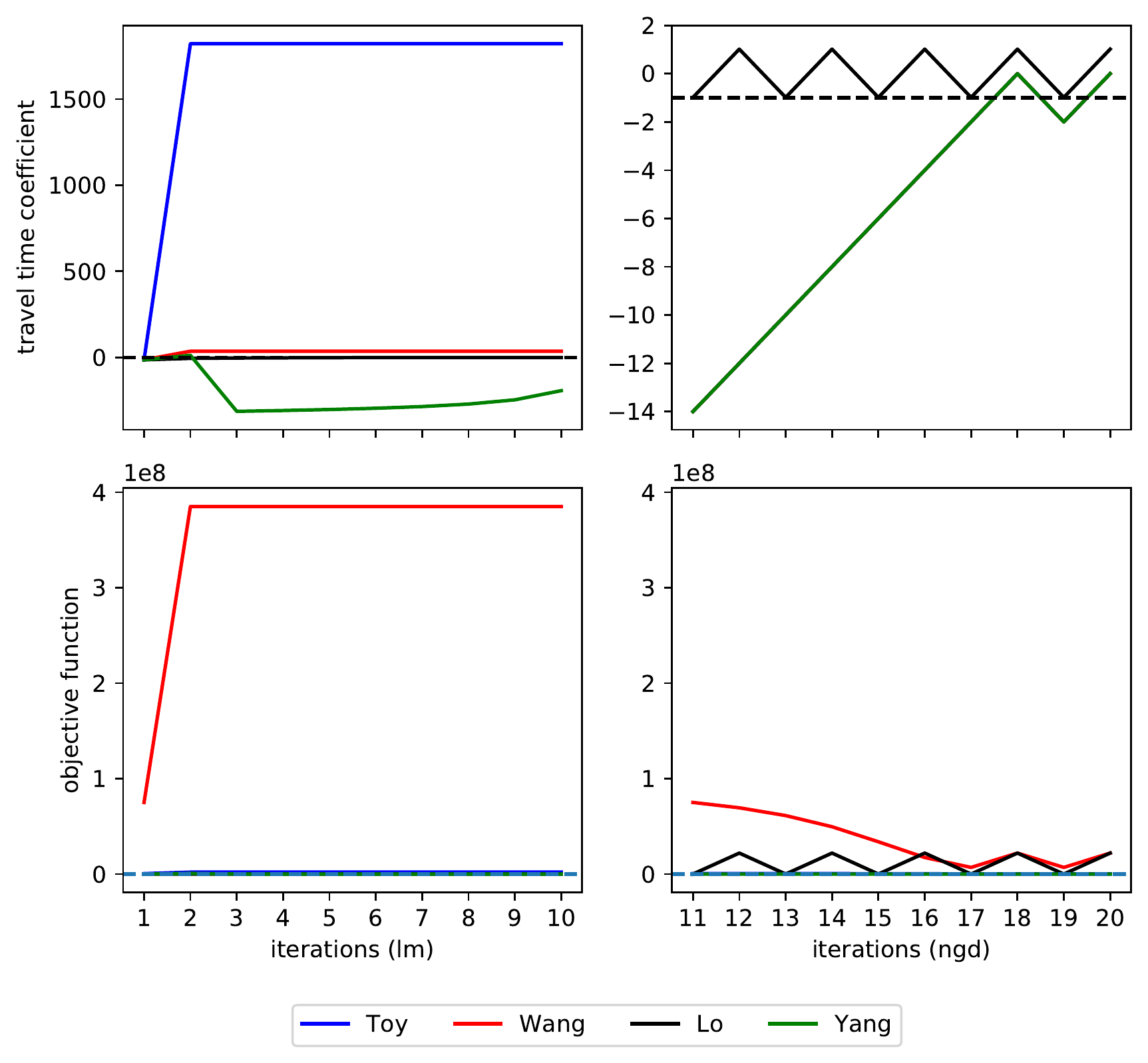}
		\caption{No-refined stage: \LM. Refined stage: \NGD}
		\label{subfig:convergence-consistency-small-networks-gn-ngd}
	\end{subfigure}
	\hspace{0.1cm}
	\begin{subfigure}[t]{0.49\columnwidth}
		\centering
		\includegraphics[width=\columnwidth] {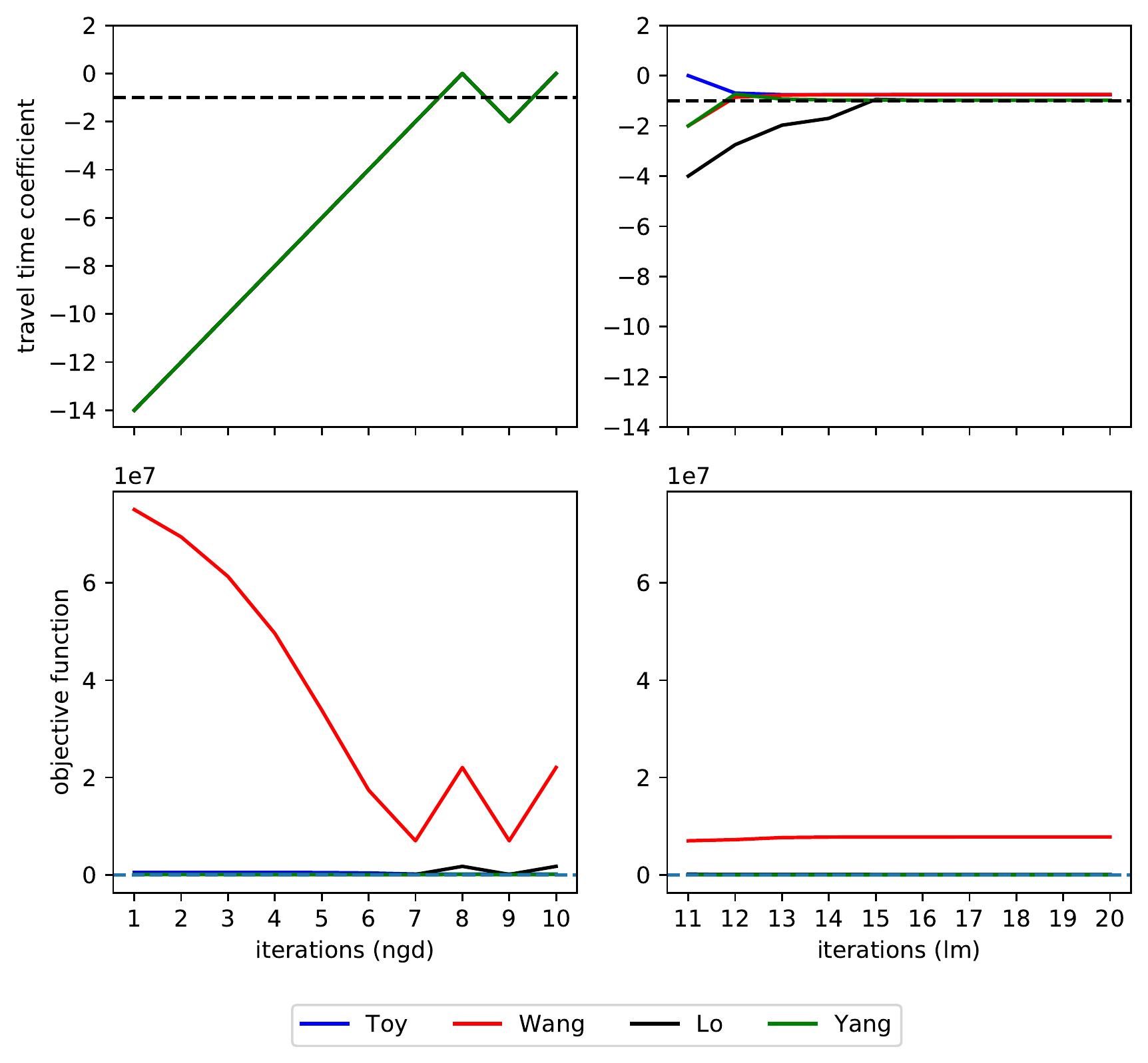}
		\caption{No-refined stage: \NGD. Refined stage: \LM}
		\label{subfig:convergence-consistency-small-networks-ngd-gn}
	\end{subfigure}
	
	\caption{Convergence and consistency in parameter recovery with synthetic data from small networks}
	\label{fig:convergence-consistency-small-networks}

\end{figure}

We observe that the use of \NGD in the no-refined stage makes the convergence faster and stable in the four networks. In contrast, the use of \LM in the no-refined stage results into unstable updates of $\theta_t$. This behavior in second order optimization methods is expected due to the constant change of curvature sign and the existence of flat regions within the range of pseudo-convex functions. In fact, for all networks the objective function is concave and flat at the initial value for optimization $\theta_t = -14$ and it is convex and sharp around the global optima at $\theta_t=-1$  (Figure	\ref{fig:pseudo-convexity-small-networks}).

\subsubsection{Hypothesis testing}
\label{ssec:hypothesis-testing-results}

Table \ref{table:small-networks-inference} shows the parameters estimates and t-tests obtained for the four small networks and under four scenarios; (i) only \NGD, (ii) only \LM, (iii) \NGD in no-refined stage and \LM in refined stage, (iv) \LM in no-refined stage and \NGD in refined stage. Overall, this evidence confirms the advantages of the integration of first and second order optimization methods to perform statistical inference. The integration of \NGD and \LM in scenarios (iii) and (iv) generate solutions closer to the global optima than the standalone application of each optimization method. The hypothesis tests correctly reject the null at a 1\% confidence level but only when \NGD and \LM are used in the no-refined and refined stages of the optimization, respectively.

\begin{table}[H]
	\centering 
	
	\caption{Point estimates and t-tests for travel time coefficient} 
	\label{table:small-networks-inference}
	
	\vspace{-0.2cm}
	
	\begin{adjustbox}{width=\textwidth}  
		
		\begin{threeparttable}

			\begin{tabular}{@{\extracolsep{4pt}}lcccc} 
				\\[-1.8ex]\hline 
				\hline \\[-1.8ex] 
				& \multicolumn{4}{c}{\begin{tabular}{c} Network\end{tabular}} \\
				\cmidrule(ll){2-5}
				\multicolumn{1}{l}{\begin{tabular}{l} \hspace{-0.3cm}\textbf{Parameter (t-test)} \end{tabular}} 
				& Toy ${\!}$ & \citet{Wang2016} & \citet{Lo2003}  & \citet{Yang2001} \\ 
				\hline \\[-1.8ex] 
				\NGD & 0.000 (0.0) & $-$2.000$^{***}$ ($-$5.6)  & $-$2.000$^{***}$ ($-$6.9) & $-$4.000$^{***}$ ($-$3.6)  \\ 
				\LM & $-$14.000 ($-$0.0) & $-$14.000 ($-$0.6)  & $-$0.984$^{***}$ ($-$7.3) & $-$14.000 ($-$1.1)  \\ 
				\NGD$+$\LM & $-$0.761$^{***}$ ($-$7.1) & $-$2.000$^{***}$ ($-$5.6)  & $-$0.976$^{***}$ ($-$15.7) & $-$0.982$^{***}$ ($-$7.3) \\ 
				\LM$+$\NGD & 0.000 (0.0) &  $-$2.000$^{***}$ ($-$5.6)  & $-$0.984$^{***}$ ($-$7.3)  & $-$2.000$^{***}$ ($-$6.8)  \\ 
				\hline \\[-1.8ex] 		
				Links (coverage) & 4 (100\%)  & 8 (100\%) & 14 (100\%)  & 14 (100\%) \\ 
				Paths & 6  & 24  & 44 & 28 \\ 
				O-D pairs & 3  & 12  & 12 & 9 \\ 
				\hline 
				\hline \\[-1.8ex]

			\end{tabular}

			\begin{tablenotes}
				\vspace{-0.2\baselineskip}
				\scriptsize
				\item Note: Significance levels: $^{*}$p$<$0.1; $^{**}$p$<$0.05; $^{***}$p$<$0.01
			\end{tablenotes}

		\end{threeparttable}
	\end{adjustbox}
	
\end{table}

\subsubsection{Impact of error in reference OD matrix}
\label{sssec:error-reference-od-small-networks}

To study the impact on convergence and inference of assuming an inaccurate reference O-D matrix, the experiment and testing network used by \cite{Yang2001} were used as a baseline. The ground truth O-D matrix was defined as $\vq^{\textmd{true}} = (q_{1,6}, q_{1,8}, q_{1,9}, q_{2,6}, q_{2,8}, q_{2,9}, q_{4,6}, q_{4,8}, q_{4,9}) = (120,150,100,130,200,90,80,180,110)$ whereas the reference O-D matrix was $\vq^{\textmd{distorted}} = (100,130,120,120,170,140,110,170,105)$. In line with \cite{Yang2001}, it is assumed that traffic counts are only observed in the following links of the network:  $(3,6), (5, 6), (5, 8), (5, 9), (7, 8)$.

Figure \ref{fig:convergence-yang-network-od-experiment} shows the convergence toward the optimal solution when the reference O-D matrix was assumed equal to (i) the true O-D matrix $\vq^{\textmd{true}}$ (blue curve) or to (ii) the distorted O-D matrix $\vq^{\textmd{distorted}}$ (red curve). Note that in the two scenarios, the algorithm converges close to the global optima ($\theta_t^{\textmd{true O-D}} = -1.058, \ \theta_t^{\textmd{distorted O-D}} = -1.054$) and the objective function ($\ell_\star^{\textmd{true O-D}} = 758.8, \ \ell_\star^{\textmd{distorted O-D}} = 12301.6$) is minimized over iterations. Note that when $\vq = \vq^{\textmd{true}}$, a zero objective function is not attainable because the introduction of random error in the traffic count measurements (see Section \ref{ssec:dgp}). Despite the bias in the reference O-D matrix, the null was correctly rejected under the two scenarios at a 95\% level of confidence (true O-D: $T_{H_0:\theta_t = 0} = -10.3, \ p < 0.01$,  distorted O-D: $T_{H_0:\theta_t = 0} = -4.1, \ p < 0.05)$)

\begin{figure}[H]
	\centering
	\includegraphics[width=0.7\columnwidth] {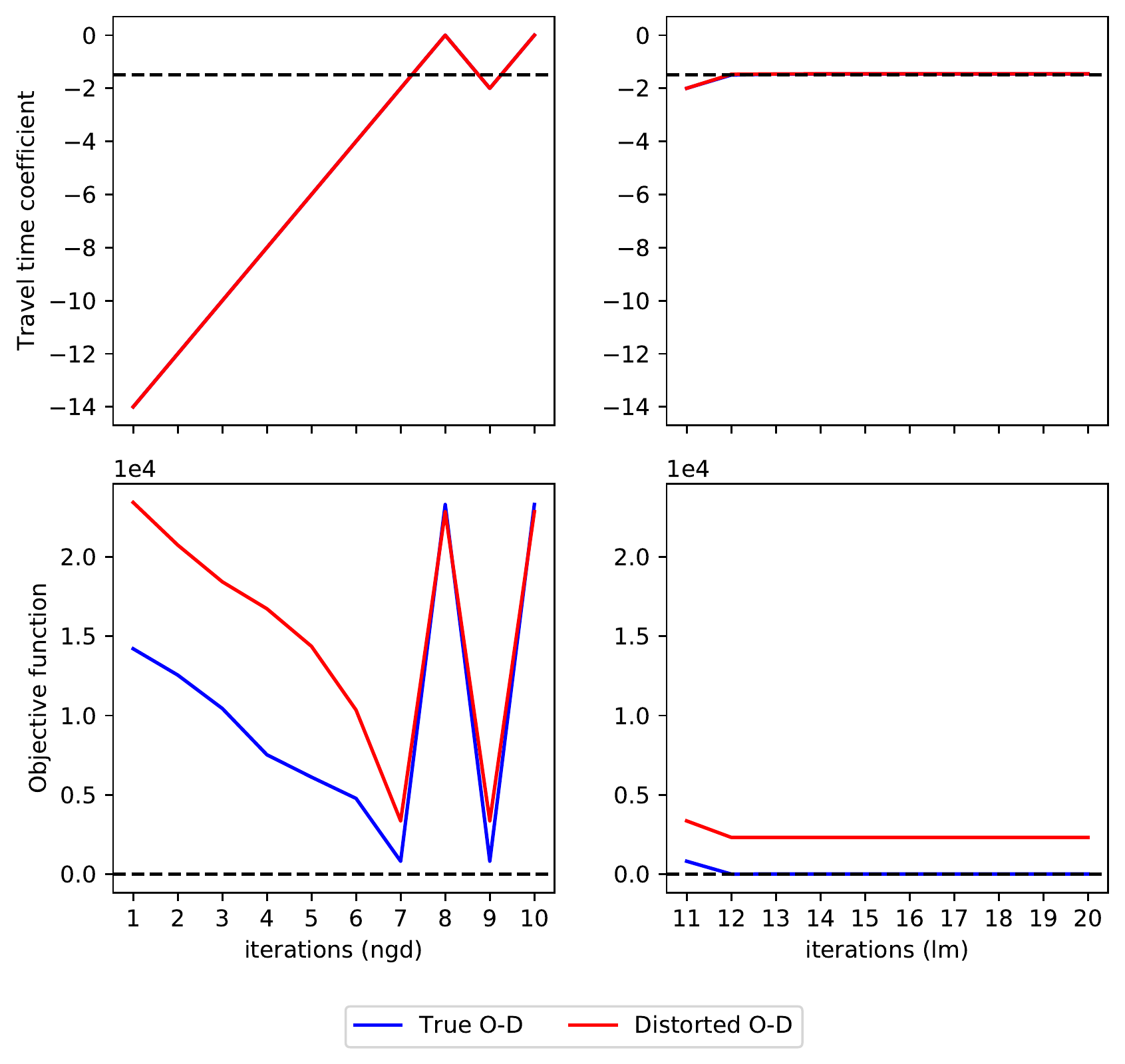}
	\caption{Convergence under true and distorted reference OD matrix}
	\label{fig:convergence-yang-network-od-experiment}
\end{figure}

\subsection{A medium-scale network: Sioux Falls}
\label{ssec:sioux-falls-experiments}

The Sioux Falls network is a standard testing bed used by transportation researchers in network modeling studies \citep{Shen2012}. The network comprises 24 nodes and 76 links (Figure \ref{subfig:sioux-falls-network}). We generate 1,584 paths corresponding to the three paths with the shortest distance to travel between each O-D pair. The O-D matrix is obtained from \citet{TNTP2016} and a heatmap with its values is presented in Figure \ref{subfig:sioux-falls-od-matrix}. The setup to generate the synthetic data is the same as for the small networks except that the exogenous attributes for the monetary cost $c$and the number of intersections $s$ at each link/path are incorporated in the utility function. The attributes $c$ and $s$ are generated as continuous and discrete random variables, respectively, and the utility function coefficients are set to $\theta _t = -1, \ \theta_c = -6, \ \theta_s = -3$. If the units of the travel time and monetary cost attributes are minutes and US dollars, the value of time (\texttt{VOT}) become equal to 10 USD\$ per hour. By Assumption \ref{assumption:utility}, Section \ref{ssec:assumptions}, the set of coefficients $\theta_t, \theta_c, \theta_s \in \sR$ are common among travelers and linearly weighting the attributes $t$, $c$, $s$ in the utility function. The true path set is assumed to be known and thus, the experiments do not perform the column generation phase of the inner level optimization algorithm (Step 1, Algorithm \ref{alg:inner-level-optimization}, \ref{appendix:ssec:implementation-inner-level-optimization})

\begin{figure}[H]
	\centering
	
	\begin{subfigure}[t]{0.45\columnwidth}
		\centering
		\includegraphics[width=0.7\columnwidth, trim= {7cm 0cm 10cm 0cm},clip]{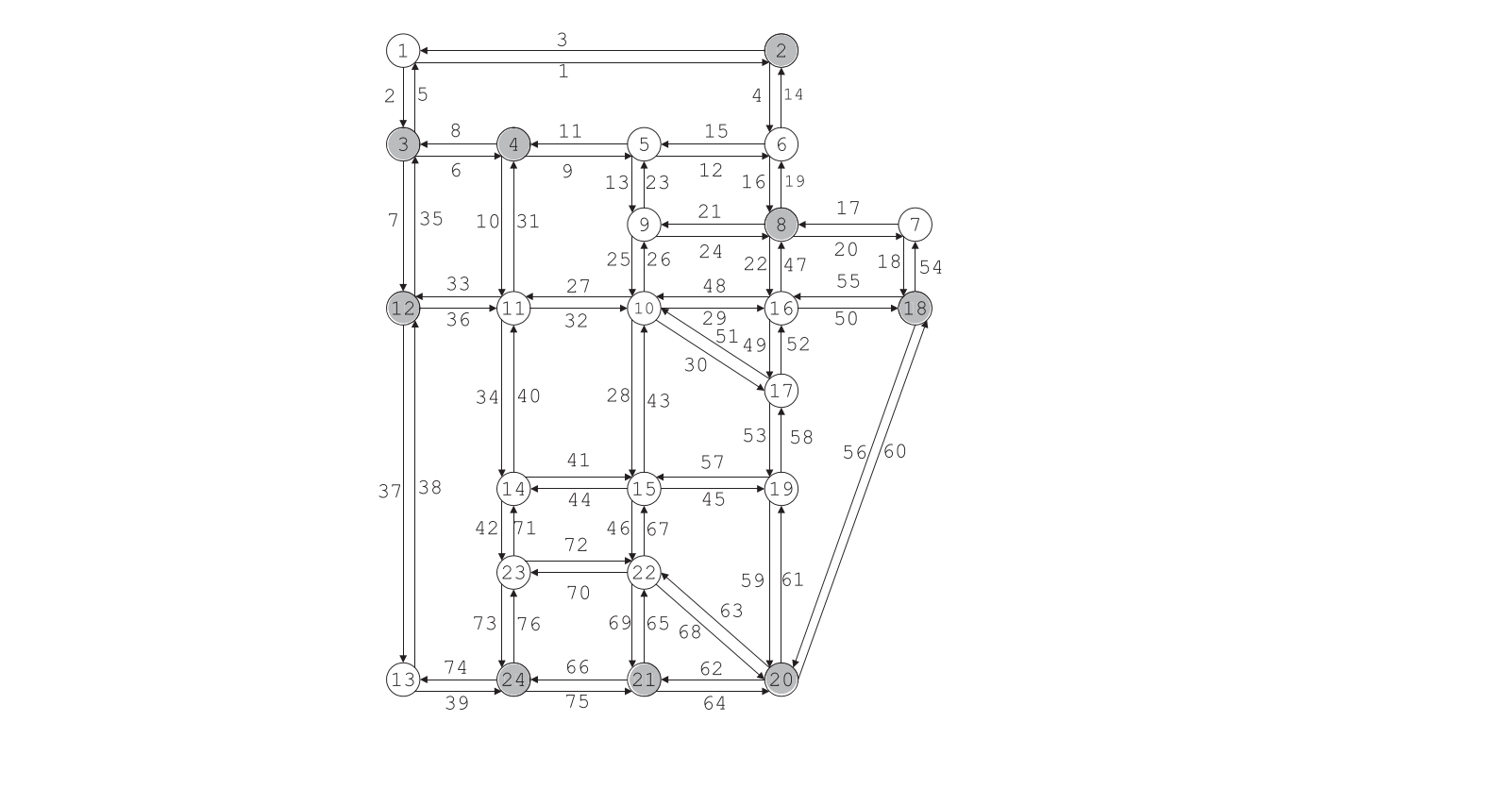}
		\caption{Network topology}
		\label{subfig:sioux-falls-network}
	\end{subfigure}
	\begin{subfigure}[t]{0.45\columnwidth}
		\centering
		\includegraphics[width=\columnwidth, trim= {0cm 0cm 0cm 0cm},clip]{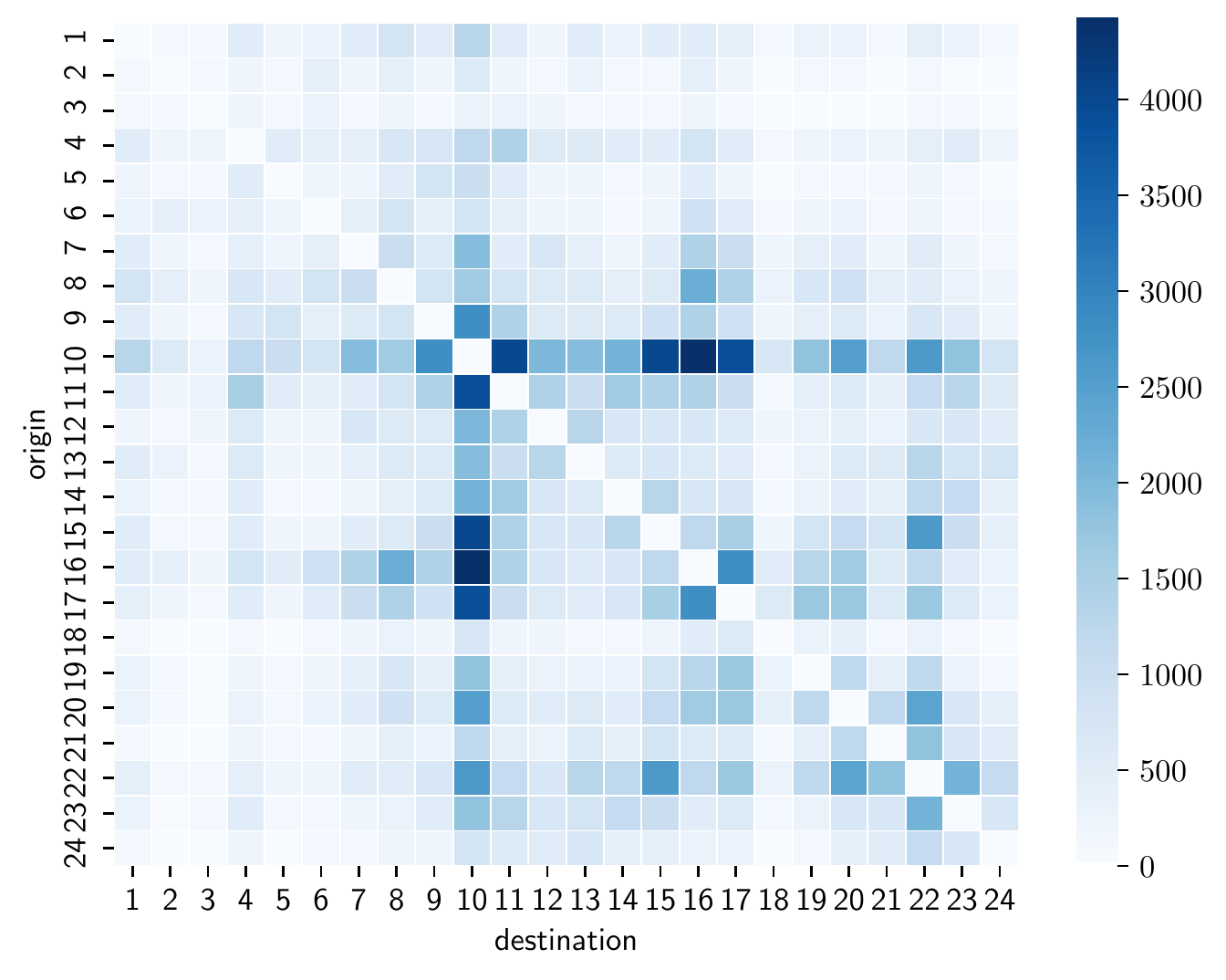}
		\caption{O-D matrix}
		\label{subfig:sioux-falls-od-matrix}
	\end{subfigure}
	
	\caption{Topology and O-D matrix of Sioux Falls network}
	\label{fig:sioux-falls}
	
\end{figure}

\subsubsection{Pseudo-convexity of the objective function}
\label{sssec:pseudo-convexity-sioux-falls}

To have a detailed characterization of the coordinate-wise pseudo-convexity of the outer level optimization problem, Figure \ref{fig:pseudo-convexity} shows plots of the objective function $\ell$(top left), the first of derivative of $\ell$ (top right), the sign of the first derivative of $\ell$ (bottom left) and the sign of the second derivative of $\ell$ (bottom right) respect to a coefficient of the utility function. The vectorized expressions to compute the first and second derivatives of the objective function are presented in \ref{appendix:sssec:gradients} and \ref{appendix:sssec:second-derivatives}. To avoid an excess of overlap of multiple curves when included in the same figure, we only analyze the curves associated to the travel time and cost coefficients. The true values of the coefficients are represented with vertical dashed lines in each figure and the curve associated to each coefficient was generated by fixing the value of the other coefficient to its true value. To satisfy the exogeneity assumption required for coordinate-wise pseudo-convexity in the objective function (Section \ref{ssec:exogenous-utility-attributes}), travel times are assumed known and equal to the travel times obtained when generating the synthetic traffic counts at \SUE-\logit. 

Similar to the results obtained for the small networks (Section \ref{sssec:monotonicity-pseudoconvexity-small-networks}), we observe that the objective function is locally convex and minimized at a point close to $\theta_t = -1, \theta_c = -6$. In addition, the sign of the first derivatives are always pointing toward the global optima, which illustrate this key property of pseudo-convex functions. In contrast to the results obtained in small networks, the sign of the second derivative changes more frequently. The latter may be associated with the higher complexity in larger networks which tends to increase the non-linearity of the traffic flow functions and hence, of the objective function, respect to the utility function coefficients.

\begin{figure}[h]
	\centering
	\includegraphics[width=0.7\columnwidth]{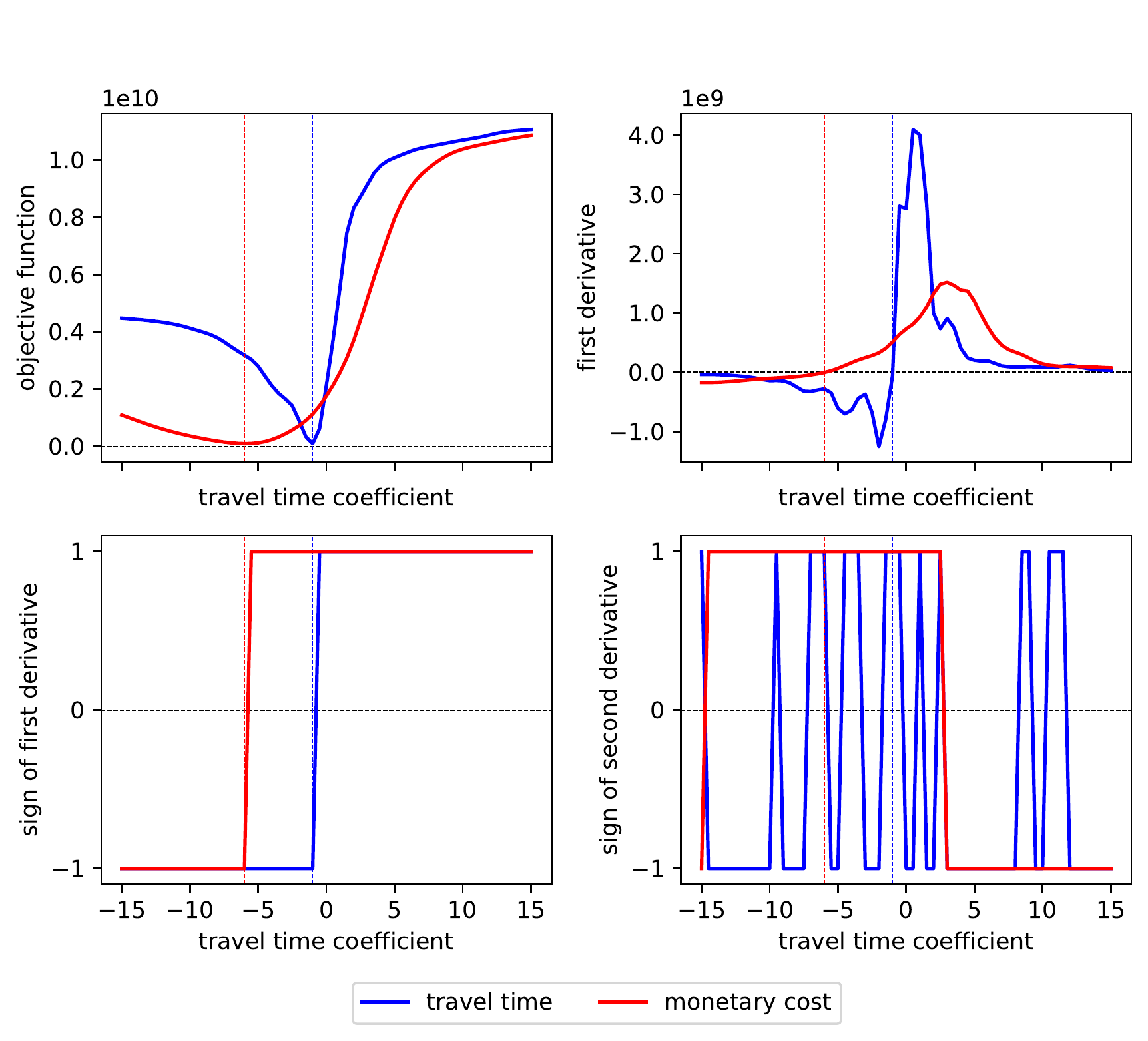}
	\caption{Pseudoconvexity of the objective function in Sioux Falls network}
	\label{fig:pseudo-convexity}
\end{figure}


\subsubsection{Convergence with a multi-attribute utility function}

The coordinate-wise pseudo-convexity of the outer level objective function provides theoretical guarantees for the convergence of \NGD toward global optima but only when the utility function depends on a single exogenous attribute (Section \ref{sssec:convergence-ngd-uncongested-network}). Interestingly, our experimental results suggests the ground truth coefficients of a multi-attribute utility function can be also recovered in the endogenous case. The top and bottom plots in Figure \ref{fig:convergence-sioux-falls-ngd-lm} show the values of the ratio between the travel time and cost coefficients, i.e. the value of time, and the value of the objective function over iterations, respectively. The non-refined and refined stages of the optimization performed 10 iterations of \NGD and  \LM, respectively. The starting points for optimization of the utility function coefficients were set to zero. In the exogenous and endogenous cases, we observe that the value of time is perfectly recovered and that the value of the objective function is close to zero. In the exogenous case, travel times are assumed to be known and thus, each iteration reduces to minimize the outer level objective function only. In the endogenous case, the convergence is less stable due to the additional computation of \SUE-\logit at each iteration of the bilevel optimization.

\begin{figure}[H]
\centering
\includegraphics[width=0.7\columnwidth] {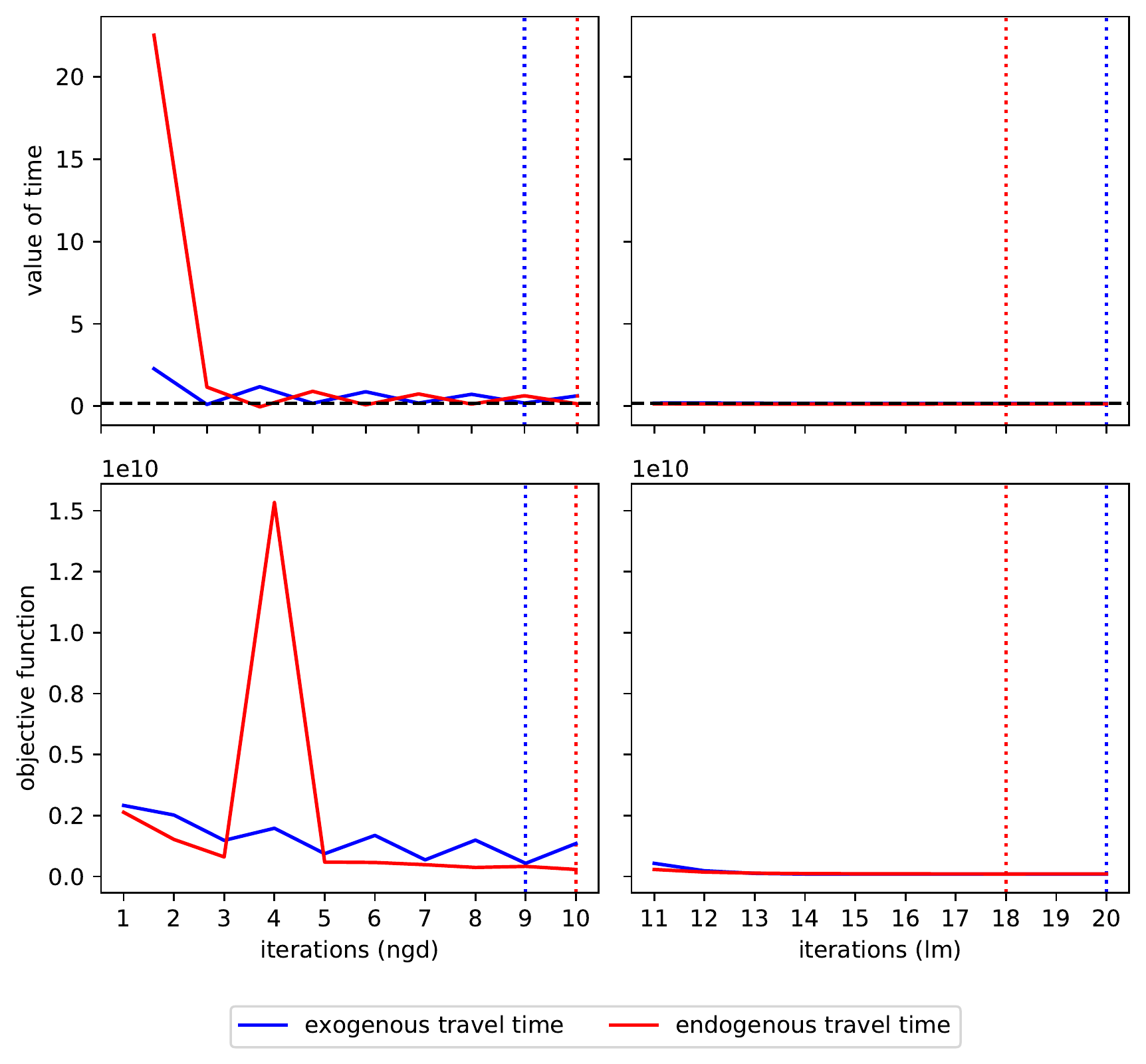}
\caption{Convergence to ground truth value of time in Sioux Falls network}
\label{fig:convergence-sioux-falls-ngd-lm}
\end{figure}

\subsubsection{Impact of endogeneity of travel times on statistical inference}
\label{sssec:impact-congestion-statistical-inference}

We conduct 100 replicates of two Monte Carlo experiments to study the impact of the endogeneity of travel times on the statistical inference of the utility function coefficients. The significance level for all hypotheses tests is set at $\alpha = 0.1$. Monte Carlo experiments are often used by the discrete choice modeling community to empirically study the properties of estimators in travel behavior models \citep{Train2002a}. We also study the impact on statistical inference of three different setups of the optimization algorithm; (i) \NGD is used in the no-refined stage (ii) \LM is used in the no-refined stage, (iii) \NGD and \LM are used in the no-refined and refined stages, respectively. In scenarios (i) and (ii), no iterations of the optimization methods are performed in the refined stage. 

Each replicate of the experiment draws a new sample of errors from a Gaussian distribution (Section \ref{ssec:dgp}) and hence of new traffic counts. For each replicate, we compute the bias of the utility function coefficients and of the value of time, the normalized root mean squared error (\texttt{NRMSE}) and the amount of false negatives. The \texttt{NRMSE} is defined as the ratio between the root mean squared error (\texttt{RMSE}) and the mean of the observed traffic counts. Because the standard deviation $\sigma$ of the random error is defined as a proportion of the mean of true traffic counts (Section \ref{ssec:dgp}), the \texttt{NRMSE} is expected to be close to $\sigma = 0.1$. Since the ground truth coefficients of the utility function are all set to values different than zero, a false negative equates a non-rejection of the null hypothesis $H_{0} = 0$. The initial values for optimization of the utility function coefficients are set with a uniform distribution centered at the true values of the coefficients and with a width of two. Thus, the initial value of the travel time coefficient $\theta_t$ is randomly chosen within the interval $[-3,1]$. 

\begin{figure}[H]
\centering

\begin{subfigure}[t]{0.49\columnwidth}
	\centering
	\includegraphics[width=\columnwidth]{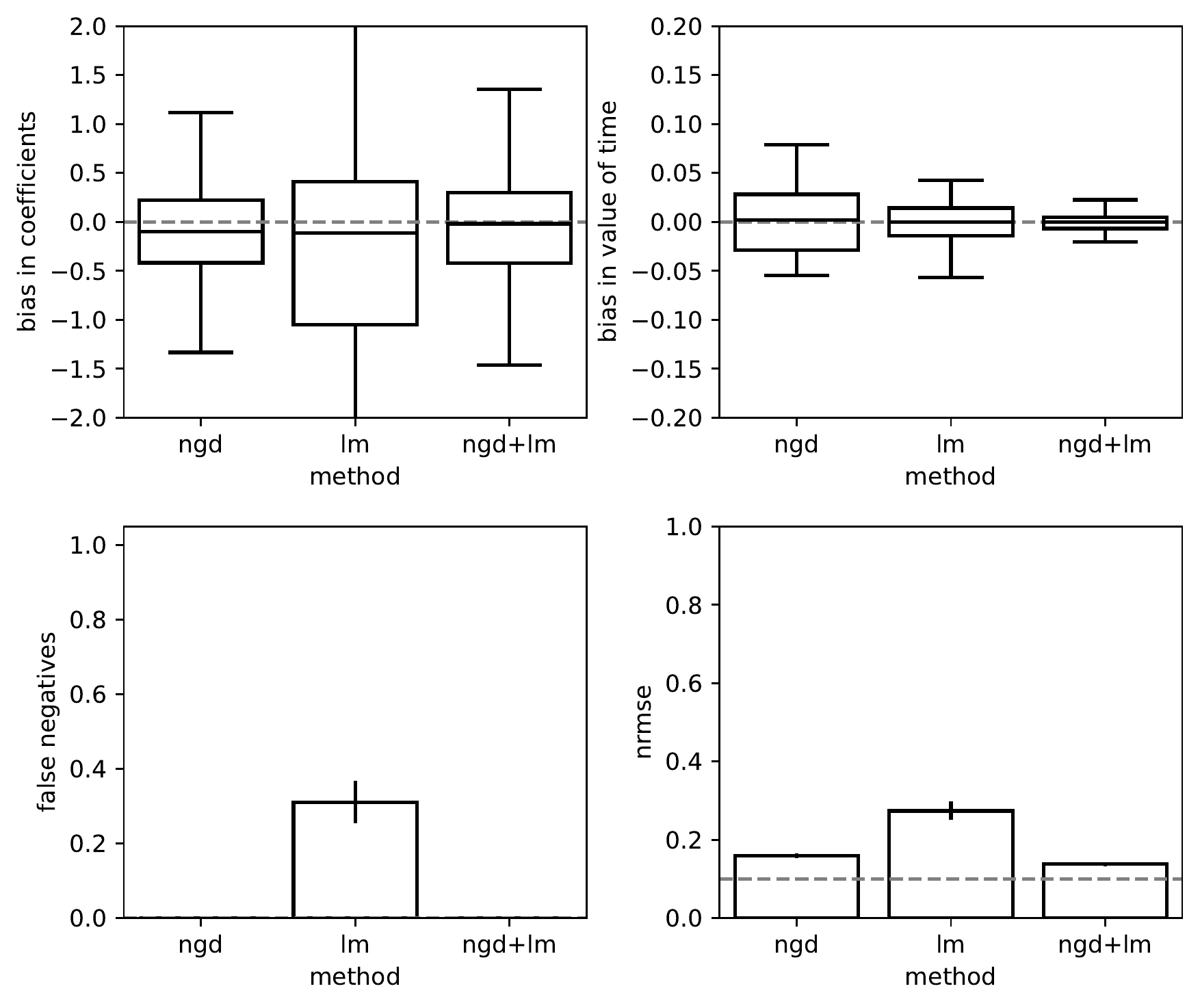}
	\caption{Exogenous travel times}
	\label{subfig:inference-uncongested-network-sioux-falls}
\end{subfigure}
\begin{subfigure}[t]{0.49\columnwidth}
	\centering
	\includegraphics[width=\columnwidth]{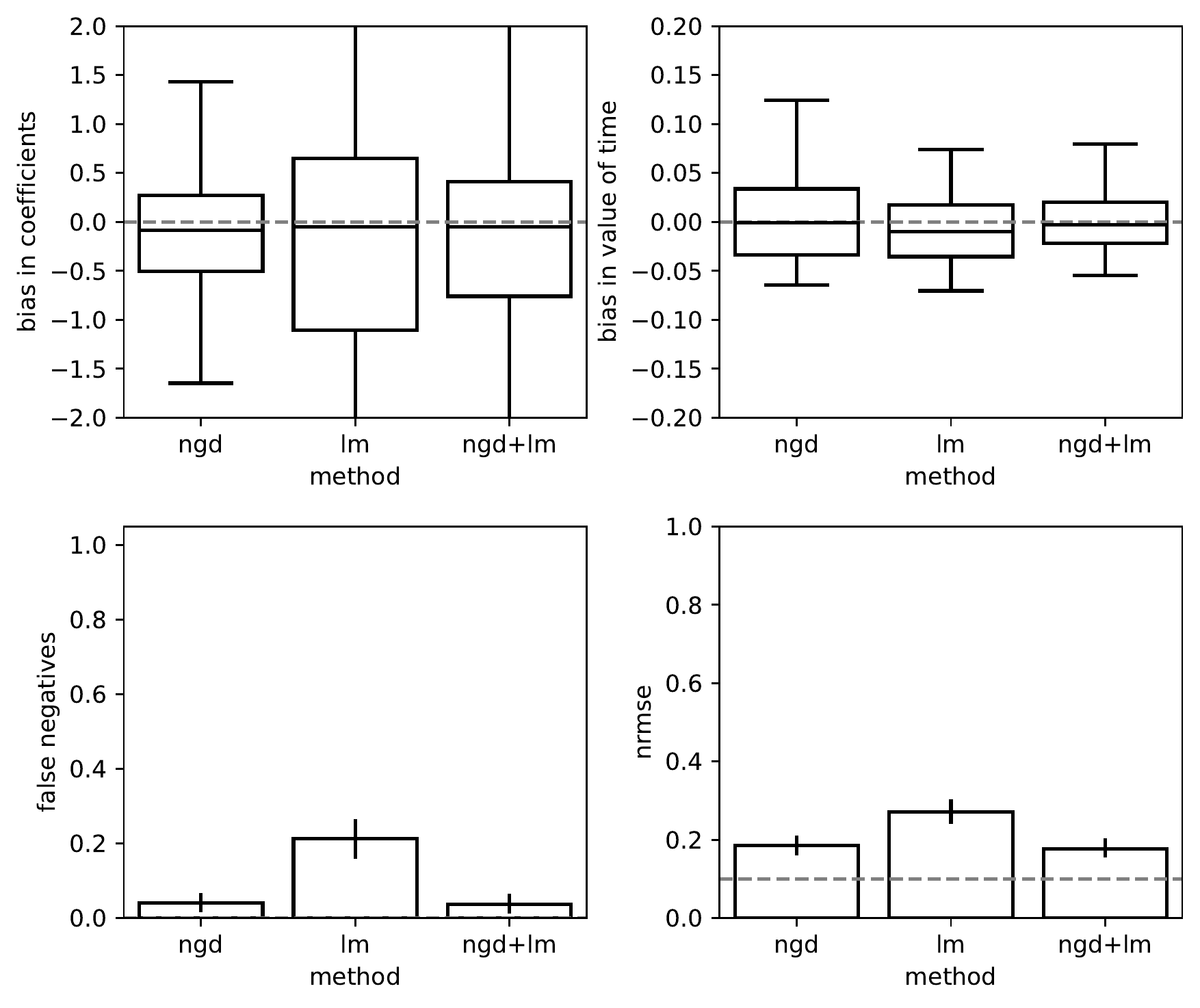}
	\caption{Endogenous travel times}
	\label{subfig:inference-congested-network-sioux-falls}
	
\end{subfigure}

\caption{Consistency in coefficients recovery}

\label{fig:inference-sioux-falls}
\end{figure}

The experiment results show that the estimates were roughly unbiased in all optimization setups and under exogenous and endogenous travel times. This result is interesting because the \NLLS assumptions prove only consistency but not unbiasedness of the \NLLS estimators in large samples. A potential explanation is related to the pseudo-convexity of the outer level objective and to the existence of a unique global optima in the neighborhood around the initial values for optimization. Furthermore, our evidence strongly supports the use of \NGD+\LM over the standalone application of first or second order optimization methods. The higher standard error of the estimates obtained with \LM unveils the problems of numerical stability of second order methods when the initial values for optimization are in regions where the objective function is not convex. The latter is also associated with the large proportion of false negatives reported by \LM in both experimental scenarios. In contrast, the standalone application of \NGD reports less than 5\% of false negatives in both scenarios, which equates to a statistical power of 95\%. This is a encouraging result given the reference threshold of 80\% typically used for experimental studies. 

In the endogenous case, we observe an increase of the difference between the \NRMSE and its expected value, given by the dashed bar shown in the bottom right plots of Figures \ref{subfig:inference-uncongested-network-sioux-falls} and \ref{subfig:inference-congested-network-sioux-falls}. Similarly, the bias of the estimated coefficients and of the value of time increase consistently in the three setup of optimization methods compared to the exogenous case. The computation of equilibria in the endogenous case may result into inaccurate approximations of the travel times at equilibria. From an econometric standpoint, this introduce measurement error in the travel time attribute and it can add bias in the parameter estimates. Another problem in the endogenous case is the significant increase of computational burden due to solving traffic equilibria at each replicate of the experiment. 

To ensure good convergence over replicates and that the assumptions for consistency of the \NLLS estimator are satisfied, the following experiments in this section are performed under the scenario of exogenous travel times only. The initial values for optimization of the utility function coefficients are also set with a uniform distribution centered at the true values of the coefficients and with a width of two. This choice of width allows to start the optimization in a region close of the ground truth values of the utility function coefficients and where the objective function is at best convex or at worst pseudo-convex. A tighter width makes difficult to assess the reduction of the objective function with different setups of optimization method. The significance level for all hypotheses tests is also set at $\alpha = 0.1$.

\subsubsection{Irrelevant attributes}  
\label{sssec:irrelevant-attributes-experiment}

The \NLLS theory suggests that the \NLLS estimator is asymptotically consistent. Therefore, the coefficients associated to irrelevant attributes included in the utility function should converge to zero and they should not impact the statistical inference on coefficients weighting relevant attributes. However, with finite samples, the inclusion of irrelevant attributes in the utility function can reduce the efficiency of coefficients of relevant attributes. As a consequence, the t-tests of other coefficients may decrease and this could artificially increase the number of false negatives, i.e. lower rejections of the null hypothesis for coefficients different than zero. 

Figure \ref{fig:irrelevant-attributes-inference} shows the impact of the inclusion of irrelevant attributes on the bias of the estimated coefficients and of the value of time, and on hypothesis testing. In contrast to the Monte Carlo experiments conducted in Section \ref{sssec:impact-congestion-statistical-inference}, the utility function used to generate synthetic traffic counts includes six irrelevant attributes. Each irrelevant attribute is generated as a standard Gaussian random variable and it is weighted by a coefficient equal to zero in the utility function. Hence, the hypothesis tests are expected to not reject the null for these coefficients. To quantify the power of the hypothesis tests to detect irrelevant attributes, we compute the proportion of false positives, namely, the proportion of times that the null hypotheses for coefficients weighting the set of irrelevant attributes are rejected. By classic statistical theory, p-values are uniformly distributed under the null hypothesis, hence, a significance level of $\alpha = 0.1$ should result into a 10\% of false positives. To analyze the impact of the inclusion of irrelevant attributes on the statistical inference of relevant attributes, we also report the proportion of false negatives.

\begin{figure}[H]
\centering

\includegraphics[width=0.7\columnwidth] {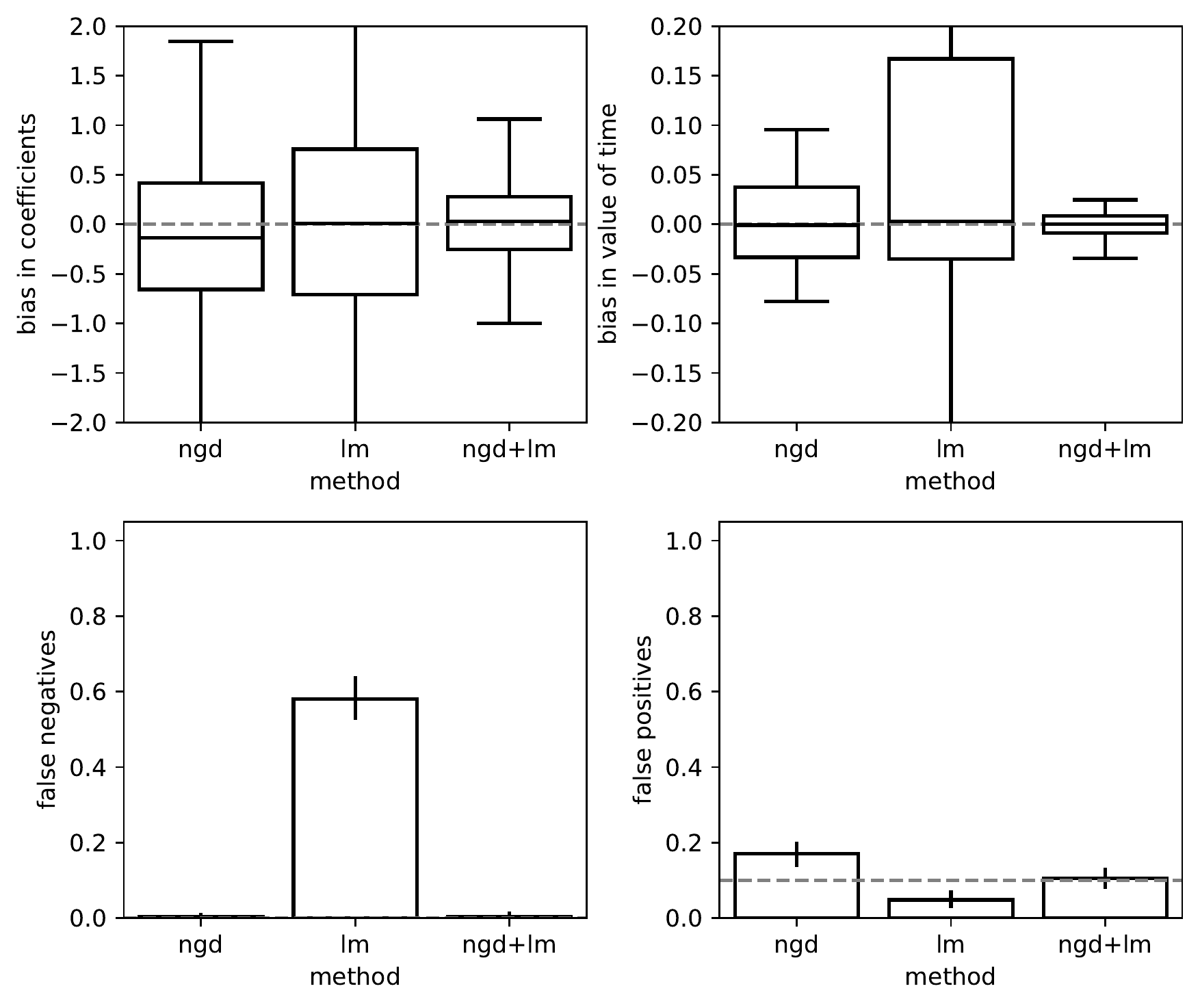}

\caption{Impact of irrelevant attributes on statistical inference}

\label{fig:irrelevant-attributes-inference}
\end{figure}

Similar to the results obtained in Section \ref{sssec:impact-congestion-statistical-inference}, the integration of \NGD and \LM significantly reduces the amount of false negatives and the bias of the estimated coefficients and of the value of time. The inclusion of irrelevant attributes in the utility function increases the amount of false negative but only in the case of \LM. Notably, the integration of \NGD and \LM perfectly matches the theoretical 10\% level of false positives that is expected at a significance level of $\alpha = 0.1$. Overall, these results reaffirm that the statistical inference is more reliable with the integration of first order and second order optimization methods. Given this evidence, the following experiments are conducted using \NGD+\LM only.

\subsubsection{Sensor coverage}
\label{sssec:sensor-coverage-experiment}

A lower sensor coverage reduces the sample size and this is expected to increase the standard error in the estimates of the utility function coefficients. As a consequence, t-tests may be overestimated and thus, the amount of false negative may increase. Figure \ref{fig:link-coverage-inference} shows the impact on hypothesis testing when varying the link coverage at three levels; 25\%, 50\% and 75\%. As expected, the amount of false negatives and the standard error of the coefficients decrease with the level of sensor coverage. Notably, for coverages of only 50\% (N = 38), the statistical power already surpasses the desired threshold of 80\%. For all coverage levels, the proportion of false positives does not surpass the theoretical level of 10\% and the \NRMSE closely matches its expected value at 10\%. 

\begin{figure}[H]
\centering
\includegraphics[width=0.7\columnwidth] {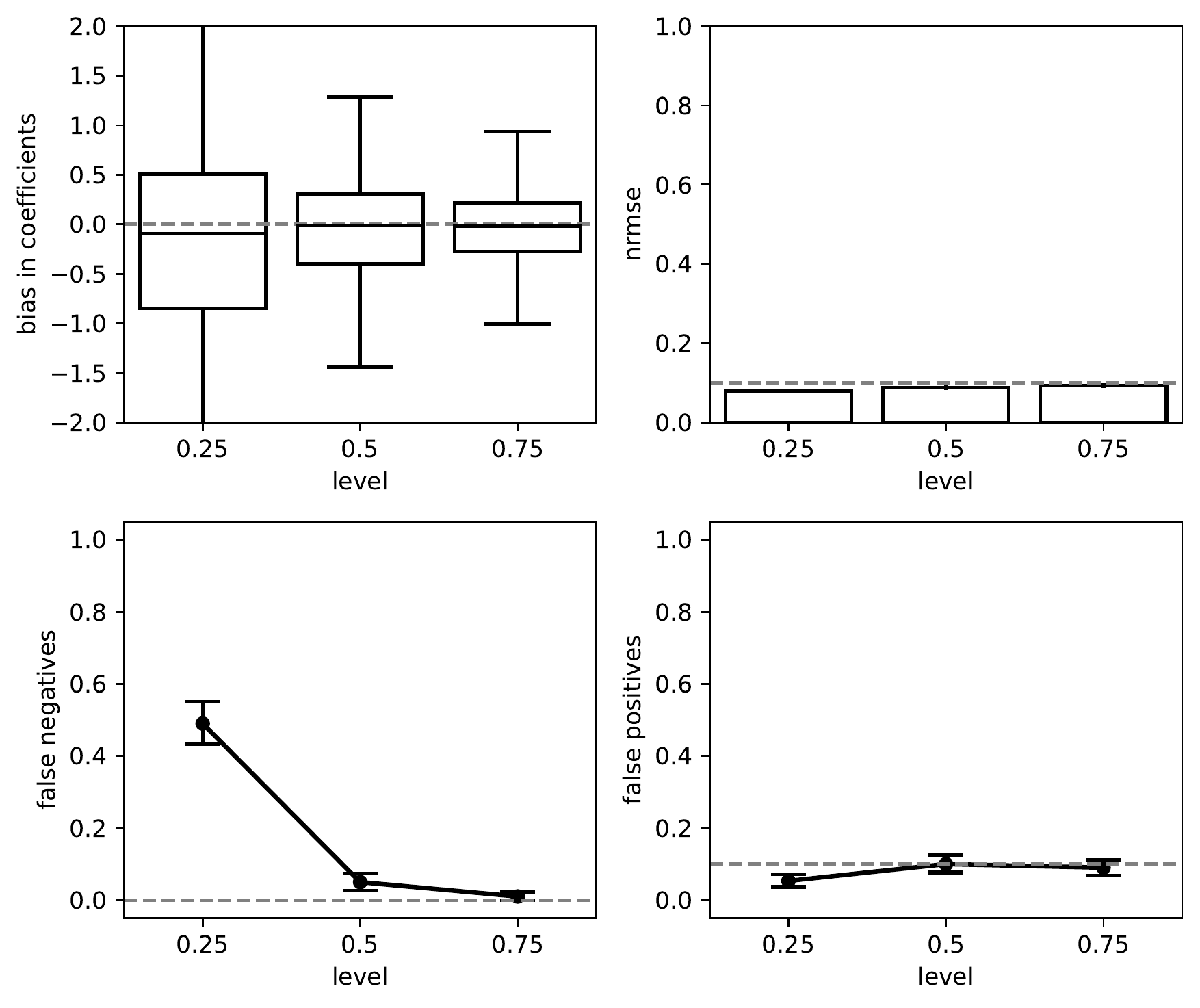}

\caption{Impact of sensor coverage on statistical inference}
\label{fig:link-coverage-inference}

\end{figure}

\subsubsection{Error in traffic count measurements}
\label{sssec:error-link-counts-experiment} 

The standard error of the \NLLS estimators is expected to increase when the variance of the random perturbation is higher. Thus, hypothesis testing should tend to reject less frequently the null for coefficients associated to relevant attributes and hence, the amount of false negatives could artificially increase. Figure \ref{fig:noisy-counts-inference} shows the impact of increasing the standard deviation of the Gaussian perturbation on statistical inference. The standard deviation is defined as a percentage of the average value of the synthetic traffic counts (see Section \ref{ssec:dgp}). As expected, the amount of false negatives and the standard error of the estimated coefficients increases with the level of variability in the random perturbation. The values of the \NRMSE closely match the error levels and the proportion of false positives does not surpass the theoretical level of 10\%. In real applications and in line with the sensor coverage experiment (Figure \ref{fig:link-coverage-inference}), a larger sample size should compensate for an increase in the variance of the random perturbation.  

\begin{figure}[H]
\centering
\includegraphics[width=0.7\columnwidth] {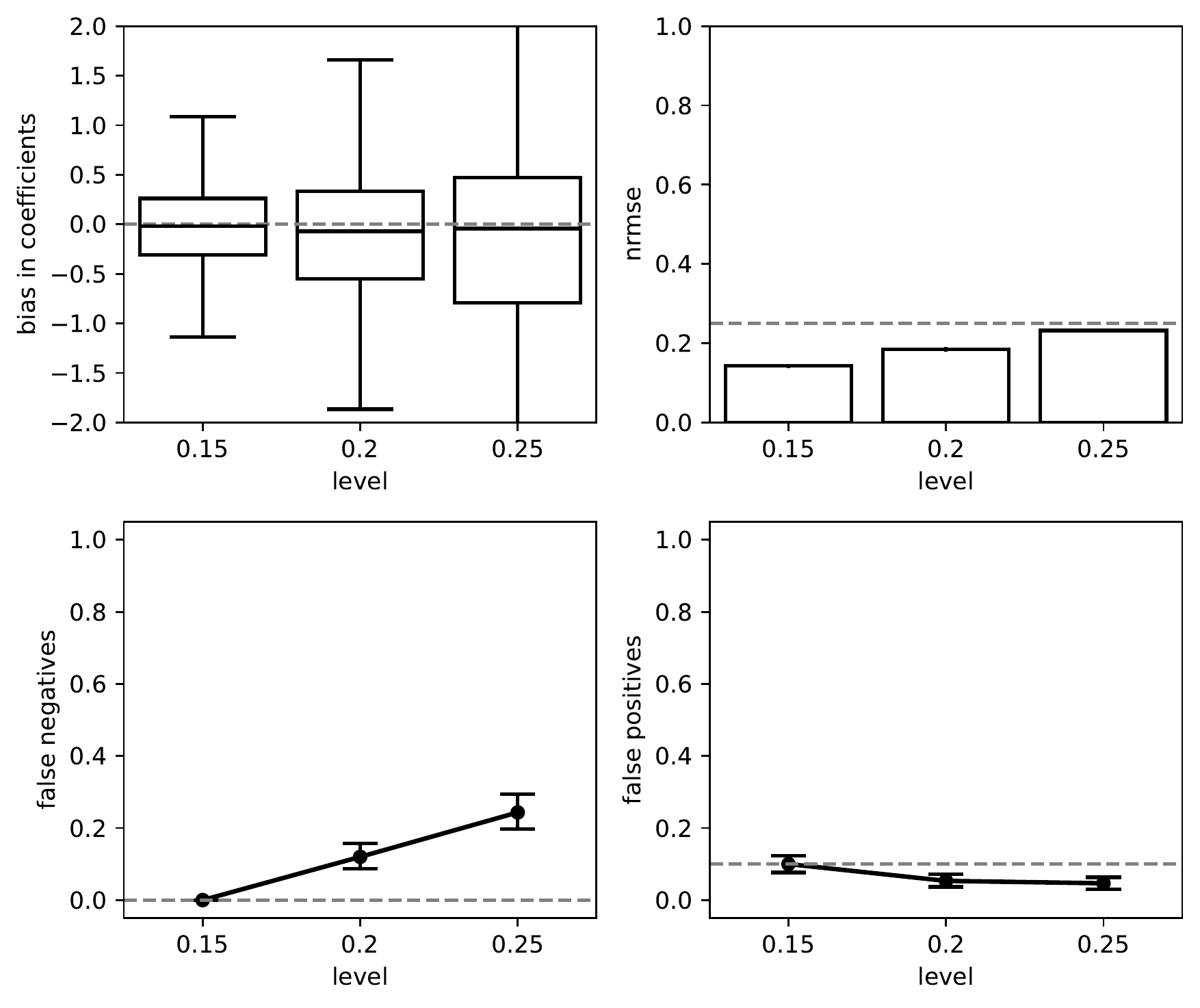}
\caption{Impact of noise in traffic counts on statistical inference}
\label{fig:noisy-counts-inference}
\end{figure}

\subsubsection{Non-deterministic O-D matrix}
\label{sssec:non-deterministic-od}

Our methodology assumes that the O-D matrix is exogenous and deterministic (Assumption \ref{assumption:od}, Section \ref{ssec:assumptions}). While it is feasible to have an accurate reference O-D matrix available, its cells may be subject to random perturbations. Any gap between the true and reference O-D matrix may add measurement error in the model and this could bias or distort the t-tests of the coefficients. Figure \ref{fig:noisy-od-inference} shows the impact of introducing different level of noise in the O-D matrix. The reference O-D matrix is assumed to be a Gaussian random variable with expected value $\vmu_{\mQ} \in \mathbb{R}^{V \times V}_{\geq 0}$ equal to the values of the true O-D matrix $\mQ$ and with standard deviation $\sigma_{\mQ} \in \mathbb{R}_+$ defined as a percentage of the average value of $\mQ$. To compare the impact of adding noise in either the traffic counts (Section \ref{sssec:error-link-counts-experiment}) or the O-D matrix, we set the same levels of the standard deviation of the random perturbation in both experiments. Surprisingly, our results show that the noise in the O-D matrix does not significantly impact statistical inference. Therefore, the standard error of the estimated coefficients and the \NRMSE remains invariant for all levels of random noise. 

\begin{figure}[H]
\centering
\includegraphics[width=0.7\columnwidth] {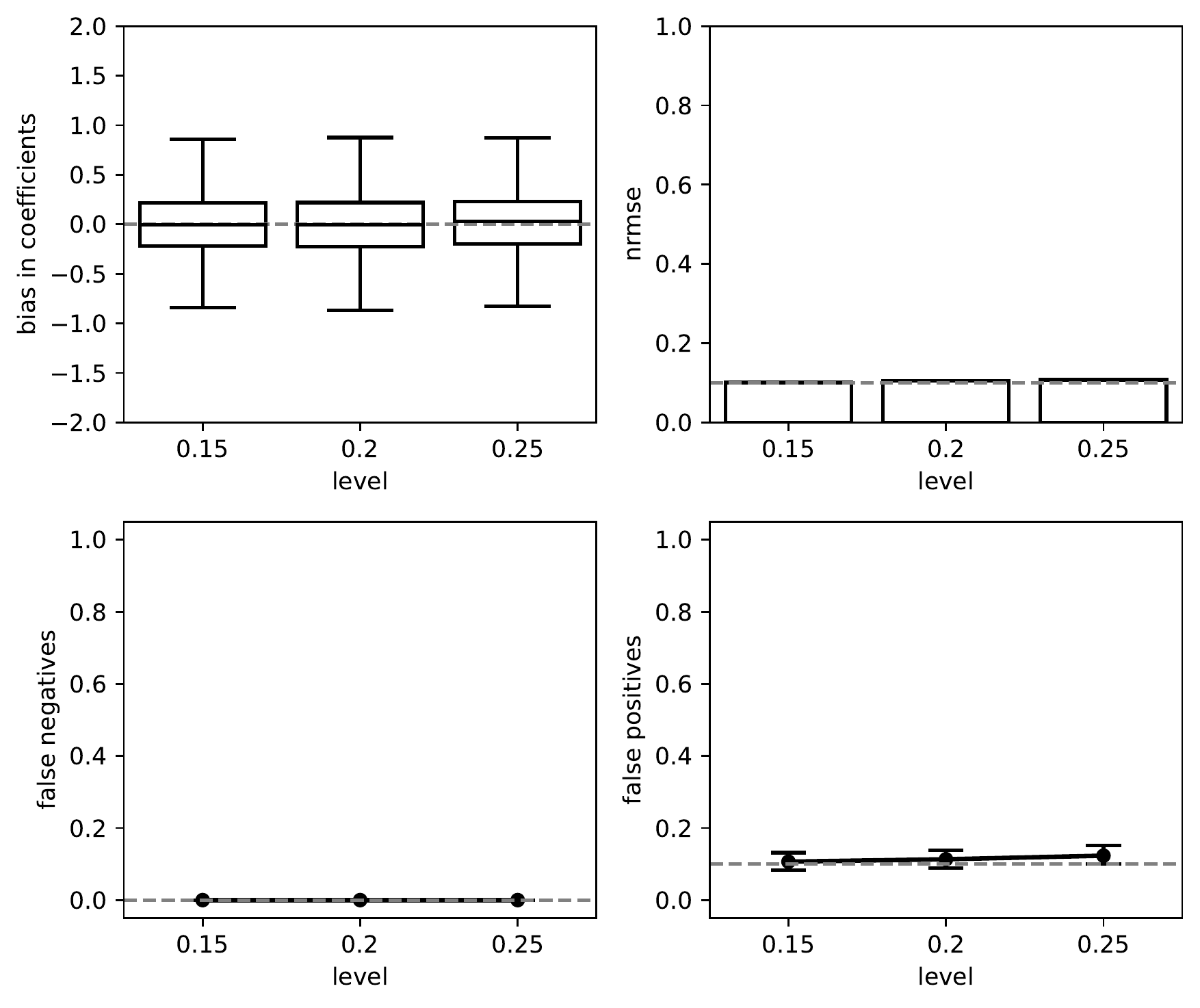}
\caption{Impact of noise in O-D matrix on statistical inference}
\label{fig:noisy-od-inference}
\end{figure}

\subsubsection{Ill scaled O-D matrix}
\label{sssec:scale-od}

An ill scaled O-D matrix is expected to increase the gap between predicted and observed traffic counts and this may induce an overestimation of the variance of the random error. As a consequence, the t-tests and the false negatives could increase. Figure \ref{fig:bad-scaled-od-inference} shows the impact of a ill-scaled the O-D matrix. The level represents the true value of the scale or factor that should multiply the reference O-D matrix to recover the true O-D matrix. Therefore, values lower or higher than 1 represent cases of overestimation or underestimation of the true O-D matrix, respectively. Compared to the previous experiments that assume a well-scaled O-D matrix, we observe that the \NRMSE is far from its expected value, which suggest that there is no a good convergence toward the ground truth values of the utility function coefficients. The estimated coefficients are biased upwards and downwards when the true O-D matrix is overestimated or underestimated, respectively. The latter is directly associated with the underestimation and overestimation of the false positives respect to the theoretical level of 10\%. The false negatives surpasses the 20\% when the true O-D matrix is underestimated, which significantly reduce the statistical power to detect the effect of relevant attributes in the utility function. 

\begin{figure}[H]
\centering
\includegraphics[width=0.7\columnwidth] {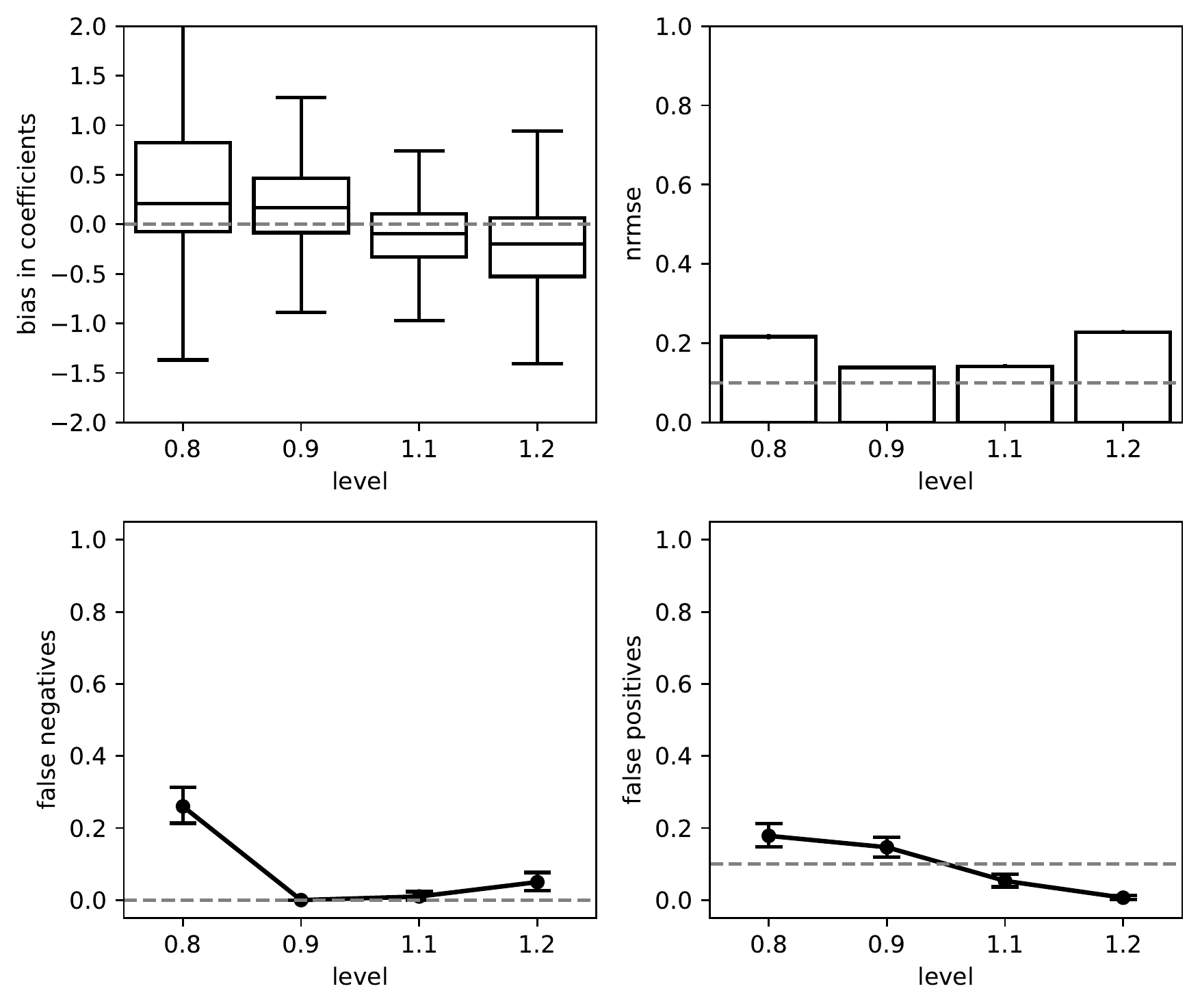}
\caption{Impact of a bad scaled O-D matrix on statistical inference}
\label{fig:bad-scaled-od-inference}
\end{figure}

\section{A large-scale network with multiple attributes: California SR-41 corridor}

The proposed methodology was also applied to a large scale transportation network located in the City of Fresno, California. This network primarily covers major roads and highways around the SR-41 corridor and it comprises 1789 nodes and 2413 links \citep{Ma2017, Ma2018}. The following sections describe the use of various data sources to estimate our model in the Fresno network. Subsequently, we describe the attributes of the utility function, the models' specifications, the estimation procedure and the results.  

\begin{figure}[h]
\centering
\includegraphics[width=0.5\columnwidth, trim= {0cm 0cm 0cm 0cm},clip ]{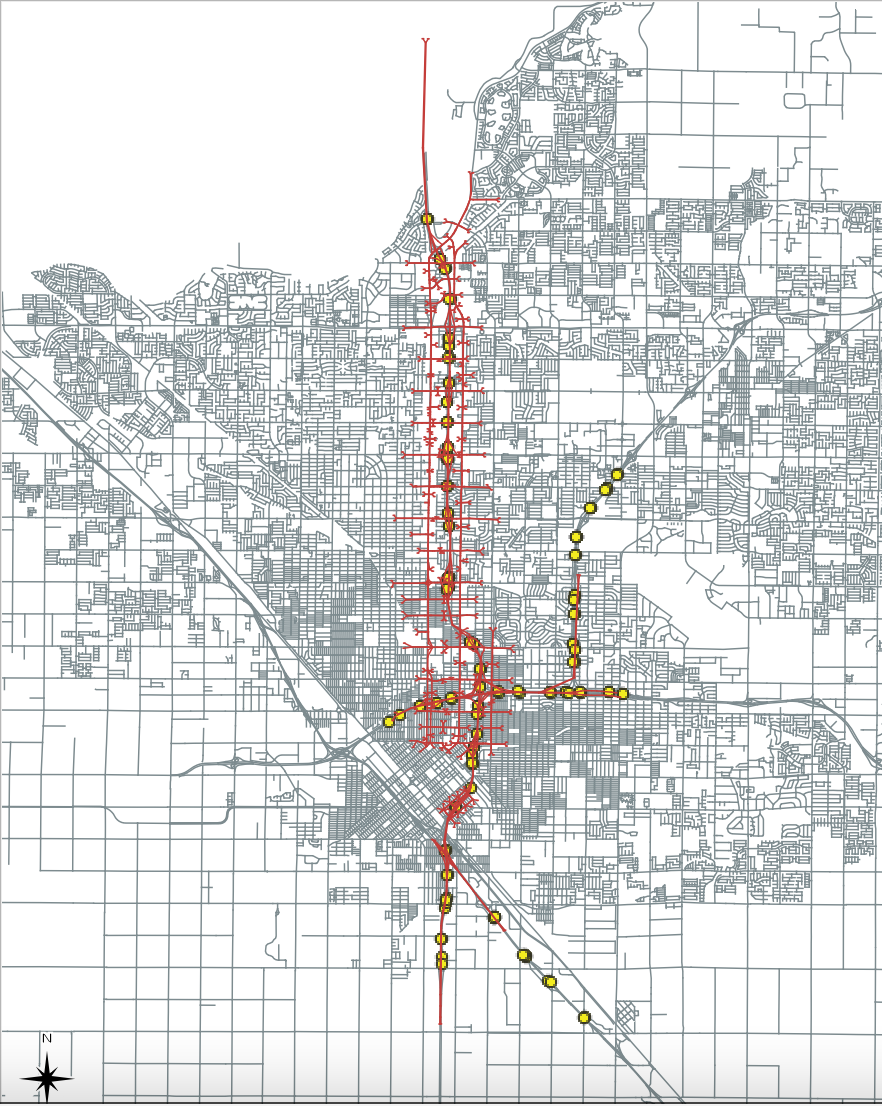}
\caption{SR-R1 corridor in Fresno, CA}
\label{fig:fresno-network}
\end{figure}

\subsection{Traffic counts}

The Caltrans Performance Measurement System \citep{PeMS} provides open source data with georeferenced stations that record information on traffic counts, speeds and travel times in major highways in California. Using geoprocessing tools, a total of 141 links of the Fresno network were matched to PeMS stations, which equates to a 5.8\% of sensor coverage. We only use traffic count data collected during the first Tuesday of October 2019 and October 2020 between 4pm and 5pm and which correspond to periods before and during COVID-19. The average traffic counts for the selected periods in 2019 and 2020 are 2213.6 and 2113.3 vehicles per hour, respectively. 

\subsection{Travel demand}
\label{ssec:travel-demand}

A dynamic O-D demand matrix calibrated with support of the PARAMICS estimator tool for the SR-R1 corridor is adopted in our study \citep{Ma2017,Zhang2008, Liu2006}. This O-D demand is provided with a 15 minutes resolution and for the period between 4PM and 6PM on a typical weekday. To have consistency in the temporal resolutions of the O-D demand and the traffic counts data, both data sources are aggregated in a one hour window starting at 4pm. The aggregation in vehicles per hour is also consistent with the temporal resolution of the link performance functions. 

Figures \ref{fig:cumulative-demand-fresno} show the cumulative demand respect to the number of O-D pairs. There are 6970 O-D pairs that reported trips, which gives a total 66,266.3 trips. The dashed bars in the figures suggest that approximately the 30\% of the O-D pairs with the highest demand covers a 85\% of all trips. To account for the reduction in travel demand in the selected periods of 2019 and 2020, we scale the O-D matrix in 2019 by a uniform factor of 0.9546, which is equal to ratio between the average traffic counts in each year (2113.3/2213.6). Note that this factor is similar but more conservative than the ratio between the traffic volumes of October 2020 and October 2019 in all roads and streets in the United States \citep{FHA2020}. 


\begin{figure}[h]
\centering
\includegraphics[width=0.9\columnwidth, trim= {0cm 0cm 0cm 0cm},clip ]{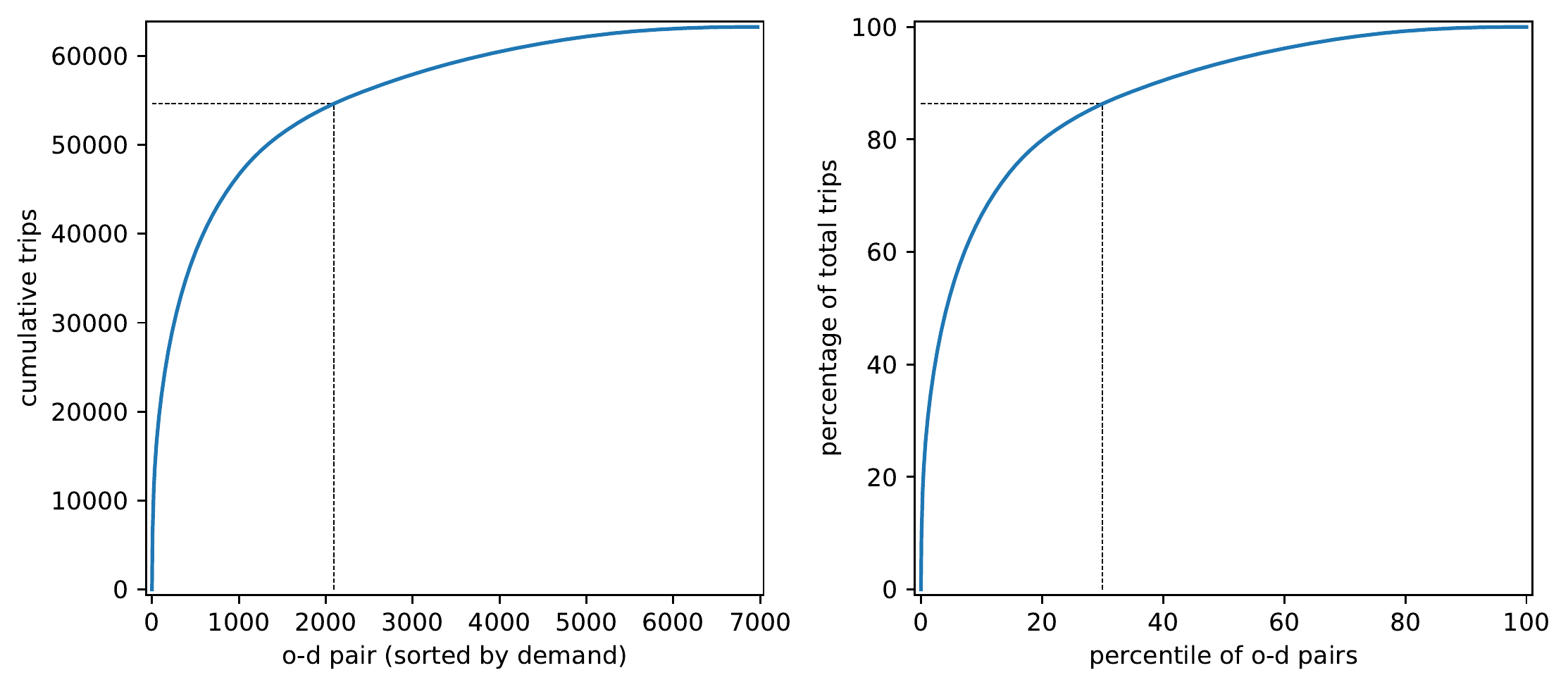}
\caption{Cumulative travel demand in Fresno, CA}
\label{fig:cumulative-demand-fresno}
\end{figure}

\subsection{System level data}
\label{ssec:data-sources}

Below are the sources of system level data used to compute the attributes of the travelers' utility function in every link of the Fresno network:

\begin{itemize}[leftmargin=*]

\item  Speed and travel times

INRIX traffic time and traffic speed data have been successfully used in previous transportation studies an it is considered a reliable data source for our analyses \citep{Yao2020}. Our raw data includes a shapefile with a collection of georeferenced line segments representing the major roads in Fresno, CA. It also includes \texttt{csv} files with traffic speeds for every line segment and with a 5 minutes resolution. Attributes in the dataset include a unique identifier for each line segment, a time stamp, observed speed, average speed and reference speed. Using the length of each line segment, the speed attributes were transformed into travel time attributes in minute units. The line segments of the INRIX shapefile and the links of the Fresno network are matched according to the orientation and proximity of the streets. For the 15.5\% links that were not successfully matched, we imputed the mean value of the attributes that were matched in the remaining links.

\item Traffic incidents

The Statewide Integrated Traffic Records System (SWITRS) is one of the two official sources of traffic incidents data in California \citep{Waetjen2021}. SWITRS is managed by the California Highway Patrol (CHP) and includes post-processed and georeferenced data on incidents causing human injury or death. A codebook of the dataset and the incidents reported between 2016 and 2021 are publicly available in \citet{SWITRS2021} and \cite{Kaggle2021}. A buffer of 50 feet is used to match the links in the Fresno network with the location of the incidents. In 2019 and 2020, 1601 and 2020 incidents are matched to 276 and 382 links of the Fresno network, respectively. 

\item Streets characteristics

Street characteristics are hypothesized to be a predictor of travelers' route choices. Using an open repository maintained by the \citet{GISDivisionCityofFresno2021}, we processed shapefiles with the georeferenced positions of bus stops and streets intersections in Fresno. A buffer of 50 feet is used to match the links in the Fresno network with the location of the stops and intersections. 362 and 2115 bus stops and streets intersections are matched to 300 and 1139 links, respectively. 

\item Socio-demographics from Census data

A shapefile with the most recent selected sociodemographic data at the block level was retrieved from \citet{USCensusBureau2019}. We only use the layers with data about the proportions of population by gender, age and income level at each block. There are other layers with relevant features to explain route choice preference, such as one related to commuting patterns, but this data is not available for the city of Fresno or it is only available at the tract and county levels. 

\end{itemize}

\subsection{Attributes of the utility function}
\label{ssec:attributes-utility-function}

\textit{Travel time} is the only endogenous attribute in the utility function and it is equal to the output of the links' performance functions. The free flow travel times defined in the link performance functions is obtained from INRIX data. The speed attribute used to compute free flow travel times corresponds to the speed driven on a road when it is wide open and it does not necessarily match the legal speed limit. Using the system level data described in Section \ref{ssec:data-sources}, we also compute the following set of exogenous attributes for every link of the Fresno network:

\begin{itemize}[leftmargin=*]
\item \textit{Std. Speed}: Standard deviation of speed [miles/hour]
\item \textit{Incidents}: Total incidents in the current year
\item \textit{Intersections}: Number of streets intersections
\item \textit{Bus stops}: Number of bus stops
\item \textit{Median income}: Median household income in the U.S. Census block [USD/year]
\end{itemize}

\textit{Std. Speed} is computed using historical data provided by INRIX for the period between 4pm and 5pm on a weekday and it is assumed to negatively impact the utility of choosing a path. There are alternative ways to capture the effect of travel time variability on travelers' route choices, such as the standard deviation or coefficient of variation of travel time. The main advantage of the standard deviation of speed is that it provides a measure of time variability that is independent of the length of the link segments. The rationale to aggregate the number of incidents in the current year is to have a proxy of the drivers' perception on the road safety of a link segment. \textit{Median income} is used as a proxy of the perception of the level of crime in surrounding neighborhoods and it is hypothesized to be positively associated with the utility of choosing a path. A larger number of bus stops or streets intersections in a link segment should encourage drivers to use alternative paths. Thus, all attributes, except for \textit{Median income}, are expected to have a negative coefficient in the utility function.

To test non-linear effects of the attributes on the utility function, we also binarized the exogenous attributes as follows. \textit{Reliable speed} and \textit{Low income} take the value 1 if the values of \textit{Std. speed} and \textit{Median income} are below the 25th percentile of their distributions, respectively, and 0 otherwise. \textit{Bus stop}, \textit{Intersection} and \textit{Incident} take the value 1 if \textit{Bus stops}, \textit{Intersections} and \textit{Incidents} are greater than zero, respectively, and 0 otherwise. Under this new specification of the attributes, all coefficients of the utility function, except for the coefficient weighting \textit{Reliable speed}, are expected to be negative.

Note that the network contains 656 links (38.8\%) that are centroid connectors. These connectors usually do not have physical counterparts in the real network \citep{SeanQian2012} and they should not impact travelers' route choices. Thus, we set the values of the attributes in these links to zero, regardless if the attributes are continuous or binary. The path size correction factor is set to its default value of one in all model specifications.

\subsection{Descriptive statistics}
\label{ssec:descriptive-statistics}

Figure \ref{fig:features-correlations-years} shows the correlations among traffic counts and system level attributes of the Fresno network during the first Tuesdays of October 2019 and 2020. We include free flow speed instead of free flow travel times which normalize the attribute by the link length. The Pearson correlations shown in the top row of the plots indicate that the free flow speed is negatively correlated with the number of intersections and bus stops in the two selected periods. Also, as expected, the number of incidents is positively associated with the free flow speed. An interesting result is the negative correlation between median income and free flow speed. The latter is probably explained by the higher rate of motorization in high income areas and which should be associated to lower speeds. The standard deviation of speed shows a positive correlation with free flow speed but only during 2020. The sign of the correlation between traffic flows and free flow speed is unclear. Overall, the correlational analysis shows sensitive associations between the attributes of the utility function and it suggests that the spatial matching of attributes conducted in \ref{ssec:data-sources} was reasonably accurate.

\begin{figure}[h]
\centering
\includegraphics[width=0.9\columnwidth, trim= {0cm 0cm 0cm 0cm},clip ]{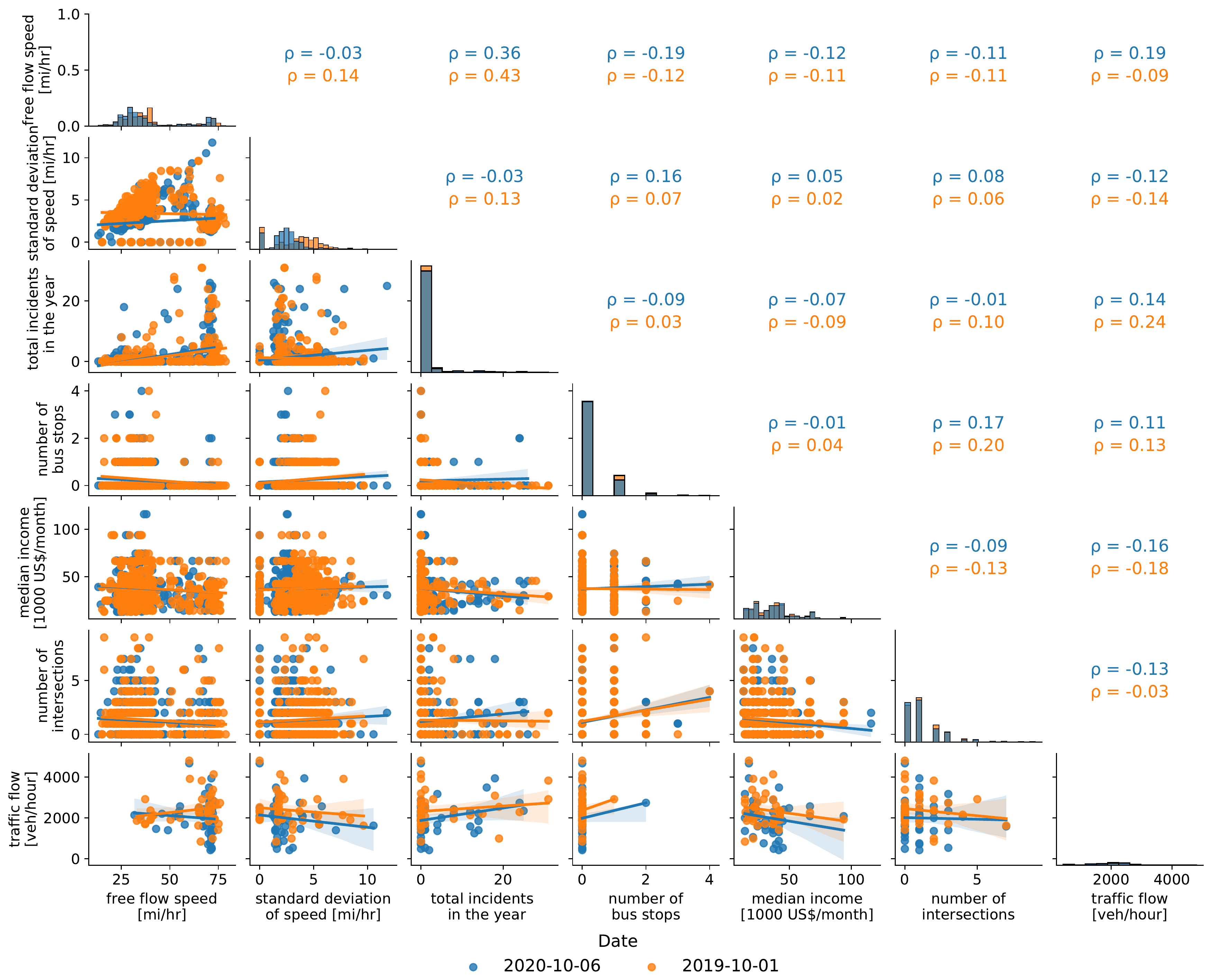}
\caption{Correlations among system level attributes in Fresno, CA}
\label{fig:features-correlations-years}
\end{figure}

\subsection{Model specifications}
\label{ssec:model-specifications}

We estimate three model specifications using the data collected in each time period. Following the state of the practice in previous literature, we first estimate a \textit{Baseline model} including \textit{travel time} only. We also estimate a \textit{Full model} including the exogenous attributes in a continuous scale; \textit{Std. Speed}, \textit{Median income}, \textit{Incidents}, \textit{Intersections}, \textit{Bus stops}. Finally, a \textit{Binarized model} includes the exogenous attributes in a binary scale; \textit{Reliable speed}, \textit{Low income}, \textit{Incident}, \textit{Intersection}, \textit{Bus stop} (Section \ref{ssec:attributes-utility-function}). To assess the gain in explanatory and predictive power of the \textit{Full model} and \textit{Binarized model} compared the \textit{Baseline model}, these models also include \textit{Travel time}. By assumption, the utility function is defined as a linear weight between attributes and their respective coefficients. \textit{Travel time}, the only endogenous attribute in all models, is iteratively updated during the bilevel optimization. 

\subsection{Estimation}
\label{ssec:estimation}
To prevent that numerical issues associated to the scale of the attributes impact the quality of the statistical inference, we normalize each exogenous attribute, including the free flow travel time defined in the link performance functions, by their maximum values. Because all these attributes are non-negative, their new ranges are between 0 and 1. Section \ref{appendix:ssec:descriptive statistics} includes figures with descriptive statistics of the attributes after this normalization and that exclude links that are centroid connectors. 

The models are estimated on Amazon Web Services on two \texttt{t2.2xlarge} instances equipped with Intel Xeon CPU 3.3GHz x 8, 32 GB RAM \citep{AWS2022}. The three model specifications are run in parallel using the data collected in either 2019 or 2020. Because \SUE-\logit is path-based, the computational and memory complexity is a function of the number of paths. The CPU processor and RAM memory of the \texttt{t2.2xlarge} instance allow to handle 33,430 of paths. In contrast, a \texttt{t2.xlarge} instance with Intel Xeon CPU 3.3GHz x 4, 16 GB RAM, handles 22,015 paths without reporting memory overflow. The memory bottleneck of the algorithm is on the computation of the analytical gradients performed to update the utility function coefficients during the outer level optimization. The computation time of each notebook is about 6 hours.

All models are estimated using \NGD in the no-refined stage with a learning rate of $\eta_1 = 0.5$, and \LM  in the refined stage, respectively. Each stage performs 10 iterations of the optimization methods. The starting points for optimization in the non-refined stage are set to zero for the three model specifications. The best estimate of the utility function coefficients in the non-refined stage is chosen as the starting point for optimization in the refined-stage. The coefficients are projected to zero during the iterations of the bilevel optimization if their sign is not consistent with our expectation  (Section \ref{ssec:attributes-utility-function}). If the coefficients of the attributes are zero at the end of the non-refined stage, they are excluded for the refined stage. This strategy is used to reduce computational burden and to accelerate the convergence of the optimization algorithm to a local optimal solution. 

\subsection{Path generation}
\label{ssec:path-generation}

We generate an initial set of 13,905 paths using the two shortest paths between the 6970 O-D pairs that report trips (Figure \ref{fig:cumulative-demand-fresno}, Section \ref{ssec:travel-demand}). The initial path set of the three model specifications are the same. The total number of paths is odd because some O-D pairs are only connected by a single path. The column generation method described in Step 1, Algorithm \ref{alg:inner-level-optimization}, \ref{appendix:ssec:implementation-inner-level-optimization} is used to update the path sets during the bilevel optimization. To guide the paths exploration, new paths are generated in the 30\% of O-D pairs with the highest demand  only, which covers approximately 85\% of the total trips in the Fresno network (Figure \ref{fig:cumulative-demand-fresno}, Section \ref{ssec:travel-demand}). Based on the current estimate of the utility function coefficients and the link attributes, the column generation step generates the 40 shortest paths among 3\% of the OD pairs at each iteration of the bilevel optimization. The selected O-D pairs at each iteration changes sequentially according to their level of demand and the process continue until the target of 30\% of OD-pairs is reached in the last iteration. The sequential selection of O-D pairs reduces the computational burden of generating new paths in all O-D pairs and at every iteration, without compromising significantly path exploration. Subsequently, the inner level optimization algorithm solves \SUE-\logit and the paths utilities are updated according to the new travel times at equilibria. Finally, it selects the 10 paths with the highest utility for every O-D pair.

\subsection{Indicators for model comparison}
\label{ssec:indicators-model-comparison}

For each model specification and time period, we compute the value of the objective function (i.e. sum of squared errors), the root mean squared error (RMSE), the normalized RMSE (NRMSE), the F-test (Section \ref{ssec:hypothesis-testing}) and the adjusted pseudo $R^2$. The F-test compares the sum of squared errors of the model with its null version where all coefficients of the utility function were set to zero and that represents an scenario where travelers' make equilikely choices among paths. F-tests with p-values lower than $\alpha$ in rejects the null hypothesis that two models are statistically equivalent at the $1-\alpha$ \% level of confidence and hence, this is evidence that supports the selection of the augmented model. 

Our adjusted pseudo $R^2$ is analogous of the McFadden Pseudo-$R^2$ \citep{McFadden1973} and it is adjusted by the number of coefficients of the model \citep{Ben-Akiva1985}. It is defined as 1 minus the ratio between the SSE of a model minus the number of coefficients and the SSE of the model with all coefficients set to zero. Since traffic counts are an aggregate of individual path choices, the adjusted pseudo $R^2$ of our model is directly comparable with those values obtained from discrete choice models. This indicator is an absolute measure of the predictive ability of discrete choice models and it is useful to generate benchmark values against which researchers can evaluate \citep{Parady2021}. It tends to be considerably lower than the $R^2$ index used in ordinary regression analysis, and values of 0.2 to 0.4 represent an excellent fit \citep{McFadden1977}. 
 
\subsection{Results}

Table \ref{table:fresno-estimation} shows the estimation results obtained with traffic count data collected during the first Tuesday of October 2019 (before COVID-19) and October 2020 (during COVID-19) and for three model specifications (Section \ref{ssec:model-specifications}). Besides the point estimates and t-tests of each coefficient of the utility function, the table also includes multiple indicators for model comparison (Section \ref{ssec:indicators-model-comparison}). Figure \ref{fig:distribution-errors-fresno} shows histograms with the distribution of errors obtained after performing the non-refined and refined stages of the optimization. Figures \ref{fig:convergence-baseline-and-full-models-fresno} and \ref{fig:convergence-baseline-and-feature-engineering-models-fresno}, \ref{appendix:ssec:estimation-results} show a comparison of the convergence of the full and binarized models against the baseline model. The top plots of the figures shows the number of paths that are generated during column generation and that are added after performing the paths selection step during the inner level optimization. The significant decrease in the number of paths added in the bilevel iterations of the refined stage suggests that the path exploration over the selected set of O-D pairs in the non-refined stage was reasonably exhaustive.

\begin{table}[h] 
\centering 

\caption{Point estimates and summary statistics of models fitted with data collected between 4pm and 5pm during the first Tuesdays of October 2019 and October 2020 in Fresno, CA} 
\label{table:fresno-estimation}

\vspace{-0.2cm}

\begin{adjustbox}{width=\textwidth}  
	
	\begin{threeparttable}
		
		\begin{tabular}{@{\extracolsep{1pt}}lcccccc} 
			\\[-1.8ex]\hline 
			\hline \\[-1.8ex] 
			& \multicolumn{3}{c}{\begin{tabular}{c} First Tuesday of October 2019 (Before COVID-19) \end{tabular}} 
			& \multicolumn{3}{c}{\begin{tabular}{c} First Tuesday of October 2020 (During COVID-19) \end{tabular}} \\
			\cmidrule(lr){2-5} 
			\cmidrule(ll){5-7}
			\multicolumn{1}{l}{\begin{tabular}{l} \vspace{0.2cm} \hspace{-0.5cm}\textbf{Attribute  (t-test)} \end{tabular}} 
			& Baseline model & Full model & Binarized model & Baseline model & Full model & Binarized model \\ 
			\hline \\[-1.8ex] 
			Travel time  $(\theta_{t})$ 
			& $-$2.000$^{**}$ ($-$2.5) & $-$1.904$^{**}$ ($-$2.3)  & $-$0.760$^{***}$ ($-$3.1)  
			& $-$2.500$^{**}$ ($-$2.1) & $-$1.889$^{*}$ ($-$1.9)  & $-$0.458$^{*}$ ($-$1.8)  \\ 
			Std. speed  & $\minus$ & 0.000 (0.0)  & $\minus$ & $\minus$ & 0.000 (0.0) & $\minus$  \\ 
			Incidents  & $\minus$ & $-$2.409$^{**}$ ($-$2.2) & $\minus$  & $\minus$ & $-$2.597 ($-$1.3) & $\minus$\\ 
			Median income  & $\minus$  & 0.864$^{**}$ (2.0) & $\minus$ & $\minus$ & $-$0.447 ($-$0.8)  & $\minus$ \\ 
			Intersections  &  $\minus$ & 0.000 (0.0) & $\minus$ &  $\minus$ & $-$0.708 ($-$0.7) & $\minus$   \\ 
			Bus stops & $\minus$   & 0.000 (0.0)  & $\minus$ & $\minus$   & 0.000 (0.0)  & $\minus$  \\ 
			Reliable speed  & $\minus$ & $\minus$  & 0.000 (0.0) & $\minus$ & $\minus$   & 0.000 (0.0)  \\ 
			Incident  & $\minus$ & $\minus$ & $-$0.118 ($-$0.7)  & $\minus$ & $\minus$ & $-$0.249 ($-$1.5)\\  
			Low income  & $\minus$  & $\minus$ & 0.000 (0.0) & $\minus$ & $\minus$   & 0.000 (0.0) \\  
			Intersection  & $\minus$ & $\minus$  & 0.000 (0.0) & $\minus$ & $\minus$ & 0.000 (0.0)   \\ 
			Bus stop & $\minus$ & $\minus$  & 0.000 (0.0) & $\minus$   & $\minus$   & 0.000 (0.0)  \\ 
			\hline \\[-1.8ex] 		
			Obs. (coverage) & 141 (5.8\%)  & 141 (5.8\%) & 141 (5.8\%) & 141 (5.8\%)  & 141 (5.8\%) & 141 (5.8\%) \\ 
             Initial objective & 278,924,807 & 278,924,807 & 278,924,807  & 273,069,736 &  273,069,736 & 273,069,736 \\
			Objective function & 192,568,627 & 168,917,573 & 186,985,714  & 188,694,982 &  158,156,498 & 178,166,886 \\
			RMSE & 1168.6 & 1094.5 & 1151.6  & 1094.5 &  1059.1 & 1124.1 \\ 
   			NRMSE & 0.528 & 0.494 & 0.520  & 0.547 &  0.501 & 0.532 \\
            Adjusted pseudo $R^2$ & 0.311 & 0.395 & 0.331  & 0.316 &  0.422 & 0.345 \\
			F-test (p-value) & 71.954$^{***}$ (0.000) & 11.224$^{**}$ (0.000) & 8.047$^{***}$ (0.000) & 99.482$^{***}$ (0.000) & 4.157$^{***}$ (0.002) & 9.166$^{***}$ (0.000)\\
			\hline 
			\hline \\[-1.8ex] 
			
		\end{tabular}

		\begin{tablenotes}
			\vspace{-0.2\baselineskip}
			\tiny
			\scriptsize
			\item Note: Significance levels: $^{*}$p$<$0.1; $^{**}$p$<$0.05; $^{***}$p$<$0.01
		\end{tablenotes}

	\end{threeparttable}
\end{adjustbox}

\end{table}

\begin{figure}[h]
\centering


\begin{subfigure}[t]{0.3\columnwidth}
	\centering
	\includegraphics[width=\columnwidth, trim= {0cm 2cm 0cm 0cm},clip]{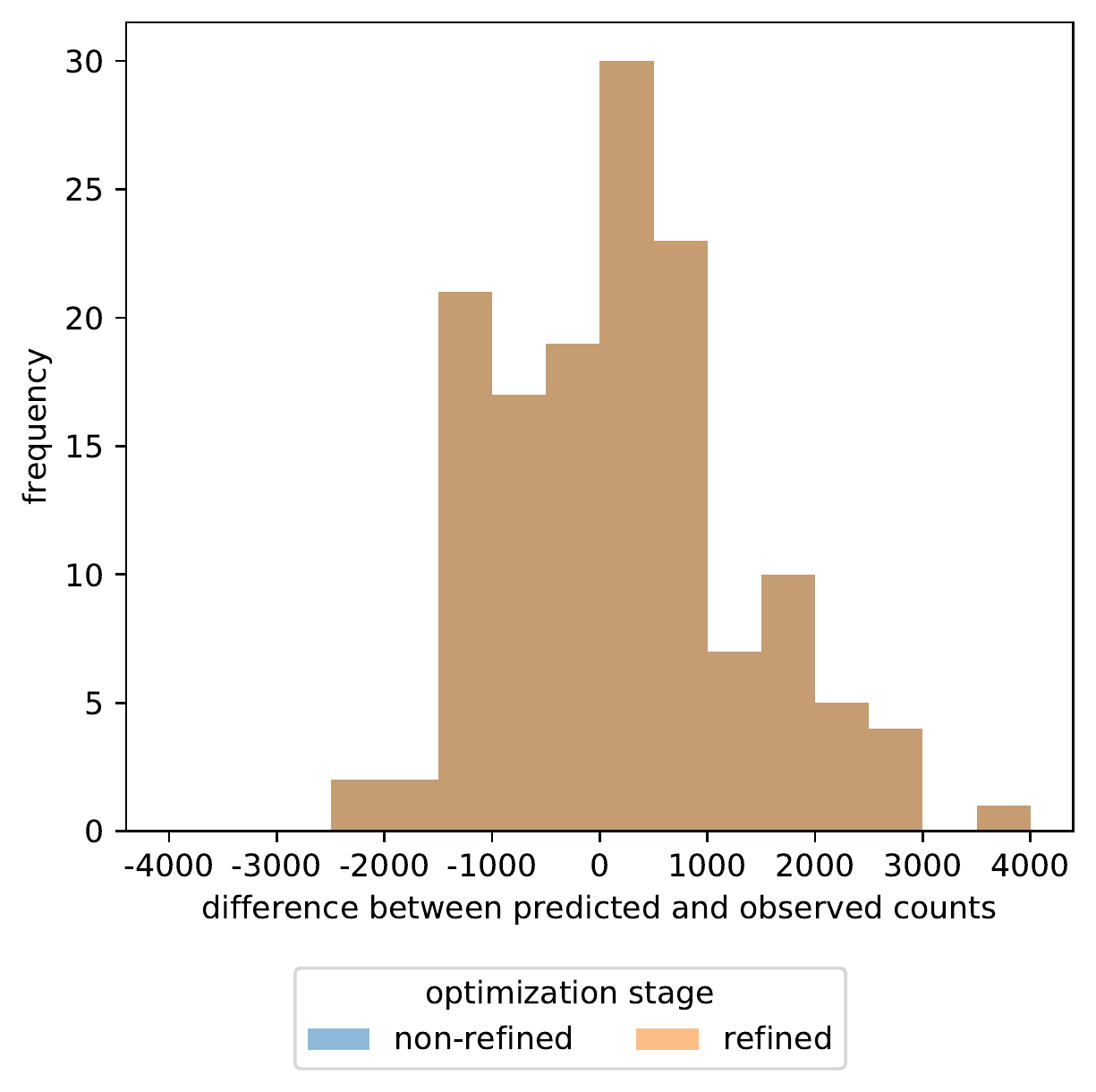}
	\caption{Baseline model 2019}
	\label{subfig:distribution-errors-baseline-model-2019}
\end{subfigure}
\begin{subfigure}[t]{0.3\columnwidth}
	\centering
	\includegraphics[width=\columnwidth, trim= {0cm 2cm 0cm 0cm},clip]{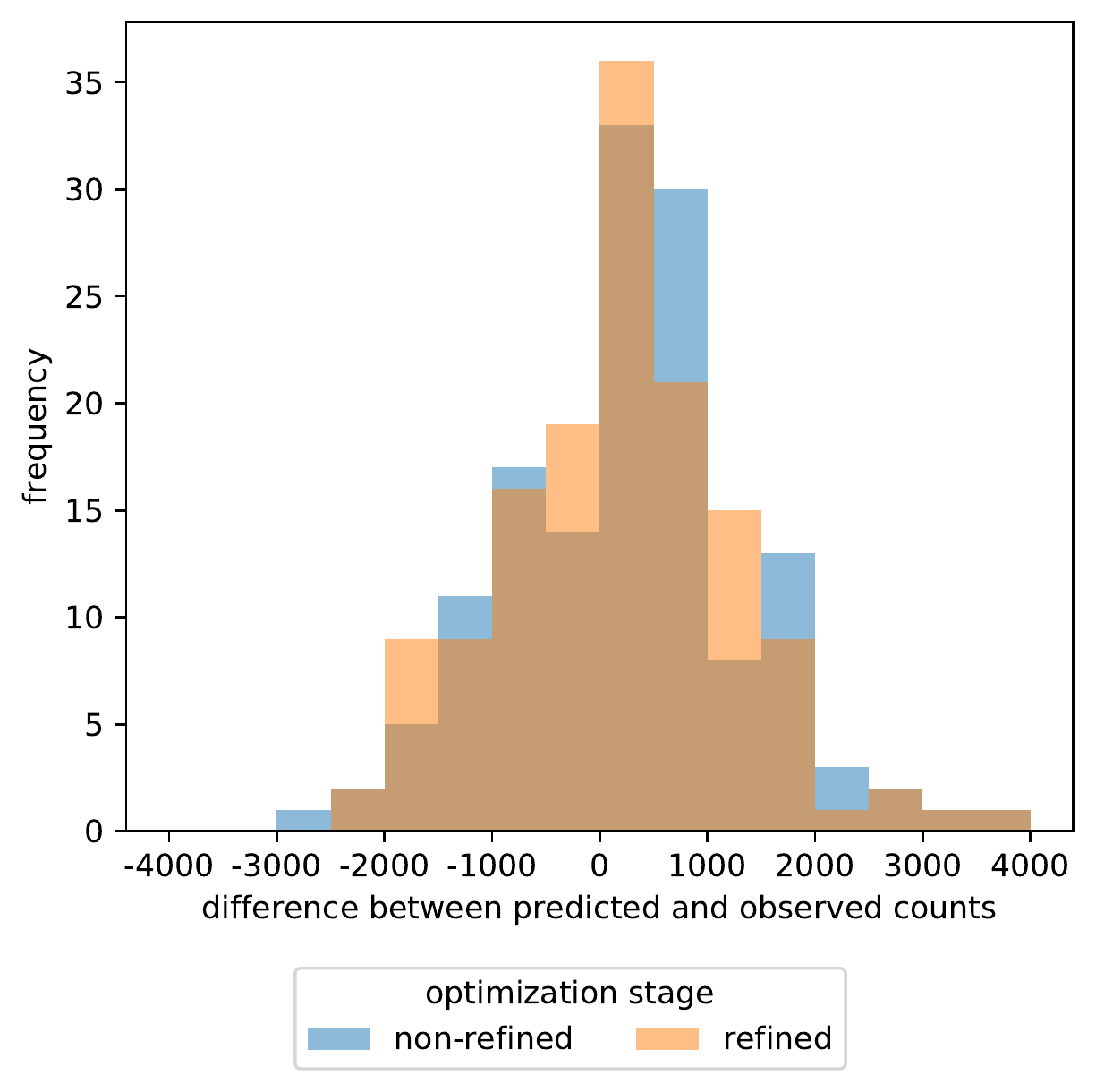}
	\caption{Full model 2019}
	\label{subfig:distribution-errors-full-model-2019}
	
\end{subfigure}
\begin{subfigure}[t]{0.3\columnwidth}
	\centering
	\includegraphics[width=\columnwidth, trim= {0cm 2cm 0cm 0cm},clip]{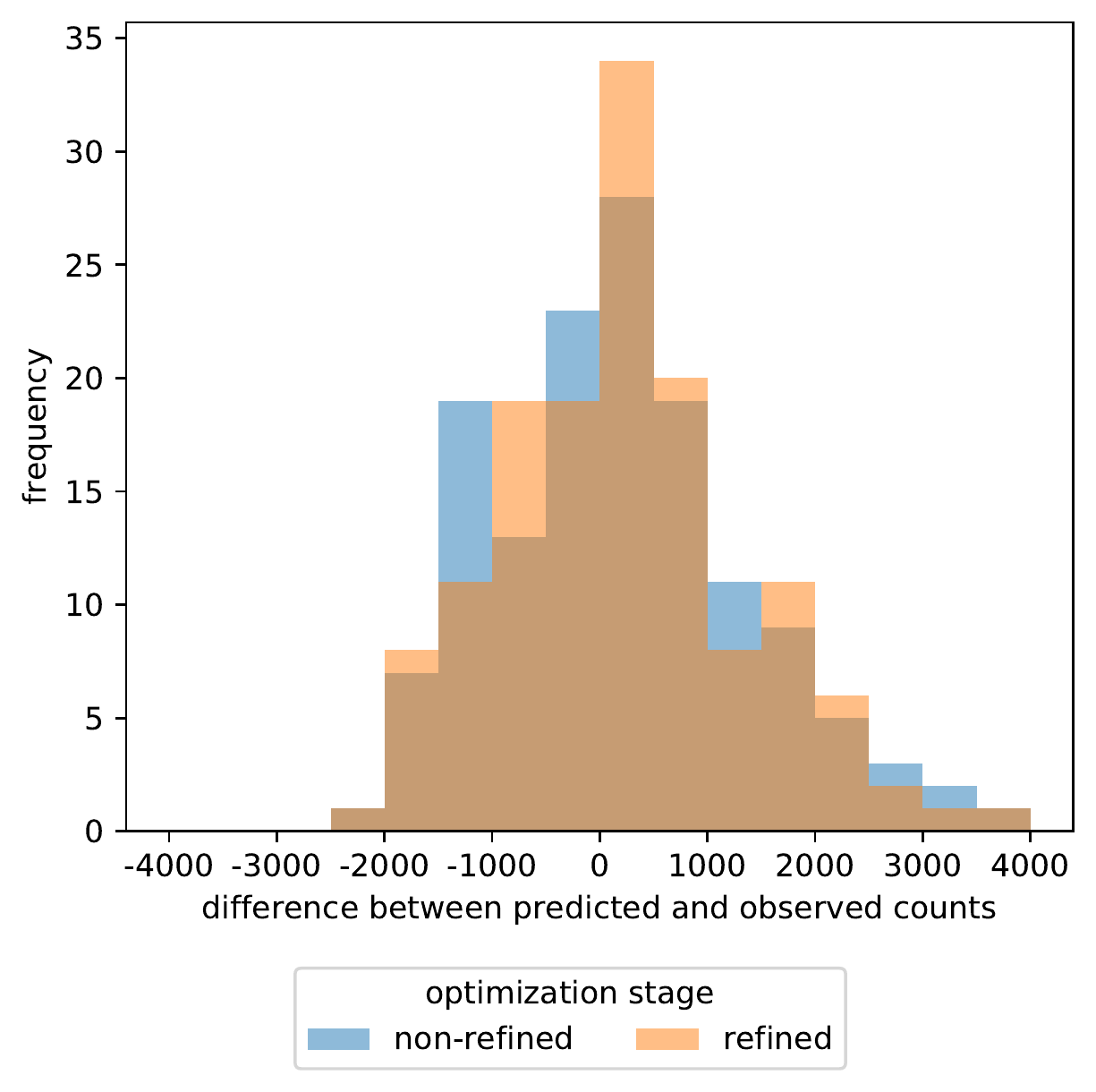}
	\caption{Binarized model 2019}
	\label{subfig:distribution-errors-feature-engineering-model-2019}
	
\end{subfigure}

\begin{subfigure}[t]{0.3\columnwidth}
	\vskip 0pt
	\centering
	\includegraphics[width=\columnwidth, trim= {0cm 2cm 0cm 0cm},clip]{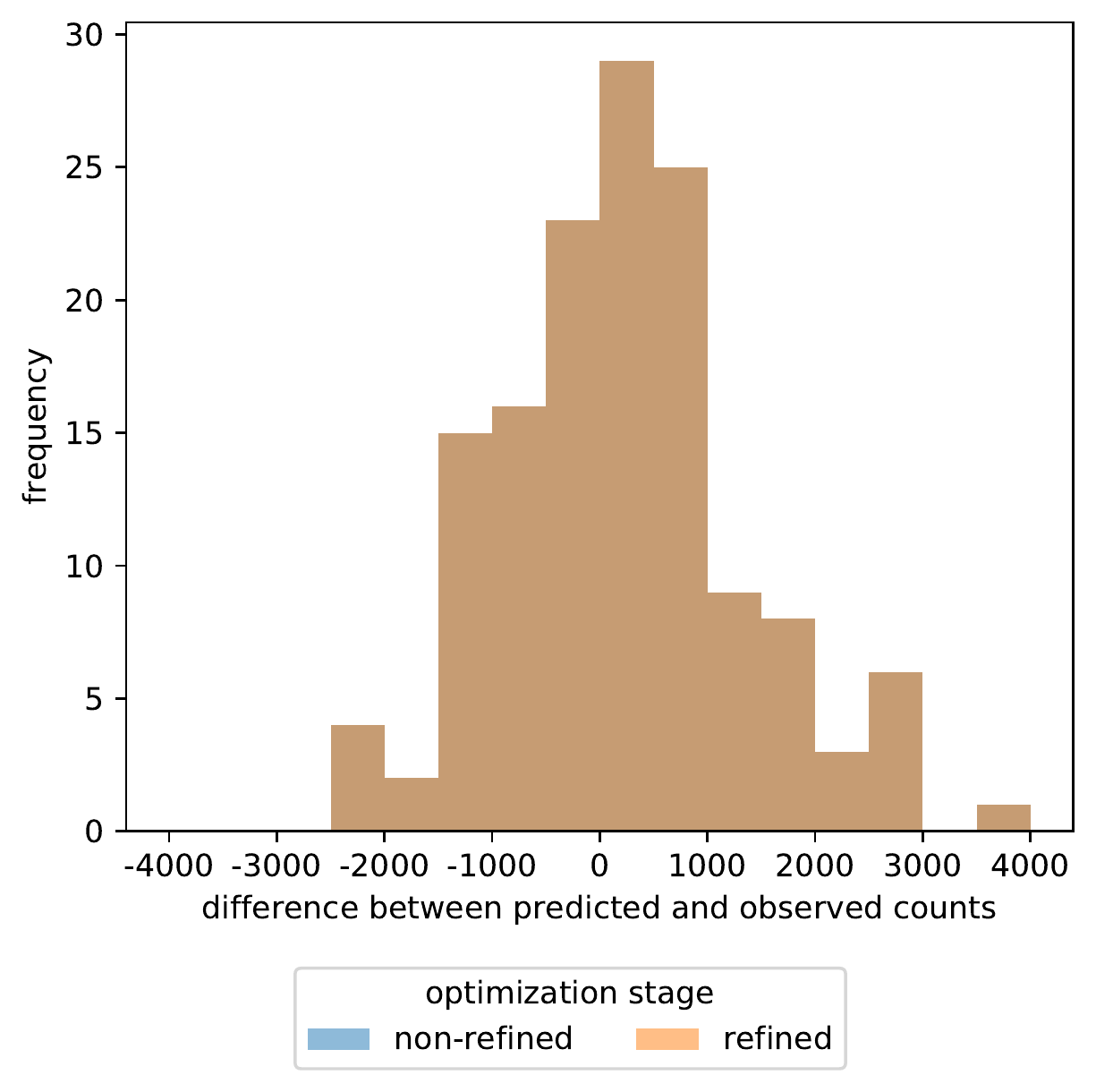}
	
	\caption{Baseline model 2020}
	\label{subfig:distribution-errors-baseline-model-2020}
	
\end{subfigure}
\begin{subfigure}[t]{0.3\columnwidth}
	\vskip 0pt
	\centering
	\includegraphics[width=\columnwidth, trim= {0cm 2cm 0cm 0cm},clip]{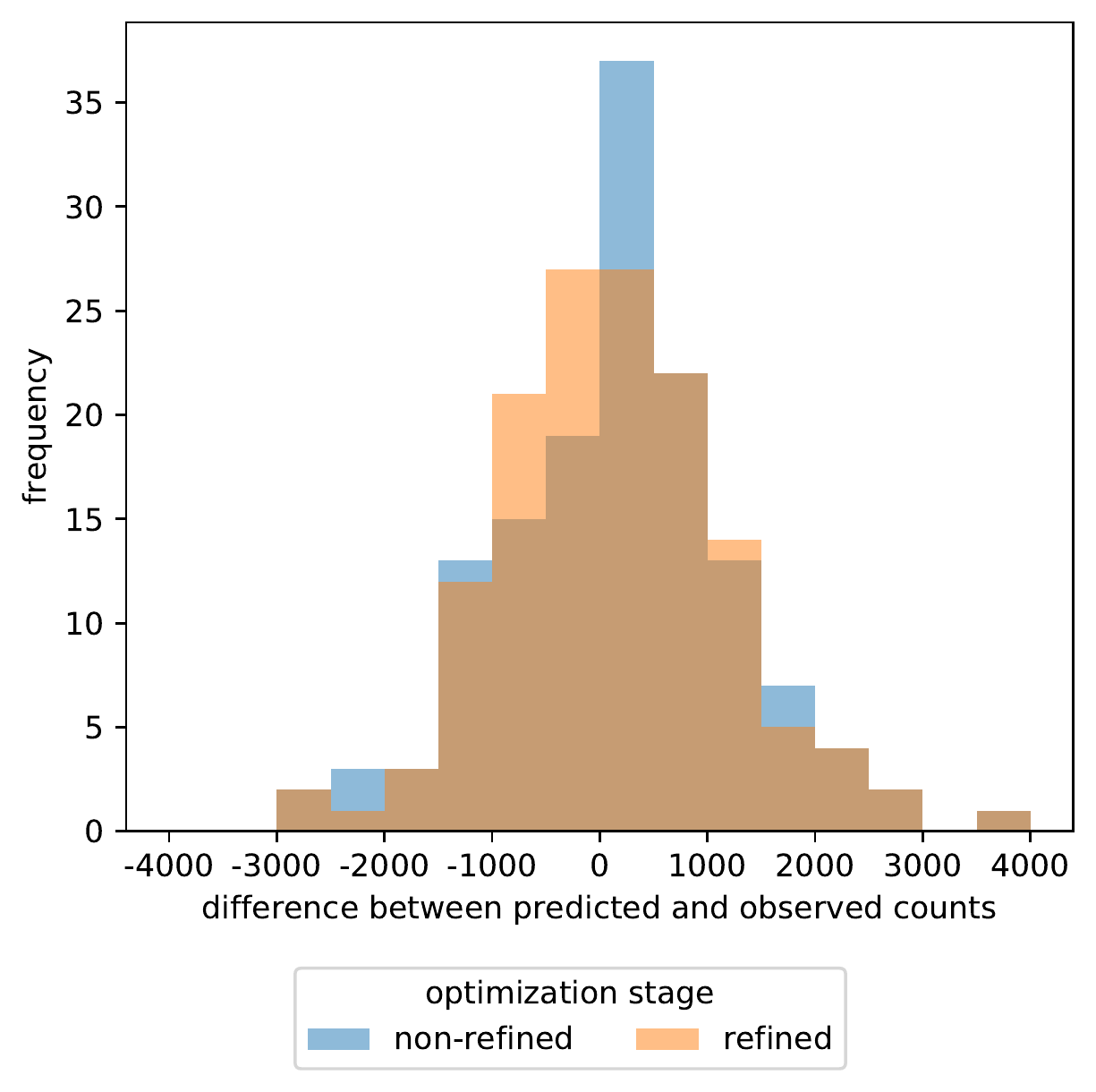}
	\caption{Full model 2020}
	\label{subfig:distribution-errors-full-model-2020}
	
\end{subfigure}
\begin{subfigure}[t]{0.3\columnwidth}
	\centering
	\vskip 0pt
	\includegraphics[width=\columnwidth, trim= {0cm 2cm 0cm 0cm},clip]{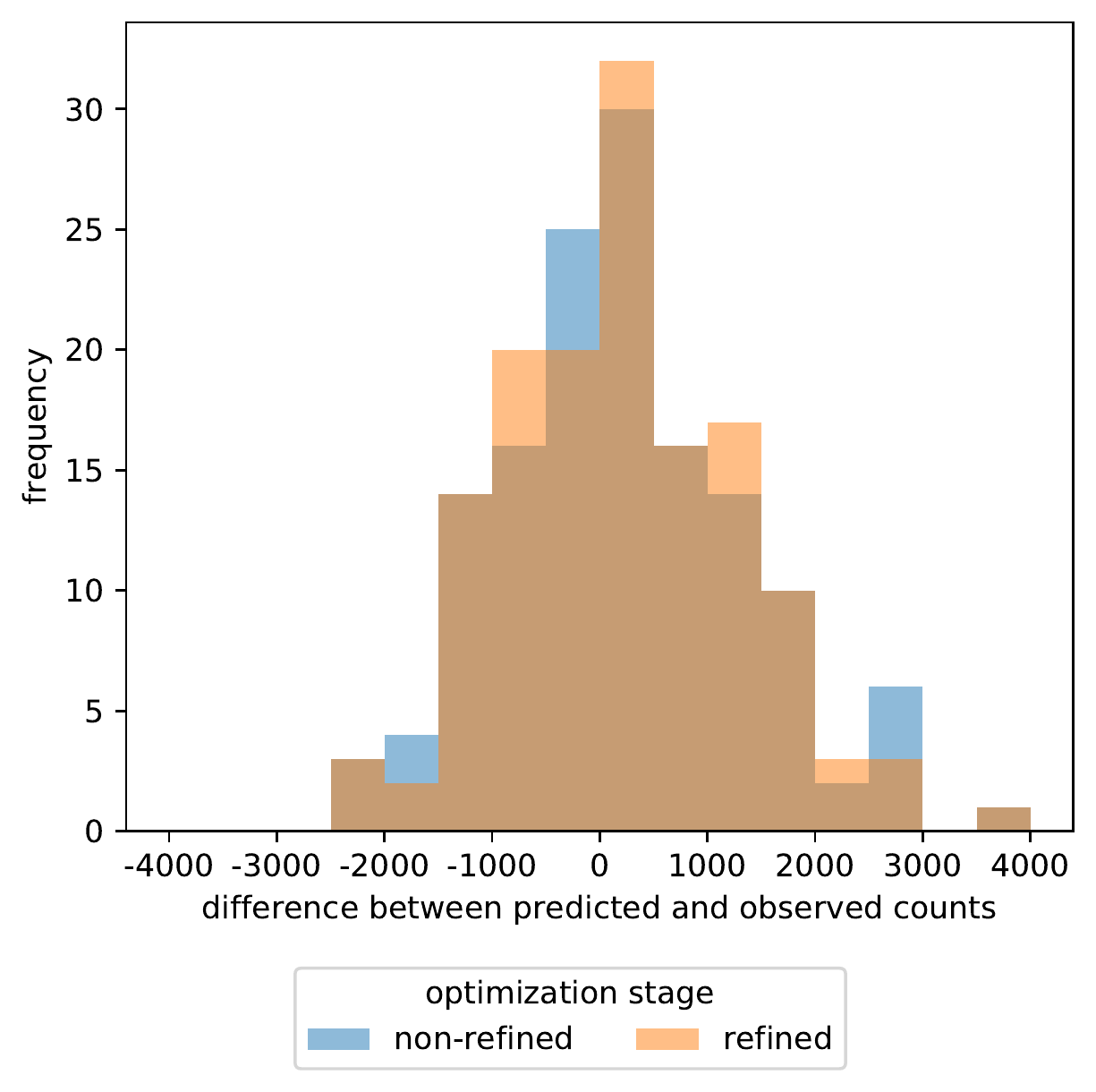}
	\caption{Binarized model 2020}
	\label{subfig:distribution-errors-feature-engineering-model-2020}
	
\end{subfigure}

\vspace{0.2cm}

\begin{subfigure}[b]{0.3\columnwidth}
	\centering
	\vskip 0pt
	\includegraphics[width=\columnwidth, trim= {0cm 0cm 0cm 11cm},clip]{figures/fresno/distribution-errors/distribution_predicted_count_error_feature_engineering_model_2020.pdf}
\end{subfigure}

\caption{Distribution of errors in non-refined and refined stages of the optimization of the baseline, full and binarized models}

\label{fig:distribution-errors-fresno}
\end{figure}

Notably, our results shows that the adjusted pseudo $R^2$ obtained in all models are in the order of 0.3-0.56, which are considered an excellent fit in travel behavior studies \citep{McFadden1977, Bovy2008, Bierlaire2008,Bwambale2019a}. The F-tests included in the table suggest that all models are statistically different than the null model at the 99\% confidence level. The F-tests comparing the \textit{Full model} with the \textit{Baseline model} are equal to 3.7804 and 4.3697 when using data collected before and during COVID, respectively. Because they are lower than their critical value of $F_{6-1,141-6 3.1557, 99\%}$, we conclude that the \textit{Full model} is not statistically equivalent to the \textit{Baseline model}. The indicators for model comparison of the \textit{Binarized model} were lower than the \textit{Full model} in all cases, thus, the latter is preferred over the former. 

The NRMSE of the \textit{Full model} is in the order of 0.5, meaning that the standard deviation of the traffic counts is approximately a 50\% of their sample mean. The experiments conducted in the Sioux Falls network find that a NRMSE of 0.25 results into more than 20\% false negatives. Although the sample size in the Fresno network is two times higher than the number of links in the Sioux Falls network, it is unlikely to compensate for the difference in NRMSE. Other factors that could contribute to the decrease of the t-test of the coefficients in the \textit{Full model} are associated to the high correlation of travel time with some exogenous attributes and also to numerical issues arising from the estimation of a richer utility function. Overall, this evidence anticipates a low statistical power to detect of the effect of attributes that significantly impact travelers' utility in the Fresno network.

As expected, the travel time coefficient is significant at the 90\% confidence level in all the models. Note that the estimated coefficients of travel times in the baseline models are exact multiples of the learning rate of \NGD because there was no improvement of the objective function in the refined stage. In contrast, the estimations of the full and binarized models report improvements during the refined stage. The estimation of the \textit{Full model} with the data collected before COVID-19 suggests that the total incidents during the year and the monthly household income of the neighborhoods nearby the link segments has a positive and negative effect on the travelers' utility, respectively. Both effects were significant at the 95\% level of confidence. The estimation results of the \textit{Full model} with the data collected during COVID-19 also found a negative effect of the number of streets intersection but the coefficients of all attributes, except for travel time, were not significant at the 90\% confidence level. As discussed earlier, the high level of noise in the Fresno network can explain the difficulty to detect significant effect of these attributes. Although it is not required in non-linear regression models, the distribution errors of the models resemble Gaussian distributions with zero mean (Figure \ref{fig:distribution-errors-fresno}). Finally, regarding the impact of COVID-19, we conclude that there are no significant differences on the attributes that determine travellers' route choices.
	\section{Conclusions}

The network modeling community has studied the problem of estimating the travelers' utility function coefficients using traffic counts and travel time measurements. However, research has been limited to utility functions dependent on travel time only and methods have been tested on networks of relatively small size. Under the assumption of a known exogenous O-D demand matrix, we enhance existing methods to estimate the coefficients of utility functions with multiple attributes and using traffic counts consistent with stochastic user equilibrium with \logit assignment (\SUE-\logit). We refer to this problem as Logit Utility Estimation (\LUE). To perform attributes' selection, we conduct hypothesis tests on the coefficients of a multi-attribute utility function . Furthermore, a rigorous analysis of the non-convexity and mathematical properties of the \LUE problem is conducted to inform the design of our solution algorithm and to derive some theoretical guarantees about convergence toward local optima. 

The realization of the pseudo-convexity of the optimization problem motivates the use of normalized gradient descent (\NGD), a first order method developed in the machine learning community that is suitable for pseudo-convex optimization. The integration of \NGD with Levenberg–Marquardt (\LM) algorithm outperforms the standalone application of second order optimization methods used in previous literature in many ways. First, the estimates of the utility function coefficients become less sensitive to the starting points for optimization, which is a common issue in non-convex problems. Second, \NGD improves the convergence toward global optima and this makes the statistical inference more reliable. Third, the use of first order methods reduces computational cost because it only requires to compute the gradient of the objective function respect to the utility function coefficients. To our knowledge, this is the first time that a paper presented vectorized expressions to perform the gradient computation and this is key to accelerate the optimization and to scale up our methodology to the largest transportation network studied to date within the \LUE literature. 

The analysis of mathematical properties of the problem identifies the coordinate-wise monotonicity of the traffic flow functions as a main driver of the coordinate-wise pseudo-convexity of the objective function. Experiments in networks used in previous studies show that the traffic flow functions are generally monotonic and that the objective function of the \LUE problem is pseudo-convex respect to the coefficient of a utility function dependent on travel time only. Results in the Sioux Falls network support the coordinate-wise pseudo-convexity of the objective function respect to the coefficients of a multi-attribute utility function. A series of Monte Carlo experiments show that statistical inference on the utility function coefficients is robust to traffic congestion and also to different levels of sensor coverage and noises in the O-D matrix and in the traffic counts. The amount of false negatives and false positives obtained in these experiments are generally well-aligned with statistical theory. For instance, the amount of false positives perfectly matches the value expected for an arbitrary significance level. A higher amount of noise in the traffic counts increase false negatives and hence it reduces the statistical power to identify effect of relevant attributes in the utility function. Surprisingly, the statistical inference is resilient to high level of noise in the cells of t he reference O-D matrix. 

Our solution algorithm is deployed on a large scale network in Fresno, CA and it gave reasonable results in terms of both the estimates and hypothesis tests of the utility function coefficients. Based on standard metrics of model comparison used in travel behavior research, the models report an excellent goodness of fit. As expected, travel time is identified as a main determinant of travelers' route choices in all model specifications, which partially supports the standard practice in the network modelling community of assuming utility functions dependent on travel time only. However, the incorporation of additional system level attributes in the utility function significantly increases the goodness of fit of the baseline model that uses travel time only. These conclusions are robust when using data collected before and during COVID-19. Among the exogenous attributes, the total incidents during the year and the median income are the most relevant predictors of travellers' route choices. The coefficients of these attributes are significant at the 95\% level of confidence but only when using data collected before COVID-19. 

The implementation of our methodology in a large scale network is challenging mainly due to the high computational cost and level of noise of real world data. On one hand, accounting for the interdependence between travelers' choices in transportation networks and for the endogeneity of travel times requires to compute traffic equilibria. On the other hand, estimating the coefficients of the travelers' utility function requires to solve a regression problem. The computational complexity of the inner and outer level problems is a function of the number of paths and thus, the alternating optimization of the problems become intractable with a large number of O-D pairs. We learned that the use of column generation methods for updating path sets in \SUE-\logit provides a good compromise between computational cost and prediction error. The strategy of selecting the O-D pairs according to their demand level is also key to control the exploration of new paths over iterations of the bilevel optimization and to also identify the paths that most decrease the prediction error. To our knowledge, this strategy is not integrated in previous column generation algorithms and we strongly encourage its use in further application of our methodology. To control for the effect of path correlation/overlapping, we correct path utilities according to the Path Size Logit model. 

\section{Further research}

Further research could enhance our methodology to incorporate multi-day data and to perform a joint estimation of the utility function coefficients and the O-D matrix. The increase in sample size should help to identify all parameters of interest and to improve the quality of the statistical inference. The use of cloud computing is important to handle a larger number of paths and thus, to achieve a larger reduction of prediction error. However, the cost of estimating a model with data from multiple time period can become high. We expect that use of deep learning models, computational graphs and automatic differentiation tools will help to handle large amounts of multi-day data while keeping computational cost reasonable. Regarding statistical inference, our study is lacking a more careful treatment of the endogeneity that arises from the computation of travel time over the iterations of the bilevel optimization. Therefore, the use of Two-Stage Least Squares (2SLS) and instrumental variables may contribute to correct bias and inconsistency in the coefficient estimates caused by the presence of endogeneity \citep{Gallant1979a}. 

We are also interested in relaxing the assumption of a homogeneous and linear-in-parameters utility function. The use of a non-homogeneous utility function and fixed effects may contribute to capture heterogeneity of preferences among individuals traveling between different O-D pairs. Besides, the effect of time variability may be better captured with a non-linear specification of the utility function as the one used in prospect theory route choice models. We will also look at estimating the coefficient weighting the utility term associated to the path size correction. We would also like to leverage the use of GPS data to have a better prior of the path sets among O-D pairs and to improve the estimation of the utility function coefficients. Here the integration of our methodology with the nested recursive logit model \citep{Mai2015} seems a promising avenue for further research. 

This study chooses traffic flows for the response function of the non-linear least objective functions. However, there are other choice of response functions that are also admissible such as link travel times or traffic densities. In static traffic assignment, both quantities are monotonic functions of the traffic counts and thus, they are suitable quantities to estimate the utility function coefficients. Finally, system level data is less subject to sampling bias than data collected from individual surveys but it is also more subject to measurement errors. Thus, we expect the joint use of travel surveys and system-level data can help leverage the strengths and weaknesses of each data source.

	\section{Model implementation and data}
\label{sec:model-implementation-data}

The Python package developed to implement our methodology and the system level data from the Fresno, CA network can be found at the following url: \url{https://github.com/pabloguarda/isuelogit}. The folder \texttt{notebooks} contains Jupyter notebooks that reproduce all the results presented in this paper.

\section{Acknowledgments}

This research is supported by a National Science Foundation grant CMMI-1751448

\section{Author contributions}

The authors confirm contribution to the paper as follows: study conception and design: Pablo Guarda, Sean Qian; data collection: Pablo Guarda, Sean Qian; programming and experiments: Pablo Guarda; analysis and interpretation of results: Pablo Guarda, Sean Qian; draft manuscript preparation: Pablo Guarda, Sean Qian. All authors reviewed the results and approved the final version of the manuscript.

\bibliographystyle{elsarticle-harv}\biboptions{authoryear}


\renewcommand{\refname}{}


%
%



\section*{References} \vspace{-1cm}
\bibliography{references.bib} 

	\appendix

\section{Proofs and derivations}

\subsection{Notation}
\label{appendix:ssec:notation}

Tables \ref{table:notation1}, \ref{table:notation2} and \ref{table:notation3} present the notation used throughout the paper. 


\begin{table}[H]
	\caption{Network variables and parameters} 
	\begin{tabularx}{\textwidth}{ll@{}}
		\toprule
		Notations & Definitions \\
		\midrule
		$A$ & The set of all links \\
		$V$ & The set of all nodes \\
		$W$ & The set of O-D pairs\\
		$H$ & The set of all paths \\
		$H_{w}$ & The set of paths connecting O-D pair $w \in W$ \\
		$A^{o}, A^{u}$ & The sets of links with observed and unobserved traffic counts, respectively \\
		$\mIq \in \sR^{|H|\times |W|}$ & The path-demand incidence matrix \\
		$\mIx \in \sR^{|A|\times |H|}$ & The path-link incidence incidence matrix \\
		$\mQ \in \sR^{|V|\times |V|} $ & The O-D matrix \\
		$\vq \in \sR^{|W|}$ & The dense vector associated to the O-D matrix \\
		$q_w \in \sR$ & The demand in O-D pair $w \in W$ \\
		$\vx \in \sR_{\geq 0}^{|A|}$ & The vector of link flows \\
		$x_a  \in \sR_{\geq 0}$ & Link flow in link $a \in A$ \\
		$\vgamma \in \sR^{|A|}_{\geq 0}$ & The vector of capacities among links \\
		$\gamma_a \in \sR_{\geq 0}$ & The capacity of link a \\
		$\bar{\vt}^0 \in \sR^{|A|}_{+}$ & The vector of links' free flow travel times \\
		$\bar{t}_a^0 \in \sR_{+}$ & The free flow travel time at link $a \in A$ \\
		$\vf  \in \sR_{\geq 0}^{|H|}$ & The vector of path flows \\
		$f_{h} \in \sR_{\geq 0}$ & The path flow on path $h \in H$\\
		$\vpf  \in \sR_{]0,1[}^{|H|}$ & The vector of path choice probabilities \\
		$p_h \in \sR_{]0,1[}$ & The choice probability of path $h \in H$  \\
		\bottomrule
	\end{tabularx}
	\label{table:notation1}
\end{table}


\begin{table}[H]
	\caption{Behavioral variables and parameters} 
	\begin{tabularx}{\textwidth}{ll@{}}
		\toprule
		Notations & Definitions \\
		\midrule
		$K_{\mZ}$ & The set of exogenous attributes in the utility function \\
		$K$ & The set of attributes in the utility function \\
		$D$ & The set of utility function coefficients\\
		$L$ & The set of travelers in the network \\
		$J_{l}$ & Consideration set of traveler $l \in L$\\
		$\vt \in \sR_{\geq 0}^{|A|}$ & The vector of values of the endogenous travel times among links \\
		$\mZ  \in \sR^{|A| \times |K_{\mZ}|}$ & The matrix of values for the exogenous attributes among links  \\
		$\vz_k \in \sR^{|A|}$ & The vector of values for the exogenous attribute $k \in K_{\mZ}$ among links \\
		$Z_{ak} \in \sR $ & The value of the exogenous attribute $k \in  K_{\mZ}$ at link $a \in A$ \\
		$\bar{\vt} \in \sR_{\geq 0}^{|A|}$ & The vector of exogenous travel times among links \\
		$\vtheta \in \sR^{|D|}$ & The vector of true utility function coefficients \\
		$\vtheta_{\mZ}  \in \sR^{|K_{\mZ}|}$ & The vector of true utility function coefficients associated to the exogenous attributes \\
		$\theta_d \in \sR$ & The utility function coefficient associated to attribute $d \in D$ \\
		$\theta_t \in \sR_{\leq 0}$ & The travel time coefficient \\
		$\vv \in \sR^{|A|}$ & The vector of link utilities \\
		$v_{a} \in \sR$ & The link utility associated to link $a \in A$ \\
		$U_{jl} \in \sR$ & The latent (unobservable) utility that traveler $l$ attained to alternative (path) $j \in J_l$ \\
		$V_{jl} \in \sR$ & The observable utility that traveler $l$ attained to alternative (path) $j \in J_l$ \\
		$\epsilon_{jl} \in \sR$ & The latent (unobservable) error in the utility function associated to traveler $l$ and alternative (path) $j \in J_l$ \\
		$\mu \in \sR_{+}$ & Scale parameter of the logit model and of the extreme value Type 1 distribution \\
		\bottomrule
	\end{tabularx}
	\label{table:notation2}
\end{table}

\begin{table}[H]
	\caption{Variables and parameters for statistical inference} 
	\begin{tabularx}{\textwidth}{ll@{}}
		\toprule
		Notations & Definitions \\
		\midrule
		$N$ & The sample of traffic counts \\  
		$\ell(\vtheta): \sR^{|D|} \to \sR$ & The objective function of the \LUE problem \\
		$\hat{\vtheta} \in \sR^{|D|}$ & The vector of estimated utility function coefficients \\
		$\hat{\vtheta}_{\mZ} \in \sR^{|K_{\mZ}|} $ & The vector of estimated utility function coefficients associated to the exogenous attributes \\
		$\vpf(\hat{\vtheta}): \sR^{|D|} \to \sR^{|H|}$ & The vector of path choice probability functions\\
		$p_h(\hat{\vtheta}): \sR^{|D|} \to \sR$ & The path choice probability function associated to path $h \in H$\\
		$\vx(\hat{\vtheta}): \sR^{|D|} \to \sR^{|A|}$ & The vector of traffic count (response) functions \\
		$x_a(\hat{\vtheta}): \sR^{|D|} \to \sR$ & The traffic flow (response) function associated to link $a \in A$  \\
		$\tilde{\mX} = D_{\vtheta} \ \vx(\vtheta) \in \sR^{|A^o| \times |D|}$ & The design matrix in \NLLS and which is equal to the Jacobian matrix of the vector of traffic flow functions respect to $\vtheta$  \\
		$\bar{\vx} \in \sR_{\geq 0}^{|A^o|}$ & The vector of observed traffic counts \\
		$\bar{x}_a \in \sR_{\geq 0}$ & The traffic count measurement at link $a \in A$ \\
		$\bar{T}_{d,H_0} \in \sR$ & The t-test associated to attribute $d \in D$ and under null hypothesis $H_0$ \\
		$\bar{F}_{1,2} \in \sR_{\geq 0}$ & The f-test comparing models 1 and 2 \\
		$\sigma^2 \in \sR_{\geq 0}$ & The variance of the errors in the nonlinear regression\\
		$\hat{\sigma}^2 \in \sR_{\geq 0}$ & The estimated variance of the errors in the nonlinear regression\\
		\bottomrule
	\end{tabularx}
	\label{table:notation3}
\end{table}



\subsection{Extension of \SUE-\logit}
\label{appendix:ssec:derivations-sue-logit}

\subsubsection{Original travel time based formulation}

The standard formulation of the \SUE with \logit assignment (\SUE-\logit) problem assumes a utility function dependent on travel time only.  \citet{Fisk1980} proved that the first order necessary optimality condition of the following optimization problem gives a path flow solution that is \logit distributed:


\begin{mini}|l|
	{\{f_h\}_{h \in H},\{x_a\}_{a \in A}}{\sum_{a \in A}\int_{0}^{x_a} t_a(u) du +\frac{1}{\theta} \sum_{w \in W} \sum_{h \in H_{w}} f_{h}\ln(f_{h})}{}{}
	\addConstraint{\sum_{h \in H_w} {f_h}}{= {q}_w\quad}{\forall w \in W}
	\addConstraint{x_a}{= \sum_{w \in W}\sum_{h \in H_{w}} f_h \delta_{ah}\quad}{\forall a \in A}
	\addConstraint{{f_h}}{\geq 0}{\forall h \in H}
	\label{eq:original-sue-logit}
\end{mini} 

From this formulation is clear that if $|\theta| \to \infty^{+}$, the objective function reduces to the first term associated to the Beckmann transformation, and thus, the \SUE and \DUE optimization problems become equivalent. The \LUE and \ODLUE literature typically define $\theta \in \sR_{+}$ as a dispersion parameter measuring the sensitivity of route choices to travel times \citep{Yang2001} and interpret it as the accuracy of the travelers' perception about travel costs \citep{Daganzo1977, Wang2016} or the level of information about travel costs \citep{Lo2003,Liu1996}. Under a single attribute utility function, the interpretation of the parameter may be irrelevant or difficult to falsify. However, in the case of a multi-attribute utility function, these interpretations would rest importance on the relationship of the  magnitude of the dispersion parameter with the amount of unobservable components of the utility function and which are ignored by the modeler. To understand this connection, it is key to reformulate the problem into a utility based representation. 


\subsubsection{Utility based formulation with a single endogenous attribute}

The objective function in Problem \ref{eq:original-sue-logit} is written in terms of the link performance functions $t_a: \sR_{\geq 0} \to \sR_+ $ instead of the observable component of the travelers' utility function $v_a: \sR \to \sR $  associated to each link $a \in A$. Assume the observable component of the travelers' utility function is given by $v_a = \theta_t t_a$, where $\theta_t = \mu \tilde{\theta}_t$ is the coefficient measuring the preference of travelers' for travel time and it is scaled by a factor $\mu \in \sR_{+}$ proportional to the variance of the unobservable component of the utility function, i.e. $\theta_t = \mu \tilde{\theta_t}$ where $\tilde{\theta_t}$ is the unscaled vector of \logit coefficients and which is not identifiable. Then, the utility based representation of Problem \ref{eq:original-sue-logit} can be written in vectorized form as follows:

\begin{center}
	\scalebox{1}{\parbox{\linewidth}{
			\begin{mini}
				{\vx, \vf}{  \sum_{a \in A}\int_{0}^{x_a} \frac{v_a(u)}{\theta_t}du - \frac{1}{\theta_t} \left\langle \vf, \ln \vf\right\rangle }{}{}
				\addConstraint{\mIq \vf}{= \vq}{}
				\addConstraint{\mIx \vf}{= \vx}{}
				\addConstraint{\vx, \vf}{\geq \vzero}{}
				\label{eq:single-attribute-sue-logit}
			\end{mini}
	}}
\end{center}

%

where $\vx \in \sR_{\geq 0}^{|A|}, \vf \in \sR_{\geq 0}^{|H|}, \vq \in \sR_{+}^{|V\times V|}, \mIq \in \sR_{+}^{|V \times V| \times |H|}, \ \mIx \in \sR_{+}^{|A| \times |H|}$. A key observation respect to Problem \ref{eq:original-sue-logit} is that $\theta = -\theta_t > 0$ and since $\theta_t = \mu \tilde{\theta_t}$ is now clear that the dispersion parameter is also scaled by the scale factor $\mu$ of the \logit model.

\subsubsection{Utility function with an endogenous attribute and multiple exogenous attributes}

Problem \ref{eq:single-attribute-sue-logit} can written as:

\begin{mini}|l|
	{\{f_h\}_{h \in H},\{x_a\}_{a \in A}}{\sum_{a \in A}\int_{0}^{x_a} v^{\prime}_a(u)du- \frac{1}{\theta_t} \sum_{w \in W} \sum_{h \in H_{w}} f_h \ln f_h }{}{}
	\addConstraint{\sum_{h\in H_w} f_h}{= {q}_w\quad}{\forall w \in W}
	\addConstraint{x_a}{= \sum_{rs}\sum_{h \in H_{w}} f_h \delta_{ah}\quad}{\forall a \in A}
	\addConstraint{f_h}{\geq 0}{\forall h \in H}
	\label{eq:multi-attributes-sue-logit-minimization}
\end{mini}

where the link utility function $v_a(u)$ at a traffic flow level $u$ was reparameterized as:

\begin{equation}
	v^{\prime}_a(u) 
	= \frac{v_a(u)}{\theta_t}
	= \frac{1}{\theta_t}\left( \theta_t t_a(u) + \sum_{k \in K_{\mZ}} \theta_k \cdot Z_{ak}\right)
\end{equation}

$Z_{ak}$ represents the value of the exogenous attribute $k \in K_{\mZ}$ at link $a \in A$ and $\theta_t \in \sR_{-}, \vtheta_Z \in \sR^{|K_{\mZ}|}$ are set the of coefficients measuring the travelers' preferences for the endogenous attribute $t$ and the exogenous attributes $z \in K_{\mZ}$. Note that if both terms of the objective function in Problem \ref{eq:multi-attributes-sue-logit-minimization} are multiplied by $\theta_t <0$, the problem becomes a maximization:



\begin{maxi}|l|
	{\{f_h\}_{h \in H},\{x_a\}_{a \in A}}{\sum_{a \in A}\int_{0}^{x_a} v_a(u) du - \sum_{w \in W} \sum_{h \in H_{w}} f_h \ln f_h }{}{}
	\addConstraint{\sum_{h\in H_w} {f_h}}{= {q}_w\quad}{\forall w \in W}
	\addConstraint{x_a}{= \sum_{w \in W}\sum_{h \in H_{w}} f_h \delta_{ah}\quad}{\forall a \in A}
	\addConstraint{f_h}{\geq 0}{\forall h \in H}
\end{maxi}

which, in compact form, can be written as:

\vspace{-0.5cm}

\begin{maxi*}
	{\vx, \vf}{  \sum_{a \in A}\int_{0}^{x_a} v_a(u)du - \left\langle \vf, \ln \vf\right\rangle }{}{}
	\addConstraint{\mIq\vf}{= \vq}{}
	\addConstraint{\mIx\vf}{= \vx}{}
	\addConstraint{\vx,\vf}{\geq \vzero}{}
\end{maxi*}





\subsubsection{Logit assignment of path flows in utility based formulation}

Following the rationale of the proof in \citet{Fisk1980}, we can prove that the path flow solution of the optimization model presented in Problem \ref{eq:multi-attributes-sue-logit-minimization} follows a \logit assignment. The Lagrangian $\mathcal{L}$ of the problem is:

\begin{equation}
	\mathcal{L} = 
	\sum_{a \in A}\int_{0}^{x_a} (t_a(w) + \sum_{k \in K_{\mZ}} (\theta_k/\theta_t)Z_{ak} ) dw - \frac{1}{\theta_t} \sum_{w \in W} \sum_{h \in H_{w}} f_h \ln f_{h}
	+ \sum_{w \in W}\lambda_{w}(\sum_{h \in H_w} {f_h}-{q}_{w})
	\label{eq:lagrangian-two-attributes-sue-logit}
\end{equation}
\noindent The set of first order optimality conditions are: 
\begin{align}
	\frac{\partial L}{\partial {f_h}}
	= 
	\sum_{a \in A} \frac{\partial x_a}{\partial f_h} \Big(t_a(x_a)   + \sum_{k \in K_{\mZ}} (\theta_k/\theta_t)Z_{ak} \Big) 
	- \frac{1}{\theta_t} \left(\ln f_h+f_h \frac{1}{f_h}\right)
	+ \sum_{w \in W} \lambda_{w} \delta^w_{hw} &= 0 \nonumber
	\\   
	\sum_{a \in A} \delta_{ah} \left (t_a(x_a)   + \sum_{k \in K_{\mZ}} (\theta_k/\theta_t)Z_{ak} \right) 
	- \frac{1}{\theta_t} \left(\ln f_h+1\right)
	+  \lambda^h_{w} &= 0
	\quad \forall h \in H
	\label{eq:first-order-optimality-conditions-two-attributes-sue-logit}
\end{align}
where $\delta^w_{hw}$ takes the value 1 if path $h$ belong to O-D pair $w \in W$, and 0 otherwise, and $\lambda^h_{w} = \sum_{w \in W} \lambda_{w} \delta^w_{hw}$. Let's define $V^{\prime}_{h} = \sum_{a \in A} \delta_{ah} (t_a(x_a)   + \sum_{k \in K_{\mZ}} (\theta_k/\theta_t)Z_{ak}$ as a reparameterized utility function associated to path $h \in H$. Now we can find an expression for $f_h$ using Eq. \ref{eq:first-order-optimality-conditions-two-attributes-sue-logit}: 
\begin{align}
	V^{\prime}_{h} - \frac{1}{\theta_t} \left(\ln f_h+1\right)
	&= -\lambda^h_{w} \nonumber \\  
	\ln f_h &= \theta_t\lambda^h_{w} \delta^w_{hw}+ \theta_t V^{\prime}_{h}-1  \nonumber \\
	f_h &= \exp(\theta_t\lambda^h_{w}+ \theta_t V^{\prime}_h-1)  
	\label{eq:path-flow-two-attributes-sue-logit}
\end{align}
Using the conservation constraint of path flows and demand:
\begin{equation}
	\sum_{h \in H_w} f_w= \sum_{h \in H_w} \exp(\theta_t\lambda^h_{w}+ \theta_t V^{\prime}_h-1) = q_w  
\end{equation}

\noindent Noting that the value of $\lambda^h_{w}$ is the same $\forall h \in H_w$:
\begin{align}
	q_w &= \exp(\theta_t\lambda^h_{w})\sum_{h \in H_{w}} \exp(\theta_t V^{\prime}_{h}-1) \nonumber \\
	\exp(\theta_t\lambda^h_{w}) &= \frac{q_w}{\displaystyle\sum_{h \in H_{w}} \exp(\theta_t V^{\prime}_{h}-1)}
	\label{eq:derivation-two-attributes-sue-logit-1}
\end{align}

\noindent Replacing Eq. \ref{eq:derivation-two-attributes-sue-logit-1} into Eq. \ref{eq:path-flow-two-attributes-sue-logit}:
\begin{align}
	f_h
	= \exp(\theta_t\lambda^h_{w}) \exp( \theta_t V^{\prime}_{h}-1)
	=q_w \frac{\exp(\theta_t V^{\prime}_{h})}{\displaystyle\sum_{j \in H_{w}}^{} \exp(\theta_t V^{\prime}_{j})} 
\end{align}


\noindent Substituting by $V_h = V^{\prime}_{h} \theta_t = \sum_{a \in A} \delta_{ah} (\theta_t t_a(x_a) + \sum_{k \in K_{\mZ}} (\theta_k/\theta_t)Z_{ak})$:



\begin{equation}
	f_{h} 
	= 
	q_w \frac{\exp(V_{h})}{\displaystyle\sum_{j \in H_{w}}^{} \exp(V_{h})} 
\end{equation}

\noindent where it is clear that the set of optimal path flows $\{f_h\}_{h \in H} $ follows a \logit assignment.  

%

\subsection{Monotonicity of path choice probabilities and traffic flow functions}
\label{appendix:ssec:monotonicity-traffic-count-functions}

\begin{prop}[monotone path choice probabilities]
	\label{prop:monotonocity-softmax}
	Assume the travelers' utility function is a linear weight between a set of attributes and coefficients. If there are as most two paths to travel between every O-D pair, the path choice probabilities are coordinate-wise monotonic functions respect to the utility function coefficients.
\end{prop}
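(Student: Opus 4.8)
The plan is to exploit that, with at most two alternatives per O-D pair, the \logit choice probability collapses to a sigmoid of a single scalar utility difference, and that this difference is affine in each individual coefficient. First I would dispose of the trivial case: an O-D pair $w \in W$ joined by a single path has $p_h \equiv 1$, which is (weakly) monotone in every coefficient. So fix an O-D pair with exactly two paths $H_w = \{h_1, h_2\}$. Using the \logit form (Eq.~\ref{eq:mnl-choice-probabilities}), the choice probability of $h_1$ can be written as
$$
p_{h_1} = \frac{\exp(V_{h_1})}{\exp(V_{h_1})+\exp(V_{h_2})} = \sigma\!\left(V_{h_1}-V_{h_2}\right), \qquad p_{h_2}=1-p_{h_1}=\sigma\!\left(V_{h_2}-V_{h_1}\right),
$$
where $\sigma$ is the sigmoid function and $V_h=\sum_{a\in A}\delta_{ah}\,v_a$ is the path utility. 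The second equality is the point at which the hypothesis $|H_w|\le 2$ is used in an essential way.

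Next I would invoke the linear-in-parameters assumption. Writing $v_a = \theta_t t_a + \sum_{k\in K_{\mZ}}\theta_k Z_{ak}$ and fixing every coefficient except a single $\theta_i$ (the coordinate-wise setting of Definition~\ref{assumption:monotonocity-traffic-count-functions}), the exogeneity of the attributes and of the travel times makes $V_{h_1}-V_{h_2}$ an affine function of $\theta_i$, namely $V_{h_1}-V_{h_2}=\theta_i\,\Delta_i + c_i$, where $\Delta_i=\sum_{a\in A}(\delta_{a h_1}-\delta_{a h_2})\,Z_{ai}$ is the net difference of attribute $i$ accumulated along the two paths and $c_i$ gathers the fixed contributions of the remaining coefficients. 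This is the step where the structural assumptions of the model enter, and it is what rules out any $\theta_i$-dependence hiding inside $c_i$.

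Finally, monotonicity follows by composition: since $\sigma$ is strictly increasing and $\theta_i\mapsto\theta_i\Delta_i+c_i$ is affine, the map $p_{h_1}(\theta_i)=\sigma(\theta_i\Delta_i+c_i)$ is strictly increasing when $\Delta_i>0$, strictly decreasing when $\Delta_i<0$, and constant when $\Delta_i=0$; equivalently $\partial p_{h_1}/\partial\theta_i=\sigma'(\cdot)\,\Delta_i$ retains a constant sign over all $\theta_i$ because $\sigma'>0$. The identical argument applies to $p_{h_2}=1-p_{h_1}$, and since the O-D pair $w$ and the coordinate $i$ were arbitrary, every path choice probability is coordinate-wise monotone.

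The main obstacle is conceptual rather than computational: the reduction to a scalar sigmoid is valid only because $|H_w|\le 2$, and with three or more paths the probability is a softmax of several utilities that is no longer expressible as a monotone function of one scalar, so this line of reasoning genuinely breaks down (that regime requires the separate treatment in Proposition~\ref{prop:monotonicity-softmax-sum}). A secondary point to handle cleanly is the degenerate case $\Delta_i=0$, which yields only weak (constant) monotonicity and coincides exactly with the first identifiability rule of Section~\ref{sssec:identifiability-illustrative-example}: when attribute $i$ does not differ across the two paths, its coefficient is not identifiable and the probability is flat in $\theta_i$. I would note this explicitly so that the statement is understood in the weak-monotone sense that is consistent with Definition~\ref{assumption:monotonocity-traffic-count-functions}.
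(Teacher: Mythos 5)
Your proof is correct, and in the two-path case it performs the same underlying computation as the paper's --- the sign of $\partial p_i/\partial \theta_i$ equals the sign of an attribute difference multiplied by strictly positive factors --- but it is organized around a different key observation, so a comparison is worthwhile. You collapse the two-alternative \logit to a sigmoid of a utility difference that is affine in each coefficient and conclude by composition of a strictly increasing function with an affine map. The paper instead differentiates the general softmax with an \emph{arbitrary} number of alternatives and shows that the derivative's sign equals the sign of $\sum_{j \in H_w} \exp(\theta t_j + \tau_j)(t_i - t_j)$, which is constant in $\theta$ whenever path $i$'s attribute value is the minimum or the maximum within its choice set; the two-path claim then falls out because each of two paths is automatically extremal. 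What your route buys is economy and explicitness: you cleanly handle single-path O-D pairs and the degenerate case $\Delta_i = 0$ (constant probability, unidentifiable coefficient), points the paper glosses over --- indeed the paper's own statement of the min/max sign conditions is garbled. What the paper's route buys is generality that it later consumes: Proposition \ref{prop:monotonicity-softmax-sum} cites precisely the extremal-attribute fact established inside this proof (monotonicity of $p_i$ for dominating/dominated paths in choice sets of any size), which your sigmoid reduction does not establish --- as you yourself note when flagging that the $|H_w| \geq 3$ regime breaks the reduction. So your argument suffices for the proposition as stated, but adopting it in the paper would force Proposition \ref{prop:monotonicity-softmax-sum} to re-derive the softmax derivative sign analysis from scratch.
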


\begin{proof}
	Let's start from a general case where there are an arbitrary number of alternative paths and attributes and thus, where the choice probabilities are obtained from a softmax function. Define $\theta_t \in \sR$ as the utility function coefficient associated to an attribute $t \in K$ and $\tau_i$ as the utility component associated to the remaining set of attributes of the path $i \in H_{rs}$ of an arbitrary O-D pair $w \in W$. Then, the choice probability of path $i$ in the O-D pair $w \in W$ is:
	
	\begin{equation}
		\displaystyle p_i(\theta_t) = \dfrac{\displaystyle \exp(\theta_t t_i+\tau_i)}{\displaystyle \sum_{j \in H_{w}} \exp(\theta_t t_j+\tau_j) }
	\end{equation}
	
	The first derivative of the softmax function respect to the utility function parameter $\theta$ is:
	
	\begin{align*}
		\dfrac{\partial p_i(\theta)}{\partial \theta} 
		&= \dfrac{\displaystyle \exp(\theta_t t_i+\tau_i)t_i\left(\sum_{j \in H_{w}} \exp(\theta_t t_j+\tau_j)\right)-\left(\sum_{j \in H_{w}} \exp(\theta t_j+\tau_j)t_j\right)\exp(\theta_t t_i+\tau_i)}{\displaystyle \left(\sum_{j \in H_{w}} \exp(\theta_t t_j+\tau_j)\right)^2 }\\
		&= \dfrac{\displaystyle \exp(\theta_t t_i+\tau_i)}{\displaystyle \sum_{j \in H_{w}} \exp(\theta t_j+\tau_j)}\dfrac{\displaystyle \sum_{j \in H_{w}} \exp(\theta_t t_j+\tau_j)(t_i-t_j)}{\displaystyle \sum_{j \in H_{w}} \exp(\theta t_j+\tau_j)}\\
		&= p_i\dfrac{\displaystyle \sum_{j \in H_{w}} \exp(\theta_t t_j+\tau_j)(t_i-t_j)}{\displaystyle \sum_{j \in H_{w}} \exp(\theta_t t_j+\tau_j)}
	\end{align*}
	
	The sign of the derivative is given by: 
		%
	\begin{align}
		\label{eq:sign-softmax}
		\sign\left({\dfrac{\partial h_i(\theta_t)}{\partial \theta}}\right)
		= \sign\left(p_i\dfrac{\displaystyle \sum_{j \in H_{w}} \exp(\theta_t t_j+\tau_j)(t_i-t_j)}{\displaystyle \sum_{j \in H_{w}} \exp(\theta t_j+\tau_j)}\right)
		= \sign\left(\displaystyle \sum_{j \in H_{w}} \exp(\theta t_j+\tau_j)(t_i-t_j)\right)
	\end{align}
	
	Given that exponential functions are always positive, Eq. \ref{eq:sign-softmax} will be negative or positive for any $\theta_t \in \sR$ when $t_i \neq \min{\{t_j\}_{j\in H_{w}}}$ or $t_i \neq \max{\{t_j\}_{j\in H_{w}}}$. Therefore, when there are at most two paths connecting the O-D pair, the path choice probability for any path $i \in H_w$ will be necessarily a monotonic function respect to $\theta_t$. The same analysis can be extended to every utility function coefficient $d\in D$ and O-D pair $w \in W$. This proves that the path choice probabilities are coordinate-wise monotonic functions respect to the utility function coefficients when all paths sets have at most two paths.

	%
	%
	
	
	
\end{proof}


\begin{prop}[monotone traffic flow functions under dominant and non dominant paths]
	\label{prop:monotonicity-softmax-sum}
	Assume the travelers' utility function is a linear weight between a set of attributes and coefficients. Let's define dominated and dominating paths as those paths where an attribute of the utility function reaches its maximum or minimum value, respectively, within each O-D pair. If the set of paths traversing a link are all dominating or dominated paths, the traffic flow function at that link is coordinate-wise monotonic respect to the utility function coefficients
\end{prop}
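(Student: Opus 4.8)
The plan is to reduce the monotonicity of the link flow function to the already-characterized monotonicity of the individual path choice probabilities, so that the bulk of the analytic work can be borrowed from the proof of Proposition \ref{prop:monotonocity-softmax}. First I would write the traffic flow function at a link $a$ in terms of path choice probabilities and demand,
\begin{equation*}
	x_a(\theta_t) = \sum_{w \in W} q_w \sum_{h \in H_w} \delta_{ah}\, p_h(\theta_t),
\end{equation*}
where, exactly as in Proposition \ref{prop:monotonocity-softmax}, I fix every coefficient except the one $\theta_t$ attached to the attribute $t$ under consideration and absorb the remaining attributes into the constants $\tau_h$. Differentiating term by term gives $\partial x_a / \partial \theta_t = \sum_{w \in W} q_w \sum_{h \in H_w} \delta_{ah}\, \partial p_h / \partial \theta_t$, so the sign of the link-flow derivative is governed entirely by the signs of the path-probability derivatives of those paths that actually traverse link $a$.

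Next I would invoke the sign formula established in the proof of Proposition \ref{prop:monotonocity-softmax} (Eq. \ref{eq:sign-softmax}),
\begin{equation*}
	\sign\!\left(\frac{\partial p_i(\theta_t)}{\partial \theta_t}\right) = \sign\!\left(\sum_{j \in H_w} \exp(\theta_t t_j + \tau_j)(t_i - t_j)\right).
\end{equation*}
The key observation is that this sign is constant in $\theta_t$ precisely for the extremal paths, and—crucially—independent of how many paths the O-D pair contains; this is exactly the point that relaxes the at-most-two-paths restriction of Proposition \ref{prop:monotonocity-softmax}. For a dominating path $i$ (where $t_i = \min_{j \in H_w} t_j$) every difference $t_i - t_j \le 0$, so the bracketed sum is non-positive and $\partial p_i / \partial \theta_t \le 0$ for all $\theta_t$; symmetrically, for a dominated path $i$ (where $t_i = \max_{j \in H_w} t_j$) every $t_i - t_j \ge 0$, giving $\partial p_i / \partial \theta_t \ge 0$.

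Finally I would combine these facts. By hypothesis the paths through link $a$ are either all dominating, or all dominated, within their respective O-D pairs; since $q_w \ge 0$ and $\delta_{ah} \in \{0,1\}$, the weighted sum $\partial x_a / \partial \theta_t$ inherits a single sign over the entire domain—non-positive in the dominating case, non-negative in the dominated case—so $x_a$ is monotonic in $\theta_t$. Repeating the argument coordinate by coordinate over each $d \in D$, with the dominating/dominated classification taken with respect to the corresponding attribute, yields coordinate-wise monotonicity. The main obstacle here is bookkeeping rather than analysis: link $a$ may be shared by paths drawn from several distinct O-D pairs, while the dominating/dominated labels are defined per O-D pair, so some care is needed to confirm that the hypothesis forces a uniform sign across the whole collection of traversing paths. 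I would also treat the degenerate case in which all values of the attribute within an O-D pair coincide, which makes the relevant derivative vanish identically and is still consistent with (weak) monotonicity.
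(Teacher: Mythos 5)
Your proposal is correct and follows essentially the same route as the paper's own proof: write $x_a$ as a nonnegative, demand-weighted sum of path choice probabilities, invoke the sign formula (Eq.~\ref{eq:sign-softmax}) from the proof of Proposition~\ref{prop:monotonocity-softmax} to see that extremal (dominating/dominated) paths have probability derivatives of a fixed sign for all values of the coefficient, and conclude that the weighted sum inherits a single sign, coordinate by coordinate. Your additional care---making explicit that the sign formula is independent of the number of paths per O-D pair (thus relaxing the two-path restriction of the cited proposition), the cross-O-D bookkeeping, and the degenerate equal-attribute case---only spells out what the paper's proof leaves implicit.
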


\begin{proof}
	
	Consider an arbitrary attribute $t \in K$ and define $\theta_t$ as the coefficient weighting that attribute in the travelers' utility function. Denote $\tau_i$ as the utility component associated to the remaining set of attributes in path $i \in H$. The traffic flow function $x_a(\theta): \sR^{|D|} \to \sR$ associated to any link $a \in A$ is given by: 
	
	\begin{equation}
		\label{eq:traffic-count-function-monotonicity-traffic-counts-case1}
		x_a(\theta) 
		= \sum_{i \in H} f_i \delta_{ai}
		= \sum_{i \in H}  \delta^a_{ai} \sum_{w \in W} q_w  p_i(\theta_t) \delta^w_{wi}
		= \displaystyle   \sum_{i \in H}  \delta^a_{ai}  \sum_{w \in W} \delta^w_{wi}\  q_w \dfrac{\exp(\theta_t t_i+\tau_i)}{\displaystyle \sum_{j \in H_{w}} \exp(\theta_t t_j+\tau_j) }
	\end{equation}
	
	where $\delta^a_{ai}$ takes the value 1 if path $i$ traverses link $a$, and 0 otherwise, and $\delta^w_{wi}$ takes the value 1 if path $i$ belong to O-D pair $w$, and 0 otherwise. If the set of paths traversing the link are all dominating or dominated, then $\forall w \in W, \ i \in H_w$ , $t_i = \min\{t_j\}_{j \in H_{w}}$ or $t_i = \max\{t_j\}_{j \in H_{w}}$, respectively. From Proposition \ref{prop:monotonocity-softmax}, we observed that the softmax function associated to each path choice probability $p_i(\theta_t)$ will be either a monotonically decreasing or a increasing function when $t_i = \min\{t_j\}_{j \in H_{rs}}$ or $t_i = \max\{t_j\}_{j \in H_{rs}}$, respectively. By assumption, $x_a(\theta)$ is a positive weighted sum of choice probabilities associated to dominating or dominated paths, respectively,  that is a positive weighted sum of monotonically increasing or decreasing functions. Therefore, $x_a(\theta)$  it is either a monotonically increasing or decreasing function respect to $\theta_t$ and the same analysis can be applied to every coordinate $d \in D$. Finally, $x_a(\theta)$ is a coordinate-wise monotonic function, which completes the proof. 
	
\end{proof}


	%
	%
	%
	%
	%
	%
	%
	%

\begin{prop}[monotone traffic flow functions under binary attributes]
	\label{prop:monotonicity-sigmoid-sum}
	Suppose there are only two paths connecting every O-D pair and that the travelers' utility function only depends on a binary attribute that can take the values 0 or 1. Then, the traffic flow function is monotonic respect to the coefficient weighting that attribute.
\end{prop}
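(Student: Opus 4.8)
The plan is to leverage Proposition \ref{prop:monotonocity-softmax} together with the very special structure that a binary attribute imposes on the path choice probabilities. First I would specialize the general softmax to the two-path, single-binary-attribute case. For an O-D pair $w$ with its two paths $i$ and $j$ carrying attribute values $z_i, z_j \in \{0,1\}$ and utility depending only on this attribute, the choice probability reduces to $p_i(\theta) = \sigma(\theta(z_i - z_j))$, where $\theta$ is the coefficient weighting the attribute and $\sigma$ is the logistic function. Because $z_i - z_j \in \{-1, 0, 1\}$, there are only three possible forms: $p_i(\theta) = \sigma(\theta)$ when path $i$ dominates ($z_i = 1, z_j = 0$), $p_i(\theta) = 1 - \sigma(\theta)$ when path $i$ is dominated ($z_i = 0, z_j = 1$), and $p_i(\theta) = 1/2$ when the two paths tie ($z_i = z_j$). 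The essential observation is that every non-constant choice probability is, up to the sign of its argument, the same function $\sigma(\theta)$.

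Next I would substitute this into the traffic flow function (Eq. \ref{eq:traffic-count-function-monotonicity-traffic-counts-case1}) and partition the O-D pairs contributing to link $a$ according to how their paths traverse $a$. Since each O-D pair has exactly two paths, for a fixed link $a$ each pair falls into one of a few groups: (i) both paths traverse $a$, contributing $q_w(p_i + p_j) = q_w$, a constant; (ii) only the dominating path traverses $a$, contributing $q_w\,\sigma(\theta)$; (iii) only the dominated path traverses $a$, contributing $q_w(1 - \sigma(\theta))$; (iv) only a path from a tied pair traverses $a$, contributing $q_w/2$, or neither path traverses $a$. Collecting terms, the whole function collapses to $x_a(\theta) = A + B\,\sigma(\theta)$, where $A$ absorbs every constant contribution and $B = \sum_{w \in W_1} q_w - \sum_{w \in W_2} q_w$ is a fixed real number, namely the difference between the demand routed through $a$ on dominating paths and on dominated paths.

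Finally I would differentiate: $x_a'(\theta) = B\,\sigma'(\theta) = B\,\sigma(\theta)(1 - \sigma(\theta))$. Since $\sigma'(\theta) > 0$ for every $\theta \in \sR$, the sign of $x_a'(\theta)$ equals the sign of $B$ for all $\theta$, so the derivative never changes sign and $x_a$ is monotonic (increasing if $B > 0$, decreasing if $B < 0$, constant if $B = 0$). The same argument applies coordinate-wise to the coefficient of the attribute, which completes the proof.

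The step I expect to be the crux — and the reason this proposition strengthens Proposition \ref{prop:monotonicity-softmax-sum} — is the collapse in the second step. In the general softmax setting, a sum of monotone path probabilities need not be monotone, because the individual sigmoids have distinct inflection points and slopes that need not align; the ``all dominating or all dominated'' hypothesis of Proposition \ref{prop:monotonicity-softmax-sum} was precisely what forced that alignment. Here binariness supplies the alignment automatically: every non-trivial probability is a scalar multiple of one and the same function $\sigma(\theta)$ (or its reflection $1-\sigma(\theta)$), so the aggregation yields a single sigmoidal term whose derivative has a sign fixed by the demand-weighted difference $B$, regardless of the mixture of dominating and dominated paths passing through the link. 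I would take care to verify that the identity $p_i + p_j = 1$ is used correctly for the ``both paths'' group and that tied O-D pairs genuinely contribute $\theta$-independent constants, since these are exactly the places where an overlooked dependence on $\theta$ would break the collapse.
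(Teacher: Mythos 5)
Your proposal is correct and follows essentially the same route as the paper's proof: reduce each two-path choice probability to $\sigma(\pm\theta)$ or a constant, collapse the link flow to a constant plus a demand-weighted difference times $\sigma(\theta)$, and observe that the derivative $B\,\sigma(\theta)(1-\sigma(\theta))$ has a $\theta$-independent sign. Your explicit partition of O-D pairs (including the ``both paths traverse the link'' case) is handled implicitly in the paper by cancellation inside the demand difference, but the argument is the same.
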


\begin{proof}	
	
	The traffic flow function $x_a(\theta): \sR^{|D|} \to \sR$ associated to any link $a \in A$ can be expressed as:
	
	\begin{align*}
		x_a(\theta) 
		= \sum_{i \in H}  \delta^a_{ai} \sum_{w \in W} q_w  p_i(\theta_t) \delta^w_{wi}
		&= \sum_{i \in H}  \delta^a_{ai} \sum_{w \in W} \delta^w_{wi}\ q_w \dfrac{\exp(\theta \mathbb{I}(z_i = 1))}{\exp(\theta \mathbb{I}(z_i = 1)) + \exp(\theta \mathbb{I}(z_{-i} = 1)}\\
		&= \sum_{i \in H}  \delta^a_{ai} \sum_{w \in W} \delta^w_{wi}\ q_w  \dfrac{1}{1 + \exp(\theta \mathbb{I}(z_{-i} = 1)-(\theta \mathbb{I}(z_{i} = 1))) }
	\end{align*}
	
	where $z_i \in \{0,1\}, \forall i \in H$, 	$\delta^a_{ai}$ takes the value 1 if path $i$ traverses link $a \in A$, and 0 otherwise, and $\delta^w_{wi}$ takes the value 1 if path $i$ belong to O-D pair $w \in W$, and 0 otherwise. Given the existence of two alternatives per O-D pair, we can express the $x_a(\theta)$ in terms of the sigmoid function $\sigma(\cdot)$:
	
	\begin{align*}
		x_a(\theta) 
		&= \sum_{i \in H}  \delta^a_{ai} \sum_{w \in W} \delta^w_{wi}\ q_w   \sigma\left(\theta (\mathbb{I}(z_{-i} = 1)-\mathbb{I}(z_{i} = 1))\right) \\
		&= \sum_{i \in H}  \delta^a_{ai} \sum_{w \in W} \delta^w_{wi}\ q_w  \mathbb{I}(z_i = z_{-i})\sigma(0)
		+ \sigma\left(-\theta\right) \sum_{i \in H}  \delta^a_{ai} \sum_{w \in W} \delta^w_{wi}\ q_w \mathbb{I}(z_i > z_{-i})
		+ \sigma\left(\theta\right) \sum_{i \in H}  \delta^a_{ai} \sum_{w \in W} \delta^w_{wi}\ q_w \mathbb{I}(z_i < z_j)\\
		&= \sum_{i \in H}  \delta^a_{ai} \sum_{w \in W} \delta^w_{wi}\ q_w  \mathbb{I}(z_i = z_{-i})\sigma(0)+ \left(1-\sigma(\theta)\right)\sum_{i \in H}  \delta^a_{ai} \sum_{w \in W} \delta^w_{wi}\ q_w  \mathbb{I}(z_i > z_{-i}) + \sigma\left(\theta\right)\sum_{i \in H}  \delta^a_{ai} \sum_{w \in W} \delta^w_{wi}\ q_w \mathbb{I}(z_i < z_j)\\
		&= \sum_{i \in H}  \delta^a_{ai} \sum_{w \in W} \delta^w_{wi}\ q_w  \mathbb{I}(z_i = z_{-i})\sigma(0)+ \sum_{i \in H}  \delta^a_{ai} \sum_{w \in W} \delta^w_{wi}\ q_w  \mathbb{I}(z_i > z_{-i}) 
		\\ &+ \sigma(\theta)\left(\sum_{i \in H}  \delta^a_{ai} \sum_{w \in W} \delta^w_{wi}\ q_w  \mathbb{I}(z_i < z_j)-\sum_{i \in H}  \delta^a_{ai} \sum_{w \in W} \delta^w_{wi}\ q_w \mathbb{I}(z_i > z_{-i}) \right)
	\end{align*}
	
	To analyze the monotonicity of $x_a(\theta)$ is convenient to compute the first derivative of $x_a(\theta)$ is: 
	
	\begin{equation}
		\label{eq:first-derivative-monotonicity-traffic-count-functions-case1}
		\dfrac{\partial x_a(\theta)}{\partial \theta} = \left(\sum_{i \in H}  \delta^a_{ai} \sum_{w \in W} \delta^w_{wi}\ q_w  \mathbb{I}(z_i < z_{-i})-\sum_{i \in H}  \delta^a_{ai} \sum_{w \in W} \delta^w_{wi}\ q_w \mathbb{I}(z_i > z_{-i}) \right)\sigma(\theta)(1-\sigma(\theta))
	\end{equation}
	
	and to then analyze its sign:
	
	\begin{equation}
		\label{eq:sum-demands-sign-monotonicity-traffic-count-functions-case1}
		\sign\left(\dfrac{\partial x_a(\theta)}{\partial \theta}\right) = \left(\sum_{i \in H}  \delta^a_{ai} \sum_{w \in W} \delta^w_{wi}\ q_w  \mathbb{I}(z_i < z_{-i})-\sum_{i \in H}  \delta^a_{ai} \sum_{w \in W} \delta^w_{wi}\ q_w \mathbb{I}(z_i > z_{-i}) \right)
	\end{equation}

	which does not depend on $\theta$ because $\sigma(\theta)(1-\sigma(\theta)) > 0, \forall \theta \in \sR$ in Eq. \ref{eq:first-derivative-monotonicity-traffic-count-functions-case1}. Then, it is clear that the function $x(\theta)$ is monotonic respect to $\theta$ and this completes the proof. 

\end{proof}

\begin{remark}
	
	Note that the left and right terms in Eq. \ref{eq:sum-demands-sign-monotonicity-traffic-count-functions-case1} are the sums of demand associated to dominated and dominating paths, respectively. Therefore, the traffic flow function will be monotonically increasing or decreasing if the sum associated to the dominating paths is greater or lower, respectively. 
	
\end{remark}



	%


\subsection{Coordinate-wise properties of the objective function}
\label{appendix:ssec:coordinate-wise-properties}

\begin{prop}[Coordinate-wise pseudo-convexity of objective function]
	\label{prop:pseudoconvexity-uncongested-network}
	The objective function of the \LUE problem under an uncongested network is coordinate-wise pseudo-convex if the traffic flow functions are coordinate-wise monotonic respect to each utility function coefficient. 
\end{prop}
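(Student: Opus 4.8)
The plan is to reduce the multivariate statement to a one-dimensional one and then exploit the sum-of-squares structure of $\ell$. Since both coordinate-wise monotonicity (Definition \ref{assumption:monotonocity-traffic-count-functions}) and coordinate-wise pseudo-convexity (Definition \ref{def:coordinate-wise-pseudo-convexity}) only compare points $\vtheta^1,\vtheta^2$ that agree in every coordinate except one, say the $i$-th, it suffices to work along a single axis. Fixing all coordinates but the $i$-th and writing $\phi(s) := \ell\big(\vtheta^1+(s-\theta_i^1)\ve_i\big)$, the coordinate-wise pseudo-convexity of $\ell$ is exactly ordinary pseudo-convexity of each such $\phi:\sR\to\sR$, while coordinate-wise monotonicity says each restricted link-flow map $g_j(s):=x_j\big(\vtheta^1+(s-\theta_i^1)\ve_i\big)$ is monotone. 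First I would therefore isolate the lemma: if the $g_j$ are monotone for all $j\in N$, then $\phi(s)=\sum_{j\in N}(g_j(s)-\bar{x}_j)^2$ is pseudo-convex.

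For a single observation the claim is immediate: $(g_j(s)-\bar{x}_j)^2$ is unimodal because $g_j$ is monotone, and its only stationary point (under strict monotonicity) is where $g_j(s)=\bar{x}_j$, which is the global minimum. The difficulty, and the crux of the whole argument, is that a sum of pseudo-convex functions need not be pseudo-convex, so one cannot simply add the single-observation results. The key observation that rescues the sum is a sign-alignment argument anchored at a common minimizer $s^\star$ where $\phi'(s^\star)=0$ (existence of such a vanishing derivative being trivial in the noiseless case and supplied in the noisy case by Proposition \ref{prop:existence-local-optima-uncongested-network-with-error}). I would show that for every $j$ the summand derivative $\big(g_j(s)-g_j(s^\star)\big)\,g_j'(s)$ carries the sign of $(s-s^\star)$: if $g_j$ is increasing, both factors are nonnegative for $s>s^\star$ and nonpositive for $s<s^\star$; if $g_j$ is decreasing, both factors flip simultaneously, so the product again tracks the sign of $s-s^\star$.

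Summing these aligned terms yields $\phi'(s)(s-s^\star)\ge 0$ for all $s$, with strict inequality away from $s^\star$ whenever at least one $g_j$ is strictly monotone with its target genuinely matched. Thus $\phi$ is non-increasing on $(-\infty,s^\star]$ and non-decreasing on $[s^\star,\infty)$: a single sign change of $\phi'$, i.e. unimodality with a unique stationary value. From this the defining implication of Definition \ref{def:coordinate-wise-pseudo-convexity} follows, since $\phi(s^2)<\phi(s^1)$ forces $s^2$ onto the descending side relative to $s^1$ and hence $\phi'(s^1)(s^2-s^1)<0$. Applying this along every coordinate $i\in D$ gives coordinate-wise pseudo-convexity of $\ell$, and the result dovetails with Proposition \ref{prop:global-optimality-pseudo-convexity-uncongested-network}, which upgrades any vanishing gradient to a global minimizer.

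The main obstacle I anticipate is precisely the step replacing each target $\bar{x}_j$ by the common anchor $g_j(s^\star)$. In the perfectly realizable case $\bar{x}_j=g_j(s^\star)$ holds simultaneously for all $j$ and the sign-alignment is clean; under measurement error the individual term-minimizers need no longer coincide, so the honest argument must invoke a condition guaranteeing a shared stationary point (Proposition \ref{prop:existence-local-optima-uncongested-network-with-error}) or a perturbation estimate showing the error is too small to introduce a second sign change of $\phi'$. I would emphasize that monotonicity of the $g_j$ is used twice — once to orient each factor and once to rule out spurious stationary points — and that abandoning it, or the shared-minimizer structure, admits genuine counterexamples with two separated local minima, consistent with the paper's observation that monotonicity is sufficient but not necessary for pseudo-convexity.
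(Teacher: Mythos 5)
Your reduction to one coordinate and your noiseless argument are sound, but the proof as proposed establishes strictly less than the proposition asserts, and the hole is exactly where you flagged it. The sign-alignment step needs an anchor $s^\star$ at which \emph{every} residual $g_j(s^\star)-\bar{x}_j$ vanishes simultaneously, i.e.\ a perfect fit; Proposition~\ref{prop:existence-local-optima-uncongested-network-with-error} only supplies a zero of the aggregate derivative $\phi'$, which is a zero of a sum of mixed-sign terms, not a common zero of the summands. Since Proposition~\ref{prop:pseudoconvexity-uncongested-network} is stated (and later invoked, precisely to handle measurement error) with no realizability hypothesis, your argument covers only the noise-free case, and neither of your proposed repairs can close the gap: a ``shared stationary point'' of all summands is perfect fit by another name, and no perturbation estimate exists in general. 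Concretely, take two observed links whose restricted flow maps are $g_1(s)=\sigma(s)$ and $g_2(s)=30\,\sigma(s-10)$ (both monotone increasing, realizable as logit flows of two two-path O-D pairs with demands $1$ and $30$, the offset coming from the attributes held fixed) with counts $\bar{x}_1=0$, $\bar{x}_2=30$: the objective $\sigma(s)^2+900\,(\sigma(s-10)-1)^2$ decreases, rises, decreases and rises again, so it has two separated local minima and is not even quasi-convex. Your closing remark that abandoning the shared-minimizer structure ``admits genuine counterexamples'' is therefore not a caveat about your own write-up; it shows the statement itself fails once residuals of opposite sign meet staggered transition regions, which is exactly what noise produces. (A second, smaller point: $\phi'(s)(s-s^\star)\ge 0$ only gives unimodality; to obtain the implication in Definition~\ref{def:coordinate-wise-pseudo-convexity} you must also exclude $\phi'$ vanishing away from $s^\star$, which requires strict monotonicity of at least one $g_j$.)

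It is worth seeing how this relates to the paper's own proof, which takes a genuinely different route from yours: no anchor point, but a per-observation case analysis at an arbitrary pair $(\vtheta^1_d,\vtheta^2_d)$, splitting links into increasing and decreasing families $J^{+},J^{-}$ and into cases by the sign of the residual against the sign of $\theta^2_d-\theta^1_d$. That argument correctly shows that if every summand $(x_j(\vtheta^1_d)-\bar{x}_j)(\theta^2_d-\theta^1_d)\,\partial x_j/\partial\theta_d$ is non-negative, then every squared residual, and hence $\ell$, does not decrease. But the definition requires the conclusion under the weaker premise that the \emph{sum} of these terms is non-negative; Eq.~\ref{eq:pseudo-convexity-proof1} silently replaces that scalar premise by a componentwise one ($\geq\vzero$), which is precisely where the mixed-sign case --- your obstacle --- is assumed away. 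So your attempt and the paper's proof break at the same point, but yours is the more honest of the two: you isolated the missing structural ingredient (a common minimizer of all summands), proved the result cleanly when it is present, and correctly intuited that without it the claim collapses.
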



\begin{proof}	
	%
	
	Let's define $\vtheta^{1}_d, \vtheta^{2}_d \in \mathbb{R}^{|D|}$ as vectors with all coordinates set to 0, except for the coordinate $t \in D$. Let's be $\theta^{1}_d,\theta^{2}_d \in \mathbb{R}$ the values of the non-zero coordinates in $\vtheta^{1}_d, \vtheta^{2}_d \in \mathbb{R}$. To prove coordinate-wise pseudo-convexity of the objective function $\ell: \sR^{|D|} \to \sR$, it suffices to show that the following holds:
	\begin{align}
		\label{eq:pseudo-convexity-proof1}
		2\left(\vx(\vtheta^{1}_d)-\bar{\vx} \right)(\theta^{2}_d-\theta^{1}_d) \dfrac{\partial \vx}{\partial \theta_d}\Bigg|_{\vtheta = \vtheta^1_d}
		\geq \vzero
		\implies 
		\|\vx(\vtheta^{2}_d)-\bar{\vx}\|^2_2   \geq \|\vx(\vtheta^{1}_d)-\bar{\vx}\|^2_2
	\end{align}
	
	
	By assumption, the traffic count (response) functions are coordinate-wise monotonic. Let's start considering the set $J^{+}$ of functions that are monotonically increasing respect to $\theta_d \in \mathbb{R}$. Since $\dfrac{\partial \vx_j}{\partial \theta_d}\Big|_{\vtheta = \vtheta^1_d}> 0, \forall j \in J^{+}$, the LHS in Eq. \ref{eq:pseudo-convexity-proof1} becomes non-negative in the following two cases: \\
	
	
	$\bullet$ Case (i): $\vx_j(\vtheta^{1}_d)-\bar{\vx}_j \geq 0 \land \theta_d^{2} \geq \theta_d^{1}$. Because the increasing monotonicity of the traffic flow functions $j \in J^{+}$ respect to $\theta_d$ and given that $\theta_d^{2} \geq \theta_d^{1}$:
	\begin{align*}
		\vx_j(\vtheta^2_d) &\geq \vx_j(\vtheta^1_d)\\
		\vx_j(\vtheta^2_d) - \bar{\vx}_j &\geq  \vx_j(\vtheta^1_d) - \bar{\vx}_j
	\end{align*}
	
	Since $\vx(\theta^{1}_d)-\bar{\vx} \geq 0$:
	\begin{align*}
		\vx_j(\vtheta^2_d) - \bar{\vx}_j &\geq  \vx_j(\vtheta^1_d) - \bar{\vx}_j \geq 0 \\
		(\vx_j(\vtheta^2_d) - \bar{\vx}_j)^2 &\geq (\vx_j(\vtheta^1_d) - \bar{\vx}_j)^2
	\end{align*}
	
	$\bullet$ Case (ii): $\vx_j(\vtheta^{1}_d)-\bar{\vx}_j \leq 0 \land  \theta_d^{2} \leq \theta_d^{1}$. Because the increasing monotonicity of the traffic flow functions $j \in J^{+}$ respect to $\theta_d$ and given that $\theta_d^{2} \leq \theta_d^{1}$:
	\begin{align*}
		\vx_j(\vtheta^2_d) &\leq \vx_j(\theta^1_d)\\
		\vx_j(\vtheta^2_d) - \bar{\vx}_j &\leq  \vx_j(\vtheta^1_d) - \bar{\vx}_j
	\end{align*}
	
	Since $\vx_j(\theta^{1}_d)-\bar{\vx}_j \leq 0$:
	\begin{align*}
		\vx_j(\vtheta^2_d) - \bar{\vx}_j &\leq  \vx_j(\vtheta^1_d) - \bar{\vx}_j \leq 0 \\
		(\vx_j(\vtheta^2_d) - \bar{\vx}_j)^2 &\geq (\vx_j(\vtheta^1_d) - \bar{\vx}_j)^2
	\end{align*}
	
	Now consider the set $J^{-}$ of traffic flow functions that are monotonically decreasing respect to $\theta_d \in \mathbb{R}$. Since $\dfrac{\partial \vx_j}{\partial \theta_d}\Big|_{\vtheta = \vtheta^1_d}< 0, \forall j \in J^{-}$, the LHS in Eq. \ref{eq:pseudo-convexity-proof1} becomes non negative in the following two cases: \\

	
	$\bullet$ Case (iii): $\vx_j(\vtheta^{1}_d)-\bar{\vx}_j \geq 0 \land \theta_d^{2} \leq \theta_d^{1}$. Because the decreasing monotonicity of the traffic flow functions $j \in J^{-}$ respect to $\theta_d$ and given that $\theta_d^{2} \leq \theta_d^{1}$:
	
	\begin{align*}
		\vx_j(\vtheta^2_d) &\geq \vx_j(\vtheta^1_d)\\
		\vx_j(\vtheta^2_d) - \bar{\vx}_j &\geq  \vx_j(\vtheta^1_d) - \bar{\vx}_j
	\end{align*}
	
	Since $\vx(\vtheta^{1}_d)-\bar{\vx} \geq 0$:
	\begin{align*}
		\vx_j(\vtheta^2_d) - \bar{\vx}_j &\geq  \vx_j(\vtheta^1_d) - \bar{\vx}_j \geq 0 \\
		(\vx_j(\vtheta^2_d) - \bar{\vx}_j)^2 &\geq (\vx_j(\vtheta^1_d) - \bar{\vx}_j)^2
	\end{align*}
	
	$\bullet$ Case (iv): $\vx_j(\vtheta^{1}_d)-\bar{\vx}_j \leq 0 \land  \theta_d^{2} \geq \theta_d^{1}$. Because the decreasing monotonicity of the traffic flow functions $j \in J^{-}$ respect to $\theta_d$ and given that $\theta_d^{2} \geq \theta_d^{1}$:
	
	\begin{align*}
		\vx_j(\vtheta^2_d) &\leq \vx_j(\vtheta^1_d)\\
		\vx_j(\vtheta^2_d) - \bar{\vx}_j &\leq  \vx_j(\vtheta^1_d) - \bar{\vx}_j
	\end{align*}
	
	Since $\vx_j(\vtheta^{1}_d)-\bar{\vx}_j \leq 0$:
	\begin{align*}
		\vx_j(\vtheta^2_d) - \bar{\vx}_j &\leq  \vx_j(\vtheta^1_d) - \bar{\vx}_j \leq 0 \\
		(\vx_j(\vtheta^2_d) - \bar{\vx}_j)^2 &\geq (\vx_j(\vtheta^1_d) - \bar{\vx}_j)^2
	\end{align*}
	
	Thus, if the traffic flow functions are monotonic, the following condition holds $\forall j \in J^{-} \cup J^{+}$: 
	\begin{align*}
		(\vx_j(\vtheta^2_d) - \bar{\vx}_j)^2 &\geq (\vx_j(\vtheta^1_d) - \bar{\vx}_j)^2
		\implies 
		\sum_{j \in J^{-} \cup \ J^{+}}(\vx_j(\vtheta^2_d) - \bar{\vx}_j)^2 \geq \sum_{j \in J^{-}\cup \ J^{+}} (\vx_j(\vtheta^1_d) - \bar{\vx}_j)^2
	\end{align*}
	
	Putting altogether: 
	
	$$
	2\left(\vx(\vtheta^{1}_d)-\bar{\vx} \right)(\theta^{2}_d-\theta^{1}_d) \dfrac{\partial \vx}{\partial \theta_d}\Bigg|_{\theta = \theta^1_d}
	\geq \vzero
	\implies 
	\|\vx(\vtheta^{2}_d)-\bar{\vx}\|^2_2   \geq \|\vx(\vtheta^{1}_d)-\bar{\vx}\|^2_2
	$$
	
	which proves the pseudo-convexity of the objective function of the \LUE problem respect to respect to $\theta_d$. The analysis conducted for $\theta_d$ can be applied to every coordinate $d \in D$ of $\vtheta \in \sR^{|D|}$, which proves the coordinate-wise pseudo-convexity of the objective function $\ell$ respect to an arbitrary utility function coefficient $\vtheta$.

\end{proof}

\begin{assumption}[Range of response functions]
	\label{assumption:traffic-count-functions-reach-link-counts}
	The range of each response function include the value of the traffic count measurement
\end{assumption}

\begin{remark}
	Assumption \ref{assumption:traffic-count-functions-reach-link-counts} may be tested by checking that there exist values of the vector of utility function coefficients where each response function matches the value of the corresponding traffic count measurement. Note that this condition is analyzed for each traffic flow function independently, and thus, the vector of utility function coefficients does not need to be same for all traffic flow functions. An alternative way to test this assumption would be to approximate the ranges of the traffic flow function with the bounds derived from Proposition \ref{prop:upper-lower-bounds-uncongested-network}, Section \ref{sssec:non-convexity} and then checking if the traffic count measurements fall within those ranges. Furthermore, if the traffic flow functions are dependent on a single attribute, their range could be found by evaluating them at the extreme cases where $\theta \to \infty^-$ and $\theta \to \infty^+$. 
\end{remark}


\begin{prop}[Coordinate-wise vanishing gradient of objective function]
	\label{prop:existence-local-optima-uncongested-network-with-error}
	Suppose that each traffic flow function $x_n(\vtheta), \forall n \in N$ is coordinate-wise monotonic and that the range of each includes the value of the link count measurement $\bar{x}_i,\forall i \in N$ (Assumption \ref{assumption:traffic-count-functions-reach-link-counts}). Then, the gradient of the \LUE objective function in an uncongested network vanishes coordinate-wise at least once.
\end{prop}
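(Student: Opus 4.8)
The plan is to reduce the claim to a one-dimensional statement along each coordinate and then apply the intermediate value theorem to the corresponding partial derivative. Writing the objective as $\ell(\vtheta) = \sum_{i \in N}(x_i(\vtheta) - \bar{x}_i)^2$, its partial derivative with respect to an arbitrary coordinate $d \in D$ is
$$
\frac{\partial \ell(\vtheta)}{\partial \theta_d} = 2\sum_{i \in N} \bigl(x_i(\vtheta) - \bar{x}_i\bigr)\frac{\partial x_i(\vtheta)}{\partial \theta_d}.
$$
First I would fix all coordinates except $\theta_d$, so that each $x_i$ becomes a continuous, monotonic one-dimensional function of $\theta_d$, and study the sign of the expression above as $\theta_d$ ranges over $\mathbb{R}$. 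Note that pseudo-convexity is not needed for this argument; it enters only afterwards, through Proposition \ref{prop:global-optimality-pseudo-convexity-uncongested-network}, to upgrade the vanishing point to a global minimizer.

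By Assumption \ref{assumption:traffic-count-functions-reach-link-counts} the range of each $x_i$ contains $\bar{x}_i$; combined with continuity and coordinate-wise monotonicity, the intermediate value theorem applied to $x_i$ yields a point $\theta_{d,i}^{\star}$ with $x_i(\theta_{d,i}^{\star}) = \bar{x}_i$. The crux of the argument is the sign analysis of each summand $T_i(\theta_d) := (x_i - \bar{x}_i)\,\partial x_i/\partial \theta_d$. If $x_i$ is increasing then $\partial x_i/\partial\theta_d \ge 0$ and $x_i - \bar{x}_i$ passes from non-positive to non-negative across $\theta_{d,i}^{\star}$; if $x_i$ is decreasing both factors flip sign, so the product exhibits the same pattern. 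In either case $T_i(\theta_d) \le 0$ for $\theta_d \le \theta_{d,i}^{\star}$ and $T_i(\theta_d) \ge 0$ for $\theta_d \ge \theta_{d,i}^{\star}$, uniformly in the direction of monotonicity.

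Summing over $i$, I would then evaluate the partial derivative at $\theta_d^{-} := \min_i \theta_{d,i}^{\star}$ and at $\theta_d^{+} := \max_i \theta_{d,i}^{\star}$: every term is non-positive at the former and non-negative at the latter, hence $\partial \ell/\partial\theta_d \le 0$ at $\theta_d^{-}$ and $\ge 0$ at $\theta_d^{+}$. Since the traffic flow functions are smooth, the partial derivative is continuous in $\theta_d$, so the intermediate value theorem furnishes a point $\theta_d^{\star} \in [\theta_d^{-}, \theta_d^{+}]$ at which it vanishes. Repeating the construction for every $d \in D$ proves the coordinate-wise statement, which collapses to the full unidimensional result when $|D| = 1$, precisely the setting highlighted in the preceding remark.

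The main obstacle I anticipate is the bookkeeping required to keep the sign analysis uniform across the two possible monotonicity directions and across observations whose roots $\theta_{d,i}^{\star}$ generally differ; this is exactly what forces the use of the extreme roots $\theta_d^{\pm}$ together with the intermediate value theorem, rather than any pointwise cancellation. A degenerate case also deserves a line: when some $x_i$ is constant in $\theta_d$, its range reduces to a single point, so the range hypothesis forces $x_i \equiv \bar{x}_i$ and the corresponding term vanishes identically, leaving the aggregate sign argument intact.
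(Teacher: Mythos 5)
Your proposal is correct and takes essentially the same route as the paper's own proof: fix all coordinates but $\theta_d$, use the range assumption to obtain a root $\theta_{d,i}^{\star}$ for each observation, note that at the extreme roots $\min_i \theta_{d,i}^{\star}$ and $\max_i \theta_{d,i}^{\star}$ every summand of $\partial \ell/\partial \theta_d$ shares a sign, and invoke the intermediate value theorem on the continuous partial derivative, repeating per coordinate. The only differences are cosmetic: your uniform sign-pattern argument for each summand $T_i$ (and your handling of the degenerate constant-$x_i$ case) is in fact tidier than the paper's explicit split into increasing and decreasing sets $N^{+}$, $N^{-}$, whose intermediate sign claims contain slips that happen to cancel in the final conclusion.
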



\begin{proof}
	%
	%
	%
	Let's be $\theta_d \in \mathbb{R}$ the value of an arbitrary coordinate $t$ of the vector $\vtheta \in \mathbb{R}^{|D|}$ of travelers' utility function coefficients. Assume that the values of all coefficients except for $\theta_d$ are kept constant and that the minimization of the objective function $\ell: \mathbb{R}^{|D|} \to \mathbb{R}$ is performed respect to $\theta_d$ only. Let's now split the set of traffic flow functions $N$ between those that are monotonically decreasing and increasing and denote each set as $N^{\minus}$ and $N^{+}$. Then, the expression of the first derivative of the objective function $\ell: \mathbb{R}^{|D|} \to \mathbb{R}$ can be decomposed as follows:
	
	$$
	\dfrac{\partial \ell(\vtheta)}{\partial \theta_d} = 
	\displaystyle 2\sum_{i \in N^{\minus}} \dfrac{\partial x_i(\vtheta)}{\partial \theta_d} 
	\left(\bar{\vx}_i- \vx_i(\vtheta) \right)
	+ 2\sum_{i \in N^{+}} \dfrac{\partial x_i(\vtheta)}{\partial \theta_d} 
	\left(\bar{\vx}_i- \vx_i(\vtheta) \right)
	$$
	
	
	By assumption, the ranges of the traffic flow functions include the values of the link count measurements, hence for each traffic flow function $i \in N$, $\exists \bar{\vtheta}^i \in \sR^{|D|}: \bar{\vx}_i- \vx_i(\bar{\vtheta^i})= 0$. Note that by the coordinate-wise increasing monotonicity of the set of traffic functions in $i \in N^{+}$,  $\ \bar{\vx}_i- \vx_i(\vtheta) > 0$ if $\theta_d > \bar{\theta}^i_d$. Conversely, if $i \in N^{\minus}$ and $\theta_d > \bar{\theta}_d$,  then $\ \bar{\vx}_i- \vx_i(\vtheta) < 0$.
	
	Let's define $\{\bar{\theta}_d\}_{i \in N}$ as the set that contains the values of the coefficient $\bar{\theta}_d$ that satisfies that $\bar{\vx}_i- \vx_i(\bar{\vtheta^i})= 0$ for the traffic flow function $i \in N$. Define $\bar{\theta}_d^{+}$ as the maximum value in the set and then add to it some arbitrary quantity $\epsilon > 0$. Let's define $\tilde{\vtheta}^{+} \in \mathbb{R}^{|D|}$ as the vector of utility function coefficients associated to the traffic flow function $i$ where $\tilde{\theta}^{+}_d  = \bar{\theta}_d^{+} + \epsilon$. Then, the first derivative of $f$ evaluated at this point can be expressed as:
	
	$$
	\dfrac{\partial \ell(\tilde{\vtheta}^{+}  )}{\partial \theta_d}  = 
	\displaystyle 2\sum_{i \in N^{\minus}} \dfrac{\partial x_i(\tilde{\vtheta}^{+})}{\partial \theta_d} 
	\left(\bar{\vx}_i- \vx_i(\tilde{\vtheta}^{+}) \right)
	+ 2\sum_{i \in N^{+}} \dfrac{\partial x_i(\tilde{\vtheta}^{+})}{\partial \theta_d} 
	\left(\bar{\vx}_i- \vx_i(\tilde{\vtheta}^{+}) \right)
	$$
	
	By the increasing and decreasing coordinate-wise monotonicity of the traffic functions within the sets $N^{\minus}$ and $N^{+}$, $\dfrac{\partial x_i(\vtheta_d)}{\partial \vtheta_d} < 0$ if $i \in N^{\minus}$  and $\dfrac{\partial x_i(\theta_d)}{\partial \theta_d} > 0$  if $i \in N^{+}$. Then, we have that $\theta_d = \bar{\theta}_d^{+} + \epsilon \implies \dfrac{\partial f(\tilde{\vtheta}^{+})}{\partial \theta_d}  \geq 0$. Conversely, if for $\tilde{\vtheta}^{\minus} \in \mathbb{R}^{|D|}, \ \tilde{\theta}^{\minus}_d = \min\left(\{\bar{\theta}^i_d\}_{i \in N} \right) $, then $\dfrac{\partial f(\tilde{\vtheta}^{\minus})}{\partial \theta_d} \leq 0$. Finally, since the first derivative of $f$ is continuous and it changes sign at $\tilde{\vtheta}^{+}, \tilde{\vtheta}^{\minus}, \in \mathbb{R}^{|D|}$, by intermediate value theorem, it must vanish at least once at some feasible point $\tilde{\vtheta} \in \mathbb{R}^{|D|}$. The analysis conducted for $\theta_d$ can be applied to every coordinate $d \in D$ of $\vtheta \in \sR^{|D|}$, which completes the proof.

	
	
	
	
\end{proof}


\subsection{Existence and uniqueness of global minima under no measurement error}
\label{ssec:existence-local-minima-small-network}


Proofs of existence and uniqueness of a global optima typically relies on the convexity of the optimization problem. Despite the non-convexity of our problem, we can rely on Assumption \ref{assumption:sue}, Section \ref{sec:learning-parameters} of \SUE-\logit and to assume absence of measurement error to prove existence and uniqueness:



\begin{prop}[Existence of global minima with no noise in traffic counts]
	\label{prop:existence-uncongested-network-no-error}
	The \LUE problem under an uncongested network has a global optima if traffic count measurements follow \SUE-\logit and they have no measurement error
\end{prop}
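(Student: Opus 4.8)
The plan is to exploit the fact that the data-generating hypotheses hand us the minimizer explicitly, so that no compactness or coercivity machinery is needed. First I would record the trivial but essential lower bound: the objective of the \LUE problem, $\ell(\vtheta) = \|\vx(\vtheta) - \bar{\vx}\|_2^2 = \sum_{i \in N}(x_i(\vtheta) - \bar{x}_i)^2$, is a sum of squares and hence satisfies $\ell(\vtheta) \geq 0$ for every $\vtheta \in \mathbb{R}^{|D|}$. This requires no network structure and is a sharper bound than the generic one of Proposition \ref{prop:upper-lower-bounds-uncongested-network}.

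Next I would translate the two hypotheses into the statement that this lower bound of zero is attained at a \emph{finite} point. Because the traffic counts follow \SUE-\logit and carry no measurement error, the observed vector $\bar{\vx}$ is exactly the link-flow output of the stochastic network loading at some finite ground-truth coefficient vector $\vtheta_{\vzero} \in \mathbb{R}^{|D|}$, i.e. $\bar{\vx} = \vx(\vtheta_{\vzero})$. In the uncongested regime the travel times are exogenous, so $\vx(\cdot)$ is the closed-form response function of Eq. \ref{eq:NLLS-link-flow-equation} evaluated at the free-flow travel times, and substituting $\vtheta = \vtheta_{\vzero}$ gives $\ell(\vtheta_{\vzero}) = \|\vx(\vtheta_{\vzero}) - \bar{\vx}\|_2^2 = 0$. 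Combining the two steps closes the argument: since $\ell \geq 0$ everywhere and $\ell(\vtheta_{\vzero}) = 0$, the point $\vtheta_{\vzero}$ is a global minimizer with optimal value zero.

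The step carrying the real content is the second one, and specifically the appeal to the \SUE-\logit assumption to guarantee that $\vtheta_{\vzero}$ is finite. This is precisely the distinction flagged in the earlier Remark: if the counts were instead consistent with a deterministic \UE pattern (all demand loaded onto one alternative, e.g. $(\bar{x}_1,\bar{x}_2) = (\bar{q},0)$), the zero of $\ell$ would only be approached in the limit $\theta \to \pm\infty$, and no minimizer would exist in $\mathbb{R}^{|D|}$. Requiring the data to be generated by \SUE-\logit rules out this degenerate case and secures attainability of the infimum. I would note that uniqueness is not asserted here and is deferred to the companion Proposition \ref{prop:uniqueness-uncongested-network-no-error}; the present statement only needs existence, which the hypotheses deliver almost immediately once the finiteness of $\vtheta_{\vzero}$ is made explicit.
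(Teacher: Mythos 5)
Your proof is correct and takes essentially the same route as the paper's: both arguments observe that the \SUE-\logit, error-free hypothesis supplies a finite $\vtheta^{\star} \in \sR^{|D|}$ with $\vx(\vtheta^{\star}) = \bar{\vx}$, at which the nonnegative least-squares objective attains its lower bound of zero and is therefore globally minimized. Your added remarks on the finiteness of the ground-truth coefficients and the degenerate \UE limiting case are sensible elaboration, not a different method.
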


\begin{proof}
	By assumption, if $\bar{\vx}$ follows \SUE-\logit,  $\exists \vtheta^{\star} \in \sR^{|D|}: \vx(\vtheta^{\star}) = \bar{\vx} $. Hence, the objective function reaches its lower bound when $f(\vtheta^{\star}) = 0$ and thus $\vtheta^{\star}$ is a global minimizer, which completes the proof. 
\end{proof}
%

\begin{prop}[Uniqueness of global minima with no noise in traffic counts]
	\label{prop:uniqueness-uncongested-network-no-error}
	The \LUE problem under an uncongested network has a unique global optima at $\vtheta^\star \in \mathbb{R}^{|D|}$ if (i) the traffic count measurements $\bar{\vx}$ follow \SUE-\logit and they have no measurement error, (ii) the Jacobian matrix of the objective function has full rank at any feasible point $\vtheta \in \sR^{|D|}$, (iii) the response functions $\vx(\vtheta)$ are strictly monotone
\end{prop}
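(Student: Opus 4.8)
The plan is to reduce the uniqueness of the global minimizer to the injectivity of the response map $\vtheta \mapsto \vx(\vtheta)$, and then to exploit conditions (ii) and (iii) to establish that injectivity. By condition (i) together with Proposition~\ref{prop:existence-uncongested-network-no-error}, the objective attains its lower bound $\ell = 0$; since $\ell(\vtheta) = \|\vx(\vtheta) - \bar{\vx}\|_2^2 \geq 0$, a point $\vtheta$ is a global minimizer \emph{if and only if} $\vx(\vtheta) = \bar{\vx}$. Hence proving uniqueness of the global optimum is equivalent to showing that the equation $\vx(\vtheta) = \bar{\vx}$ admits at most one solution.

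I would then argue by contradiction. Suppose two distinct global minimizers $\vtheta_1 \neq \vtheta_2$ exist, so that $\vx(\vtheta_1) = \vx(\vtheta_2) = \bar{\vx}$. Writing $\vtheta(\lambda) = (1-\lambda)\vtheta_1 + \lambda\vtheta_2$ and $\vd = \vtheta_2 - \vtheta_1 \neq \vzero$, the fundamental theorem of calculus applied componentwise gives
\[
\vzero = \vx(\vtheta_2) - \vx(\vtheta_1) = \bar{J}\,\vd, \qquad \bar{J} = \int_0^1 D_{\vtheta}\, \vx(\vtheta(\lambda))\, d\lambda .
\]
The easy sub-case is when $\vx(\vtheta(\lambda)) = \bar{\vx}$ for \emph{every} $\lambda \in [0,1]$: differentiating in $\lambda$ yields $D_{\vtheta}\,\vx(\vtheta(\lambda))\,\vd = \vzero$ at each point, and since condition (ii) makes the Jacobian full column rank (so that, by Proposition~\ref{prop:full-rank-jabobian}, $[D_{\vtheta}\vx]^\top D_{\vtheta}\vx$ is positive definite), this forces $\vd = \vzero$, contradicting $\vtheta_1 \neq \vtheta_2$.

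The crux, and the step I expect to be the main obstacle, is ruling out the remaining scenario in which the response trajectory leaves $\bar{\vx}$ and returns to it along the segment, so that the averaged Jacobian $\bar{J}$ could be rank deficient even though each \emph{pointwise} Jacobian has full column rank. This is precisely where strict monotonicity (condition (iii)) enters: coordinate-wise strict monotonicity forces every partial derivative $\partial x_i/\partial\theta_d$ to keep a fixed sign throughout the feasible region and to be non-vanishing on intervals, so that the entries of $\bar{J}$ carry the same definite signs as the pointwise partials and no sign cancellation can occur under integration. The plan is to combine this sign-definiteness with the pointwise full-rank hypothesis to conclude that $\bar{J}$ inherits full column rank, whence $\bar{J}\,\vd = \vzero$ again yields $\vd = \vzero$ and the desired contradiction. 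Making the transfer from pointwise full rank to full rank of the averaged matrix fully rigorous --- equivalently, excluding the leave-and-return behavior --- is the delicate part of the argument, and strict monotonicity of the response functions is exactly the hypothesis that closes it.
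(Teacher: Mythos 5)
Your reduction of the problem to injectivity is correct, and is in fact cleaner than the paper's own route: since under (i) the minimum value $0$ is attained, a point is a global minimizer if and only if $\vx(\vtheta)=\bar{\vx}$, so uniqueness is exactly the statement that this equation has at most one solution. Your easy sub-case is also handled correctly (full column rank gives a trivial kernel, so $D_{\vtheta}\vx\,\vd=\vzero$ forces $\vd=\vzero$). The proof fails, however, precisely at the step you flag as the crux, and the bridge you propose there is not merely unproven but false as a matter of linear algebra: sign-definiteness of the entries of $\bar{J}=\int_0^1 D_{\vtheta}\vx(\vtheta(\lambda))\,d\lambda$ together with pointwise full column rank does not imply that $\bar{J}$ has full column rank. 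For instance,
\begin{equation*}
J_1=\begin{pmatrix}1 & 2\\ 2 & 1\end{pmatrix},\qquad J_2=\begin{pmatrix}2 & 1\\ 1 & 2\end{pmatrix}
\end{equation*}
are both nonsingular with strictly positive entries, yet $\tfrac{1}{2}(J_1+J_2)$ has rank one; one can even realize such failures along a continuous family of all-positive nonsingular matrices, so continuity of the Jacobian along the segment does not rescue the argument. The reason is that rank loss under averaging is not produced by sign cancellation inside individual entries --- which is all that coordinate-wise strict monotonicity rules out --- but by the column space of $D_{\vtheta}\vx$ turning as $\lambda$ varies, which none of your hypotheses exclude. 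So in the leave-and-return scenario you never reach a contradiction, and the argument is incomplete.

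The paper does not attempt this derivation at all. Its proof uses the full-rank hypothesis (ii) only to argue that stationary points of $\ell$ have zero residual, and then invokes (iii) directly: strict monotonicity of the response functions is read as the statement that $\vx(\vtheta)=\bar{\vx}$ admits at most one solution, i.e.\ injectivity is taken as the content of hypothesis (iii) rather than something to be deduced from the Jacobian. Your attempt to prove injectivity from (ii) and (iii) is more ambitious, but it collides with a known obstruction: an everywhere-nonsingular Jacobian yields only local injectivity, and global univalence requires genuinely stronger hypotheses (Gale--Nikaido-type conditions, e.g.\ P-matrix structure on a rectangle), which neither (ii) nor coordinate-wise strict monotonicity supplies. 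The direct repair, staying within the paper's framework, is to use (iii) the way the paper does: if $\vx(\vtheta_1)=\vx(\vtheta_2)=\bar{\vx}$ with $\vtheta_1\neq\vtheta_2$, strict monotonicity of the response map is contradicted outright, and no averaging argument is needed.
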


\begin{proof}
	By Proposition \ref{prop:existence-uncongested-network-no-error} and the assumption that $\bar{\vx}$ follows \SUE-\logit, the optimization problem has a global optima  at $\vtheta^{\star}$. To prove uniqueness, we first derive the first order necessary optimality condition:
	\begin{align}
		\label{eq:first-order-condition-global-minima-no-error}
		\grad_{\vtheta} \| \vx(\vtheta) - \bar{\vx}\|_2^2
		=
		-2\left[D_{\vtheta} \phantom{'}\vx(\vtheta)\right]^{\top} 
		\left(\bar{\vx}- \vx(\vtheta) \right) &= \vzero 
	\end{align}
	
	By assumption, the Jacobian matrix $D_{\vtheta} \ \vx(\vtheta)$ is full rank $\forall \vtheta \in \sR^{|D|}$ and thus $[D_{\vtheta} \ \vx(\vtheta)]^\top \left(\bar{\vx}- \vx(\vtheta) \right) = \vzero$ iff $\vx(\vtheta) = \bar{\vx}$. By the strict monotonicity of $\vx(\vtheta)$, there exists a unique value of $\vtheta \in \mathbb{R}^{|D|}$ such that  $\vx(\vtheta) = \bar{\vx}$. Therefore, this proved that $\vtheta = \vtheta^{\star}$ is the unique the global optima.

\end{proof}



\subsection{Connection between \OLS and \NLLS}
\label{ssec:equivalence-ols-nlls}


Consider the vectorized version of the \NLLS regression equation (Eq. \ref{eq:regression-equation-nlls-learning-problem}, Section \ref{ssec:nlls-estimator}) and let's expressed it in terms of the response function $\vx(\vtheta): \sR^{|D|} \to \sR^{|A|}  $ defined in Eq. \ref{eq:NLLS-link-flow-equation}, Section \ref{sssec:nlls-problem-formulation}. Let's compute the first order Taylor approximation of the response function respect to $\vtheta \in \mathbb{R}^{|D|}$ and around an arbitrary vector $\vtheta_{\vzero} \in \mathbb{R}^{|D|}$:
%
			\begin{equation}
				\label{eq:taylor-approximation-response-function-learning-problem}
				\vx(\vtheta) \approx \vx(\vtheta_{\vzero}) + D_{\vtheta} \ x(\vtheta) \Big|_{\vtheta_{\vzero}} (\vtheta-\vtheta_{\vzero}) 
			\end{equation}

where, for the ease of notation, $\vx(\vtheta) = \vx(\mZ, \vt, \vtheta)$. Replacing back into the \NLLS regression equation (Eq. \ref{eq:regression-equation-nlls-learning-problem}, Section \ref{ssec:nlls-estimator}):
		\begin{align}
			\label{eq:ols-equivalence-learning-problem}
			\rvx &\approx x(\vtheta_{\vzero}) + D_{\vtheta } \ x( \vtheta)\Big|_{\vtheta_{\vzero}} (\vtheta-\vtheta_0) + \rvu \nonumber \\
			\rvx - x(\vtheta_{\vzero}) + D_{\vtheta} \ x( \vtheta)\Big|_{\vtheta_{\vzero}} \vtheta_{\vzero} &\approx D_{\vtheta} \ x(\vtheta)\Big|_{\vtheta_{\vzero}} \vtheta + \rvu \nonumber \\
			\tilde{\rvx}  &\approx \tilde{\mX} \vtheta + \rvu
		\end{align}

From where is clear that Eq. \ref{eq:ols-equivalence-learning-problem} resembles the \OLS regression equation, except that $\tilde{\mX} = D_{\vtheta} \ x(\vtheta)\big|_{\vtheta_{\vzero}}$.

\section{Algorithms}
\label{appendix:sec:implementation}


\subsection{Stochastic network loading}
\label{appendix:ssec:snl}

\begin{algorithm}[H] 
	
	\captionsetup{font=normalsize}
	\caption{Stochastic network loading (\SNL)}
	\label{alg:stochastic-network-loading}
	\scalebox{1}{%
		\begin{minipage}{\linewidth}
			\begin{algorithmic} 
				\Require{$\vtheta \in \sR^{|D|}$, incident matrices $\mIq,\mIx$, non-sparse O-D vector $\vq$, vector of link travel times $\bar{\vt}$, matrix of exogenous link attributes $\mZ$}
				
				\begin{algsubstates}
					\State Travel time initialization: If $\bar{\vt} = \emptyset$, then $\bar{\vt} = \bar{\vt}_f$, where $\bar{\vt}_f$ is the vector of links' free flow  travel times.\\
					\State Computation of link utilities
					$$\vv_{x} \gets \theta_t \bar{\vt} +  \vtheta_Z ^{\top}\mZ$$
					\State Computation of path choice probabilities: 
					$$
					\vpf \gets \exp\left(\displaystyle \mIxT \vv_{x}   \right) \oslash \left(\displaystyle   \mIqT \mIq \exp(\mIxT \vv_{x})\right)
					$$
					\State Computation of path flows: 
					$$
					\displaystyle 
					\vf \gets (\mIqT \vq) \circ  \vpf
					$$
					\State Computation of link flows:
					$$
					\vx  \gets \mIx \vf
					$$
					
				\end{algsubstates}
				
				\State 
				\Return {$\vx, \vf, \vpf$ }
			\end{algorithmic}
		\end{minipage}%
	}
\end{algorithm}





where $\oslash$ is the operator for element-wise division, $\vv_{x} \in \sR^{|A|}$  is the vector of link utilities, $\vpf \in \sR_{]0,1[}^{|H|}$ is a vector of path choice probabilities and $\bar{t}$ is the vector of travel times which is assumed to be exogenous during \SNL. Note that the definition of $\vpf$ in the first step of \SNL correspond to the vectorized form of the path flows at \SUE-\logit presented in Eq. \ref{eq:sue-logit-path-flows-solution} and which is written as a function of link utilities and the network incidence matrices $\mIx, \mIq$.


\subsection{Inner level optimization}
\label{appendix:ssec:implementation-inner-level-optimization}

%


\newlength\myindent
\setlength\myindent{2em}
\newcommand\bindent{%
  \begingroup
  \setlength{\itemindent}{\myindent}
  \addtolength{\algorithmicindent}{\myindent}
}
\newcommand\eindent{\endgroup}

\begin{algorithm}[H] \small %
	\captionsetup{font=normalsize}
	\caption{\texttt{InnerLevelOptimization}}
	\label{alg:inner-level-optimization}
	\scalebox{1}{%
		\begin{minipage}{\linewidth}
			\begin{algorithmic} 
				\Require{\# Iterations $T$, initial vector of estimated coefficients ${\hat{\vtheta}} \in \sR^{|D|}$, matrix of exogenous attributes $\mZ$, vector of links' free flow travel times $\vt^0$, vector of link capacities $\vgamma$, incidence matrices $\mIq,\mIx$, dense O-D vector $\vq$, grid of values $\vlambda_{FW}\in \sR^{|D|}$ in Frank-Wolfe algorithm, proportion of selected O-D pairs in column generation phase $\rho_W$, proportion of generated and selected paths in column generation phase  $k_g, k_s$:}
				
				\State
				
				\State Step 0: Initialization. 
				
				\begin{algsubstates}
					\State Compute initial vector of link utilities 	$\vv_{x} \gets \theta_t \bar{\vt}_f +  {\hat{\vtheta}}_Z ^{\top}\mZ$, where ${\hat{\vtheta}} = [\theta_t \ \theta^{\top}_Z], \ \theta_t \in \sR, \ \theta_Z \in \sR^{|K_{\mZ}|}$
					\State Generate $k\minus$shortest paths $\mathcal{S}_{pq}, \ \forall (p,q) \in W$ based on link utilities $\vv_{x}$
					\State Compute incident matrices $\mIq,\mIx$ from $\mathcal{S}_{pq}$
					\State Perform stochastic network loading: $\vx^{(0)}, \vf^{(0)} \gets  \SNL({\hat{\vtheta}}, \mIq, \mIx, \bar{\vt})$
					\State $i = 0$.
				\end{algsubstates}
				
				\item[]
				
				\State Step 1: Column generation phase. 
				
				\begin{algsubstates}
     					\State Select subset of O-D pairs $W_s \in W$ with the highest travel demand, such that $|W_s| = |W|\rho_W$ 
					\State Generate set of the $k_g\minus$shortest paths $\mathcal{C}_{pq}, \forall (p,q) \in W_s$ based on current link utilities $\vv_{x}$
                    \State Update path sets $\mathcal{S}_{pq} \gets \mathcal{C}_{pq},\forall (p,q) \in W_s$
					
					\State Update incidents matrices $\mIq, \mIx$ from $\mathcal{S}_{pq}, \forall p,q \in W$
					
					
				\end{algsubstates}

                \State
				
				\State Step 2: \SUE-\logit

				\begin{algsubstates}
					
					
					\State Perform stochastic network loading (algorithm \ref{alg:stochastic-network-loading}): 
					
					$$
					\vx, \vf, \vpf \gets  \SNL({\hat{\vtheta}}, \mIq, \mIx, \vq, \bar{\vt}, \mZ)				
					$$

					
					\State Update link travel times
					$$
					\bar{\vt} \gets \bar{\vt}^0(1+\alpha(\vx/\vgamma)^\beta)
					$$
					
					\State Solve a linear search problem: 
					\item[]
					
					\For{$t = 1$ to $|\vlambda_{FW}|$}{
					
					\item[] $\quad \quad \bar{\vf} \gets \lambda_i\bar{\vf}^{(i-1)} + (1-\lambda_i)\bar{\vf}^{(i)}$
					\item[] $\quad\quad \displaystyle \bar{\vx} \gets \mIx \bar{\vf}$
					\item[] $\quad\quad \displaystyle \ell[t] \gets \sum_{a \in A}\int_{0}^{\bar{x}_a} v_a(u,{\hat{\vtheta}}) du - \left\langle \bar{\vf}, \ln \bar{\vf}\right\rangle$
					%
					
					}
					\item[]
					\item[] $
					\lambda^{\star}_i = \arg \min_{\lambda \in \vlambda_g}  Z(\lambda) 
					$		

                    \item[]
     
					\State Update path and link flow solutions, travel times and path choice probabilities 
					
					$$
					\vf^{(i+1)} \gets \lambda^{\star}_i\vf^{(i-1)} + (1-\lambda^{\star}_i)\vf^{(i)}
					$$
					$$
					\vx \gets \mIx \vf
					$$
					$$
					\bar{\vt} \gets \bar{\vt}^0(1+\alpha(\vx/\vgamma)^\beta)
					$$
					$$
					\vpf \gets \exp\left(\displaystyle \mIxT \vv_{x}   \right) \oslash \left(\displaystyle   \mIqT \mIq \exp(\mIxT \vv_{x})\right)
					$$

     \State Go to step 3 if desire level of accuracy have been achieved\footnote{For example, when the relative decrease of the objective function given by $1-\ell[t]^{(i)}/\ell[t]^{(i-1)}$ is lower than a threshold}. Otherwise, $i= i+1$ and start again from Step 2
					
				\end{algsubstates}

                \State
        
        \State Step 3: Paths selection

                \begin{algsubstates}

                \State Select the $k_s\minus$shortest  paths $\mathcal{R}_{pq}, \forall p,q \in W_s$ 
                \State Update path sets $ \mathcal{S}_{pq} \gets \mathcal{R}_{pq} , \forall p,q \in W_s$


                \State Update incidents matrices $\mIq, \mIx$ from $\mathcal{R}_{pq}, \forall p,q \in W$

                \end{algsubstates}
                
				\State			
				
				\Return {$\vx,\vpf, \bar{\vt}$}
			\end{algorithmic}
		\end{minipage}%
	}
\end{algorithm}

\clearpage

%

%

\subsection{Outer level optimization}
\label{appendix:ssec:implementation-outer-level-optimization}


\subsubsection{Algorithm}

\begin{algorithm}[!htbp] 
	\captionsetup{font=normalsize}
	\caption{\texttt{OuterLevelOptimization}}
	\label{alg:outer-level-optimization}
	\scalebox{1}{%
		\begin{minipage}{\linewidth}
			\begin{algorithmic} 
				\Require{\# Iterations in no-refined and refined stages $T_1, T_2$, initial vector of estimated coefficients ${\hat{\vtheta}}_0 \in \sR^{|D|}$, link flows and path choice probabilities $\vx,\vpf$ from inner level problem, choice of optimization algorithm the refined stage (\texttt{refined-method}), vector of observed traffic counts $\bar{\vx}$, learning rates $\eta_1$, dumping parameter for \LM method  $\delta_{LM}$}
				
				\item[]
				
				\State Step 1: no-refined stage
				
				\For{$t = 1 \ldots T_1$}
				
				\State ${\hat{\vtheta}}_{t+1} \gets$\texttt{FirstOrderOptimization}( ${\hat{\vtheta}}$, $\eta_1$, $\bar{\vx}$, $\vx$, $\vpf$, $T = 1$)
				
				
				
				\EndFor
				
				\item[]
				
				\State Step 2: refined stage
				
				\For{$t = 1 \ldots T_2$}
				
				%
				
				\State 
				${\hat{\vtheta}}_{T_1+t+1} \gets$\texttt{SecondOrderOptimization}(method = \LM, ${\hat{\vtheta}}_{T_1+t}$, $\delta_{LM}$, $\bar{\vx}$, $\vx$, $\vpf$, $T = 1$)
				
				
				
				\EndFor
				
				\State
				
				\State 
				\Return {$\bar{{\hat{\vtheta}}}_T = \argmin_{\{{\hat{\vtheta}}_1,\ldots,{\hat{\vtheta}}_{T_1+T_2} \}} \ell_t({\hat{\vtheta}}_t) $ }
			\end{algorithmic}
		\end{minipage}%
	}
\end{algorithm}

\subsubsection{Gradients and Jacobian}
\label{appendix:sssec:gradients}

The first derivative of the objective function $\ell(\cdot)$ respect to a utility function coefficient $\hat{\theta}_d \in \mathbb{R}, \forall d \in K$ can be written in vectorized form as:

%

\begin{equation}
	\dfrac{\partial \ell(\hat{\vtheta})}{\partial \hat{\theta}_d}
	= \dfrac{\partial }{\partial \hat{\theta}_d}\big\|\vx(\hat{\vtheta})-\bar{\vx}\big\|_2^2
	= 2 \left(\dfrac{\partial \vx(\hat{\vtheta})}{\partial \hat{\theta}_d} \right)^\top \left(\vx(\hat{\vtheta})-\bar{\vx} \right)
\end{equation}

where $\vx(\cdot)$ is a vector valued response function that receives as input a vector with the utility function coefficients and it returns a vector $\hat{\vtheta} \in \mathbb{R}^{|D|}$ with the predicted traffic counts among all links in the transportation network:

\begin{equation}
	\vx(\hat{\vtheta}) =  \mIx \left( (\mIqT \vq) \circ  \vpf(\hat{\vtheta}) \right)
\end{equation}

and $\vpf(\hat{\vtheta})$ is a vector valued function that receives as input a vector with the utility function coefficients and it returns a vector with the choice probabilities associated to all paths in the transportation network:
\begin{align}
	\label{eq:NLLS-path-choice-probabilities-equation}
	\vpf(\hat{\vtheta}) &= \exp\left(\displaystyle \mIxT \vv_{x}(\hat{\vtheta}, Z, \bar{t})     \right) \oslash \left(\displaystyle   \mIqT \mIq \exp(\mIxT \vv_{x}(\hat{\vtheta}, \mZ, \bar{\vt})  )\right)
\end{align}

The first derivative of $\vx(\cdot)$ respect to a utility function coefficient $\hat{\theta}_d$ can be written in vectorized form as:

\begin{equation}
	\dfrac{\partial \vx(\hat{\vtheta})}{\partial \hat{\theta}_d} = \mIx \left( (\mIqT \vq)  \circ  \dfrac{\partial \vpf(\hat{\vtheta})}{\partial \hat{\theta}_d}  \right) 
\end{equation}

where

\begin{equation}
	\dfrac{\partial \vpf(\hat{\vtheta})}{\partial \theta_d} =  \left((\mIqT \mIq)\circ \left(\vpf(\hat{\vtheta}) \vpf(\hat{\vtheta})^{\top}\right) \circ   \left[\mZ_d\vone_{|\mZ_d|}^\top -\vone_{|\mZ_d|}\mZ_d^\top \right] \right) \vone_{|\mZ_d|}
\end{equation}

is a vector with the first derivatives of the path choice probabilities respect to the utility function coefficient $\hat{\theta}_d$. $\mZ_d \in \mathbb{R}^{|H|}$ is a column vector with the values of attribute $d \in D$ among all paths and $\vone_{|\mZ_d|} \in \mathbb{R}^{|H|}$ is a column vector with ones. Then, the gradient $\grad_{\hat{\vtheta}} \ \vx(\hat{\vtheta})$ associated to a traffic flow function $i$ is obtained by stacking the first derivatives in a column vector as follows:

\begin{equation}
	\grad_{\hat{\vtheta}} \ x_i(\hat{\vtheta})	= 
	\begin{bmatrix}
		\dfrac{\partial x_i(\hat{\vtheta})}{\partial \hat{\theta}_1} & \ldots & \dfrac{\partial x_i(\hat{\vtheta})}{\partial \hat{\theta}_D}
	\end{bmatrix}^{\top}
\end{equation}

and the Jacobian matrix $D_{\hat{\vtheta}} \ \vx(\hat{\vtheta}) \in \mathbb{R}^{n \times d}$ associated to the traffic flow functions are simply the stacked gradient vectors for each observation $n \in N$:

\begin{equation}
	D_{\hat{\vtheta}} \ x(\hat{\vtheta})
	= \begin{bmatrix}
		\grad_{\hat{\vtheta}} \ x_1(\hat{\vtheta}) ^{\top} \\ \vdots \ \ \\ \\ \grad_{\hat{\vtheta}} \ x_n(\hat{\vtheta}) ^{\top}
	\end{bmatrix}
	= \begin{bmatrix}
		\dfrac{\partial x_1(\hat{\vtheta})}{\partial \hat{\theta}_1} & \ldots & \dfrac{\partial x_1(\hat{\vtheta})}{\partial \hat{\theta}_D} \\ 
		\vdots & & \\
		\dfrac{\partial x_n(\hat{\vtheta})}{\partial \hat{\theta}_1} & \ldots & \dfrac{\partial x_n(\hat{\vtheta})}{\partial \hat{\theta}_D}\\
		
	\end{bmatrix}^{\top}
\end{equation}

Finally, the analytical gradient of the objective function respect to the vector of utility function coefficients is :

\begin{equation}
	\grad_{\hat{\vtheta}} \ \ell(\hat{\vtheta})	= 
	\begin{bmatrix}
		\dfrac{\partial \ell(\hat{\vtheta})}{\partial \hat{\theta}_1} & \ldots & \dfrac{\partial \ell(\hat{\vtheta})}{\partial \hat{\theta}_D}
	\end{bmatrix}^{\top}
\end{equation}

and where

\begin{align}
	\dfrac{\partial \ell(\hat{\vtheta})}{\partial \hat{\theta}_d}
	&= 2 \left(\dfrac{\partial \vx(\hat{\vtheta})}{\partial \hat{\theta}_d} \right)^\top \left(\vx(\hat{\vtheta})-\bar{\vx} \right)\\ \nonumber
	&= 2 \left(\mIx \left( (\mIqT \vq)  \circ  \dfrac{\partial \vpf(\hat{\vtheta})}{\partial \hat{\theta}_d}  \right) \right)^\top \left(\vx(\hat{\vtheta})-\bar{\vx} \right) \\ \nonumber
	&= 2 \left(\mIx \left( (\mIqT \vq)  \circ  \left((\mIqT \mIq)\circ \left(\vpf(\hat{\vtheta}) \vpf(\hat{\vtheta})^{\top}\right) \circ   \left[\mZ_d\vone_{|\mZ_d|}^\top -\vone_{|\mZ_d|}\mZ_d^\top  \right] \right) \vone_{|\mZ_d|}  \right) \right)^\top \left(\vx(\hat{\vtheta})-\bar{\vx} \right)
\end{align}

is a scalar.

\subsubsection{Second derivatives}
\label{appendix:sssec:second-derivatives}

The second derivative of the objective function $\ell(\cdot)$ respect to a utility function coefficient $\hat{\vtheta}_d \in \mathbb{R}, \forall d \in D$ can be written in vectorized form as:
\begin{equation}
	\dfrac{\partial^2 \ell(\hat{\vtheta})}{\partial^2 \hat{\vtheta}_d}
	= \dfrac{\partial }{\partial \hat{\vtheta}_d} \left(\dfrac{\partial \ell(\hat{\vtheta})}{\partial \hat{\vtheta}_d}\right) 
	= 2\bigg(\left(\dfrac{\partial^2 \vx(\hat{\vtheta})}{\partial^2 \hat{\vtheta}_d}\right)^\top \left(\vx(\hat{\vtheta})-\bar{\vx} \right)+\left(\dfrac{\partial \vx(\hat{\vtheta})}{\partial \hat{\vtheta}_d}\right)^\top \dfrac{\partial \vx(\hat{\vtheta})}{\partial \hat{\vtheta}_d}
	\bigg)
\end{equation}

where 

\begin{equation}
	\dfrac{\partial^2 \vx(\hat{\vtheta})}{\partial^2 \hat{\vtheta}_d} = \mIx \left( (\mIqT \vq) \circ  \dfrac{\partial^2  \vpf(\hat{\vtheta}) }{\partial^2 \hat{\vtheta}_d}\right)
\end{equation}

and

\begin{equation}
	\dfrac{\partial^2  \vpf(\hat{\vtheta}) }{\partial^2 \hat{\vtheta}_d}
	= (\mIqT \mIq) \circ \left (\left(\dfrac{\partial  \vpf(\hat{\vtheta}) }{\partial \hat{\vtheta}_d} \right)\vpf(\hat{\vtheta})^{\top}+\vpf(\hat{\vtheta})\left(\dfrac{\partial  \vpf(\hat{\vtheta}) }{\partial \hat{\vtheta}_d}\right)^\top \right) \circ  \left[\mZ_d\vone_{|\mZ_d|}^\top -\vone_{|\mZ_d|}\mZ_d^\top \right]\vone_{|\mZ_d|} 
\end{equation}

is a vector of dimension $H$ with the second derivatives of the path choice probabilities respect to a utility function coefficient $\hat{\vtheta}_d$.

\subsubsection{First order optimization methods}
\label{appendix:ssec:first-order-optimization-methods}

Algorithm \ref{alg:first-order-optimization} shows the pseudo code to implement the first order optimization algorithms. Note that the only difference of normalized gradient descent (\NGD) with vanilla gradient descent is the step of normalization of the gradient.

\begin{algorithm}[H] 
	\captionsetup{font=normalsize}
	\caption{\texttt{FirstOrderOptimization}}
	\label{alg:first-order-optimization} 
	\scalebox{1}{%
		\begin{minipage}{\linewidth}
			\begin{algorithmic} 
				\Require{\# Iterations $T$, initial vector of estimated coefficients $\hat{\vtheta}_0 \in \sR^{|D|}$, incident matrices $\mIq,\mIx$, non-sparse O-D vector $\vq$, path choice probabilities $\vpf$, link flows $\vx$, dense vector with O-D demand $\vq$, matrix of exogenous link attributes $\mZ$, vector of link travel times $\bar{\vt}$, vector of observed traffic counts $\bar{\vx}$, learning rate $\eta$}
				
				\item[]
				\For{$t = 1 \ldots T$}
				
				\State Compute stochastic network loading (Algorithm \ref{alg:stochastic-network-loading}, \ref{appendix:ssec:snl})			
				$$
				\vx, \vf, \vpf \gets  \SNL(\hat{\vtheta}, \mIq, \mIx, \vq, \bar{\vt}, \mZ)		
				$$
				
				\State Compute gradient of the objective function (\ref{appendix:sssec:gradients}):
				$$
				\vg_t:= \nabla_{\hat{\vtheta}} \ell (\hat{\vtheta}, \vx, \vq, \vpf, \bar{\vx}, \bar{\vt}, \mZ, \mIq,\mIx) 
				$$

				
				\State Normalization of gradient
				\State $$g_t \gets \frac{g_t}{\|g_t\|}$$

				\State Solution update:\vspace{-0.5cm}
				\State $$\hat{\vtheta}_{t+1} = \hat{\vtheta}_t - \eta g_t$$\vspace{-0.5cm}
				
				\EndFor

				\State 
				
				\Return {$\bar{\hat{\vtheta}}_T = \argmin_{\{\hat{\vtheta}_1,\ldots,\hat{\vtheta}_T \}} \ell_t(\hat{\vtheta}_t) $ }
			\end{algorithmic}
		\end{minipage}%
	}
\end{algorithm}

\subsubsection{Second order optimization methods}
\label{appendix:ssec:second-order-optimization-methods}

Table \ref{alg:second-order-optimization-methods} shows the pseudo code to implement the Gauss-Newton (\GN) and Levenberg–Marquardt (\LM) algorithms.


\begin{algorithm}[H] 
	\captionsetup{font=normalsize}
	\caption{\texttt{SecondOrderOptimization}}
	\label{alg:second-order-optimization-methods}
	
	\scalebox{1}{%
		\begin{minipage}{\linewidth}
			\begin{algorithmic} 
				\Require{\# Iterations $T$, initial vector of coefficients $\hat{\vtheta}_0 \in \sR^{|D|}$, link and path flows $\vx$, $\vf$ from inner level problem, choice of second order optimization  \method, vector of observed traffic counts $\bar{\vx}$, dumping parameter for \LM method $\delta_{LM}$}
				
				\For{$t = 1 \ldots T$}
				
				\State Compute stochastic network loading (Algorithm \ref{alg:stochastic-network-loading}, \ref{appendix:ssec:snl})			
				$$
				\vx, \vf, \vpf \gets  \SNL(\hat{\vtheta}, \mIq, \mIx, \vq, \bar{\vt}, \mZ)		
				$$
				
				\State Compute Jacobian of the traffic flow functions (\ref{appendix:sssec:gradients}): 
				$$
				J_t:= D_{\hat{\vtheta}} \ \vx (\hat{\vtheta}, \vx, \vq, \vpf, \bar{\vx}, \bar{\vt}, \mZ)
				$$
				
				\If{\method = \GN} 
				\State $$\Delta \hat{\vtheta}_t \gets (J_t^\top J_t)^{-1} J_t^\top (\bar{\vx}-\vx) $$
				\EndIf\\ 
				
				\If{\method = \LM} 
				\State $$\Delta \hat{\vtheta}_t \gets (J_t^\top J_t + \delta_{LM} \ I_{d \times d})^{-1} J_t^\top (\bar{\vx}-\vx) $$
				\EndIf\\ 

				\State Update: 
				\State $$\hat{\vtheta}_{t+1} = \hat{\vtheta}_t + \Delta \hat{\vtheta}_t$$ 
				\EndFor

				\State 
				\Return {$\bar{\hat{\vtheta}}_T = \argmin_{\{\hat{\vtheta}_1,\ldots,\hat{\vtheta}_T \}} \ell_t(\hat{\vtheta}_t) $ }
			\end{algorithmic}
		\end{minipage}%
	}
\end{algorithm}


\pagebreak
\section{Statistical analyses in Fresno, CA}

\subsection{Descriptive statistics}
\label{appendix:ssec:descriptive statistics}


\begin{figure}[H]
	\centering
	
	
	\begin{subfigure}[t]{0.75\columnwidth}
		\centering
		\centering
		\includegraphics[width=\columnwidth, trim= {0cm 0cm 0cm 3cm},clip ]{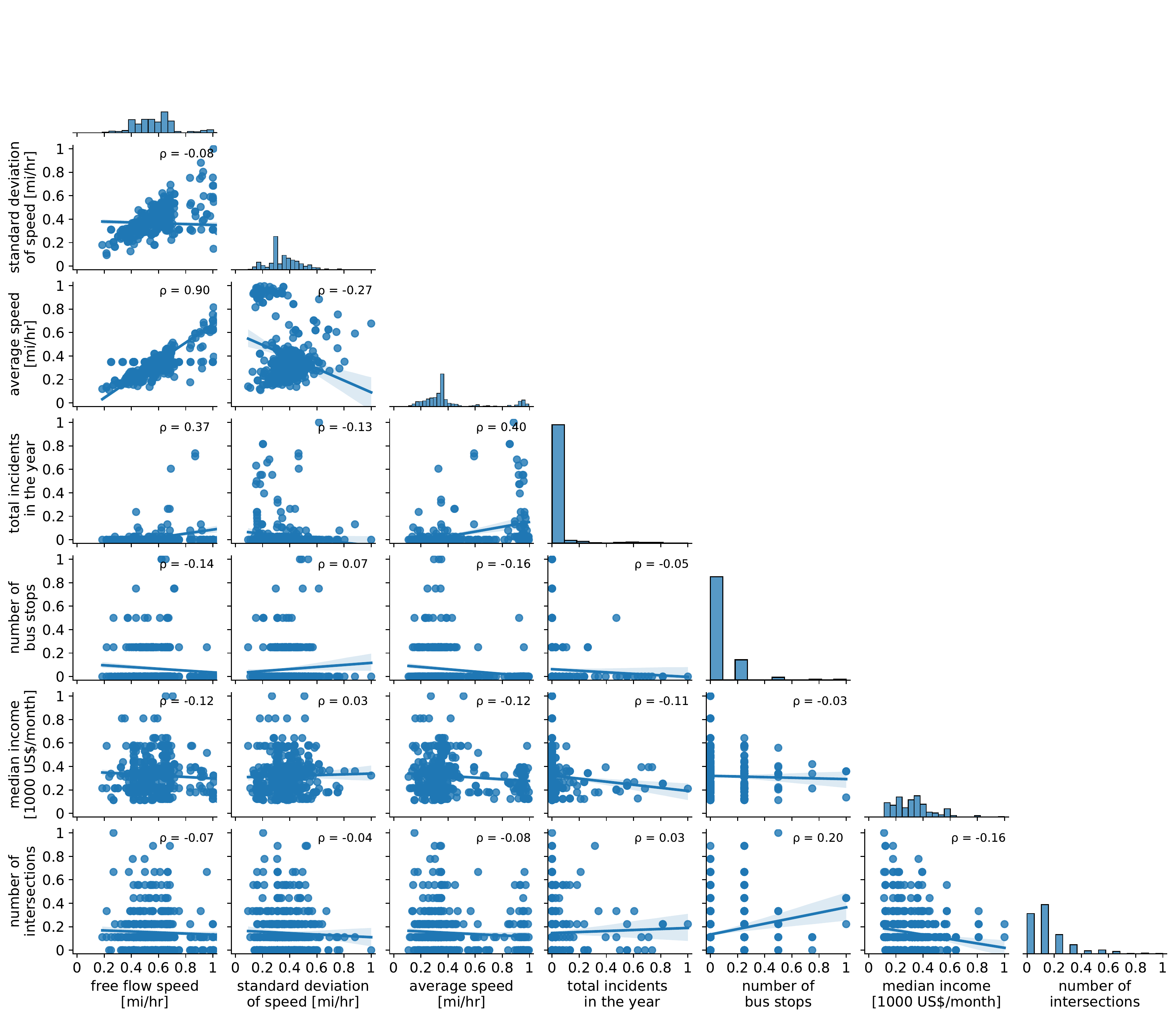}
		\caption{October, 2019}
		\label{fig:features-correlations-2019}
	\end{subfigure}
	
	
	\begin{subfigure}[t]{0.7\columnwidth}
		\centering
		\includegraphics[width=\columnwidth, trim= {0cm 0cm 0cm 3cm},clip ]{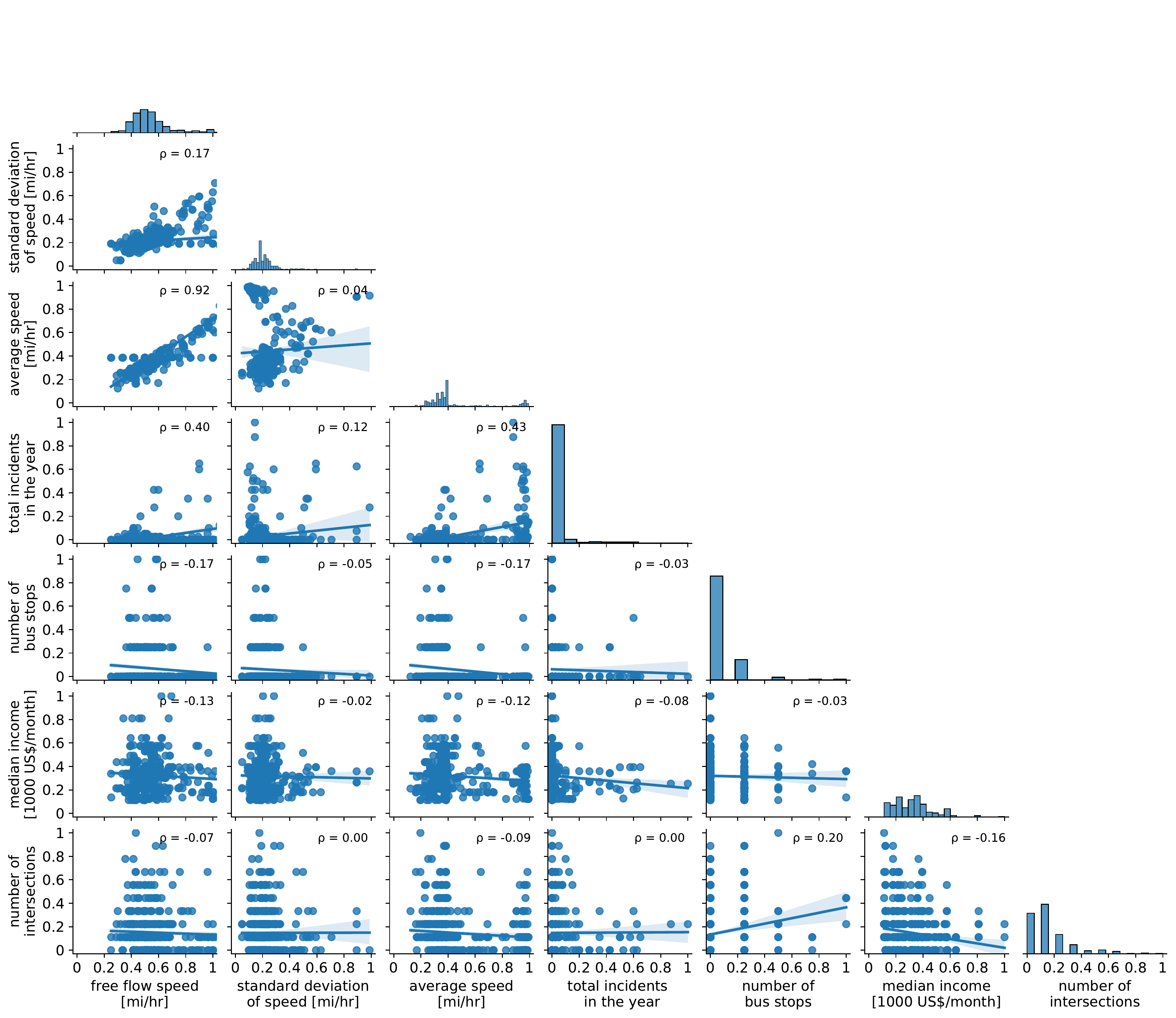}
		\caption{October, 2020}
		\label{fig:features-correlations-2020}
	\end{subfigure}
	
	\caption{Correlation between normalized system level attributes in Fresno, CA}
	
\end{figure}

\subsection{Estimation results}
\label{appendix:ssec:estimation-results}

\begin{figure}[H]
	\centering
	
	
	\begin{subfigure}[t]{0.48\columnwidth}
		\centering
		\includegraphics[width=\columnwidth, trim= {2.3cm 3.1cm 2.3cm 0cm},clip]{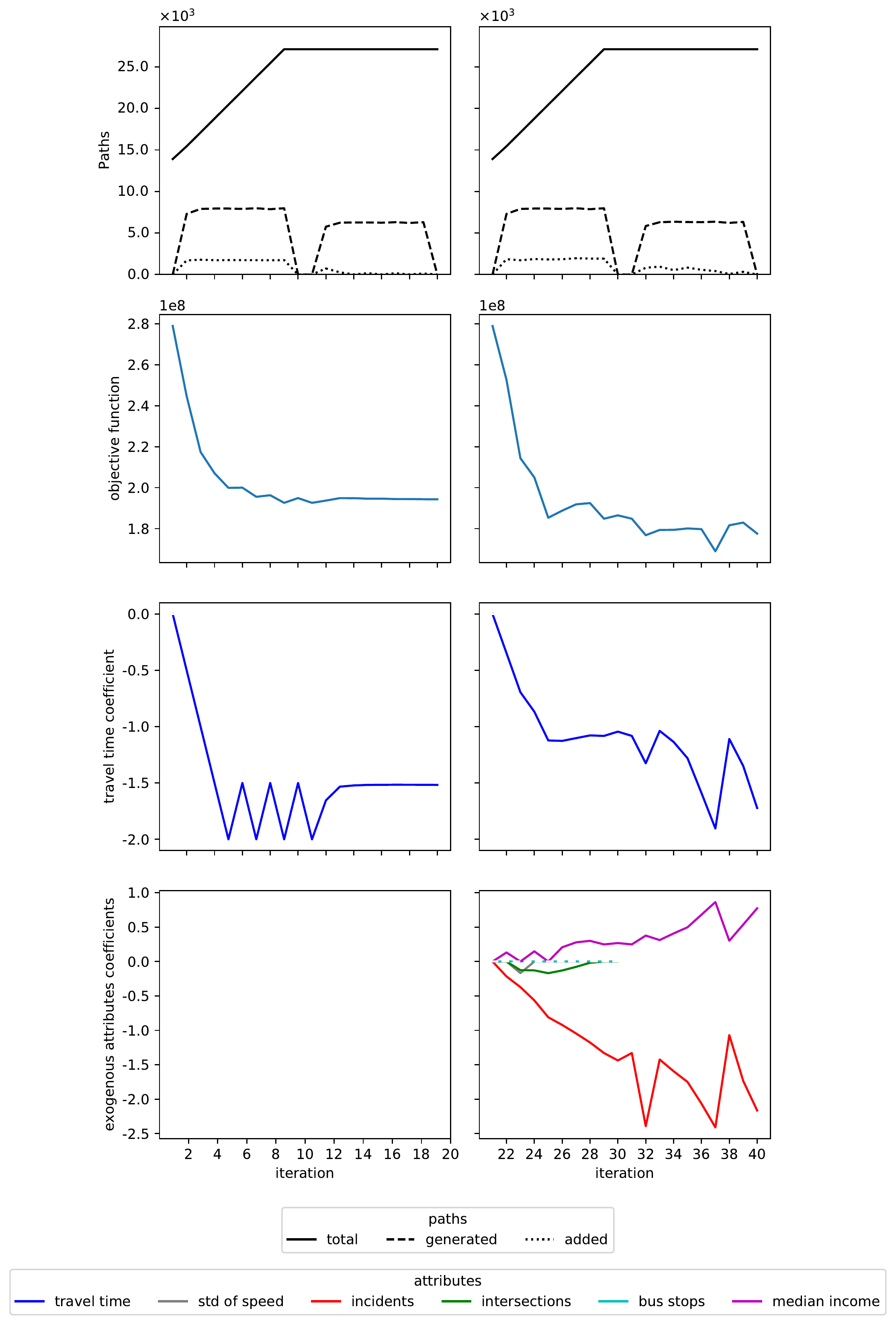}
		\caption{2019}
		\label{subfig:convergence-fresno-2019}
	\end{subfigure}
	\begin{subfigure}[t]{0.48\columnwidth}
		\centering
		\includegraphics[width=\columnwidth, trim= {2.3cm 3.1cm 2.3cm 0cm},clip]{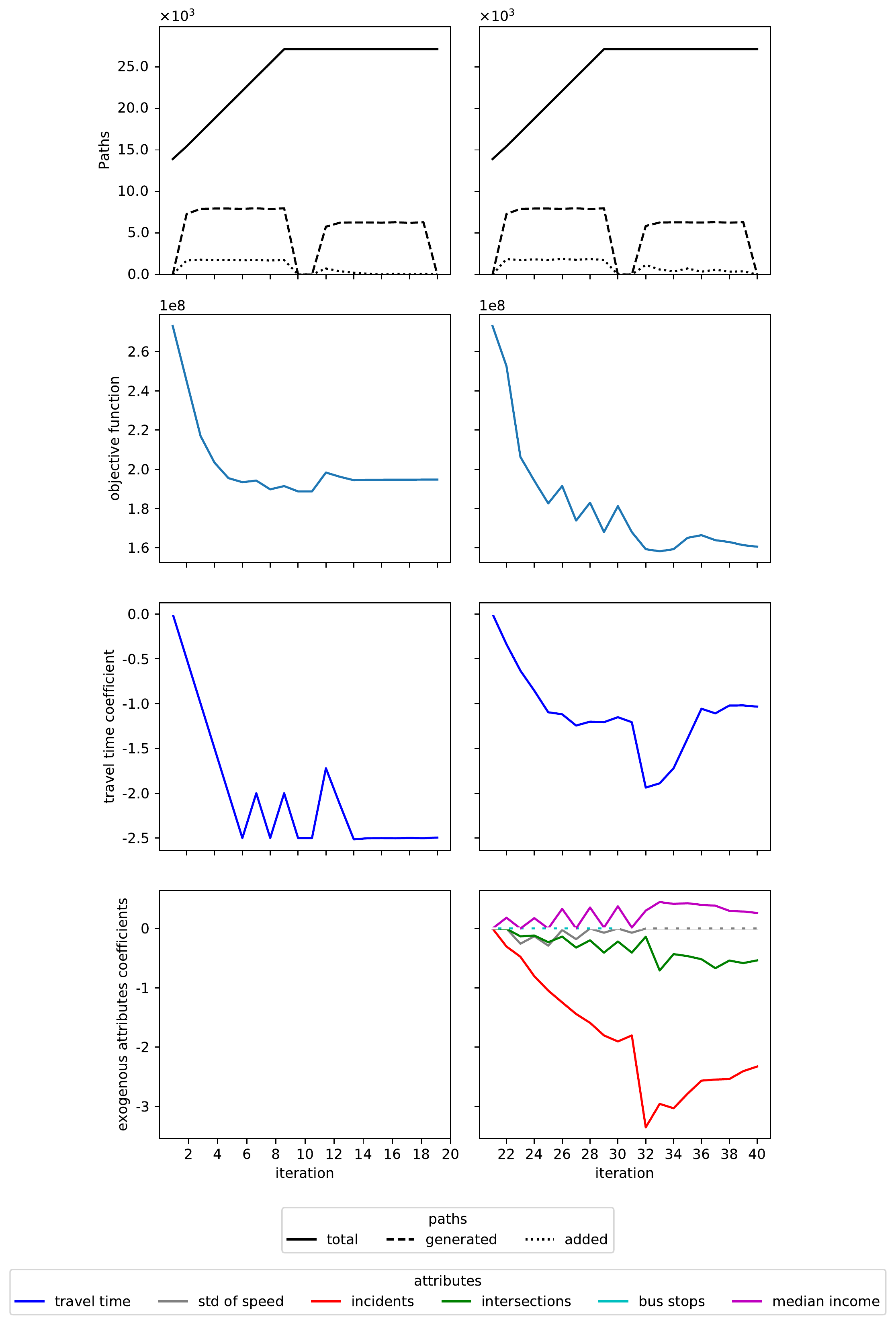}
		\caption{2020}
		\label{subfig:convergence-fresno-2020}
		
	\end{subfigure}
	
	\vspace{0.2cm}
	
	\begin{subfigure}[b]{0.7\columnwidth}
		\centering
		\vskip 0pt
		\includegraphics[width=\columnwidth, trim= {0cm 0cm 0cm 30cm},clip]{figures/fresno/convergence/convergence_traveltime_and_full_models_Fresno_2019.pdf}
	\end{subfigure}
	
	\caption{Comparison of convergence in estimation of baseline and full models}
	
	\label{fig:convergence-baseline-and-full-models-fresno}
\end{figure}

\begin{figure}[H]
	\centering
	
	\begin{subfigure}[t]{0.48\columnwidth}
		\centering
		\includegraphics[width=\columnwidth, trim= {2cm 3.1cm 3cm 0cm},clip]{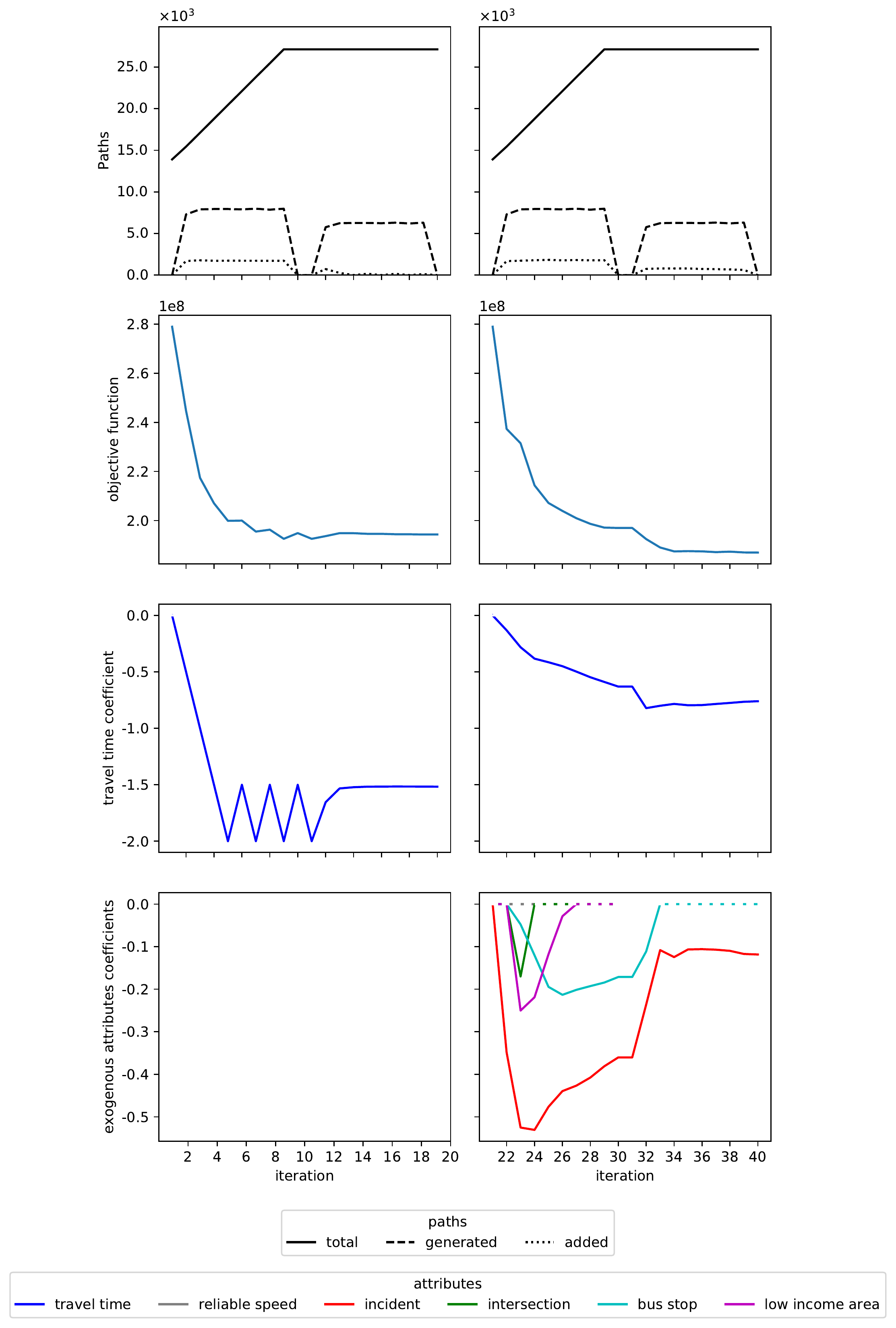}
		\caption{2019}
		\label{subfig:convergence-baseline-and-feature-engineering-models-fresno-2019}
	\end{subfigure}
	\begin{subfigure}[t]{0.48\columnwidth}
		\centering
		\includegraphics[width=\columnwidth, trim= {2cm 3.1cm 3cm 0cm},clip]{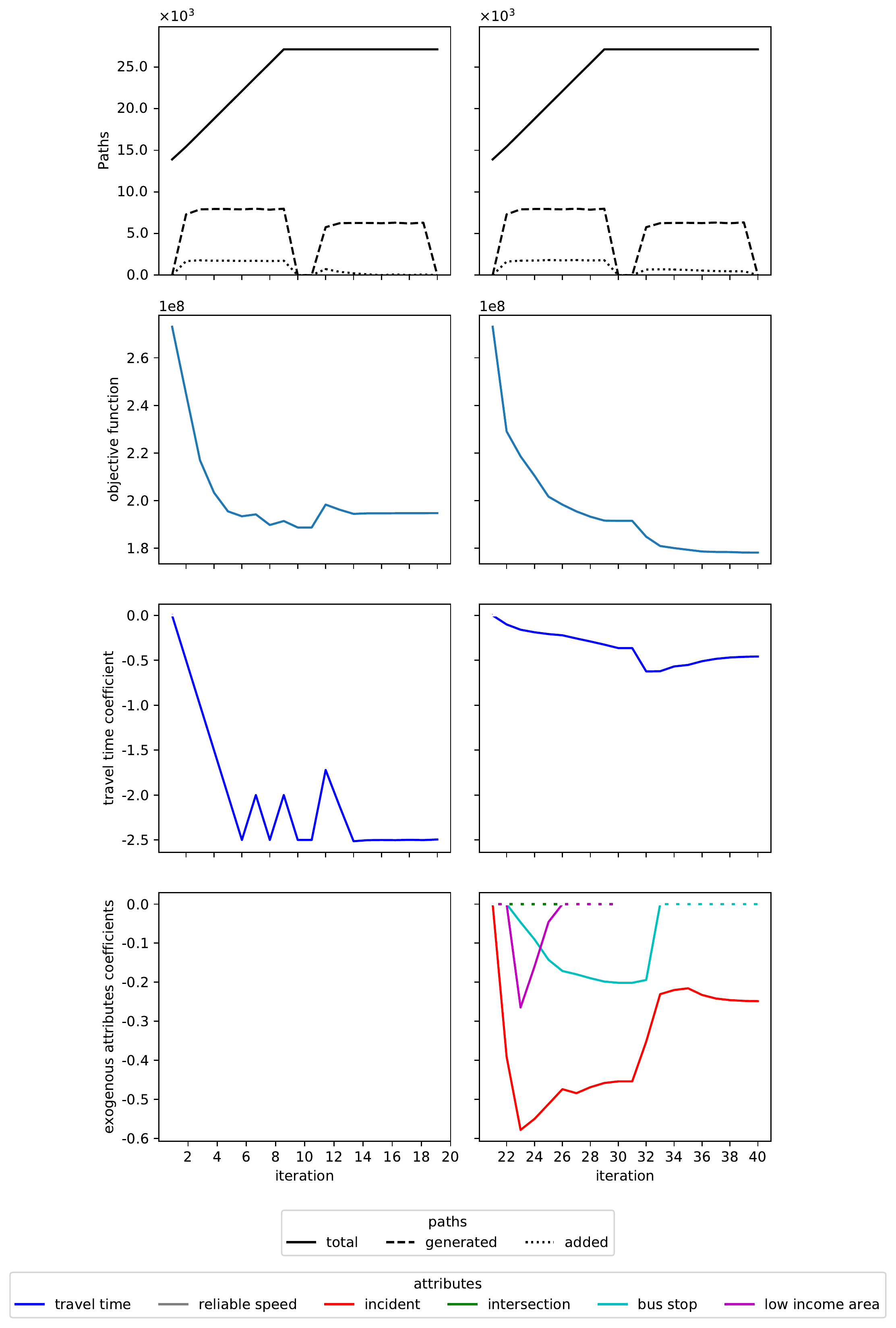}
		\caption{2020}
		\label{subfig:convergence-baseline-and-feature-engineering-models-fresno-2020}
		
	\end{subfigure}
	
	\vspace{0.2cm}
	
	\begin{subfigure}[b]{0.7\columnwidth}
		\centering
		\vskip 0pt
		\includegraphics[width=\columnwidth, trim= {0cm 0cm 0cm 30cm},clip]{figures/fresno/convergence/convergence_traveltime_and_feature_engineering_models_Fresno_2019.pdf}
	\end{subfigure}
	
	\caption{Comparison of convergence in estimation of baseline and binarized models}
	
	\label{fig:convergence-baseline-and-feature-engineering-models-fresno}
\end{figure}

\end{document}